\newtheorem{theorem}{Theorem}[section]
\newtheorem*{theorem*}{Theorem}
\newtheorem{proposition}[theorem]{Proposition}
\newtheorem*{proposition*}{Proposition}
\newtheorem{lemma}[theorem]{Lemma}
\newtheorem*{lemma*}{Lemma}
\newtheorem{corollary}[theorem]{Corollary}
\newtheorem*{conjecture*}{Conjecture}
\newtheorem{fact}[theorem]{Fact}
\newtheorem*{fact*}{Fact}
\newtheorem*{hypothesis*}{Hypothesis}
\theoremstyle{definition}
\newtheorem{definition}[theorem]{Definition}
\newtheorem*{definition*}{Definition}
\theoremstyle{condition}
\newtheorem{condition}[theorem]{Condition}
\newtheorem*{condition*}{Condition}
\newtheorem{question}[theorem]{Question}
\newtheorem{algorithm}[theorem]{Algorithm}
\theoremstyle{remark}
\newtheorem*{claim*}{Claim}
\newtheorem{remark}[theorem]{Remark}
\newtheorem*{remark*}{Remark}
\newtheorem*{observation*}{Observation}
\let\mathbb\varmathbb
\crefname{lemma}{Lemma}{Lemmas}
\crefname{fact}{Fact}{Facts}
\crefname{theorem}{Theorem}{Theorems}
\crefname{corollary}{Corollary}{Corollaries}
\crefname{claim}{Claim}{Claims}
\crefname{example}{Example}{Examples}
\crefname{algorithm}{Algorithm}{Algorithms}
\crefname{problem}{Problem}{Problems}
\crefname{definition}{Definition}{Definitions}
\newcommand{\Authornoteb}[2]{}
 \newcommand{\Authornoter}[2]{}
\newcommand{\Authornotecolored}[3]{}
\newcommand{\Authorcomment}[2]{}
\newcommand{\Authorfnote}[2]{}
\newcommand{\Pnote}{\Authornoter{P}}
\newcommand{\Anote}{\Authornoteb{A}}
\definecolor{forestgreen(traditional)}{rgb}{0.0, 0.27, 0.13}
\newcommand{\paren}[1]{(#1)}
\newcommand{\Paren}[1]{\left(#1\right)}
\newcommand{\abs}[1]{\lvert#1\rvert}
\newcommand{\set}[1]{\{#1\}}
\newcommand{\Set}[1]{\left\{#1\right\}}
\newcommand{\norm}[1]{\lVert#1\rVert}
\newcommand{\Norm}[1]{\left\lVert#1\right\rVert}
\newcommand{\iprod}[1]{\langle#1\rangle}
\newcommand{\Iprod}[1]{\left\langle#1\right\rangle}
\newcommand{\Esymb}{\mathbb{E}}
\newcommand{\Psymb}{\mathbb{P}}
\DeclareMathOperator*{\E}{\Esymb}
\DeclareMathOperator*{\ProbOp}{\Psymb}
\renewcommand{\Pr}{\ProbOp}
\newcommand{\from}{\colon}
\newcommand{\mper}{\,.}
\newcommand{\mcom}{\,,}
\newcommand\bdot\bullet
\DeclareMathOperator{\poly}{poly}
\DeclareMathOperator{\argmax}{argmax}
\DeclareMathOperator{\rank}{rank}
\newcommand{\Z}{\mathbb Z}
\newcommand{\N}{\mathbb N}
\newcommand{\R}{\mathbb R}
\newcommand{\cA}{\mathcal A}
\newcommand{\cB}{\mathcal B}
\newcommand{\cD}{\mathcal D}
\newcommand{\cL}{\mathcal L}
\newcommand{\cN}{\mathcal N}
\def\cov{\texttt{Cov}}
\newcommand{\wh}{\widehat}
\renewcommand{\leq}{\leqslant}
\renewcommand{\le}{\leqslant}
\renewcommand{\geq}{\geqslant}
\renewcommand{\ge}{\geqslant}
\let\epsilon=\varepsilon
\numberwithin{equation}{section}
\newcommand\MYcurrentlabel{xxx}
\newcommand{\MYstore}[2]{%
  \global\expandafter \def \csname MYMEMORY #1 \endcsname{#2}%
}
\newcommand{\MYload}[1]{%
  \csname MYMEMORY #1 \endcsname%
}
\newcommand{\MYnewlabel}[1]{%
  \renewcommand\MYcurrentlabel{#1}%
  \MYoldlabel{#1}%
}
\newcommand{\MYdummylabel}[1]{}
\newcommand{\torestate}[1]{%
  % overwrite label command
  \let\MYoldlabel\label%
  \let\label\MYnewlabel%
  #1%
  \MYstore{\MYcurrentlabel}{#1}%
  % restore old label command
  \let\label\MYoldlabel%
}
\newcommand{\restatetheorem}[1]{%
  % overwrite label command with dummy
  \let\MYoldlabel\label
  \let\label\MYdummylabel
  \begin{theorem*}[Restatement of \cref{#1}]
    \MYload{#1}
  \end{theorem*}
  \let\label\MYoldlabel
}
\newcommand{\restatelemma}[1]{%
  % overwrite label command with dummy
  \let\MYoldlabel\label
  \let\label\MYdummylabel
  \begin{lemma*}[Restatement of \cref{#1}]
    \MYload{#1}
  \end{lemma*}
  \let\label\MYoldlabel
}
\newcommand{\restateprop}[1]{%
  % overwrite label command with dummy
  \let\MYoldlabel\label
  \let\label\MYdummylabel
  \begin{proposition*}[Restatement of \cref{#1}]
    \MYload{#1}
  \end{proposition*}
  \let\label\MYoldlabel
}
\newcommand{\restatefact}[1]{%
  % overwrite label command with dummy
  \let\MYoldlabel\label
  \let\label\MYdummylabel
  \begin{fact*}[Restatement of \prettyref{#1}]
    \MYload{#1}
  \end{fact*}
  \let\label\MYoldlabel
}
\newcommand{\restate}[1]{%
  % overwrite label command with dummy
  \let\MYoldlabel\label
  \let\label\MYdummylabel
  \MYload{#1}
  \let\label\MYoldlabel
}
\newcommand{\eps}{\epsilon}
\newcommand*{\tr}{\mathrm{tr}}
\newcommand{\1}{\bm{1}}
\newcommand{\pE}{\tilde{\mathbb{E}}}
\def\expecf#1#2{{\bf \mathbb{E}}_{#1}\left[ #2 \right]}
\def\var#1{\mbox{\bf Var}[ #1 ]}
\def\dim#1{\mathrm{dim} (#1)}
\def\abs#1{\left|#1  \right|}
\def\norm#1{\left\| #1 \right\|}
\def \dtv{d_{\mathsf{TV}}}
\def\expecf#1#2{{\bf \mathbb{E}}_{#1}\left[ #2 \right]}
\def\var#1{\mbox{\bf Var}[ #1 ]}
\def\tzeta{\tilde{\zeta}}
\newcommand{\eqdef}{\stackrel{{\mathrm {\footnotesize def}}}{=}}
\newcommand{\sym}{\mathrm{Sym}}
\newcommand{\Sigmaa}{S}
  \newcommand{\inote}[1]{}
\def\colorful{0}
\def\nnewcolor{1}
\newcommand{\new}[1]{{\color{red} #1}}
\newcommand{\newblue}[1]{{\color{blue} #1}}
\newcommand{\new}[1]{{#1}}
\newcommand{\newblue}[1]{{#1}}
\newcommand{\nnnew}[1]{{#1}}
\newcommand{\calA}{\mathcal A}
\newcommand{\calC}{\mathcal C}
\newcommand{\calD}{\mathcal D}
\newcommand{\calF}{\mathcal F}
\newcommand{\calL}{\mathcal L}
\newcommand{\calM}{\mathcal M}
\newcommand{\calN}{\mathcal N}
\newcommand{\calO}{\mathcal O}
\newcommand{\calU}{\mathcal U}
\newcommand{\calX}{\mathcal X}
\def\expecf#1#2{ \mathop{\mathbb{E}}_{#1}\left[ #2 \right]}
\newcommand{\bigO}[1]{\mathcal{O}\hspace{-0.1cm}\left(#1\right)}
\newcommand{\bigOk}[1]{\mathcal{O}_k\hspace{-0.1cm}\left(#1\right)}
\title{ Robustly Learning Mixtures of $k$ Arbitrary Gaussians}
\author{
   Ainesh Bakshi \\ CMU \\ abakshi@cs.cmu.edu\\\\
   Daniel M. Kane \\ UCSD \\ dakane@ucsd.edu 
   \and
   Ilias Diakonikolas \\ UW Madison \\ ilias@cs.wisc.edu\\\\
   Pravesh K. Kothari \\ CMU \\praveshk@cs.cmu.edu 
   \and
   He Jia \\ Georgia Tech \\hjia36@gatech.edu \\\\
   Santosh S. Vempala  \\ Georgia Tech \\ vempala@gatech.edu
 }
\begin{document}

\pagestyle{empty}

% MAKE TITLE

\maketitle
\thispagestyle{empty} % seems to be required here to avoid page number on first page

%!TEX root = main.tex

\begin{abstract}
We give a polynomial-time algorithm for the problem of robustly estimating a mixture of $k$ arbitrary Gaussians in $\R^d$, for any fixed $k$, in the presence of a constant fraction of arbitrary corruptions. 
This resolves the main open problem in several previous works on algorithmic robust statistics, 
which addressed the special cases of robustly estimating (a) a single Gaussian, (b) a mixture of TV-distance separated Gaussians, and (c) a uniform mixture of two Gaussians. Our main tools are an efficient \emph{partial clustering} algorithm that relies on the sum-of-squares method, and a novel \emph{tensor decomposition} algorithm 
that allows errors in both Frobenius norm and low-rank terms. 
\end{abstract}

\clearpage

% TOC

{\small
  % assumes microtype
  \microtypesetup{protrusion=false}
  \setcounter{tocdepth}{2}
  \tableofcontents{}

  \microtypesetup{protrusion=true}
}

\clearpage

\pagestyle{plain}
\setcounter{page}{1}

% SECTION

%!TEX root = main.tex

\section{Introduction} \label{sec:intro}

\subsection{Background and Motivation} \label{ssec:background}

Given a collection of observations and a class of models, 
the objective of a typical learning algorithm is to find the model in the class that best fits the data.
The classical assumption is that the input data are i.i.d.\ samples generated by a statistical model 
in the given class. This is a simplifying assumption that is, at best, only approximately valid, as 
real datasets are typically exposed to some source of systematic noise. 
Robust statistics~\cite{HampelEtalBook86, Huber09} challenges this assumption by focusing
on the design of {\em outlier-robust}  estimators  --- algorithms 
that can tolerate a {\em constant fraction} of corrupted datapoints, independent of the dimension. 
Despite significant effort over several decades starting with important early works of Tukey and Huber in the 60s, 
even for the most basic high-dimensional estimation tasks,
all known computationally efficient estimators were until fairly recently highly sensitive to outliers.

This state of affairs changed with two independent works from the TCS community~\cite{DKKLMS16, LaiRV16},
which gave the first computationally efficient and outlier-robust learning algorithms for a range of ``simple''
high-dimensional probabilistic models. In particular, these works developed efficient robust estimators
for a single high-dimensional Gaussian distribution with unknown mean and covariance.
Since these initial algorithmic works~\cite{DKKLMS16, LaiRV16}, 
we have witnessed substantial research progress on algorithmic aspects of 
robust high-dimensional estimation by several communities of researchers, including TCS, machine learning,
and mathematical statistics. See Section~\ref{ssec:related} for an overview of the prior work most relevant
to the results of this paper. The reader is referred to~\cite{DK20-survey} for a  recent survey on the topic. 

One of the main original motivations for the development of algorithmic robust statistics within the TCS community 
was the problem of learning high-dimensional Gaussian mixture models.
A {\em Gaussian mixture model (GMM)} is a convex combination of Gaussian distributions, i.e.,
a distribution on $\R^d$ of the form $\calM  = \sum_{i=1}^k w_i \mathcal{N}(\mu_i, \Sigma_i)$, where
the weights $w_i$, mean vectors $\mu_i$, and covariance matrices $\Sigma_i$ are unknown.
GMMs are {\em the} most extensively studied latent variable model in the statistics and machine learning literatures, 
starting with the pioneering work of Karl Pearson in 1894~\cite{Pearson:94}, 
which introduced the method of moments in this context.

In the absence of outliers, a long line of work initiated by Dasgupta~\cite{Dasgupta:99, AroraKannan:01, VempalaWang:02, AchlioptasMcSherry:05, KSV08, BV:08} gave efficient clustering algorithms for GMMs under various separation assumptions. 
Subsequently, efficient learning algorithms were obtained~\cite{KMV:10, MoitraValiant:10, BelkinSinha:10, HardtP15} 
under minimal information-theoretic conditions. Specifically, Moitra and Valiant~\cite{MoitraValiant:10} and Belkin and Sinha~\cite{BelkinSinha:10} designed 
the first polynomial-time learning algorithms for arbitrary Gaussian mixtures with any fixed number of components.
These works qualitatively characterized the complexity of this fundamental learning problem in the noiseless setting.
Alas, all aforementioned algorithms are very fragile in the presence of corrupted data.
Specifically, a {\em single} outlier can completely compromise
their performance. 

Developing efficient learning algorithms for high-dimensional GMMs in the more realistic {\em outlier-robust} setting
--- the focus of the current paper --- has turned out to be significantly more challenging. 
This was both one of the original motivations and the main open problem 
in the initial robust statistics works~\cite{DKKLMS16, LaiRV16}. 
We note that~\cite{DKKLMS16} developed a robust density estimation algorithm 
for mixtures of {\em spherical} Gaussians --- a very special case of our problem
where the covariance of each component is a multiple of the identity --- 
and highlighted a number of key technical obstacles that need to be overcome in order to handle the general case.
Since then, a number of works have made algorithmic progress on important special cases of the general problem.
These include faster robust clustering for the spherical case under minimal separation 
conditions~\cite{HopkinsL18, KothariSS18, DiakonikolasKS18-mixtures}, 
robust clustering for separated (and potentially non-spherical) Gaussian mixtures~\cite{BK20, DHKK20}, 
and robustly learning {\em uniform} mixtures of two arbitrary Gaussian components~\cite{Kane20}.

This progress notwithstanding, the algorithmic task of robustly learning a mixture of a constant number
(or even two) arbitrary Gaussians (with arbitrary weights) 
has remained a central open problem in this field, as highlighted recently~\cite{DVW19-vignette}.

This discussion motivates the following question, whose resolution is the main
result of this work:

\begin{question} \label{q:robust-gmm}
Is there a $\poly(d, 1/\eps)$-time robust GMM learning algorithm, in the presence of an $\eps$-fraction of outliers, 
that has a dimension-independent error guarantee, for an arbitrary mixture 
of any constant number of arbitrary Gaussians on $\R^d$?
\end{question}

\subsection{Our Results} \label{ssec:results}

To formally state our main result, we define
the model of robustness we study.
We focus on the following standard data corruption model 
that generalizes Huber's contamination model~\cite{Huber64}. 

\begin{definition}[Total Variation Contamination Model] \label{def:adv}
Given a parameter $0< \epsilon < 1/2$ and a class of distributions $\mathcal{F}$ on $\mathbb{R}^d$, 
the \emph{adversary} operates as follows: The algorithm specifies the number of samples $n$.
The adversary knows the true target distribution $X \in \mathcal{F}$ and selects a distribution $F$ such that
$\dtv(F, X) \leq \eps$. Then $n$ i.i.d.\ samples are drawn from $F$ and are given as input to the algorithm.
%We say that a set of samples is {\em $\epsilon$-corrupted} if it is generated by this process.
\end{definition}

Intuitively, the parameter $\epsilon$ in Definition~\ref{def:adv} 
quantifies the power of the adversary. The total variation contamination 
model is strictly stronger than Huber's contamination model. 
Recall that in Huber's model~\cite{Huber64}, the adversary generates 
samples from a mixture distribution $F$ of the form $F = (1-\epsilon) X + \epsilon N$, 
where $X$ is the unknown target distribution and $N$ is an adversarially chosen noise distribution. 
That is, in Huber's model the adversary is only allowed to add outliers.

\begin{remark}
The {\em strong contamination model}~\cite{DKKLMS16} is a strengthening of 
the total variation contamination, where an adversary can see 
the clean samples and then arbitrarily replace an $\eps$-fraction of these points 
to obtain an $\eps$-corrupted set of samples.
Our robust learning algorithm succeeds in this strong contamination model, with the additional 
requirement that we can obtain two sets of independent $\eps$-corrupted samples from the unknown mixture. 
%(The second set is needed to run a hypothesis testing routine after we obtain a small list of candidate hypotheses.)
\end{remark}

In the context of robustly learning GMMs, we want to design an efficient algorithm 
with the following performance: Given a sufficiently large set of 
samples from a distribution that is $\eps$-close in total variation distance to an unknown GMM $\calM$ on $\R^d$, 
the algorithm outputs a hypothesis GMM $\widehat{\calM}$ such that with high probability 
the total variation distance $\dtv(\widehat{\calM}, \calM)$ is small. Specifically, we want 
$\dtv(\widehat{\calM}, \calM)$ to be only a function of $\eps$ and independent 
of the underlying dimension $d$.

%\medskip

The main result of this paper is the following:
\inote{Cite the formal version of the theorem that appears later in the main body.} \Pnote{done.}

\begin{theorem}[Main Result, See Corollary~\ref{thm:robust-GMM-arbitrary-poly-time}]\label{thm:main-informal}
There is an algorithm with the following behavior:
Given $\eps>0$ and a multiset of $n = d^{O(k)} \poly(\log(1/\eps))$ samples from a distribution $F$ on $\R^d$ such that
$\dtv(F, \calM) \leq \eps$, for an unknown target $k$-GMM $\calM = \sum_{i=1}^k w_i \mathcal{N}(\mu_i, \Sigma_i)$, 
the algorithm runs in time \new{$\poly(n) \poly_k(1/\eps)$} and outputs a $k$-GMM hypothesis 
$\widehat{\calM}  = \sum_{i=1}^k \widehat{w}_i \mathcal{N}(\widehat{\mu}_i, \widehat{\Sigma}_i)$ 
such that with high probability we have that 
$\dtv(\widehat{\calM}, \calM ) \leq g(\eps, k)$. Here $g: \R_+ \times \Z_+ \to \R_+$ is a function such that
$\lim_{\eps \rightarrow 0} g(\eps, k) = 0$. %\Pnote{check running time bound.}
\end{theorem}

Theorem~\ref{thm:main-informal} gives the first polynomial-time {\em robust proper learning} algorithm, 
with dimension-independent error guarantee, for {\em arbitrary} $k$-GMMs, for any fixed $k$. 
This is the first polynomial-time algorithm for this problem, even for $k=2$. 

\new{Since the dissemination of~\cite{BDK+20:arxiv}, we have improved
the guarantees of Theorem~\ref{thm:main-informal} in two ways.
First, by refining the guarantees of two of our components in the algorithm, we are able to obtain a robust proper learner
with error guarantee of $\poly_k(\eps)$ (Theorem~\ref{thm:poly-proper-inf}). Second, we show that
the same algorithm also achieves the stronger parameter estimation guarantee (Theorem~\ref{thm:param-inf}).
These refinements build heavily build on the techniques in~\cite{BDK+20:arxiv} and represent
work subsequent to ~\cite{BDK+20:arxiv, LM20}. We describe them in detail in the following.}

\new{\paragraph{Independent and Concurrent Work}
Theorem~\ref{thm:main-informal} is the main result obtained in \cite{BDK+20:arxiv}.
In independent and concurrent work to~\cite{BDK+20:arxiv}, 
Liu and Moitra~\cite{LM20} obtained a closely related result under stronger assumptions on the input mixture --- specifically, 
$1/f(k)$ lower bound on the component weights, $\poly(\eps/d)$ lower bound and $\poly(d/\eps)$ upper bound on the 
eigenvalues of each component covariance, and a $\geq g(k)$ TV-distance separation between every pair of components.
Under these assumptions, they obtain a formally stronger robust \emph{parameter estimation} guarantee in time that grows exponentially in $f(k), g(k)$ with error guarantees that decay exponentially in $f(k), g(k)$.
We discuss this work and its connection to 
Theorem~\ref{thm:main-informal} in more detail in Section~\ref{ssec:conc}.}

\paragraph{Discussion}
Before proceeding, we make a few important remarks about Theorem~\ref{thm:main-informal}.
\begin{enumerate}[leftmargin = *]
\item \textit{Sample Complexity and Runtime}: 
Our algorithm succeeds whenever the sample size $n$ satisfies $n \geq n_0 = d^{O(k)}/\poly(\epsilon)$. 
The running time of our algorithm is \new{$\poly(n) \poly_k(1/\eps)$}. 
Statistical query lower bounds~\cite{DKS17-sq} suggest that $d^{\Omega(k)}$ samples are necessary for \new{efficiently} 
learning GMMs, even for approximation to constant accuracy in the simpler setting without outliers  
and under the more restrictive \emph{clustering} setting (where components are pairwise well-separated in total variation distance). \new{This provides some evidence that the sample-time tradeoff achieved
by Theorem~\ref{thm:main-informal}} is qualitatively optimal (within absolute constant factors in the exponent). 
\new{We note that the algorithm establishing Theorem~\ref{thm:main-informal} works in the standard bit-complexity model 
of computation %for computing with real numbers
and its running time is polynomial in the bit-complexity of the input parameters. }
We discuss the numerical accuracy required to implement our algorithm in Appendix~\ref{app:bit-comp}.

In the noiseless case, the first polynomial-time learning algorithm for $k$-GMMs on $\R^d$ 
was given in~\cite{MoitraValiant:10, BelkinSinha:10}. 
In particular, the sample complexity and running time of the~\cite{MoitraValiant:10} algorithm is $(d/\eps)^{q(k)}$, 
for some function $q(k) = k^{\Omega(k)}$. We observe that our running time and sample complexity are exponentially better than the guarantees for the noiseless case in~\cite{MoitraValiant:10, BelkinSinha:10}. \new{Moreover}, 
as we explain in Section~\ref{ssec:techniques}, the \cite{MoitraValiant:10, BelkinSinha:10} algorithms 
are very sensitive to outliers and an entirely new approach is required to obtain an efficient robust learning algorithm. 

\item \textit{Handling Arbitrary Weights}: The algorithm of Theorem~\ref{thm:main-informal}
succeeds {\em without any assumptions} on the weights of the mixture components. 
\new{We emphasize that this is an important feature and not a technicality. Prior work~\cite{BK20, DHKK20, Kane20}, 
as well as the concurrent work~\cite{LM20}, cannot handle the case of general weights --- even for the case of $k=2$
components. In fact, for the special case of uniform weights, we give a simpler algorithm 
for robustly learning GMMs (presented in Theorem~\ref{thm:robust-GMM-equiweigthed}). 
This algorithm naturally generalizes to give a sample complexity and running time that 
grows {\em exponentially} in $1/w_{\min}$, where $w_{\min}$ is the minimum weight of any component in the mixture. 
Handling the general case (i.e., obtaining a fully polynomial-time algorithm, not incurring an exponential cost in $1/w_{\min}$)
requires genuinely new algorithmic ideas and is one of the key technical innovations 
in the proof of Theorem~\ref{thm:main-informal}.}

\item \textit{Handling Arbitrary Covariances}: 
\new{The algorithm of Theorem~\ref{thm:main-informal} does not require
assumptions on the variances of the component covariances, 
modulo basic limitations posed by numerical computation issues}. 
Specifically, our algorithm works even if some of the component covariances are rank-deficient 
(i.e., have directions of $0$ variance) with running time scaling polynomially in the bit-complexity 
of the unknown component means and covariances. Such a dependence on the bit complexity of the input parameters is unavoidable -- there exist\footnote{For e.g., for unit vector $v=(1/\sqrt{3},1/\sqrt{3},1/\sqrt{3},0,0,\ldots,0)$ and for every choice of rational covariance $\Sigma$, the total variation distance between $\cN(0,I-vv^{\top})$ and $\cN(0,\Sigma)$ is the maximum possible $1$.} 
examples of rank-deficient covariances with irrational entries 
such that the total variation distance between the corresponding Gaussian 
and every Gaussian with covariance matrix of rational entries is the maximum possible value of one. 

\item \textit{Error Guarantee}: The function $g$ quantifying the final error guarantee of our basic algorithm 
is $g(\eps, k) = 1/(\log(1/\eps))^{C_k}$, for some function $C_k$ that goes to $0$ when $k$ increases. 
Importantly, for any fixed $k$, the final error guarantee of our algorithm depends only on $\eps$, tends to $0$ as $\epsilon \rightarrow 0$ and is independent of the dimension $d$. \new{In Theorem~\ref{thm:poly-proper-inf}, 
we show that, by modifying our algorithm, we can obtain improved error -- 
scaling as a fixed polynomial in $\epsilon$. This turns out to be quantitatively close to 
best possible for any robust proper learning algorithm.}

\end{enumerate}
%The function $f$ in the exponent of the runtime is $f(k) = k^{O(k^2)}$. 

Our work is most closely related to the recent paper by Kane~\cite{Kane20}, 
which gave a polynomial-time robust learning algorithm for the {\em uniform} $k=2$ case, 
i.e., the case of two {\em equal weight} components, 
and the polynomial \-time algorithms~\cite{BK20, DHKK20} 
for the problem under the (strong) assumption that the component Gaussians 
are pairwise well-separated in total variation distance. 

% Our algorithm handles the case of arbitrary weights, and, most importantly, 
% extends to any number of components $k$. 
% We note that for the special case of uniform mixtures (or, more generally, mixtures whose weights are bounded below 
% by a fixed function of $k$),  a much simpler application of our main tools suffices to resolve 
% the problem (see Theorem~\ref{thm:robust-GMM-equiweigthed} for details). 
%It is also worth noting that (for the case of uniform $2$-mixtures) 
%the algorithm of Kane~\cite{Kane20} achieves a final error of $\poly(\eps)$, 
%which is a quantitatively stronger error guarantee than that of Theorem~\ref{thm:main-informal}.

Our algorithm builds on the ideas in the works~\cite{BK20, DHKK20} that gave efficient 
clustering algorithms for any fixed number $k$ of components,  under the crucial assumption 
that the components have pairwise total variation distance close to $1$. 
In this case, the above works actually succeed in efficiently \emph{clustering} the input sample into $k$ groups, 
such that each group contains the samples generated from one of the Gaussians, 
up to some small misclassification error. In contrast, the main challenge in this work is the information-theoretic impossibility of clustering in our setting where there are no separation assumptions. As we will explain in the proceeding
discussion, while we draw ideas from~\cite{BK20, DHKK20, Kane20}, 
a number of significant conceptual and technical challenges need to be overcome 
in the non-clusterable setting.

\paragraph{Improvements to Theorem~\ref{thm:main-informal}.} We now describe refinements of our main theorem.

\paragraph{Improving Error to a Fixed Polynomial in $\epsilon$} 
It turns out that the inverse poly-logarithmic accuracy (in $1/\epsilon$) 
\new{in the final error guarantee of} Theorem~\ref{thm:main-informal} can be traced 
to \new{an exhaustive search subroutine in our novel tensor decomposition subroutine and probability of success of our rounding algorithm in our partial clustering routine. Via natural (and conceptually simple) quantitative improvements to 
these two ingredients, we obtain an algorithm achieving the qualitatively 
nearly best possible error of $\poly_k(\epsilon)$. 
Specifically, we show:}

%\Anote{maybe we are underselling here? it is not entirely trivial? but it isn't something to spend a lot of time discussing either.} that replaces this exhaustive search subroutine with a standard \emph{noiseless} GMM learning algorithm (we use a a variant of the algorithm in~\cite{MoitraValiant:10}), 

\begin{theorem}[Robustly Learning $k$-Mixtures with $\poly(\epsilon)$-error, 
Informal, see Corollary \ref{thm:robust-GMM-arbitrary-poly-eps}] \label{thm:poly-proper-inf}
There is an algorithm with the following behavior:
Given $\eps>0$ and a multiset of $n = d^{O(k)} \poly_k(1/\epsilon)$ samples from a distribution $F$ 
on $\R^d$ such that $\dtv(F, \calM) \leq \eps$, for an unknown target $k$-GMM $\calM = \sum_{i=1}^k w_i \mathcal{N}(\mu_i, \Sigma_i)$, 
the algorithm runs in time  $\poly(n)\poly_k(1/\eps)$ and outputs a $k$-GMM hypothesis 
$\widehat{\calM}  = \sum_{i=1}^k \widehat{w}_i \mathcal{N}(\widehat{\mu}_i, \widehat{\Sigma}_i)$ 
such that with high probability we have that 
$\dtv(\widehat{\calM}, \calM ) \leq \bigO{\eps^{c_k}}$, 
where $c_k$ depends only on $k$.   
 \end{theorem} 

\paragraph{Robust Parameter Recovery} 
Finally, we show that the \emph{same} algorithm as in Theorems~\ref{thm:main-informal} 
and \ref{thm:robust-GMM-arbitrary-poly-eps} actually implies that 
the recovered mixture of Gaussians is close in \emph{parameter} 
distance to the unknown target mixture. 
Such parameter estimation results are usually stated under the assumption 
that every pair of components of the unknown mixture are separated in total variation distance. 
In this work, we provide a stronger version of this parameter estimation guarantee. 
%that allows \emph{partial recovery} of parameters and in some sense, can be thought of as giving a guarantee of recovering whatever subset of parameters that can (information-theoretically) be recovered. 

More specifically, in the theorem below, we prove that whenever the components of the input mixture 
can be clustered together into some groups such that all mixtures in a group are close (and thus, indistinguishable), 
there exists a similar clustering of the output mixture such that all parameters (weight, mean, and covariances) 
of each cluster are close within $\poly_k(\epsilon)$ in total variation distance. In particular this means that for each significant component of the input mixture, there is a component of the output mixture with very close parameters.

We note that~\cite{LM20} gave a parameter estimation guarantee (under additional assumptions on the mixture weights and component variances) whenever every pair of components in the unknown mixture are \new{$f(k)$-far} in total variation distance, 
\new{where $f$ can be any function of $k$, but the choice of $f$ affects the exponent in the running time and error guarantee of the~\cite{LM20} algorithm.)} 

\new{By strengthening one of the structural results in their argument, we establish the following:}

%This result builds on a key lemma in their proof and extends it to yield a non-trivial guarantee every when only some pairs in the input mixture are separated in total variation distance. 

%We stress that this is a purely information theoretic result -- the theorem below simply provides an additional stronger guarantee that the output of our main theorems (Theorem~\ref{thm:main-informal} and Theorem~\ref{thm:robust-GMM-arbitrary-poly-eps}).

\begin{theorem}[Parameter Recovery, See Theorem~\ref{cor:parameter-estimation-main-technical}] \label{thm:param-inf}
%There is an algorithm with the following behavior:
Given $\eps>0$ and a multiset of $n = d^{O(k)} \poly_k(1/\eps)$ samples 
from a distribution $F$ on $\R^d$ such that
$\dtv(F, \calM) \leq \eps$, for an unknown target $k$-GMM $\calM = \sum_{i=1}^k w_i \mathcal{N}(\mu_i, \Sigma_i)$, 
the algorithm runs in time \new{$\poly(n) \poly_k(1/\eps)$} and outputs a $k'$-GMM hypothesis 
$\widehat{\calM}  = \sum_{i=1}^{k'} \widehat{w}_i \mathcal{N}(\widehat{\mu}_i, \widehat{\Sigma}_i)$ with $k'\le k$
such that with high probability we have that 
there exists a partition of $[k]$ into $k'+1$ sets $R_0,R_1,\dots,R_{k'}$ such that
\begin{enumerate}
\item 
Let $W_i=\sum_{j\in R_i}w_j$, \new{$i \in \{0, 1, \ldots, k'\}$.}
Then, for all $i\in[k']$, we have that
\begin{align*}
|W_i-\widehat w_{i}|&\le \poly_k(\epsilon), \textrm{ and} \\
\dtv(\mathcal{N}(\mu_j, \Sigma_j),\mathcal{N}(\widehat{\mu}_i, \widehat{\Sigma}_i))&\le\poly_k(\epsilon)\quad \forall j\in R_i \;.
% \norm{\mu_j-\hat\mu_{\pi(i)}},\norm{\Sigma_j-\hat\Sigma_{\pi(i)}}_F&\le\poly(\epsilon)
\end{align*}
\item
The \new{total weight} of exceptional components in $R_0$ is $\new{W_0 \leq} \poly_k(\epsilon)$.
\end{enumerate}
\end{theorem}

If we assume additionally that any pair of components in the unknown mixture has total variation distance at least $\poly_k(\eps)$, then the following result follows directly from Theorem~\ref{thm:param-inf}.

\begin{corollary}\label{cor:robust-param-est}
Let $\calM = \sum_{i=1}^k w_i \mathcal{N}(\mu_i, \Sigma_i)$ be an unknown target $k$-GMM satisfying
the following conditions: (i) $\dtv(\mathcal{N}(\mu_i, \Sigma_i), \mathcal{N}(\mu_j, \Sigma_j))\ge\eps^{f_1(k)}$ for all  $i \neq j$,
and (ii) $S=\{i\in[k]:w_i\ge\eps^{f_2(k)}\}$ is a subset of $[k]$, where $f_1(k),f_2(k)$ are sufficiently small functions of $k$. 
Given $\eps>0$ and a multiset of $n = d^{O(k)} \poly_k(1/\eps)$ samples from a distribution $F$ on $\R^d$ 
such that $\dtv(F, \calM) \leq \eps$, there exists an algorithm that runs in time \new{$\poly(n) \poly_k(1/\eps)$} 
and outputs a $k'$-GMM hypothesis 
$\widehat{\calM}  = \sum_{i=1}^{k'} \widehat{w}_i \mathcal{N}(\widehat{\mu}_i, \widehat{\Sigma}_i)$ with $k'\le k$
such that with high probability there exists a bijection $\pi:S\to[k']$ satisfying the following: 
For all $i\in S$, it holds that
\begin{align*}
|w_i-\widehat w_{\pi(i)}|& \le\poly_k(\epsilon)\\
\dtv(\mathcal{N}(\mu_i, \Sigma_i),\mathcal{N}(\widehat{\mu}_{\pi(i)}, \widehat{\Sigma}_{\pi(i)}))& \le\poly_k(\epsilon).
\end{align*}
\end{corollary}

\noindent \new{We note that both the pairwise separation between the components and the lower bounds on the weights in Corollary~\ref{cor:robust-param-est} scale as a fixed polynomial in $\eps$ (for fixed $k$), 
which is qualitatively information-theoretically necessary.}

\subsection{Concurrent and Independent Work} \label{ssec:conc}

\new{In independent and concurrent work with \cite{BDK+20:arxiv}, which established Theorem~\ref{thm:main-informal},}
\cite{LM20} gave an efficient outlier-robust parameter learning algorithm for mixtures of Gaussians under additional assumptions. 
Unlike our main theorem (Theorem~\ref{thm:main-informal}), the algorithm in \cite{LM20} \new{succeeds under} 
three crucial assumptions: 1) all mixing weights are bounded below \new{by $1/f(k)$, where $f$ is a function of $k$
that appears in the exponent of the running time and output error}, 2) all components have covariances with
eigenvalues at least $\poly(\eps/d)$ and at most $\poly(d/\eps)$ 
\new{(in particular, they cannot handle rank-deficient components)}, 
and 3) every pair of component Gaussians are separated by $g(k)$ in total variation distance 
(regardless of how \new{small the fraction of corruptions $\epsilon$} is).
\new{(We again note that each of these obtaining an algorithm without these assumptions requires significant new ideas in both the design and the analysis of the algorithm. Our techniques
lead to efficient algorithms without any of these assumptions.)}
The analysis of \cite{LM20} obtained a robust \emph{parameter recovery} guarantee 
on the output of their algorithm. In contrast, our Theorem~\ref{thm:main-informal} 
does not make any assumption on the mixture weights, the covariances or component separations, 
yielding the weaker guarantee of \emph{proper learning} 
(i.e., returning a mixture of Gaussians that is close in total variation distance to the unknown uncorrupted mixture). 
The error guarantee offered in our Theorem~\ref{thm:main-informal}, while being a constant (for any fixed $k,\epsilon$) 
is inverse poly-logarithmic in $1/\epsilon$. \new{We subsequently were able to improve these guarantees 
to achieve $\poly_k(\eps)$ error and parameter recovery guarantees.}

%However, our subsequent upgrade\inote{Rephrase} 
%that involves switching a single exhaustive-search subroutine by a noiseless GMM learning algorithm, 
%allows us to obtain the nearly optimal $\poly_k(\epsilon)$ error (Theorem~\ref{thm:robust-GMM-arbitrary-poly-eps}). 

\Pnote{commenting out the first line since we seem to have said this already in the paragraph above.}
%As discussed after the statement of Theorem~\ref{thm:main-informal}, 
%designing an algorithm that works without assumptions on the mixing weights, 
%component covariances, and component separation 
%requires several new ideas that go into the proof of our Theorem~\ref{thm:main-informal}. 
We also note that the techniques adopted in the two concurrent works (\cite{BDK+20:arxiv} and \cite{LM20})  
are significantly different. On the one hand, both works make essential use of
recent advances in clustering non-spherical mixtures~\cite{BK20,DHKK20}. 
While~\cite{LM20} relies on the clustering algorithm of \cite{DHKK20}, 
our work leverages a {\em key modification} to the clustering algorithm of \cite{BK20}
that allows us to obtain a fixed polynomial-time algorithm 
(as opposed to incurring an exponential cost in $1/w_{\min}$, where $w_{\min}$ is the minimum mixing weight),
at the cost of handling only a {\em subclass} of clusterable Gaussian mixtures. 
Indeed, our novel variant of clustering combined with our new spectral clustering subroutine 
is the key to not incurring an exponential cost in $1/w_{\min}$. In the next step, where the input mixture can be assumed to be non-clusterable,  
\cite{LM20} uses a new sum-of-squares based algorithm for parameter estimation. 
In contrast, our work does not use sum-of-squares method in this component and instead relies on a new list-decoding algorithm for tensor decomposition that makes no assumption on the underlying components.

\subsection{Related Prior Work} \label{ssec:related}

The algorithmic question of designing efficient robust estimators in high dimensions has been extensively 
studied in recent years. After the initial papers~\cite{DKKLMS16, LaiRV16}, 
a number of works developed robust estimators for a range of statistical problems.
These include efficient outlier-robust algorithms for sparse estimation~\cite{BDLS17, DKKPS19-sparse}, 
learning graphical models~\cite{ChengDKS18}, linear regression~\cite{KlivansKM18, DKS19-lr, bakshi2020robust, ZhuJS20, CherapanamjeriATJFB2020}, stochastic optimization~\cite{PrasadSBR2018,DiakonikolasKKLSS2018sever},
and connections to non-convex optimization~\cite{CDGS20, ZhuJS20}. 
Notably, the robust estimators developed in some of these works~\cite{DKKLMS16, LaiRV16, DKK+17} 
are scalable in practice and yield a number of applications in exploratory data analysis~\cite{DKK+17} 
and adversarial machine learning~\cite{TranLM18, DiakonikolasKKLSS2018sever}.
The reader is referred to~\cite{DK20-survey} for a recent survey.

Our partial clustering algorithm makes essential use 
of the Sum-of-squares based {\em proofs to algorithms} framework (see~\cite{TCS-086} for an exposition). This framework, %builds on the long line of work 
beginning with~\cite{MR3388192-Barak15}, uses the {\em Sum-of-squares method} to design algorithms for statistical estimation problems, and has %been especially successful for algorithmic robust statistics yielding 
led to some of the most general outlier-robust learning algorithms. %that handle a small constant fraction of adversarial outliers. 
This includes computationally efficient outlier-robust estimators of the mean, covariance, 
and low-degree moments of structured distributions, with applications to ICA~\cite{KS17}, linear regression~\cite{KlivansKM18,bakshi2020robust,ZhuJS20}, clustering spherical mixtures~\cite{HopkinsL18,KStein17}, and clustering non-spherical mixtures~\cite{BK20,DHKK20}. 
The sum-of-squares method also gives a generally applicable scheme to handle the 
{\em list-decodable learning} setting~\cite{BalcanBV08, CSV17}, where a majority of the input points 
are corrupted, yielding efficient list-decodable learners for mean estimation~\cite{KStein17}, regression~\cite{karmalkar2019list,raghavendra2020list}, and subspace clustering/recovery~\cite{raghavendra2020subspace,bakshi2020list}.

%This paradigm has delivered \emph{list-decodable} learning algorithms (a model introduced by~\cite{BalcanBV08} and rejuvenated recently in~\cite{CSV17}) for problems such as mean estimation~\cite{KStein17}, regression~\cite{KKK19-list,raghavendra2020list} and subspace clustering/recovery~\cite{raghavendra2020subspace,bakshi2020list}. 

% 
% As already mentioned in the introduction, the only known non-robust learning 
% algorithms for $k$-mixtures of arbitrary Gaussians are due to~\cite{MoitraValiant:10, BelkinSinha:10}. 
% A line of work has developed efficient algorithms for robust density estimation~\cite{DKKLMS16} 
% and robust clustering~\cite{HopkinsL18, KStein17, DiakonikolasKS18-mixtures}
% when the Gaussian components are (essentially) spherical. 
% The most relevant prior works to the results of this paper are~\cite{BK20, DHKK20, Kane20}.
% First, the two independent works~\cite{BK20, DHKK20} gave efficient clustering algorithms 
% for mixtures of general Gaussians, under the additional assumption that each pair of components is ``separated'' 
% --- in particular the pairwise total variation distance is close to $1$.  More recently,~\cite{Kane20}
% gave an efficient robust algorithm to properly learn mixtures of two arbitrary Gaussians without 
% any separation assumptions. For technical reasons, the algorithm of~\cite{Kane20} requires
% that the two unknown components have equal weights. 

Our work also has connections to the usage of tensor decomposition algorithms for learning statistical models. 
\cite{HK} used fourth-order tensor decomposition to obtain a polynomial-time algorithm for mixtures of spherical Gaussians with linearly independent means (with condition number guarantees). 
This result was extended via higher-order tensor decomposition for non-spherical Gaussian mixtures 
in a smoothed analysis model~\cite{GeHK15}. Fourth order tensor decomposition has earlier been 
used in~\cite{FJK:96} for the ICA problem~\cite{FJK:96}, and extended to general ICA with 
higher-order tensor decomposition by \cite{GVX14}. Such results rely on additional and 
non-trivial assumptions on the parameters of the mixture components in order to succeed, 
and are incomparable to our tensor-decomposition result that does not make 
any assumptions on the parameters of the mixture components. Indeed, this is 
the key innovation in our tensor decomposition algorithm that relaxes the guarantees on the output 
(we output a small list of candidate parameters) under a priori bounds on distance between components 
that is ensured by our partial clustering subroutine. This relaxation of tensor decomposition, 
and the new procedure that accomplishes it, is one of the main contributions of our paper. 

Finally, we point out that~\cite{DKS17-sq} gave an SQ lower bound for learning 
(fully) clusterable Gaussian mixtures without outliers, which provides evidence that a $d^{\Omega(k)}$ 
dependence is necessary in both the sample complexity and runtime
of any algorithm that learns GMMs. 

%Finally, we acknowledge a related line of work~\cite{HK, BhaskaraCMV14, AndersonBGRV14, GeHK15}
%that studied (non-robust) parameter estimation in a smoothed-like setting, where (instead of separation conditions)
%one makes certain condition number assumptions about the parameters. 

% \medskip

% \noindent {\bf Concurrent Work.} \textcolor{red}{} 

\subsection{Organization}
The structure of this paper is as follows: In Section \ref{sec:prelims}, we provide relevant background and technical facts. 
In Section \ref{sec:list-recovery}, we describe and analyze our new tensor decomposition algorithm.
In Section \ref{sec:robust-partial-clustering}, we use a sum-of-squares based approach to partially cluster a mixture. 
In Section \ref{sec:spectral-separation}, we give a spectral separation algorithm to identify thin components. 
In Section \ref{sec:full-algo-analysis}, we put all these pieces together to prove Theorem~\ref{thm:main-informal}.
In Section \ref{sec:robust-partial-clustering-upg}, we present a refinement of our partial clustering procedure that improves the probability of success to a constant independent of the minimum weight of any component in the input mixture. 
In Section \ref{sec:poly-proper}, we present an efficient algorithm that replaces an exhaustive search subroutine in the tensor decomposition algorithm and combines it with the improved partial clustering subroutine to get a $\poly_k(\epsilon)$-error guarantee for robust proper learning of Gaussian mixtures and prove Theorem~\ref{thm:poly-proper-inf}.
Finally, in Section~\ref{sec:param-recovery}, we show that our algorithm in fact achieves the stronger parameter estimation guarantees and prove Theorem~\ref{thm:param-inf}.
\Anote{Update.} \Pnote{done.}

\subsection{Overview of Techniques } \label{ssec:techniques}

\new{\subsubsection{Proof of Theorem~\ref{thm:main-informal}}} \label{ssec:old-techniques}

In this section, we give a bird's eye view of our algorithm and the main ideas that go into it. 
Recall that our goal is to design an efficient algorithm that takes an $\epsilon$-corrupted sample $Y$ 
from a mixture of $k$-Gaussians $\calM = \sum_i w_i \cN(\mu_i,\Sigma_i)$ 
and outputs a mixture $\widehat{\calM} = \sum_i \hat w_i \cN(\widehat \mu_i, \widehat \Sigma_i)$ 
such that the total variation distance between $\calM$ and $\widehat{\calM}$ is bounded above 
by a dimension-independent function of $\epsilon$ (bounded above by $1/(\log(1/\epsilon))^{k^{-O(k^2)}}$). 
Specifically, we want the running time of our algorithm to be bounded above by a polynomial 
in the dimension $d$ and $1/\epsilon$, for any fixed $k$. 

In the non-robust setting (i.e., for $\epsilon =0$), the algorithm of Moitra and Valiant~\cite{MoitraValiant:10}, 
extending their work with Kalai~\cite{KMV:10}, solves this problem. 
However, natural attempts to adapt their method to tolerate outliers run into immediate difficulties. 
The starting point of~\cite{MoitraValiant:10} is to observe that if a mixture of $k$ Gaussians 
has every pair of components separated in total variation distance by at least $\delta$, 
%\footnote{We actually do not make this assumption in our algorithm but this distinction is not important in this overview.} 
then a random univariate projection of the mixture has a pair of components that are $\delta/\sqrt{d}$-separated 
in total variation distance. Their algorithm uses this observation to piece together estimates of the mixture 
when projected to several carefully chosen directions to get an estimate of the high-dimensional mixture. 
Notice, however, that such a strategy meets with instant roadblock in the presence of outliers: 
the fraction of outliers, being a dimension-independent constant, completely overwhelms the total variation 
distance between components in any one direction making them indistinguishable\footnote{Informally speaking, one could hope to show that outliers projected into a random direction cannot be too adversarial, but it is unclear how to use this observation not in the least because the algorithm of~\cite{MoitraValiant:10} requires a somewhat carefully tailored choice of projections.}. 

For a reader familiar with the work in algorithmic robust statistics, 
this may not come as a surprise --- to handle outliers, we almost always need to develop
a completely new algorithm, even in the outlier-free setting.

In particular, as we next describe, our approach diverges from the method of~\cite{MoitraValiant:10} 
at the very beginning, and instead relies on a careful interleaving of two new algorithmic primitives: 
(1) a new \emph{partial clustering algorithm} based on the sum-of-squares (SoS) method, 
and (2) a new \emph{tensor decomposition} method for decomposing a symmetrized 
sum of tensor powers of $d \times d$ matrices. 

\paragraph{The Key Determinant: Clusterability.} 
Our first and key conceptual contribution is to deal with the case of \emph{partially clusterable} mixtures 
differently from those that are not partially clusterable. We call a mixture partially clusterable 
if there is a pair of components that have total variation distance larger than $1-\Omega_k(1)$ 
(we will call such components {\em well-separated} in what follows). 
We note that even the setting when the mixture is {\em fully} clusterable 
(i.e., every pair of components is well-separated), 
the learning problem captures many hard special cases (such as subspace clustering) and is highly non-trivial. 
Two recent works~\cite{DHKK20,BK20} gave a polynomial-time algorithm for the fully clusterable
case, using the sum-of-squares method. Interestingly, it turns out that the clustering algorithm of~\cite{BK20} 
(specifically, their Lemma~6.4) can be generalized (see Theorem~\ref{thm:partial-clustering-non-poly}) 
to the partial clustering setting, i.e., the setting where we are guaranteed to have a pair of components 
that are well-separated (with no guarantees on the remaining components). 
This gives an algorithm with running time of $d^{(k/\alpha)^{O(k)}}$ to partition the input sample 
into components so that each piece of the partition is (effectively) a $(\poly(\alpha/k)+\epsilon)$-corrupted 
sample from disjoint sub-mixtures. Here, $\alpha$ is the smallest mixing weight. 
As we will soon see, such a partial clustering algorithm will be too slow to yield our final guarantees 
of a fully polynomial time algorithm (when the smallest weight is too small), 
and we will soon discuss a more efficient variant that will suffice for our purposes. 
%sacrifices some of our guarantees here in favor of a faster running time. 

\paragraph{Approximate Isotropic Transformation.} 
By applying our partial-clustering algorithm, we can effectively assume that the input is an $\epsilon$-corrupted 
sample from a mixture with every pair of components \emph{at most} $(1-\Omega_k(1))$-far
in total variation distance. At this point, we would like to make the mixture isotropic --- that is, we would like 
to assume that the mean of the mixture is $\approx 0$ and the covariance of the mixture is $\approx I$. 
In the setting with no outliers, this is simply a matter of computing the empirical mean and covariance 
and applying an appropriate affine transformation to the input points. However, in the setting with outliers, 
even this task is somewhat non-trivial. A natural idea is to use the algorithm for robust covariance estimation 
with bounded error in \emph{spectral norm} from~\cite{KS17} that works for all \emph{certifiably subgaussian} 
distributions (the same work also establishes that arbitrary mixtures of Gaussians are certifiably subgaussian). 
However, it turns out that our algorithm needs dimension-independent error guarantee on the estimated covariance 
in \emph{Frobenius norm} (instead of the weaker spectral error guarantee). Fortunately, the recent work~\cite{BK20} 
(Theorem 7.1 in their paper and Fact~\ref{fact:param-estimation-main}) gives precisely such an algorithm 
for robust covariance estimation that relies on the stronger property of \emph{certifiable hyper-contractivity} 
(we verify that this property holds for mixtures of Gaussians in 
Lemma~\ref{lem:mixtures-of-certifiably-hypercontractive-distributions} of Section~\ref{sec:prelims}). 

\paragraph{Mixtures with Pairwise Close Components.} 
After the first two steps, we can effectively assume that we are working with 
an $\epsilon$-corrupted sample from a mixture that is approximately isotropic 
and every pair of components is not too far in total variation distance. 
Why is this latter guarantee useful? 
As established in the recent works~\cite{DHKK20,BK20}, such a bound translates 
into a guarantees (with respect to natural norms) on the parameters of the component Gaussians. 
In particular, using this translation in our setting implies that after partial clustering plus 
an approximate isotropic transformation, we have that: 
1) $\Norm{\Sigma_i -I}_F \leq \poly(\alpha, k)$, 
2) $\Norm{\mu_i}_2 \leq 2/\sqrt{\alpha}$, 
and 3) $\Sigma_i \succeq \frac{1}{\poly(\alpha, k)} I$ for every $i$ 
(recall that $\alpha$ is the minimum weight in the mixture). 
In this case, it turns out that in order to learn the unknown mixture with error guarantees in total variation distance, 
it suffices to obtain $\poly_k(\epsilon)$-error estimates of the $\mu_i, \Sigma_i$'s in Frobenius norm. 

\paragraph{Symmetrized Tensors and the Work of Kane~\cite{Kane20}.} 
The key first step in addressing such a mixture was taken in the very recent work of Kane~\cite{Kane20}, 
who gave a polynomial-time algorithm to robustly learn an \emph{equiweighted} mixture of two Gaussians. 
For this special case, after isotropic transformation, one can effectively assume that the two means 
are $\pm \mu$ and the two covariances are $I \pm \Sigma$.  
Kane's idea is to look at a certain tensor (``Hermite tensor'') that can be built using the $4$-th and $6$-th raw moments of the mixture. 
Since we must use outlier-robust algorithms to estimate these tensors, we can obtain estimates that are accurate 
only up to constant error in Frobenius norm of the tensor. 
Kane's key observation is that \emph{for the special case of $k=2$ components}, 
one can build two different Hermite tensors, one of which is rank-one with component $\approx \mu$ 
(and thus one can immediately ``read off'' $\mu$); the other only has a tensor power of $\Sigma$. 
This second tensor is of the form $\hat T_4 = \mathrm{Sym}((\Sigma-I) \otimes (\Sigma-I))+E$, 
where $\Norm{E}_F = O_k(\sqrt{\epsilon})$ 
%and the error is measured in Frobenius norm 
and $\mathrm{Sym}$ refers to symmetrizing over all possible permutations 
of the ``4 modes of the tensor''. Unlike the case of the mean, 
one cannot simply ``read-off''\footnote{It is helpful to visualize a single entry of this tensor for, say, the case when $i,j,k,\ell$ are all distinct: 
$\hat T_4(i,j,k,\ell) = \frac{1}{3} (\Sigma(i,j) \Sigma(k,\ell) + \Sigma(i,k) \Sigma(j,\ell) + \Sigma(i,\ell)\Sigma(j,k)) + error$. 
Notice that obtaining entries of $\Sigma$ from $T_4$ is formally a task of solving noisy quadratic equations.} 
$\Sigma$ from $T_4$, but Kane gives a simple method to accomplish this. 
As noted in~\cite{Kane20}, it is not clear how to extend this to non-equiweighted mixtures of $k=2$ Gaussians, 
and going to even $k =3$ components requires substantially new ideas. 

\paragraph{List-decodable Tensor Decomposition.} 
Our method for $k > 2$ works by abstracting and generalizing 
key aspects of the~\cite{Kane20} somewhat ad-hoc approach for the case of $k=2$ 
and combining it with new ideas. 
This is necessary because of several issues, as we soon discuss: 
1) as stated, Kane's approach does not work as is even for $k=2$ when the mixture is not equiweighted, 
2) it is not known whether one can build tensors that ``separate'' out the means 
and the covariances, as~\cite{Kane20} managed to do for $k=2$, 
3) the relevant tensors will not be (symmetrizations of) rank-$1$ tensors, up to noise. 
It is worth noting, in fact, that such a gap is information-theoretically inherent, 
as shown in~\cite{MoitraValiant:10}; learning arbitrary mixtures of $k$ 
Gaussians \emph{provably} requires at least $2k$ moments of the mixture. 
Another way of seeing this is by considering
the parallel pancakes construction of~\cite{DKS17-sq}, where the
authors produce an example of a mixture of $k$ Gaussians whose first
$k$ moments match the standard Gaussian exactly despite not being
close in total variational distance. It should be noted that,
in this example, the component Gaussians are all equal to the standard
Gaussian except in one hidden direction. Thus, we cannot hope to
identify the components exactly with just $O(1)$ many moments.
(However, for this example at least, we might hope to identify these
components up to whatever is going on in this one (unknown) hidden
direction. In fact, more complicated constructions can be made to have
several such hidden directions.) 
The somewhat surprising fact (that we establish 
in more detail below) is that by looking at only the first four moments of our mixture, 
we can learn all of the components up to some errors taking place along a
bounded number of hidden directions. In particular, we can learn the
covariance matrices of the component Gaussians up to some {\em low rank}
error terms. We elaborate on this new idea below.

For the sake of the intuition, it is helpful to focus on the simpler case where all the means are zero. 
In this case, the estimated 4th Hermite tensor (built from estimated raw moments of degree at most $4$ of the mixture) has the following form: 
\[
\hat T_4 = \sum_{i=1}^k w_i \mathrm{Sym}( (\Sigma_i-I) \otimes (\Sigma_i-I) + E) \;,
\]
where $E$ is a $4$-tensor with $\Norm{E}_F =O_k(\sqrt{\epsilon})$.
Given the form of this tensor, it is natural to think of applying tensor decomposition algorithms, 
by thinking of $\Sigma_i-I$ as a $d^2$-dimensional vector. 
However, we run into the issue of uniqueness of tensor decomposition, 
since we are dealing with 2nd order tensors (once we view $\Sigma_i-I$ as a $d^2$-dimensional vector). 
One might imagine computing higher-order tensors of similar forms to overcome the uniqueness issues, 
but this runs into two major complications: first, the symmetrization operation introduces spurious terms 
that do not have the sum of tensor-power structure required for such an algorithm to succeed. 
Indeed, this is far from just being an annoying technicality --- the recent work of 
Garg, Kayal, and Saha~\cite{DBLP:journals/corr/abs-2004-06898} 
addresses precisely such a tensor decomposition problem via algebraic techniques. 

Unfortunately, as is explicitly pointed out as one of the main open question in their work (see page 14), 
because of their reliance on algebraic techniques, 
their algorithm is highly brittle and in particular, may not even be able to handle the benign noise 
that comes from estimating tensors from independent (uncorrupted) samples. 
(Of course, our setting has to deal with the malicious noise introduced due to the adversarial outliers.) 
Second, even if one were to get hold of the tensor without the symmetrization operation, 
the only applicable tensor decomposition algorithm (recall that we do not make \emph{any} genericity assumptions on the components that are typically required by tensor decomposition algorithms) 
is the result of Barak, Kelner, and Steurer~\cite{MR3388192-Barak15}. 
However, the~\cite{MR3388192-Barak15} result, while being efficient in its dependence 
on the number of components, has exponential dependence on the target error, 
which is prohibitively expensive for our application. 

Our idea is to give up on the goal of recovering the unique decomposition of the tensor $\hat T_4$ above, 
and start by applying an operation that is a common trick in most tensor decomposition algorithms. 
In our context, this trick amounts to taking a random matrix (say, with independent standard Gaussian entries) 
$P$ and ``collapsing'' the last two modes of $\hat T_4$ with $P$ 
(i.e., computing $\hat S(i,j) = \sum_{k,\ell}\hat T_4 (i,j,k,\ell) P(k,\ell)$) 
to obtain a matrix $Q$. 
%Is it Q(i,j) above or S below?
In the usual tensor decomposition procedures, we are interested in proving that one can recover all the information 
about the components of the tensor from $Q$. We will not be able to prove such a statement here. 
Instead, our key observation is that one can choose a matrix $P$ \emph{of rank $\poly(k)$} 
and argue that the resulting $\hat S$ is $O_k(\epsilon)$-close to one of the $\Sigma_i-I$ 
\emph{up to an error term of $O(k^2)$ rank.} To see this, note that in the symmetrization
\[
\hat T_4(i,j,k,\ell) = \frac{1}{3} (\Sigma(i,j) \otimes \Sigma(k,\ell) + 
\Sigma(i,k) \otimes \Sigma(j,\ell) + \Sigma(i,\ell) \otimes \Sigma(j,k)) + \mbox{ error}
\]
applying a rank-one matrix $P$ to the modes $k,\ell$ will reduce the first term to the matrix $\Sigma$, 
while the latter two terms become rank-one terms! When the tensor is a sum of $k$ such symmetrized tensors, 
we will use $k$ such rank-one matrices, and take a linear combination of them to get a weighted sum of the $\Sigma$'s 
plus a term of rank $O(k^2)$. As we show, the linear combination can be chosen such that only one 
of the component $\Sigma$'s survives (up to small Frobenius norm).  
Moreover, such a low-rank $P$ can be obtained efficiently by simply choosing $\poly(k)$ 
random rank-$1$ matrices and exhaustively searching 
over an appropriate cover of the $O(k^2)$-dimensional subspace spanned by them. 

\paragraph{Subspace Enumeration to Recover the Low-rank Terms.} 
We then show that we can use the generated estimates $\hat S$ and the tensors $\hat T_m$ for $m \leq 4k$ 
to find a $\poly_k(\epsilon)$-dimensional subspace $V_*$ such that the low-rank error matrix 
in the estimated $\hat S$ have their range space essentially contained inside $V_*$. 
Our next step involves a subspace enumeration over $V_*$ 
to output a list of a bounded number of parameters 
such that a mixture defined by some $k$ of them 
must be close in total variation distance to the input mixture. 

Our final step involves a relatively standard hypothesis testing procedure, 
using a robust tournament that goes over each of the candidate mixtures in our generated list 
and finds one that is approximately the closest in total variation distance to a (fresh) set of corrupted samples. 

\paragraph{An Algorithm with Exponential Dependence in $1/\epsilon$.} 
The above steps suffice to immediately obtain an algorithm 
whose running time grows exponentially in the reciprocal of the minimum weight of the mixture. 
This gives a polynomial-time algorithm (see Theorem~\ref{thm:robust-GMM-equiweigthed} for details) 
for robustly learning arbitrary equiweighted mixtures of $k$ Gaussians. 
When the weights are not all equal, notice that we can treat any component with weight $\leq \epsilon$ as outliers, 
which effectively means that $\alpha \geq \epsilon$. 
Thus, our discussion already yields a $d^{f(k/\epsilon)}$-time algorithm in the general setting. 

In order to improve this running time to have a fixed polynomial dependence on $d$ (independent of $1/\epsilon$), 
we rely on a new partial clustering result that weakens the separation guarantee of total variation distance. 
Our final algorithm then involves a recursive interleaving of the partial clustering 
and tensor decomposition steps with a new \emph{Recursive Spectral Clustering} subroutine. 
We discuss these steps next.

\paragraph{Partial Clustering.} 
The key bottleneck in the running time guarantee of the algorithm described above is the partial clustering step, 
so it is important to examine the cause for the exponential dependence on the minimum weight in the running time. 
The algorithm relies on a recently established characterization of well-separated pair of Gaussians 
$\cN(\mu_1,\Sigma_1)$ and $\cN(\mu_2,\Sigma_2)$ 
in terms of three geometric distances between their parameters: 
1) Mean Separation: $\exists v \text{ such that } \iprod{\mu_1-\mu_2,v}^2 \geq \Delta (v^\top (\Sigma_1 + \Sigma_2)v)$, 
2) Relative Frobenius Separation: $\Norm{\Sigma_1^{-1/2} (\Sigma_2 -\Sigma_1) \Sigma^{-1/2}}_F \geq \Delta$, 
and 3) Spectral Separation: $\exists v$ such that $v^{\top}\Sigma_1 v \geq \Delta v^{\top} \Sigma_2 v$ 
or $v^{\top} \Sigma_2 v \geq \Delta v^{\top} \Sigma_1 v$. 
The main idea of the algorithm is to give efficient (low-degree) \emph{sum-of-squares} certificates 
of \emph{simultaneous intersection bounds} that show that any cluster a natural SoS relaxation finds 
cannot significantly overlap with two well-separated clusters simultaneously. 
This step requires sum-of-squares certificates for two natural analytic properties: 
\emph{certifiable hypercontractivity} and \emph{certifiable anti-concentration} 
(introduced in the recent works~\cite{karmalkar2019list, raghavendra2020list}, 
and also used in~\cite{raghavendra2020subspace,bakshi2020list}). 
The bottleneck that results in the bad running time for us is the degree 
of the sum-of-squares certificate needed for certifiable anti-concentration 
(which grows polynomially in $1/\alpha$). 

It is not known whether there is a sum-of-squares certificate of much smaller degree for certifiable anti-concentration. 
To make progress here, we observe that the only usage of this certificate occurs in dealing 
with spectrally separated pairs of Gaussians in the mixture. Indeed, we give a new partial clustering algorithm 
that works in fixed polynomial time, whenever there is a pair of Gaussian components separated 
either via their means or an appropriate variant of the relative Frobenius distance. 

\paragraph{Tensor Decomposition Needs to be Augmented.} 
While we gain on the running time through our new partial clustering algorithm, 
the guarantees of the tensor decomposition subroutine we discussed above 
are no longer enough to guarantee a recovery of parameters 
that result in a mixture close in total variation distance. 
Because of the three conditions that we assumed in the working of the tensor decomposition algorithm, 
we can no longer guarantee the third one that gives a lower bound on the smallest eigenvalue of every covariance 
(relative to the covariance of the mixture). In particular, we can end up in a situation where, 
even though we have a list of parameters that contain Frobenius-norm-close estimates of the covariances, 
the estimates are not enough to provide a total variation distance guarantee. 
(Consider for example a ``skinny'' direction where the variance of some component is very small, or even $0$. 
Then we have to learn the parameters more precisely!)

\paragraph{Spectral Separation of Thin Components.} 
It turns out that the above is the only way the algorithm can fail at this point --- one or more covariance matrices 
have a very small eigenvalue (if not, the Frobenius norm error would imply TV-distance error). 
But since we have estimates of the covariances, we can find such a small eigenvector. 
Now we observe that since the mixture is nearly isotropic (i.e., the overall variance in each direction is $\sim 1$), 
if some component has very small variance along a direction, then 
the components must be separable along this direction. 
We show that it is possible to efficiently cluster the mixture after projecting it to this direction, 
so that each cluster has strictly fewer components. 
We then recursively apply the entire algorithm on the clusters obtained, 
which will each have strictly fewer components. 

\paragraph{Polynomial Complexity.} 
To avoid $\eps$-dependence in the exponent of the dimension or exponential dependence 
in the minimum mixing weight, we use our new partial clustering algorithm that does not rely 
on certifiable anti-concentration, by avoiding spectral proximity guarantees, and moving the work 
of separating along small eigenvalue directions to later in the algorithm. 
Our tensor decomposition also has a dependence on the minimum mixing weight. 
To circumvent this exponential dependence on the minimum weight, 
a natural approach would be to ignore components lighter than some threshold (that depends 
on the target error) and treat them as corruptions. However, this intuitive approach runs into a difficulty. 
In order to get nontrivial error guarantees on the tensor decomposition, 
we need that the minimum mixing weight is significantly {\em larger} than the fraction of outliers 
(since the decomposition involves generating a list of candidate hypotheses, one of which must be accurate). 
To solve this problem, we show that we can set a minimum weight threshold that depends 
on the number of remaining components of the mixture, and ``remove'' some (but not all) components, 
so that the remaining mixture has minimum weights above this threshold; 
and the threshold is also sufficiently larger than the total weight of small components treated as corruptions. 
This makes the overall computational complexity dependence on $d$ truly polynomial for any fixed $k$, 
and avoids any dependence on the true minimum mixing weight. 

\new{

\subsubsection{Proofs of Theorems~\ref{thm:poly-proper-inf} and~\ref{thm:param-inf}} \label{ssec:new-techniques}

\paragraph{Robust Proper Learning with $\poly_k(\eps)$ Error.}
The algorithm of Theorem~\ref{thm:main-informal} achieves error polynomial in $\epsilon$, 
alas in time exponential in $\poly_k(1/\eps)$. This exponential dependence on $1/\epsilon$ comes from two sources: 
(1) the exhaustive subspace enumeration within the space $V_*$ of significant eigenvalues, 
and (2) the error probability in the partial clustering algorithm, 
which then necessitates super-polynomial enumeration. 
We use additional techniques to reduce both dependencies (and hence the overall running time) 
to $\poly_k(1/\eps)$. 

The key bottleneck in the partial clustering step is the rounding algorithm that obtains a list of candidate clusters each of size roughly $\alpha n$ where $\alpha$ is the weight of the smallest component cluster (note that this can be arbitrarily small compared to $1/k$).  The rounding guarantees that any such candidate cluster cannot simultaneously contain a significant fraction of points from two components whose covariances are separated in Frobenius distance. However, the candidates could be indiscriminate in collecting an arbitrary subset of points from ``nearby'' clusters. Our rounding algorithm in Section 4 simply guesses a partition of them such that every pair  a $\exp(-1/\alpha)$ success probability in guessing a 2 partition of points such that any cluster (up to a tiny fraction of errors) appears on one side only. 

Our new clustering algorithm improves on this random guessing step in the rounding by observing that points that come from nearby clusters must have covariances that are close in Frobenius norm. Thus our rounding applies a robust covariance estimation algorithm (with error guarantees in Frobenius norm) to each candidate cluster of size roughly $\alpha n$ and then collects candidate clusters whenever their estimated covariances are close. We show that this procedure coalesces the $O(1/\alpha)$ clusters into a collection of at most $k$. The random guessing step now succeeds with $\exp(-k)$ probability resolving the bottleneck in the discussion above. 

The other bottleneck is in the subspace enumeration over $V_*$ outputs a list of size with exponential dependence on the dimension of $V_*$. 
To reduce the list size of tensor decomposition, first we observe that we can use an elementary filtering 
technique to denoise the data, since the number of samples needed is small. Specifically, 
if the sample complexity of the algorithm is polynomial in the dimension and error parameter, 
we can set both the dimension of $V_*$ and the error parameter to be polynomials in $1/\eps$, 
so that the number of samples needed is $O(1/\eps)$. Then, 
with probability $(1-\eps)^{O(1/\eps)}=\Omega(1)$, a sample of size $O(1/\eps)$ 
drawn from the total variation contamination model has no noise. 
For such a clean sample, we can apply a non-robust algorithm to the subspace $V_*$. 
If the time complexity of the algorithm is polynomial in the dimension and error parameter, 
then the running time to apply it on $V_*$ will be $\poly_k(\eps)$, as desired.
The next step is to prove that such an algorithm exists. We will use the algorithm in Theorem~8 of~\cite{MoitraValiant:10}. 
A small technical issue is that the latter assumes that any pair of components in the mixture has TV distance 
at least $\delta$, where $\delta$ is the error parameter. 
We show that with an appropriately chosen parameter $\delta'=\poly(\delta)$, 
any pair of components with TV distance less than $\delta'$ are close enough 
so that the algorithm cannot distinguish the pair from a single Gaussian, 
since the algorithm only requires a polynomial number of samples. 
If we merge all such pairs, any pair in the mixture is separated by $\delta'$,
and then we can apply Theorem~8 in~\cite{MoitraValiant:10}.
For each estimate $\hat S$ of the main algorithm, we can recover the low-rank error of $\hat S$ 
by learning the mixture in the subspace $V_*$. Combining the estimates $\hat S$ 
and the estimates in the subspace $V_*$, we get a list of parameters of size $\poly_k(\eps)$. 

\paragraph{Parameter Recovery.}
We show that for any two Gaussian mixtures, if they are close in TV distance, 
their components are also close in TV distance, which implies that we can recover 
the components or the parameters of the mixture. This result generalizes Theorem 8.1 in~\cite{LM20}, 
which has three additional assumptions: 
(i) each component has variance at least $\poly(\eps/d)$ and at most $\poly(d/\eps)$, 
(ii) each pair of components has TV distance at least $\poly_k(\eps)$, and
(iii) the minimal weights of both mixtures are at least $\poly_k(\eps)$. 
\cite{LM20} also proved the conclusion under the assumption that any parameters (means and covariances) 
are identical or separated. We reduce the general case to this simplification. 
The first step is to deal with the components with small weights. 
We use a threshold $\eps' =\poly_k(\eps)$ of weights such that if we treat components with weights smaller 
than the threshold as noise, other components have weights at least $\poly_k(\eps')$. 
The second step is a partial clustering on the union of the components of the two mixtures, 
after which the components within each cluster are pairwise not too close, 
i.e., have TV distance bounded by $1-\poly(\eps')$. Then we can modify the parameters in each cluster slightly, 
so that the resulting parameters for different components are either identical or have a minimum separation. 
Thus, we reduce the general case of  two arbitrary mixtures to this special case. 
Overall, this is a purely information-theoretic statement --- for each significant weight component 
of one mixture, there will be a component in the second mixture with very close mean and covariance. 
We note that this is not necessarily a 1-1 mapping between components, 
which is impossible in general without further assumptions.
}

\section{Preliminaries} \label{sec:prelims}

\paragraph{Basic Notation.}
For a vector $v$, we use $\norm{v}_2$ to denote its Euclidean norm.
For an $n \times m$ matrix $M$, we use $\norm{M}_{\textrm{op}} = \max_{\norm{x}_2=1} \norm{Mx}_2$
to denote the operator norm of $M$ and $\norm{M}_F = \sqrt{\sum_{i,j} M_{i,j}^2}$ to denote the Frobenius norm of $M$. We sometimes use the notation $M(i,j)$ to index the corresponding entries in $M$. For an $n \times n$ symmetric matrix $M$, we use $\succeq$ to denote the PSD/Loewner ordering over eigenvalues of $M$ and $\tr\Paren{M} = \sum_{i \in [n]} M_{i,i}$ to denote the trace of $M$.  We use $U\Lambda U^{\top}$ to denote the eigenvalue decomposition,
where $U$ is an $n \times n$ matrix with orthonormal columns and $\Lambda$ is
the $n \times n$ diagonal matrix of the eigenvalues. We use
$M^{\dagger} = U \Lambda^{\dagger} U^{\top} $ to denote the Moore-Penrose pseudoinverse, where $\Lambda^{\dagger}$ inverts the non-zero eigenvalues of $M$. If $M \succeq 0$, we use $M^{\dagger/2} = U \Lambda^{\dagger/2} U^{\top}$ to denote taking the square-root of the inverted non-zero eigenvalues.

For $d \times d$ matrices $A,B$, the Kronecker product of $A,B$, denoted by $A \otimes B$,
is indexed by $(i,j), (k,\ell) \in [d] \times [d]$ and has entries $(A \otimes B)((i,j),(k,\ell)) = A(i,k) B(j,\ell)$.
We will equip every tensor $T$ with the norm $\Norm{\cdot}_F$ that simply corresponds to the $\ell_2$-norm
of any flattening of $T$ to a vector.
%We will also use $\hat{(\cdot)}$ to denote any object that we computationally estimate in our algorithm.
The notation $T(\cdot, \cdot, x, y)$ is used to denote collapsing two modes of the tensor by plugging in $x, y$.
For a positive integer $\ell$ and vector $v$, we also use
$v^{\otimes \ell} = \underbrace{v \otimes v \ldots \otimes v}_{\ell \textrm{times}}$.

We use the notation
$\calM = \sum_{i \in [k]} w_i \calN\Paren{\mu_i , \Sigma_i}$ to represent a $k$-mixture of Gaussians.
The total variation distance between two probability distributions on $\R^d$ with densities
$p, q$ is defined as $\dtv(p, q) = \frac{1}{2}\int_{\R^d}|p(x)-q(x)|dx$. We also use $\expecf{}{\cdot}$, $\var{\cdot}$ and $\textsf{Cov}(\cdot)$ to denote the expectation, variance and covariance of a random variable.

\new{For a finite dataset $X$, we will use $Z \in_u X$ to denote that $Z$ is the uniform distribution on $X$. We will sometimes use the term mean (resp. covariance) of $X$ to refer to $\expecf{Z \in_u X}{Z}$ (resp. $\textsf{Cov}_{Z \in_u X}(Z)$).}

\subsection{Gaussian Background} \label{ssec:prelims-gaussian}

The first few facts in this subsection can be found in Kane~\cite{Kane20}.

\begin{fact}\label{GaussianTVFact}
The total variation distance between two Gaussians $\calN\Paren{\mu_1, \Sigma_1}$ and $\calN\Paren{\mu_2,\Sigma_2}$ can be bounded above as follows:
$$
\dtv\Paren{\calN(\mu_1,\Sigma_1),\calN(\mu_2,\Sigma_2)} = \bigO{ \Paren{\mu_1-\mu_2}^\top \Sigma_1^{\dagger}\Paren{\mu_1-\mu_2}  + \Norm{\Sigma_1^{\dagger/2} \Paren{ \Sigma_2 - \Sigma_1} \Sigma_1^{\dagger/2}}_F  } \;.
$$
%O((()\cdot \Sigma_1^{-1} (\mu_1-\mu_2))^{1/2}+\|\Sigma_1^{-1/2}\Sigma_2 \Sigma_1^{-1/2}-I \|_F ).
\end{fact}

% \begin{theorem}\label{basicCovLearnerTheorem}
% Let $X$ be a distribution on $\R^d$ with $\cov(X)\leq I$, then there exists a polynomial time algorithm which given $\eps$-noisy samples from $X$ returns a $\hat\mu$ so that $|\E[X]-\hat\mu|\leq O(\sqrt{\eps})$ with high probability.
% \end{theorem}

\begin{fact}[Theorem 2.4 in \cite{Kane20}]\label{CovarianceLearnerTheorem}
Let $\calD$ be a distribution on $\R^{d\times d}$, where $\calD$ is supported on the subset of $\R^{d\times d}$ corresponding to the set of symmetric PSD matrices. Suppose that $\E[\calD]=\Sigma$ and that for any symmetric matrix $A$ we have that $\var{\tr(AX)} = \bigO{\sigma^2\|\Sigma^{1/2} A \Sigma^{1/2}\|_F^2}.$ Then, for $\epsilon \ll \sigma^{-2}$, there exists a polynomial-time algorithm that given sample access to an $\eps$-corrupted set of samples from $\calD$
%for $\eps$ less than a sufficiently small multiple of $$
returns a matrix $\hat{\Sigma}$ such that with high probability $\|\Sigma^{-1/2}(\Sigma -\hat{\Sigma})\Sigma^{-1/2}\|_F = O(\sigma\sqrt{\eps}).$
\end{fact}

\begin{fact}[Proposition 2.5 in \cite{Kane20}] \label{momentsProp}
Let $G \sim \calN(\mu,\Sigma)$ be a Gaussian in $\R^d$. Then, we have that
$$
\E[G^{\otimes m}]\Paren{i_1,\ldots, i_m} = \sum_{\substack{\textrm{Partitions }P\textrm{ of }[m]\\ \textrm{ into sets of size 1 and 2}}} \bigotimes_{\{a,b\}\in P} \Sigma\Paren{i_a,i_b} \bigotimes_{\{c\}\in P} \mu\Paren{i_c} \;.
$$
\end{fact}

We will work with the coefficient tensors of $d$-dimensional Hermite polynomials:
\begin{definition}[Hermite Tensors] \label{def:degree_m_hermite_tensor}
Define the degree-$m$ Hermite polynomial tensor as
$$
h_m(x) :=  \sum_{\substack{\textrm{Partitions }P\textrm{ of }[m]\\ \textrm{ into sets of size 1 and 2}}} \bigotimes_{\{a,b\}\in P} -I\Paren{i_a,i_b} \bigotimes_{\{c\}\in P} x\Paren{i_c} \;.
$$
\end{definition}

We will use the following fact that relates Hermite moments to the raw moments of any distribution.
\begin{fact}[Hermite vs Raw Moments, see, e.g., \cite{wiki:inverse_hermite}] \label{fact:hermite-vs-raw-moments}
For any real-valued random variable $u$, and $m \in \N$, $\max_{i \leq m} |\E u^i - \E_{z \sim \cN(0,1)} z^i| \leq 2^{O(m)} \max_{i \leq m} |\E h_m(u)|$. Similarly, $\max_{i \leq m} |\E h_m(u)| \leq 2^{O(m)} \max_{i \leq m} |\E u^i - \E_{z \sim \cN(0,1)} z^i|$.
\end{fact}

\begin{fact}[Lemma 2.7 in \cite{Kane20}] \label{HermiteExpectationLem}
If $G\sim \calN(\mu,I+\Sigma)$, then we have that
$$\E[h_m(G)] = \sum_{\substack{\textrm{Partitions }P\textrm{ of }[m]\\ \textrm{ into sets of size 1 and 2}}}
\bigotimes_{\{a,b\}\in P} \Sigma\Paren{i_a,i_b} \bigotimes_{\{c\}\in P} \mu\Paren{i_c} \;.$$
\end{fact}

\begin{fact}[Lemma 2.8 in \cite{Kane20}] \label{HermiteVarianceLem}
If $G\sim \calN(\mu,I+\Sigma)$, then $\E[h_m(G)\otimes h_m(G)]$ is equal to
$$\sum_{\substack{\textrm{Partitions }P\textrm{ of }[2m]\\ \textrm{ into sets of size 1 and 2}}}
\bigotimes_{\substack{\{a,b\}\in P\\a,b\textrm{ in same half of }[2m]}} \Sigma\Paren{i_a,i_b}
\bigotimes_{\substack{\{a,b\}\in P\\a,b\textrm{ in different halves of }[2m]}} (I+\Sigma)\Paren{i_a,i_b}\bigotimes_{\{c\}\in P} \mu\Paren{i_c} \;.$$
\end{fact}

\begin{lemma}[Slight Strengthening of Lemma 5.2 in \cite{Kane20} ]\label{Hermite Covariance Lemma}
For $G\sim \calN(\mu,\Sigma)$, the covariance matrix of $h_m(G)$ satisfies:
$$\left\|\cov(h_m(G)) \right\|_{\textrm{op}} \leq \Norm{\E \left[h_m(G) \otimes h_m(G) \right]}_{\textrm{op}}
= \mathcal{O}\Paren{m(1+\Norm{\Sigma}_F+\Norm{\mu}_2)}^{2m} \;.$$
\end{lemma}
This follows from the proof of Lemma 5.2 in \cite{Kane20} by noting that the number of terms in the sum is at most $2^m$ times the number of partitions of $[2m]$ into sets of size $1$ and $2$, which is at most $O(m)^{2m}$.

Next, we use upper and lower bounds on low-degree polynomials of Gaussian random variables. We defer the proof of the subsequent Lemma to Appendix \ref{sec:omitted_proofs}.

\begin{lemma}[Concentration of low-degree polynomials] \label{lem:concentration_of_low_degree_gaussians}
Let $T$ be a $d$-dimensional, degree-$4$ tensor such that $\Norm{T}_F \leq \Delta$
for some $\Delta>0$ and let $x, y \sim \calN(0, I)$. Then, with probability at least $1-1/\poly(d)$, the following holds:
\begin{equation*}
\Norm{ T\Paren{\cdot, \cdot, x, y}}^2_F \leq \bigO{\log(d) \Delta^2} \;.
\end{equation*}
\end{lemma}
% \begin{proof}
% We note that
% \begin{align*}
% \E\left[\norm{T(\cdot,\cdot,x,y)}_F^2\right]
% &=\E\left[\sum_{i_1,i_2}\left(\sum_{i_3,i_4}T\Paren{i_1,i_2,i_3,i_4}x\Paren{i_3}y\Paren{i_4}\right)^2\right]\\
% &=\E\left[\sum_{i_1,i_2}\left(\sum_{i_3,i_4}T\Paren{i_1,i_2,i_3,i_4}^2x\Paren{i_3}^2y\Paren{i_4}^2\right)\right]\\
% &=\sum_{i_1,i_2,i_3,i_4}T\Paren{i_1,i_2,i_3,i_4}^2\le\Delta^2 \;.
% \end{align*}
% The second equality follows from the fact that $x\Paren{i_3},y\Paren{i_4}$ are independent and have zero means.
% So the only non-zero terms are the squares. The third equality follows from the fact that
% $x\Paren{i_3},y\Paren{i_4}$ are independent with unit variances. Observe that
% $\Norm{ T\Paren{\cdot, \cdot, x, y}}^2_F$ is a degree-$2$ polynomial in Gaussian random variables.
% Using standard concentration bounds for low-degree Gaussian polynomials
% (see, e.g.,~Theorem 2.3 in \cite{DRST14}), we obtain
% \begin{equation*}
% \Pr\left[ \Norm{T(\cdot,\cdot,x,y)}_F^2 \geq t^2 \E\left[\norm{T(\cdot,\cdot,x,y)}_F^2\right] \right] \leq \exp\left( - c t \right) \;.
% \end{equation*}
% Setting $t = \Omega(\log(d))$ completes the proof.
% %  By Markov's inequality,
% % $$
% % \Pr\left(\norm{T(\cdot,\cdot,x,y)}_F^2\ge c\Delta^2\right)\le \frac{\Delta^2}{c\Delta^2}=\frac{1}{c}.
% % $$
% \end{proof}

Note that for any matrix $M$, $\Iprod{M, x \otimes y}$, where $x,y \sim \calN(0,I)$,
is a degree-$2$ polynomial in Gaussian random variables. As a result, we have the following
anti-concentration inequality.

\begin{lemma}[Anti-concentration of bi-linear forms, \cite{CW01}] \label{lem:anti-concentration_of_bilinear_forms}
Let $M$ be a $d \times d$ matrix and let $x , y \sim \calN(0, I)$.
Then, for any $\zeta \in (0, 1)$, the following holds:
\begin{equation*}
\Pr \left[ \Iprod{M , x \otimes y}^2 \leq \zeta \expecf{}{\Iprod{M , x \otimes y}^2}   \right] \leq \bigO{\sqrt{\zeta}} \;.
\end{equation*}
\end{lemma}

%\begin{fact}[Robust Tournament, \cite{}]\label{lem:tournament}
%Let $\mathcal{C}$ be a class of probability distributions. Suppose that for some $N,\eps,\tau > 0$ there exists a
%polynomial time algorithm that given $N$ independent samples from some $\mathcal{M}\in\mathcal{C}$, of which up to a $\eps$-fraction
%have been arbitrarily corrupted, returns a list $\mathcal{L}$ of distributions whose probability density functions
%are explicitly computable and which can be effectively sampled from such that with $1-\tau/2$ probability
%there exists a $\mathcal{M}'\in\mathcal{L}$ with $\dtv(\mathcal{M},\mathcal{M}')<\delta$. Suppose furthermore that the distributions returned by this
%algorithm are all in some fixed set $\mathcal{\hat C}$. Then there exists another polynomial time algorithm, which given
%$O(N + (\log(|\mathcal{\hat C}|)+\log(1/\tau ))/\eps^2
%)$ samples from $\mathcal{M}$, an $\eps$-fraction of which have been arbitrarily corrupted,
%returns a single distribution $\mathcal{M}'$
%such that with $1-\tau$ probability $\dtv(\mathcal{M},\mathcal{M}') < O(\delta + \eps)$.
%\end{fact}

\subsection{Sum-of-Squares Proofs and Pseudo-distributions}
\label{sec:sos_proofs}

We refer the reader to the monograph~\cite{TCS-086} and the lecture notes~\cite{BarakS16} for a detailed exposition 
of the sum-of-squares method and its usage in average-case algorithm design.

Let $x = (x_1, x_2, \ldots, x_n)$ be a tuple of $n$ indeterminates and let $\R[x]$ be the set of polynomials 
with real coefficients and indeterminates $x_1,\ldots,x_n$. We say that a polynomial $p\in \R[x]$ is a 
\emph{sum-of-squares (sos)} if there exist polynomials $q_1,\ldots,q_r$ such that $p=q_1^2 + \cdots + q_r^2$.

\subsubsection{Pseudo-distributions}

Pseudo-distributions are generalizations of probability distributions.
We can represent a discrete (i.e., finitely supported) probability distribution over $\R^n$ 
by its probability mass function $D\from \R^n \to \R$ such that $D \geq 0$ 
and $\sum_{x \in \mathrm{supp}(D)} D(x) = 1$.
Similarly, we can describe a pseudo-distribution by its mass function 
by relaxing the constraint $D\ge 0$ to passing certain low-degree non-negativity tests.

Concretely, a \emph{level-$\ell$ pseudo-distribution} is a finitely-supported function $D:\R^n \rightarrow \R$ 
such that $\sum_{x} D(x) = 1$ and $\sum_{x} D(x) f(x)^2 \geq 0$ for every polynomial $f$ of degree at most $\ell/2$.
(Here, the summations are over the support of $D$.)
A straightforward polynomial-interpolation argument shows that every level-$\infty$-pseudo distribution 
satisfies $D\ge 0$ and is thus an actual probability distribution.
We define the \emph{pseudo-expectation} of a function $f$ on $\R^n$ with respect to a pseudo-distribution $D$, 
denoted $\pE_{D(x)} f(x)$, as
\begin{equation}
  \pE_{D(x)} f(x) = \sum_{x} D(x) f(x) \,\mper
\end{equation}
The degree-$\ell$ moment tensor of a pseudo-distribution $D$ is the tensor $\pE_{D(x)} (1,x_1, x_2,\ldots, x_n)^{\otimes \ell}$.
In particular, the moment tensor has an entry corresponding to the pseudo-expectation of all monomials of degree at most $\ell$ in $x$.
The set of all degree-$\ell$ moment tensors of probability distribution is a convex set.
Similarly, the set of all degree-$\ell$ moment tensors of degree-$d$ pseudo-distributions is also convex.
Unlike moments of distributions, there is an efficient separation oracle for moment tensors of pseudo-distributions.
\begin{fact}[\cite{MR939596-Shor87,MR1748764-Nesterov00,MR1846160-Lasserre01,MR3050242-Parrilo13}] 
\label[fact]{fact:sos-separation-efficient}
For any $n,\ell \in \N$, the following set has an $n^{O(\ell)}$-time weak separation oracle 
(in the sense of \cite{MR625550-Grotschel81}):
\begin{equation}
\Set{ \pE_{D(x)} (1,x_1, x_2, \ldots, x_n)^{\otimes d} \mid \text{ degree-d pseudo-distribution $D$ over $\R^n$}}\,\mper
\end{equation}
\end{fact}
This fact, together with the equivalence of weak separation and optimization \cite{MR625550-Grotschel81}, 
allows us to efficiently optimize over pseudo-distributions (approximately) --- this algorithm is referred to as 
the sum-of-squares algorithm. The \emph{level-$\ell$ sum-of-squares algorithm} optimizes over the space 
of all level-$\ell$ pseudo-distributions that satisfy a given set of polynomial constraints (defined below).

\begin{definition}[Constrained pseudo-distributions]
Let $D$ be a level-$\ell$ pseudo-distribution over $\R^n$.
Let $\cA = \{f_1\ge 0, f_2\ge 0, \ldots, f_m\ge 0\}$ be a system of $m$ polynomial inequality constraints.
We say that \emph{$D$ satisfies the system of constraints $\cA$ at degree $r$}, denoted $D \sdtstile{r}{} \cA$, 
if for every $S\subseteq[m]$ and every sum-of-squares polynomial $h$ with 
$\deg h + \sum_{i\in S} \max\set{\deg f_i,r}$, we have that $\pE_{D} h \cdot \prod _{i\in S}f_i  \ge 0$.

We write $D \sdtstile{}{} \cA$ (without specifying the degree) if $D \sdtstile{0}{} \cA$ holds.
Furthermore, we say that $D\sdtstile{r}{}\cA$ holds \emph{approximately} if the above inequalities 
are satisfied up to an error of $2^{-n^\ell}\cdot \norm{h}\cdot\prod_{i\in S}\norm{f_i}$, 
where $\norm{\cdot}$ denotes the Euclidean norm\footnote{The choice of norm is not important here because 
the factor $2^{-n^\ell}$ swamps the effects of choosing another norm.} 
of the coefficients of a polynomial in the monomial basis.
\end{definition}

We remark that if $D$ is an actual (discrete) probability distribution, then we have that 
$D\sdtstile{}{}\cA$ if and only if $D$ is supported on solutions to the constraints $\cA$. 
We say that a system $\cA$ of polynomial constraints is \emph{explicitly bounded} 
if it contains a constraint of the form $\{ \|x\|^2 \leq M\}$.
The following fact is a consequence of \cref{fact:sos-separation-efficient} and \cite{MR625550-Grotschel81}:
\begin{fact}[Efficient Optimization over Pseudo-distributions]
There exists an $(n+ m)^{O(\ell)} $-time algorithm that, given any explicitly bounded and satisfiable 
system\footnote{Here, we assume that the bit complexity of the constraints in $\cA$ is $(n+m)^{O(1)}$.} 
$\cA$ of $m$ polynomial constraints in $n$ variables, outputs a level-$\ell$ pseudo-distribution 
that satisfies $\cA$ approximately. \label{fact:eff-pseudo-distribution}
\end{fact}

\paragraph{Basic Facts about Pseudo-Distributions.}

We will use the following Cauchy-Schwarz inequality for pseudo-distributions:
\begin{fact}[Cauchy-Schwarz for Pseudo-distributions] \label{fact:pseudo-expectation-cauchy-schwarz}
Let $f, g$ be polynomials of degree at most $d$ in indeterminate $x \in \R^d$. 
Then, for any degree-$d$ pseudo-distribution $\tzeta$, we have that
$\pE_{\tzeta}[fg] \leq \sqrt{\pE_{\tzeta}[f^2]} \sqrt{\pE_{\tzeta}[g^2]}$.
\end{fact}

\begin{fact}[H{\"o}lder's Inequality for Pseudo-Distributions] \label{fact:pseudo-expectation-holder}
Let $f, g$ be polynomials of degree at most $d$ in indeterminate $x \in \R^d$. 
Fix $t \in \N$. Then, for any degree-$d t$ pseudo-distribution $\tzeta$, we have that
$\pE_{\tzeta}[f^{t-1}g] \leq \paren{\pE_{\tzeta}[f^t]}^{\frac{t-1}{t}} \paren{\pE_{\tzeta}[g^t]}^{1/t}$.
\end{fact}

\begin{corollary}[Comparison of Norms] \label{fact:comparison-of-pseudo-expectation-norms}
Let $\tzeta$ be a degree-$t^2$ pseudo-distribution over a scalar indeterminate $x$. 
Then, we have that $\pE[x^{t}]^{1/t} \geq \pE[x^{t'}]^{1/t'}$ for every $t' \leq t$. 
\end{corollary}

% \paragraph{Reweighting Pseudo-Distributions}

% The following fact is easy to verify and has been used in several works (see~\cite{DBLP:conf/stoc/BarakKS17} for example).  
% \begin{fact}[Reweightings] \label{fact:reweightings}
% Let $\tmu$ be a pseudo-distribution of degree $k$ satisfying a set of polynomial constraints $\cA$ in variable $x$. 
% Let $p$ be a sum-of-squares polynomial of degree $t$ such that $\pE[p(x)] \neq 0$.
% Let $\tmu'$ be the pseudo-distribution defined so that for any polynomial $f$, $\pE_{\tmu'}[f(x)] = \pE_{\tmu}[ f(x)p(x)]/\pE_{\tmu}[p(x)]$. Then, $\tmu'$ is a pseudo-distribution of degree $k-t$ satisfying $\cA$. 
% \end{fact}

\subsubsection{Sum-of-squares proofs}

Let $f_1, f_2, \ldots, f_r$ and $g$ be multivariate polynomials in $x$.
A \emph{sum-of-squares proof} that the constraints $\{f_1 \geq 0, \ldots, f_m \geq 0\}$ imply the constraint $\{g \geq 0\}$ 
consists of  polynomials $(p_S)_{S \subseteq [m]}$ such that
\begin{equation}
g = \sum_{S \subseteq [m]} p_S \cdot \Pi_{i \in S} f_i \mper
\end{equation}
We say that this proof has \emph{degree $\ell$} if for every set $S \subseteq [m]$ 
the polynomial $p_S \Pi_{i \in S} f_i$ has degree at most $\ell$.
If there is a degree $\ell$ SoS proof that $\{f_i \geq 0 \mid i \leq r\}$ implies $\{g \geq 0\}$, we write:
\begin{equation}
\{f_i \geq 0 \mid i \leq r\} \sststile{\ell}{}\{g \geq 0\} \mper
\end{equation}
For all polynomials $f, g\colon \R^n \to \R$ and for all functions $F\colon \R^n \to \R^m$, 
$G\colon \R^n \to \R^k$, $H\colon \R^{p} \to \R^n$ such that each of the coordinates of the outputs are polynomials of the inputs, 
we have the following inference rules.

The first one derives new inequalities by addition or multiplication:
\begin{equation} \label{eq:sos-addition-multiplication-rule}
\frac{\cA \sststile{\ell}{} \{f \geq 0, g \geq 0 \} } {\cA \sststile{\ell}{} \{f + g \geq 0\}}, \frac{\cA \sststile{\ell}{} \{f \geq 0\}, \cA \sststile{\ell'}{} \{g \geq 0\}} {\cA \sststile{\ell+\ell'}{} \{f \cdot g \geq 0\}}\mper
\end{equation}
The next one derives new inequalities by transitivity: 
\begin{equation} \label{eq:sos-transitivity}
\frac{\cA \sststile{\ell}{} \cB, \cB \sststile{\ell'}{} C}{\cA \sststile{\ell \cdot \ell'}{} C}\mper 
\end{equation}
Finally, the last rule derives new inequalities via substitution:
\begin{equation} \label{eq:sos-substitution}
\frac{\{F \geq 0\} \sststile{\ell}{} \{G \geq 0\}}{\{F(H) \geq 0\} \sststile{\ell \cdot \deg(H)} {} \{G(H) \geq 0\}}\tag{substitution}\mper
\end{equation}
Low-degree sum-of-squares proofs are sound and complete if we take low-level pseudo-distributions as models.
Concretely, sum-of-squares proofs allow us to deduce properties of pseudo-distributions that satisfy some constraints.
\begin{fact}[Soundness] \label{fact:sos-soundness}
If $D \sdtstile{r}{} \cA$ for a level-$\ell$ pseudo-distribution $D$ and there exists a sum-of-squares proof 
$\cA \sststile{r'}{} \cB$, then $D \sdtstile{r\cdot r'+r'}{} \cB$.
\end{fact}
If the pseudo-distribution $D$ satisfies $\cA$ only approximately, soundness continues to hold 
if we require an upper bound on the bit-complexity of the sum-of-squares $\cA \sststile{r'}{} B$ 
(i.e., the number of bits required to write down the proof). In our applications, the bit complexity of all 
sum-of-squares proofs will be $n^{O(\ell)}$ (assuming that all numbers in the input have bit complexity $n^{O(1)}$). 
This bound suffices in order to argue about pseudo-distributions that satisfy polynomial constraints approximately.

The following fact shows that every property of low-level pseudo-distributions can be derived by low-degree sum-of-squares proofs.
\begin{fact}[Completeness] \label{fact:sos-completeness}
Suppose that $d \geq r' \geq r$ and $\cA$ is a collection of polynomial constraints with degree at most $r$, 
and $\cA \sststile{}{} \{ \sum_{i = 1}^n x_i^2 \leq B\}$ for some finite $B$.
Let $\{g \geq 0 \}$ be a polynomial constraint. If every degree-$d$ pseudo-distribution that satisfies $D \sdtstile{r}{} \cA$ 
also satisfies $D \sdtstile{r'}{} \{g \geq 0 \}$, then for every $\epsilon > 0$, 
there is a sum-of-squares proof $\cA \sststile{d}{} \{g \geq - \epsilon \}$.
\end{fact}

\paragraph{Basic Sum-of-Squares Proofs.}
We will require the following basic SoS proofs.

\begin{fact}[Operator norm Bound] \label{fact:operator_norm}
Let $A$ be a symmetric $d\times d$ matrix and $v$ be a vector in $\mathbb{R}^d$. Then, we have that
\[\sststile{2}{v} \Set{ v^{\top} A v \leq \|A\|_2\|v\|^2_2 } \;.\]
\end{fact}

\begin{fact}[SoS H{\"o}lder's Inequality] \label{fact:sos-holder}
Let $f_i, g_i$, for $1 \leq i \leq s$, be scalar-valued indeterminates. 
Let $p$ be an even positive integer.  
Then, 
\[
\sststile{p^2}{f,g} \Set{  \Paren{\frac{1}{s} \sum_{i = 1}^s f_i g_i^{p-1}}^{p} \leq \Paren{\frac{1}{s} \sum_{i = 1}^s f_i^p} \Paren{\frac{1}{s} \sum_{i = 1}^s g_i^p}^{p-1}}\mper
\]
\end{fact}
\noindent Observe that using $p =2$ above yields the SoS Cauchy-Schwarz inequality. 

\begin{fact}[SoS Almost Triangle Inequality] \label{fact:almost-triangle-sos}
Let $f_1, f_2, \ldots, f_r$ be indeterminates. Then, we have that
\[
\sststile{2t}{f_1, f_2,\ldots,f_r} \Set{ \Paren{\sum_{i\leq r} f_i}^{2t} \leq r^{2t-1} \Paren{\sum_{i =1}^r f_i^{2t}}}\mper
\]
\end{fact}

\begin{fact}[SoS AM-GM Inequality, see Appendix A of~\cite{MR3388192-Barak15}] \label{fact:sos-am-gm}
Let $f_1, f_2,\ldots, f_m$ be indeterminates. Then, we have that
\[
\sststile{m}{f_1, f_2,\ldots, f_m} \Set{ \Paren{\frac{1}{m} \sum_{i =1}^n f_i }^m \geq \Pi_{i \leq m} f_i} \mper
\]
\end{fact}

We defer the proofs of the two subsequent lemmas to Appendix \ref{sec:omitted_proofs}.

\begin{lemma}[Spectral SoS Proofs] \label{lem:spectral-sos-proofs}
Let $A$ be a $d \times d$ matrix. Then for $d$-dimensional vector-valued indeterminate $v$, we have:
\[
\sststile{2}{v} \Set{ v^{\top}Av \leq \Norm{A}_2 \Norm{v}_2^2}\mper
\]
\end{lemma}

\begin{fact}[Cancellation within SoS, Lemma 9.2~\cite{BK20}]
Let $a,C$ be scalar-valued indeterminates. Then, 
\[
\Set{a \geq 0} \cup \Set{a^t \leq C a^{t-1}} \sststile{2t}{a,C} \Set{a^{2t} \leq C^{2t}}\mper
\]
\label{fact:sos-cancellation}
\end{fact}

\begin{lemma}[Frobenius Norms of Products of Matrices] \label{lem:frob-of-product}
Let $B$ be a $d \times d$ matrix valued indeterminate for some $d \in \N$. Then, for any $0 \preceq A \preceq I$, 
\[
\sststile{2}{B} \Set{\Norm{AB}_F^2 \leq \Norm{B}_F^2}\mcom
\]
and, 
\[
\sststile{2}{B} \Set{\Norm{BA}_F^2 \leq \Norm{B}_F^2}\mcom
\]
\end{lemma}

\subsection{Analytic Properties of Gaussian Distributions} \label{ssec:prelim-analytic}

The following definitions and results describe the analytic properties of Gaussian distributoins that we will use. We also state the guarantees of known robust estimation algorithms for estimating the mean, covariance and moment tensors of Gaussian mixtures here.

\paragraph{Certifiable Subgaussianity.}

We will make essential use of the following definition.

\begin{definition}[Certifiable Subgaussianity (Definition 5.1 in~\cite{KS17})]
For $t \in \N$ and an absolute constant $C> 0$, a distribution $\cD$ on $\R^d$ is said to be $t$-certifiably $C$-subgaussian
if for every even $t' \leq t$, we have that
\[
\sststile{t'}{v} \Set{ \E_{\cD} \iprod{x,v}^{t'} \leq (Ct')^{t'/2} \Paren{\E_{\cD} \iprod{x,v}^{2}}^{t'/2} }\mper
\]
\end{definition}

\begin{fact}[Mixtures of Certifiably Subgaussian Distributions, Analogous to Lemma 5.4 in~\cite{KS17}] \label{fact:subgaussianity-of-mixtures}
Let $\cD_1, \cD_2, \ldots, \cD_q$ be $t$-certifiably $C$-subgaussian distributions on $\R^d$.
Let $p_1, p_2, \ldots, p_q$ be non-negative weights such that $\sum_i p_i = 1$ and $p = \min_{i \leq q}p_i$.
Then, the mixture $\sum_{i} p_i \cD_i$ is $t$-certifiably $C/p$-subgaussian.
\end{fact}

% \begin{fact}[Certifiable Subgaussianity Under Sampling, Lemma 5.5 in ~\cite{KS17}] \label{fact:sampling-subgaussianity}
% Let $\cD$ be a $t$-certifiably $C$-subgaussian distribution on $\R^d$.
% Let $X$ be an i.i.d. sample from $\cD$ of size $n \geq n_0 = O(d^{t}/\delta^2)$.
% Then, with probability at least $0.99$ over the draw of $X$, the uniform distribution on $X$ is $t$-certifiably $C(1+\delta)$-subgaussian.
% \end{fact}

\paragraph{Certifiable Anti-Concentration.}

The first is \emph{certifiable anti-concentration} --- an SoS formulation of classical anti-concentration inequalities --- that was introduced in~\cite{karmalkar2019list, raghavendra2020list}.

In order to formulate certifiable anti-concentration, we start with a univariate even polynomial $p$ that serves as a uniform approximation to the delta function at $0$ in an interval around $0$. Such polynomials are constructed
in~\cite{karmalkar2019list,raghavendra2020list} (see also~\cite{DGJ+:10}).
Let $q_{\delta,\Sigma}(x,v)$ be a multivariate (in $v$) polynomial defined by
$q_{\delta, \Sigma}(x,v) = \Paren{v^{\top}\Sigma v}^{2s} p_{\delta,\Sigma}\Paren{\frac{\iprod{x,v}}{\sqrt{v^{\top}\Sigma v}}}$.
Since $p_{\delta,\Sigma}$ is an even polynomial, $q_{\delta,\Sigma}$ is a polynomial in $v$.

\begin{definition}[Certifiable Anti-Concentration] \label{def:certifiable-anti-concentration}
A mean-$0$ distribution $D$ with covariance $\Sigma$ is $2s$-certifiably $(\delta,C\delta)$-anti-concentrated
if for $q_{\delta,\Sigma}(x,v)$ defined above,  there exists a degree-$2s$ sum-of-squares proof of the following
two unconstrained polynomial inequalities in indeterminate $v$:
% \begin{enumerate}
% \item $\sststile{2s}{v} \Set{\norm{v}_2^{2s-2} \iprod{x,v}^2 + \delta^2 q_{\delta}(x,v)^2 \geq \delta^2 \norm{v}_2^{2s}}$.
% \item $\sststile{2s}{v}  \Set{\iprod{x,v}^{2s} + \delta^{2s} q_{\delta}(x,v)^2 \geq \delta^{2s} \norm{v}_2^{2s}}$.
% \item $\sststile{2s}{v} \Set{\E_{x \sim D} q_{\delta}(x,v)^2 \leq C\delta \norm{v}_2^{2s}}$.
% \end{enumerate}
\[
\Set{\iprod{x,v}^{2s} + \delta^{2s} q_{\delta,\Sigma}(x,v)^2 \geq \delta^{2s} \Paren{v^{\top} \Sigma v}^{2s}} \text{ , } \Set{\E_{x \sim D} q_{\delta,\Sigma}(x,v)^2 \leq C\delta \Paren{v^{\top}\Sigma v}^{2s}}\mper
\]
% \end{enumerate}
An isotropic subset $X \subseteq  \R^d$ is $2s$-certifiably $(\delta,C\delta)$-anti-concentrated
if the uniform distribution on $X$ is $2s$-certifiably $(\delta,C\delta)$-anti-concentrated.
\end{definition}
\begin{remark}
The function $s(\delta)$ can be taken to be $O(\frac{1}{\delta^2})$ for standard Gaussian distribution and the uniform distribution on the unit sphere (see~\cite{karmalkar2019list} and~\cite{bakshi2020list}). %To simplify notation, we will assume $s(\delta)\leq \poly(1/\delta)$ in the statement of our results.
\end{remark}

\paragraph{Certifiable Hypercontractivity.}
Next, we define \emph{certifiable hypercontractivity} of degree-$2$ polynomials that formulates (within SoS)
the fact that higher moments of degree-$2$ polynomials of distributions (such as Gaussians)
can be bounded in terms of appropriate powers of their 2nd moment.

\begin{definition}[Certifiable Hypercontractivity] \label{def:certifiable-hypercontractivity}
An isotropic distribution $\cD$ on $\R^d$ is said to be $h$-certifiably $C$-hypercontractive
if there is a degree-$h$ sum-of-squares proof of the following unconstrained polynomial inequality
in $d \times d$ matrix-valued indeterminate $Q$:
% \[
% \sststile{h}{Q} \Set{ \E_{x \sim \cD} \iprod{xx^{\top},Q}^{h} \leq \paren{Ch}^{h} \Paren{\E_{x \sim \cD} \iprod{xx^{\top},Q}^2}^{h/2}}\mper
% \]
\[ \E_{x \sim \cD} \paren{x^{\top}Qx}^{h} \leq \paren{Ch}^{h} \Paren{\E_{x \sim \cD} \paren{x^{\top}Qx}^2}^{h/2}\mper
\]
A set of points $X \subseteq \R^d$ is said to be  $C$-certifiably hypercontractive
if the uniform distribution on $X$ is  $h$-certifiably $C$-hypercontractive.
\end{definition}

Hypercontractivity is an important notion in high-dimensional probability and analysis on product spaces~\cite{MR3443800-ODonnell14}. Kauers, O'Donnell, Tan and Zhou~\cite{DBLP:conf/soda/KauersOTZ14} showed certifiable hypercontractivity of Gaussians and more generally product distributions with subgaussian marginals. Certifiable hypercontractivity strictly generalizes the better known \emph{certifiable subgaussianity} property (studied first in~\cite{KS17}) that controls higher moments of linear polynomials.

% \begin{definition}[Certifiable Hypercontractivity (Definition~1.4 in~\cite{BK20})] \label{def:cert_hypercontractivity}
% For any $t \in \mathbb{N}$, a distribution $\cD$ on $\R^d$ is said to have $t$-certifiably $C$-hypercontractive degree-$2$ polynomials,
% for a fixed constant $C$, if for every even $t' \leq t$, we have that
% \[
% \sststile{t'}{Q} \Set{ \E_{\cD} \Paren{x^{\top}Qx - \E_{\cD}x^{\top}Qx}^{t'} \leq (Ct')^{t'} \Paren{\E_{\cD} \Paren{x^{\top}Qx-\E_{\cD}x^{\top}Qx}^{2}}^{t'/2}} \mper
% \]
% \end{definition}

Observe that the definition above is affine invariant. In particular, we immediately obtain:

\begin{fact}\label{fact:affine-invariance-certifiable-hypercontractivity}
Given $t\in\mathbb{N}$,  if a random variable $x$ on $\R^d$ has $t$-certifiable $C$-hypercontractive degree-$2$ polynomials, then so does $Ax$ for any $A \in \R^{d \times d}$.
\end{fact}

As observed in \cite{KS17}, the Gaussian distribution is $t$-certifiably $1$-subgaussian and $t$-certifiable
$1$-hypercontractive for every $t$. Next, we establish certifiable hypercontractivity for mixtures of Gaussians. We defer the proofs to Appendix \ref{sec:omitted_proofs}.

\begin{lemma}[Shifts Cannot Decrease Variance]\label{lem:shifts-only-increase-variance}
Let $\cD$ be a distribution on $\R^d$, $Q$ be a $d \times d$ matrix-valued indeterminate, and $C$ be a scalar-valued indeterminate.
Then, we have that
\[
\sststile{2}{Q,C} \Set{\E_{x\sim \cD} \left[ \Paren{Q(x)-\E_{x\sim\cD}[Q(x)] }^2 \right] \leq \E_{x\sim\cD} \left[\Paren{Q(x)-C}^2\right] }\mper
\]
\end{lemma}
% \begin{proof}
% \begin{align*}
% \sststile{2}{Q,C}  \Biggl\{ \expecf{x\sim\cD}{\Paren{Q(x)-C}^2} &= \expecf{x\sim\cD}{\Paren{Q(x)-\expecf{x\sim\cD}{Q(x)}+ \expecf{x\sim\cD}{Q(x)} -C}^2}\\
% &= \expecf{x\sim\cD}{\Paren{Q(x)-\expecf{x\sim\cD}{Q(x)}}^2} + \expecf{x\sim\cD}{(Q(x)-C)^2}\\
% &\hspace{0.2in}+ 2 \expecf{x\sim\cD}{\Paren{Q(x)-\expecf{x\sim\cD}{Q(x)}} \Paren{\expecf{x\sim\cD}{Q(x)}-C}}\\
%  &=  \expecf{x\sim\cD}{\Paren{Q(x)-\expecf{x\sim\cD}{Q(x)}}^2} + \expecf{x\sim\cD}{(Q(x)-C)^2} \\
%  &\geq  \expecf{x\sim\cD}{\Paren{Q(x)-\expecf{x\sim\cD}{Q(x)}}^2} \hspace{0.2in}
% \Biggr\}\mper
% \end{align*}
% \end{proof}

\begin{lemma}[Shifts of Certifiably Hypercontractive Distributions] \label{lem:shifts-of-certifiably-hypercontractive-distributions}
Let $x$ be a mean-$0$ random variable with distribution $\cD$ on $\R^d$ with $t$-certifiably $C$-hypercontractive degree-$2$ polynomials. Then, for any fixed constant vector $c \in \R^d$, the random variable $x+c$ also has $t$-certifiable $4C$-hypercontractive degree-$2$ polynomials.
\end{lemma}

\begin{lemma}[Mixtures of Certifiably Hypercontractive Distributions] \label{lem:mixtures-of-certifiably-hypercontractive-distributions}
Let $\cD_1, \cD_2, \ldots, \cD_k$ have $t$-certifiable $C$-hypercontractive degree-$2$ polynomials on $\R^d$,
for some fixed constant $C$. Then, any mixture $\cD=\sum_i w_i \cD_i$ also has $t$-certifiably $(C/\alpha)$-hypercontractive
degree-$2$ polynomials for $\alpha = \min_{i \leq k, w_i > 0} w_i$.
\end{lemma}

\begin{corollary}[Certifiable Hypercontractivity of Mixtures of $k$ Gaussians] \label{cor:hypercontractivity_of_mixtures}
Let $\calM$ be a $k$-mixture of Gaussians $\sum_i w_i \cN(\mu_i,\Sigma_i)$ with weights $w_i \geq \alpha$ for every $i \in [k]$.
Then, for all $t\in\N$, $\cD$ has $t$-certifiably $4/\alpha$-hypercontractive degree-$2$ polynomials.
\end{corollary}

% \begin{proof}
% From \cite{DBLP:conf/soda/KauersOTZ14}, we know that the standard Gaussian random variable has $t$-certifiably $1$-hypercontractive degree-$2$ polynomials. From Fact~\ref{fact:affine-invariance-certifiable-hypercontractivity}, we immediately obtain that for any
% PSD matrix $\Sigma$, the Gaussian $\cN(0,\Sigma)$ also has $t$-certifiable $1$-hypercontractive degree-$2$ polynomials.
% From Lemma~\ref{lem:shifts-of-certifiably-hypercontractive-distributions}, we obtain that for any $\mu$, the Gaussian
% $\cN(\mu,\Sigma)$ has $t$-certifiable $4$-hypercontractive degree-$2$ polynomials. Finally, applying
% Lemma~\ref{lem:mixtures-of-certifiably-hypercontractive-distributions} to $\cD_i = \cN(\mu_i,\Sigma_i)$ and mixture weights
% $w_1,w_2, \ldots w_k$, yields that $\cD = \sum_i w_i \cN(\mu_i,\Sigma_i)$ has $t$-certifiably $4/\alpha$-hypercontractive
% degree-$2$ polynomials. This completes the proof.
% \end{proof}

We will use the following robust mean estimation algorithm for bounded covariance distributions~\cite{DKK+17}:
%\Pnote{the statement for certifiably hypercontractive distributions cannot be used to infer this fact that we need for robust estimation of hermite tensors. Instead of modifying it, it might be better to state this as a different fact. We need to fill in the citation to the 2nd DKKLMS paper that proved this first. }
\begin{fact}[Robust Mean Estimation for Bounded Covariance Distributions] \label{fact:robust-mean-estimation}
There is a $\poly(n)$ time algorithm that takes input an $\epsilon$-corruption $Y$ of a collection
of $n$ points $X \subseteq \R^d$, %\new{where $n   = \tilde{\Omega}(d/\eps)$,}
and outputs an estimate $\hat{\mu}$ satisfying
$\Norm{\E_{x \sim_u X} x - \hat{\mu}}_2 \leq
O(\sqrt{\epsilon}) \Norm{\E_{x \sim_u X} (x-\E_{x \sim_u X}x)(x-\E_{x \sim_u X}x)^{\top}}_{2}$.
\end{fact}

We will use the following robust covariance estimation algorithm from~\cite{KS17}:
\begin{fact}[Robust Covariance Estimation, ~\cite{KS17}]  \label{fact:ks-cov-est}
For every $C>0, \epsilon >0$ and even $k \in \N$ such that $Ck \epsilon^{1-2/k} \leq c$ for some small enough absolute constant $c$, there exists a polynomial-time algorithm that given an (corrupted) sample $S$ outputs an estimate of the covariance $\hat{\Sigma} \in \R^{d \times d}$ with the following guarantee: there
exists $n_0 \geq (C+d)^{O(k)}/\epsilon$ such that if $S$ is an $\epsilon$-corrupted sample with size $|S| \geq n_0$ of a $k$-certifiably $C$-subgaussian distribution $D$ over $\R^d$ with mean $\mu \in \R^d$
and covariance $\Sigma \in \R^{d \times d}$, then with high probability:

\[
(1-\delta)\Sigma \succeq \hat{\Sigma} \succeq (1+\delta) \Sigma 
\]
for $\delta\leq O(Ck) \epsilon^{1-2/k}$.
\end{fact}

We will also require the following robust estimation algorithm with Frobenius distance guarantees proven
for certifiably hypercontractive distributions in~\cite{BK20}. Since we obtain estimates to the true covariance in Lowner ordering, we can obtain the subspace spanned by the inliers exactly, project on to this subspace and apply Theorem 7.1 in~\cite{BK20}.

\begin{fact}[Robust Mean and Covariance Estimation for Certifiably Hypercontractive Distributions,
Theorem 7.1 in~\cite{BK20}] \label{fact:param-estimation-main}
Given $t \in \mathbb{N}$, and $\epsilon > 0$ sufficiently small so that $Ct\epsilon^{1-4/t} \ll 1$\footnote{This notation means that we needed $Ct\epsilon^{1-2/t}$ to be at most $c_0$ for some absolute constant $c_0 > 0$.}, for some absolute constant $C>0$.
Then, there is an algorithm that takes input $Y$, an $\epsilon$-corruption of a sample $X$ of size $n$
with mean $\mu_*$, covariance $\Sigma_*$, and $2t$-certifiably $C$-hypercontractive degree-$2$
polynomials, runs in time $n^{O(t)}$, and outputs an estimate $\hat{\mu}$ and $\hat{\Sigma}$ satisfying:
\begin{enumerate}
\item $\Norm{\Sigma^{\dagger/2}_*(\mu_*-\hat{\mu})}_2 \leq O(Ct)^{1/2} \epsilon^{1-1/t}$,
\item $(1-\eta) \Sigma_* \preceq \hat{\Sigma} \preceq (1+\eta) \Sigma_*$ for $\eta \leq O(Ck) \epsilon^{1-2/t}$, and,
\item $\Norm{\Sigma_*^{\dagger/2} \Paren{ \hat{\Sigma} - \Sigma_*} \Sigma_*^{\dagger/2} }_F \leq (C't) O(\epsilon^{1-1/t})$,
\end{enumerate}
\nnnew{where $C'=\max\Set{C,B}$ for the smallest possible $B>0$ such that for $d \times d$-matrix-valued indeterminate $Q$, $\sststile{2}{Q}\Set{\E_{\cD} \Paren{x^{\top}Qx-\E_{\cD}x^{\top}Qx}^2 \leq B \Norm{\Sigma_*^{1/2} Q\Sigma_*^{1/2}}_F^2}$.\footnote{The first two guarantees here hold for the larger class of certifiably subgaussian distributions and were proven in~\cite{KS17} (see Theorem  1.2). Gaussian distribution (with arbitrary mean and covariance) are $t$-certifiably $1$-subgaussian for all $t$ and their mixtures (similar to Lemma~\ref{lem:mixtures-of-certifiably-hypercontractive-distributions} and explicitly proven in Lemma 5.4 of~\cite{KS17}) are $t$-certifiably $O(1/\alpha)$-subgaussian where $\alpha$ is the minimum mixing weight.}}
\end{fact}

The last line in the above fact asserts a bound (along with a degree $2$ SoS proof) on the variance of degree $2$ polynomials in terms of the Frobenius norm of its coefficient matrix. In the next few claims, we verify this property via  elementary arguments for the two classes of distributions relevant to this paper. We note that whenever a distribution  satisfies the bounded variance property (without an SoS proof), it also satisfies the property via a degree $2$ SoS proof using Lemma~\ref{lem:spectral-sos-proofs}. Thus, asking for an SoS proof of degree $2$ in this context poses no additional restrictions on the distribution. Nevertheless, we provide explicit and direct SoS proofs in the following.

We first note that this property of having \emph{certifiable bounded variance} is closed under linear transformations. 

\begin{lemma}[Linear Transformations of Certifiably Bounded-Variance Distributions] \label{lem:bounded-variance-linear-transform}
For $d \in \N$, let $x$ be a random variable with distribution $\cD$ on $\R^d$ such that for $d \times d$ matrix-valued indeterminate $Q$, $\sststile{2}{Q} \Set{\E_{x \sim \cD}(x^{\top}Qx-\E_{\cD}x^{\top}Qx)^2 \leq \Norm{\Sigma^{1/2}Q\Sigma^{1/2}}^2_F }$. Let $A$ be an arbitrary $d \times d$ matrix and let $x' = Ax$ be the random variable with covariance $\Sigma' = AA^{\top}$. Then, we have that
\[
\sststile{2}{Q} \Set{\E_{x' \sim \cD'}({x'}^{\top}Qx'-\E_{\cD'}{x'}^{\top}Qx')^2 \leq \Norm{{\Sigma'}^{1/2}Q{\Sigma'}^{1/2}}^2_F}\mper
\]
\end{lemma}
% \begin{proof}
% The covariance of $x'$ is $AA^{\top} =\Sigma'$, say. Let ${\Sigma'}^{1/2}$ be the PSD square root of $\Sigma'$. The proof follows by noting that ${x'}^{\top}Qx' = (Ax)^{\top}Q(Ax) = x^{\top} (A^{\top}QA) x^{\top}$ and that $\Norm{A^{\top}QA}_F^2 = \tr(A^{\top}QAA^{\top}QA) = \tr(AA^{\top}QAA^{\top}Q) = \tr(\Sigma' Q\Sigma' Q) = \tr({\Sigma'}^{1/2}Q{\Sigma'}^{1/2} {\Sigma'}^{1/2}Q{\Sigma'}^{1/2}) = \Norm{{\Sigma'}^{1/2}Q{\Sigma'}^{1/2}}_F^2$. 
% \end{proof}

\begin{lemma}[Variance of Degree-$2$ Polynomials of Standard Gaussians] \label{lem:var-zero-mean-gaussians}
We have that
\[
\sststile{2}{Q} \Set{\E_{\cN(0,I)} \Paren{x^{\top}Qx - \E_{\cN(0,I)} x^{\top}Qx}^2 \leq 3 \Norm{Q}_F^2}\mper
\]
\end{lemma}
\begin{remark}
As is easy to verify, the same proof more generally holds for any distribution that has the same first four 
moments as the zero-mean Gaussian distribution.
\end{remark}

As an immediate corollary of the previous two lemmas, we have:
\begin{corollary}[Variance of Degree-$2$ Polynomials of Zero-Mean, Arbitrary Covariance Gaussians] 
\label{lem:var-zero-mean-arbitrary-cov-gaussians}
For any $0 \preceq \Sigma$, we have that
\[
\sststile{2}{Q} \Set{\E_{\cN(0,\Sigma)} \Paren{x^{\top}Qx - \E_{\cN(0,\Sigma)} x^{\top}Qx}^2 \leq 3 \Norm{\Sigma^{1/2}Q\Sigma^{1/2}}_F^2}\mper
\]
\end{corollary}

We next prove that the same property holds for mixtures of Gaussians satisfying certain conditions.

% \begin{lemma}[Variance of Degree-$2$ Polynomials of Mixtures of Gaussians] \label{fact:mean-variance-subgaussian}
% Let $\calM = \sum_i w_i \cN(\mu_i,\Sigma_i)$ be a $k$-mixture of Gaussians with mean $\mu = \sum_i w_i \mu_i$. 
% Suppose that for some $H > 1$, $\norm{\mu_i-\mu}_2^2, \Norm{\Sigma_i-I}_F \leq H$ for every $1 \leq i \leq k$. 
% Let $Q$ be a symmetric $d \times d$ matrix-valued indeterminate. Then, we have that
% \[
% \sststile{2}{Q} \Set{ \expecf{x\sim \calM}{ \Paren{ x^\top Qx-\expecf{x \sim \calM}{x^\top Qx } }^2 } \leq 100 H^2\Norm{Q}_F^2 }\mper
% \]
% \end{lemma}

\begin{lemma}[Variance of Degree-$2$ Polynomials of Mixtures]
\label{fact:mean-variance-subgaussian} 
Let $\calM = \sum_i w_i \cD_i$ be a $k$-mixture of distributions $\cD_1, \cD_2, \ldots, \cD_k$ with means $\mu_i$ and covariances $\Sigma_i$. Let $\mu = \sum_i w_i \mu_i$ be the mean of $\calM$. Suppose that each of $\cD_1,\cD_2, \ldots, \cD_k$ have certifiably $C$-bounded-variance i.e. for $Q$: a symmetric $d \times d$ matrix-valued indeterminate. 
\[
\sststile{2}{Q} \Set{\E_{x' \sim \cD_i}({x'}^{\top}Qx'-\E_{\cD_i}{x'}^{\top}Qx')^2 \leq C\Norm{{\Sigma'}^{1/2}Q{\Sigma'}^{1/2}}^2_F}\mper
\]
Further, suppose that for some $H > 1$, $\norm{\mu_i-\mu}_2^2, \Norm{\Sigma_i-I}_F \leq H$ for every $1 \leq i \leq k$. 
Then, we have that
\[
\sststile{2}{Q} \Set{ \expecf{x\sim \calM}{ \Paren{ x^\top Qx-\expecf{x \sim \calM}{x^\top Qx } }^2 } \leq 100C H^2\Norm{Q}_F^2 }\mper
\]
\end{lemma}

% \begin{proof}
% The proof follows by noting that for any $x$ in the support of $\cD$, ${x'}^{\top}Qx' = (\Sigma'^{1/2}x)^{\top}Q(\Sigma'^{1/2}x) = x^{\top}(\Sigma'^{1/2}A\Sigma'^{1/2})x = x^{\top}Q'x$ for $Q' = (\Sigma'^{1/2}A\Sigma'^{1/2})$. 
% \end{proof}

As an immediate corollary of Lemma~\ref{lem:bounded-variance-linear-transform} and Lemma~\ref{fact:mean-variance-subgaussian}, 
we obtain:

\begin{lemma}[Variance of Degree-$2$ Polynomials of Mixtures of Gaussians]
\label{fact:mean-variance-subgaussian-arbitrary-covariance}
Let $\calM = \sum_i w_i \cN(\mu_i,\Sigma_i)$ be a $k$-mixture of Gaussians with $w_i \geq \alpha$, 
mean $\mu = \sum_i w_i \mu_i$ and covariance $\Sigma = \sum_i w_i ((\mu_i-\mu)(\mu_i-\mu)^{\top} + \Sigma_i)$. 
Suppose that for some $H > 1$, $\Norm{\Sigma^{\dagger/2}(\Sigma_i-\Sigma)\Sigma^{\dagger/2}}_F \leq H$ for every $1 \leq i \leq k$. 
Let $Q$ be a symmetric $d \times d$ matrix-valued indeterminate. Then for $H' = \max \{H, 1/\alpha\}$, 
\[
\sststile{2}{Q} \Set{ \expecf{x\sim \calM}{ \Paren{ x^\top Qx-\expecf{x \sim \calM}{x^\top Qx } }^2 } \leq 100 {H'}^2\Norm{\Sigma^{1/2}Q\Sigma^{1/2}}_F^2 }\mper
\]
\end{lemma}

\paragraph{Analytic Properties are Inherited by Samples.}

The following lemma can be proven via similar, standard techniques as in several prior works~\cite{KS17,karmalkar2019list,
bakshi2020list,BK20}.

\begin{fact} \label{fact:moments-to-analytic-properties}
Let $D$ be a distribution on $\R^d$ with mean $\mu$ and covariance $\Sigma$. Let $t \in \N$. Let $X$ be a sample from $D$ such that, $\Norm{\frac{1}{|X|} \sum_{x \in X} (1,\bar{x})^{\otimes t} - \E_{x \sim D} (1,\bar{x})^{\otimes t}}_F \leq d^{-O(t)}$. Here, $\bar{x} = \Sigma^{\dagger/2}(x-\mu_i)$. Then,
\begin{enumerate}
\item If $D$ is $2t$-certifiably $C$-subgaussian, then the uniform distribution on $X$ is $t$-certifiably $2C$-subgaussian.
\item If $D$ has $2t$-certifiably $C$-hypercontractive degree $2$ polynomials, then the uniform distribution on $X$ has $t$-certifiably $2C$-hypercontractive degree $2$ polynomials.
\item If $D$ is $2t$-certifiably $C\delta$-anti-concentrated, then the uniform distribution on $X$ is $t$-certifiably $2C\delta$-anti-concentrated.
\nnnew{\item If $\sststile{2}{Q}\Set{ \E_{x \sim \cD}(x^{\top}Qx-\E_{x \sim \cD}x^{\top}Qx)^2 \leq C \Norm{Q}_F^2}$, then, for the uniform distribution $\cD_X$ on $X$, $\sststile{2}{Q}\Set{ \E_{x \sim \cD_X}(x^{\top}Qx-\E_{x \sim \cD_X}x^{\top}Qx)^2 \leq 2C \Norm{Q}_F^2}.$}
\end{enumerate}
\end{fact}

\subsection{Deterministic Conditions on the Uncorrupted Samples} \label{ssec:det-conds}

\new{
In this section, we describe the set of deterministic conditions on the set of uncorrupted samples,
under which our algorithms succeed. We will require the following definition.

\begin{definition}\label{def:eps-corrupted}
Fix $0< \eps<1/2$.
We say that a multiset $Y$ of points in $\R^d$ is an $\eps$-corrupted version (or an $\eps$-corruption)
of a multiset $X$ of points in $\R^d$ if $|X \cap Y| \geq \max\{ (1-\eps)|X|,(1-\eps)|Y| \} $.
\end{definition}

Throughout this paper and unless otherwise specified, we will use $X$ to denote a multiset of i.i.d.
samples from the target $k$-mixture ${\cal M} = \sum_{i=1}^k w_i G_i$, where $G_i = \cN(\mu_i, \Sigma_i)$.
We will use $X_i$ for the subset of points in $X$ drawn from $G_i$, i.e., $X = \cup_{i=1}^k X_i$.

We will use $Y$ to denote an $\eps$-corrupted version of $X$, as per Definition~\ref{def:eps-corrupted}.
In this {\em strong contamination model}, the adversary can see the clean samples from $X$ before
they decide on the $\eps$-corruption $Y$.
The strong contamination model is known to subsume the total variation contamination of
Definition~\ref{def:adv} (see, e.g., Section 2 of~\cite{DKKLMS16}). We note that our robust learning
algorithm succeeds in this stronger contamination model, with the additional requirement that we can obtain two sets
of independent $\eps$-corrupted samples from ${\cal M}$. (The second set is needed to run a hypothesis testing
routine after we obtain a small list of candidate hypotheses.)
}

Our algorithm works for any finite set of points in $\R^d$
that satisfies a natural set of deterministic conditions.
As we will show later in this section, these deterministic conditions are
satisfied with high probability by a sufficiently large set of i.i.d. samples from any
$k$-mixture of Gaussians.

%\Pnote{All conditions are statements only about what properties of the distribution the uncorrupted sample inherits. Assumptions on the mixture parameters such as $w_i, \mu_i,\Sigma_i$ should not be part of Conditions.}

\begin{condition}[Good Samples] \label{cond:convergence-of-moment-tensors}
Let $\calM=\sum_{i=1}^k w_i \cN(\mu_i,\Sigma_i)$ be a $k$-mixture of Gaussians in $\R^d$.
\new{Let $X$ be a set of $n$ points in $\R^d$.
We say that $X$ satisfies Condition~\ref{cond:convergence-of-moment-tensors}
with respect to $\calM$ with parameters $(\gamma,t)$ if there is a partition of $X$
as $X_1\cup X_2 \cup \ldots \cup X_k$ such that:
\begin{enumerate}
\item For all $i \in [k]$ with $w_i \geq \gamma$, any positive integer $m\leq t$, and any $v\in \R^d$,
$$\left|\frac{1}{n}\sum_{x\in X_i} \langle v, x-\mu_i \rangle^m - w_i \E_{x\sim \cN(\mu_i,\Sigma_i)}[\langle v, x-\mu_i \rangle^m]\right| \leq w_i \, \gamma \, m! \, (v^T \Sigma_i v)^{m/2} \;. $$
\item For all $i \in [k]$ and any halfspace $H\subset \R^d$, we have that
$\left| |X_i \cap H|/n - w_i \Pr_{x\sim \cN(\mu_i,\Sigma_i)}[x\in H] \right| \leq \gamma.$
\end{enumerate}}
\end{condition}

\new{We will also need the following consequences
of Condition~\ref{cond:convergence-of-moment-tensors}.
The first one is immediate.

\begin{lemma}
\label{lem:covergence_affine_invariant}
Condition \ref{cond:convergence-of-moment-tensors} is invariant under affine transformations.
In particular, if $A(x):\R^d \rightarrow \R^{d'}$ is an affine transformation, and if $X$ satisfies Condition~\ref{cond:convergence-of-moment-tensors}
with respect to $\calM$ with parameters $(\gamma, t)$, then $A(X)$ satisfies
Condition~\ref{cond:convergence-of-moment-tensors} with respect to $A(\calM)$ with parameters $(\gamma,t)$.
\end{lemma}

We note that the first part of Condition~\ref{cond:convergence-of-moment-tensors}
implies that higher moment tensors are close in Frobenius distance.

\begin{lemma}\label{moment closeness lemma}
If $X$ satisfies Condition \ref{cond:convergence-of-moment-tensors} with respect to $\calM=\sum_i w_i \cN(\mu_i,\Sigma_i)$
with parameters $(\gamma,t)$, then if $w_i \geq \gamma$ for all $i \in [k]$,
and if for some $B\geq 0$ we have that $\Norm{\mu_i}_2^2,\Norm{\Sigma_i}_{\mathrm{op}} \leq B$ for all $i \in [k]$,
then for all $m\leq t$, we have that:
$$
\Norm{\E_{x\in_u X}[x^{\otimes m}] - \E_{x\sim M}[x^{\otimes m}]}_F^2 \leq \gamma^2 m^{O(m)}B^m d^m \;.
$$
\end{lemma}

We note that Condition~\ref{cond:convergence-of-moment-tensors} also behaves
well with respect to taking submixtures.

\begin{lemma}\label{submixture condition lemma}
Let $\calM=\sum_i w_i \cN(\mu_i,\Sigma_i)$. Let $S\subset [k]$ with $\sum_{i\in S} w_i = w$, and let
$\calM'=\sum_{i\in S} (w_i/w) \cN(\mu_i,\Sigma_i)$. Then if $X$ satisfies Condition \ref{cond:convergence-of-moment-tensors} with respect to $\calM$ with parameters $(\gamma,t)$ for some $\gamma<1/(2k)$
 with the corresponding partition being $X=X_1\cup X_2 \cup\ldots \cup X_k$,
 then $X' = \bigcup_{i\in S} X_i$ satisfies Condition \ref{cond:convergence-of-moment-tensors}
 with respect to $\calM'$ with parameters $(O(k\gamma/w),t)$.
\end{lemma}

Finally, we show that given sufficiently many i.i.d. samples from a $k$-mixture of Gaussians,
Condition \ref{cond:convergence-of-moment-tensors} holds with high probability.

\begin{lemma} \label{lem:det-suffices}
Let $\calM=\sum_{i=1}^k w_i \cN(\mu_i,\Sigma_i)$ and let $n$ be an integer at least $k t^{Ct} d^{t}/\gamma^3$,
for a sufficiently large universal constant $C>0$, some $\gamma>0$, and some $t \in \N$.
If $X$ consists of $n$ i.i.d. samples from $\calM$,
then $X$ satisfies Condition \ref{cond:convergence-of-moment-tensors}
with respect to $\calM$ with parameters $(\gamma, t)$ with high probability.
\end{lemma}

\noindent The proofs of the preceding lemmas can be found in Appendix~\ref{sec:omitted_proofs}.
}

\subsection{Hypothesis Selection} \label{ssec:tournament}
Our algorithm will require a procedure to select a hypothesis from a list of candidates
that contains an accurate hypothesis. A number of such procedures are known
in the literature (see, e.g.,~\cite{Yatracos85, DL:01, DDS12stoclong, DK14, DDS15, DKKLMS16}).
Here we will use the following variant from \cite{Kane20}, showing that we can efficiently
perform a hypothesis selection (tournament) step with access to $\epsilon$-corrupted samples.

\begin{fact}[Robust Tournament, \cite{Kane20}]\label{lem:tournament}
Let $X$ be an unknown distribution, $\eta \in (0, 1)$,
and let $H_1,\dots,H_n$ be distributions with explicitly computable probability density functions
that can be efficiently sampled from. Assume furthermore than $\min_{1\le i\le n}(\dtv(X,H_i))\le \eta$.
Then there exists an efficient algorithm that given access to $O(\log(n)/\eta^2)$ $\epsilon$-corrupted samples from $X$,
where $\eps \le \eta$, along with $H_1,\dots,H_n$, computes an $m \in [n]$ such that with high probability we have that
$\dtv(X,H_m)=O(\eta) \;.$
\end{fact}

%%% this section constains the tensor decomposition algorithm and analysis components are close to mixture covariance in rel frobenius distance. 

%!TEX root = main.tex

%%Adding a separate file for the pseudo code

\section{List-Recovery of Parameters via Tensor Decomposition}
\label{sec:list-recovery}

%\Pnote{Please adhere to the following notation throughout the paper:
%\begin{enumerate}
%	\item Denote the mixture by $\calM$. Not $\calX$, not $X$. We do not need a symbol (adds nothing, loses us valuable letters) for a sample from $\calM$ when we are taking expectatoins over $\calM$.
%	\item Denote the distributions of the mixture components by $G_i$.
%	\item $X$ always stands for uncorrupted sample from $\calM$. This will be fixed once and for all throughout the algorithm EXCEPT for the robust tournament stuff.
%	\item We will use $X_i$s to denote the i.i.d. sample from $G_i$ for each $1 \leq i \leq k$.
%	\item $Y$ always stands for an $\epsilon$ corruption of $X$. Add a note when we define the model that this view is equivalent to the TV contamination model (with the important caveat that we can we can draw independent corrupted samples.)
%	\item As a consequence of the above choices, we should say $h_m(G_i)$ and $h_m(\calM)$ when talking about the hermite tensor of a component and the whole mixture respectively. $h_m(X)$ denotes the empirical hermite tensor of the uncorrupted sample from the mixture.
%\end{enumerate}}

%\inote{I think I have handled these notational issues for the current section.}

%\dnote{Maybe we want to put this stuff about the deterministic condition in an earlier, preliminary section.}
%\footnote{\Pnote{I agree with Daniel. The deterministic condition stuff seems better in a separate section.}}

In this section, we give an algorithm that takes samples from a $k$-mixture of Gaussians,
whose component means and covariances are not too far from each other
in natural norms, and outputs a dimension-independent size list of candidate
$k$-tuples of parameters (i.e., means and covariances) one of which is
guaranteed to be close to the true target $k$-tuple of parameters. Our approach involves a new
tensor decomposition procedure that works in the absence of any non-degeneracy
conditions on the components.

The goal of this section is to prove the following theorem:

%\Anote{check the parameters of this theorem.}
\begin{theorem}[Recovering Candidate Parameters when Component Covariances
are close in Frobenius Distance]\label{thm:list-recovery-by-tensor-decomposition}
Fix any $\alpha> \epsilon > 0, \Delta > 0$.
There is an algorithm  that takes input $X$, a sample from a $k$-mixture of Gaussians
$\calM = \sum_i w_i \cN(\mu_i,\Sigma_i)$ satisfying Condition~\ref{cond:convergence-of-moment-tensors}
with parameters \new{$\gamma = \eps d^{-8k}k^{-Ck}$, for $C$ a sufficiently large universal constant,
and $t=8k$, and let $Y$ be an $\eps$-corruption of $X$}.
If $w_i\ge\alpha$, $\Norm{\mu_i}_2 \leq \frac{2}{\sqrt{\alpha}}$ and  $\Norm{\Sigma_i - I}_F \leq \Delta$ for every $i \in [k]$,
then\new{, given $k,Y$ and $\eps$,} the algorithm outputs a list $L$ of at most
$\new{\exp\Paren{ \log(1/\epsilon) \Paren{k + 1/\alpha + \Delta}^{O(k)}/\eta^2 } }$
candidate hypotheses (component means and covariances),
such that \new{with probability at least $0.99$} there exist $\{\hat \mu_i, \hat \Sigma_i\}_{i \in [k]} \subseteq L$ satisfying
$\Norm{\mu_i - \hat \mu_i}_2 \leq \bigO{\frac{\Delta^{1/2}}{\alpha}} \eta^{G(k)}$ and
$\Norm{\Sigma_i - \hat\Sigma_i}_F \leq \bigO{k^4} \frac{\Delta^{1/2}}{\alpha} \eta^{G(k)}$ for all $i \in [k]$.
Here, $\eta = (2 k)^{4k} \bigO{1/\alpha+ \Delta}^{4k} \sqrt{\epsilon}$, $G(k) = \frac{1}{C^{k+1} (k+1)!}$.
%for some sufficiently large absolute constants $C>0$. 
The running time of the algorithm is $\poly(|L|, \new{|Y|, d^k})$.
\end{theorem}

%The algorithm succeeds with probability at least \Anote{$9/10$} over the draw of the sample of size $n \geq n_0 = d^{\mathcal{O}(k)}/\eta^2$

% Let $X=\sum_{i=1}^k w_i G_i$ with $G_i = \calN(\mu_i,\Sigma_i)$ be a mixture of Gaussians for some fixed constant $k$. Suppose that $w_i \geq \alpha $ for all $i$ and letting $\Sigma=\cov(X)$, $\Norm{\Sigma^{-1/2}\Sigma_i \Sigma^{-1/2}-I}_F \leq \Delta$ for all $i$. Then for $\lambda>0$ there exists an algorithm that given polynomially many $\eps$-corrupted samples from $X$ returns a list of at most $\exp\Paren{\log(1/\eps) \lambda^{-2} \bigOk{1+\Delta+\alpha}^{4k} }$ hypotheses $(\hat \mu_i, \hat\Sigma_i)$ so that with high probability at least one hypothesis satisfies $\Norm{\mu_i-\hat{\mu}_i}_2 \leq \poly_k(\lambda)\poly_k(\Delta/\alpha) \sqrt{\epsilon}$ and $\Norm{\Sigma^{-1/2}\Paren{\hat{\Sigma}_i-\Sigma_i}\Sigma^{-1/2}}_F \leq \Paren{\poly_k(\lambda)\poly_k(\Delta/\alpha)+\mathcal{O}(1+\Delta+1/\alpha)^4}\sqrt{\eps}$ for all $i \in [k]$.

In the body of this section, we establish Theorem~\ref{thm:list-recovery-by-tensor-decomposition}.
The structure of this section is as follows:
In Section~\ref{ssec:tensor-alg}, we describe our algorithm, which is then analyzed
in Sections~\ref{ssec:tensor-analysis}-\ref{ssec:mv}.

\subsection{List-Decodable Tensor Decomposition Algorithm} \label{ssec:tensor-alg}

In this section, we describe our tensor decomposition algorithm,
which is given in pseudocode below (Algorithm~\ref{algo:list-recovery-tensor-decomposition}).

%!TEX root = main.tex

\begin{mdframed}
  \begin{algorithm}[List-Recovery of Candidate Parameters via Tensor Decomposition]
    \label{algo:list-recovery-tensor-decomposition}\mbox{}
    \begin{description}
    \item[Input:] An $\epsilon$-corruption $Y$ of a sample $X$ from a $k$-mixture of
    Gaussians $\calM = \sum_i w_i \cN(\mu_i,\Sigma_i)$.
    \item[Requirements: ] The guarantees of the algorithm hold if the mixture parameters
    and the sample $X$ satisfy:
    \begin{enumerate}
        \item $w_i \geq \alpha$ for all $i \in [k]$,
%        \item $\Norm{\sum_i w_i \mu_i}_2 \leq 1$,
        \item $\Norm{\mu_i}_2 \leq 2/\sqrt{\alpha}$ for all $i\in[k]$,
        \item $\Norm{\Sigma_i - I}_F \leq \Delta$ for all $i\in[k]$.
        \item $X$ satisfies Condition~\ref{cond:convergence-of-moment-tensors} with parameters \new{$(\gamma, t)$,
        where $\gamma = \eps d^{-8k}k^{-Ck}$, for $C$ a sufficiently large universal constant, and $t=8k$}.
    \end{enumerate}

    \item[Parameters: ] $\eta = (2 k)^{4k} (C k(1/\alpha+ \Delta))^{4k} \sqrt{\epsilon}$, $D = C(k^4/(\alpha\sqrt{\eta}))$,
    $\delta = 2 \eta^{1/(C^{k+1} (k+1)!)} $, $\ell'= 100\log k\Paren{ \eta /\Paren{ k^5\Paren{\Delta^4 + 1/\alpha^4 }} }^{-4k}$,
%    $\ell = \new{\exp\Paren{ \log(1/\epsilon) (1/\eta)^{O(k^5)}}}$
% $\ell$ was previously \exp(\log(1/\epsilon) \eta^{-F(k)} (1 + 1/\alpha + \Delta)^{Ck})
%with $F(k) = \frac{1}{(C'C)^{k+1}\paren{(k+1)!}^2}$
for some sufficiently large absolute constant $C>0$, $\lambda  = 4\eta$, $\phi = 10(1+\Delta^2)/(\sqrt{\eta}\alpha^5)$.
    \item[Output:] A list $L$ of hypotheses such that there exists at least one, $\{\hat{\mu}_i, \hat{\Sigma}_i\}_{i \leq k}\in L$, satisfying: $\Norm{\mu_i - \hat \mu_i}_2 \leq \bigO{\frac{\Delta^{1/2}}{\alpha}} \eta^{G(k)}$ and
$\Norm{\Sigma_i - \hat\Sigma_i}_F \leq \bigO{k^4} \frac{\Delta^{1/2}}{\alpha} \eta^{G(k)}$, where $G(k) = \frac{1}{C^{k+1} (k+1)!}$.

    \item[Operation:]\mbox{}
    \begin{enumerate}
    %% Make mixture isotropic.
    \item \textbf{Robust Estimation of Hermite Tensors}: For $m \in [4k]$, compute $\hat T_m$ such that $\max_{m \in [4k]} \Norm{\hat T_m - \expecf{}{h_m(\calM)} }_F \leq \eta$ using the robust mean estimation algorithm in Fact~\ref{fact:robust-mean-estimation}.

    \item \textbf{Random Collapsing of Two Modes of $\hat{T}_4$}: Let $L'$ be an empty list. Repeat $\ell'$ times:
    For $ j \in [4k]$, choose independent standard Gaussians in $\mathbb{R}^d$, denoted by $x^{(j)}, y^{(j)} \sim \calN(0, I)$,
    and uniform draws $a_1, a_2, \ldots, a_t$ from $[-D,D]$. Let $\hat S$ be a $d \times d$ matrix such that
    for all $r,s\in[d]$, $\hat S(r,s) = \sum_{j \in [4k]} a_j \hat T_4(r,s,x^{(j)}, y^{(j)}) =
    \sum_{j \in [4k] } a_j \sum_{g,h \in [d]} \hat T_4 (r,s,g,h) x^{(j)}(g) y^{(j)}(h)$.
    Add $\hat S$  to the list $L'$.

    \item \textbf{Construct Low-Dimensional Subspace for Exhaustive Search}: Let $V$ be the span of all singular vectors of the natural $d\times d^{m-1}$ flattening of $\hat T_m$ with singular values $\geq \lambda$ for $m \leq 4k$. For each $\hat S \in L'$, let $V'_{\hat S}$ be the span of $V$ plus all the singular vectors of $\hat S$ with singular value larger than $\delta^{1/4}$.

    \item \textbf{Enumerating Candidates in $V'_{\hat S}$}: Initialize $L$ to be the empty list.
    For each $\hat S \in L'$, let $V_{\delta^{1/4}}$ be a $\delta^{1/4}$-cover of vectors in $V'_{\hat{S}}$
    with $\ell_2$-norm at most $2/\sqrt{\alpha}$. Enumerate over vectors $\hat \mu$ in $V_{\delta^{1/4} }$.
    Let $k' =C k^2$
    and let $\calC_{\delta^{1/4}}$ be a $\delta^{1/4}$-cover of the interval $[-\phi, \phi]^{k'}$.
    For $\{ \tau_j \}_{j\in [k']} \in \calC_{\delta^{1/4}}$ and for all $\{ v_j \}_{j\in [k']} \in V_{\delta^{1/4}}$,
    let $\hat{Q} = \sum_{j \in [k']} \tau_j v_j v_j^\top$. Add  $\{ \hat \mu,I+\hat{S}+\hat{Q} \}$ to $L$.

    %Let $V_\tau\otimes V_\tau$ be a $\tau$ cover over $V'_{\hat{S}} \otimes V'_{\hat{S}}$ and  for all matrices $\hat Q$ of Frobenius norm at most $10(1+\Delta^2)/(\eta\alpha^5)$ in $V_\tau\otimes V_\tau$, add  $(\hat \mu,I+\hat S+\hat Q)$ to $L$.
     %rank $\bigO{k^2}$ matrices $\hat Q$ of Frobenius norm at most $10(1+\Delta^2)/(\eta\alpha^5)$ in $V_\tau\times V_\tau$. For each candidate $\hat \mu, \hat Q$, add $(\hat \mu,I+\hat S+\hat Q)$ to $L$. %$ each ?? add ?? in $V'_{\hat S}$ to $L$. Output $L$.
    \end{enumerate}

    \end{description}
  \end{algorithm}
\end{mdframed}

\subsection{Analysis of Algorithm} \label{ssec:tensor-analysis}

% In this section, we analyze Algorithm~\ref{algo:list-recovery-tensor-decomposition}.
% We first distill a deterministic condition under which our algorithm for list recovery succeeds.

% \Anote{Adding conditions.}
% \begin{condition}[Deterministic Condition for List Recovery]
% \label{def:conditions_listrecovery}
% Let $\calM = \sum_{i \in [k]} w_i \calN{\mu_i, \Sigma_i}$ be a $k$-mixture of Gaussians such that for all $i\in [k]$, $w_i \geq \alpha$, $\Norm{\mu_i}_2 \leq 2/\sqrt{\alpha}$ and $\Norm{\Sigma_i - I}_F \leq \Delta$. Given some $\epsilon>0$, a sample $X = \{x_1, x_2, \ldots, x_n\}$ from $\calM$ is \textit{list-recoverable} if the following condition holds: for all $m \in [4k]$, there exists a $m$-tensor $\hat{T}_m$ such that $\Norm{ \hat{T}_m - \expecf{X \sim \calM}{h_m(X)} }_F \leq \bigOk{ \Paren{1+1/\alpha + \Delta}^{k} \sqrt{\epsilon} }$.
% \end{condition}
% \Anote{End of edits.}
We analyze the three main steps of Algorithm~\ref{algo:list-recovery-tensor-decomposition} in the following lemmas.
We will prove the following three propositions in the subsequent subsections that analyze Steps 1, 2 and 3
of Algorithm \ref{algo:list-recovery-tensor-decomposition}. For Step~1, we show that when $X$ satisfies Condition~\ref{cond:convergence-of-moment-tensors}, the empirical estimates of the moment tensors obtained
by applying the robust mean estimation algorithm to $X$ are sufficiently close to the moment tensors
of the input mixture $\calM$.

\begin{proposition}[Robustly Estimating Hermite Polynomial Tensors]\label{prop:robust-estimation-hermite-tensors}
For any integer $m\leq {4k}$, and $\Delta \in \R_+$, there exists an algorithm with running time $\poly_m(d/\eps)$
that takes an $\epsilon$-corruption $Y$ of $X$, \new{a set satisfying Condition \ref{cond:convergence-of-moment-tensors}
with respect to $\calM = \sum_{i = 1}^k w_i \cN(\mu_i,\Sigma_i)$ with parameters $\gamma = \eps d^{-m}m^{-Cm}$,
for $C$ a sufficiently large constant, and $t=\newblue{2}m$}. If $w_i \geq \alpha$, $\Norm{\mu_i}_2 \leq 2/\sqrt{\alpha}$, and $\Norm{\Sigma_i -I}_F \leq \Delta$ for each $i \in [k]$, then the algorithm outputs a tensor $\hat{T}_m$ such that
$\Norm{ \hat{T}_m - \expecf{}{h_m(\calM)} }_F \leq \eta$,
for $\eta =\mathcal{O}\left(m(1+1/\alpha+\Delta)\right)^m\sqrt{\eps}$.
%learns $\E[h_m(\calM)]$ to error $\mathcal{O}_m\left(1+1/\alpha+\Delta\right)^m\sqrt{\eps}$.
\end{proposition}

\noindent The proof of Proposition~\ref{prop:robust-estimation-hermite-tensors} is deferred to
Section~\ref{ssec:robust-estimation-hermite-tensors}.

Next, we analyze Step 2 of the algorithm and prove that, with non-negligible probability,
randomly collapsing two modes of $\hat T_4$ yields a matrix $\hat S$ such that
$\hat S - (\Sigma_i-I) = P_i + Q_i$, where $P_i$ has small Frobenius norm
and $Q_i$ is a rank-$\mathcal{O}(k^2)$ matrix.

\begin{proposition}[Tensor Decomposition up to Low-Rank Error]
\label{prop:tensor_decomposition_upto_lowrank}
Let $\calM = \sum_{i = 1}^k w_i \cN(\mu_i,\Sigma_i)$ be a $k$-mixture of Gaussians
satisfying $w_i \geq \alpha$, $\Norm{\mu_i}_2 \leq 2/\sqrt{\alpha}$, and $\Norm{\Sigma_i -I}_F \leq \Delta$ for each $i \in [k]$.
For $0<\eta<1$, let $\hat{T}_4$ be a tensor such that $\Norm{\E[h_4(\calM)]- \hat{T}_4 }_F \leq \eta$,
and let $D$ \newblue{be a sufficiently large constant multiple of} ${ k^4/(\alpha \sqrt{\eta})}$.
For all $j \in [4k]$, let $x^{(j)}, y^{(j)} \sim \calN(0, I)$ be independent
and $a_j \sim \calU[-D, D]$, where $ \calU[-D, D]$ is the uniform distribution over the interval $[-D, D]$,
and let $\hat{S} = \sum_{j\in[4k]} a_j \hat{T}_{4}\Paren{\cdot, \cdot ,x^{(j)},y^{(j)}}$.
Then, for each $i \in [k]$, with probability at least $ \Paren{ \eta /\Paren{ k^5\Paren{\Delta^4 + 1/\alpha^4 }} }^{4k} $,
over the  choice of $x^{(j)},y^{(j)}$ and $a_j$, we have that $\hat{S}-(\Sigma_i-I)= P_i + Q_i$,
where $\Norm{P_i}_F = \bigO{\sqrt{\eta/\alpha}}$, $\Norm{Q_i}_F =\bigO{ \frac{1+\Delta^2}{\sqrt{\eta}\alpha^3}}$
% \Paren{\Delta^2 + \frac{1}{\alpha^4} } }$
%$\Norm{P_i}_F = \bigO{\poly(\eta k\log(k)/\alpha \epsilon) +\epsilon}$
and $\mathrm{rank}(Q_i) = \bigO{k^2}$.
\end{proposition}

\noindent The proof of Proposition~\ref{prop:tensor_decomposition_upto_lowrank}
is given in Section~\ref{ssec:td_up_to_low_rank}.

Finally, in Step~3, for any $\hat S$ such that $\hat S - (\Sigma_i-I) = P_i + Q_i$,
where $P_i$ has small Frobenius norm and $Q_i$ is a rank $\mathcal{O}(k^2)$ matrix,
we find a low-dimensional subspace $V'$ such that the range space of $Q_i$ is approximately contained in $V'$.
We will use $V'$ to exhaustively search for $\mathcal{O}(k^2)$ rank matrices to find candidates for $Q_i$.

\begin{proposition}[Low-Dimensional Subspace $V'$ for Exhaustive Search]
\label{prop:construction-of-low-dim-subspace-for-enumeration}
Let $\calM = \sum_{i = 1}^k w_i \cN(\mu_i,\Sigma_i)$ be a $k$-mixture of Gaussians
satisfying $w_i \geq \alpha$, $\Norm{\mu_i}_2 \leq 2/\sqrt{\alpha}$, and $\Norm{\Sigma_i -I}_F \leq \Delta$ for each $i \in [k]$.
Let $\Norm{\hat T_m - \expecf{}{h_m(\calM)} }_F \leq \eta$, for each $1 \leq m \leq 4k$, and some $\eta >0$.
Let $V$ be the span of all the left singular vectors of the $d \times d^{m-1}$ matrix
obtained by the natural flattening of $\hat T_m$ with singular values at least $ 2\eta$.
For each $1 \leq i \leq k$, let $S_i=\Sigma_i-I$ and $\hat{S_i}$ be a $d \times d$ matrix such that $\hat S_i - S_i = P_i + Q_i$,
where $\Norm{P_i}_F \leq \bigO{ \sqrt{\eta/\alpha}}$, $Q_i$ has rank $\mathcal{O}(k^2)$,
and $\Norm{Q_i}_F \leq \bigO{ \frac{1+\Delta^2}{\sqrt{\eta}\alpha^3}}$.
Let $V'$ be the span of $V$ plus all singular vectors of $\hat{S_i}$ of singular values at least $\delta$ for all $i$.
Then, for $\delta = 2 \eta^{1/(C^{k+1} (k+1)!)}$ with a sufficiently large constant $C>0$, we have that:
\begin{enumerate}
	\item $\dim{V'} \leq \Paren{\calO\Paren{k(1+ 1/\alpha + \Delta)}^{4k+5}}/\eta^2$.
	\item There is a vector $\mu_i' \in V'$ such that $\Norm{\mu_i - \mu_i'}^2_2 \leq \frac{20}{\alpha^2}\sqrt{\delta} \Delta$.
	\item There are $q = \mathcal{O}(k^2)$ unit vectors $v_1, v_2, \ldots, v_q \in V'$ and scalars $\tau_1, \tau_2, \ldots, \tau_q \in \left[ -10(1+\Delta^2)/(\sqrt{\eta}\alpha^5), 10(1+\Delta^2)/(\sqrt{\eta}\alpha^5) \right]$
	such that $\Norm{Q_i-\sum_{i =1}^q \tau_i v_iv_i^\top}_F \leq \bigO{\frac{k^2}{\alpha} \delta^{1/4} \Delta^{1/2}}$.
	%\Anote{what is the scale of the $\tau_i$?}
\end{enumerate}
\end{proposition}

\noindent The proof of Proposition~\ref{prop:construction-of-low-dim-subspace-for-enumeration} is given in
Section~\ref{ssec:td_recover_low_rank}.

We can now use these propositions to complete the proof of Theorem~\ref{thm:list-recovery-by-tensor-decomposition}.

\begin{proof}[Proof of Theorem~\ref{thm:list-recovery-by-tensor-decomposition}]
Using Proposition~\ref{prop:robust-estimation-hermite-tensors}, Step 1 of the algorithm outputs estimates $\hat T_i$ for $i \in [4k]$ such that $\max_{m \in [4k]} \Norm{\hat T_m - \E h_m(\calM)}_F \leq \eta$. Next, %
% $\max_{i \in [4k]} \Norm{\hat T_i - \E h_i(X)}_F \leq \eta$
% With probability at least $1-1/d$ over the draw of the $\epsilon$-corrupted sample $Y$,
% Step 1 of the algorithm outputs estimates $\hat T_i$ for $i \in [4k]$ such that
% $\max_{i \in [4k]} \Norm{\hat T_i - \E h_i(X)}_F \leq \eta$. We will condition on this event in the following.
by the standard coupon collector analysis, using Proposition~\ref{prop:tensor_decomposition_upto_lowrank}
and repeating Step 2 of the algorithm $\ell' = 100\log k\Paren{ \eta /\Paren{ k^5\Paren{\Delta^4 + 1/\alpha^4 }} }^{-4k}$ times,
guarantees that with probability at least $1-1/(100 k)^{100}$, for every $1 \leq i \leq k$, there are $\hat{S}_i \in L$ such that $\hat S_i - (\Sigma_i-I) = P_i + Q_i$ for $P_i,Q_i$ satisfying $\Norm{P_i}_F \leq \sqrt{\eta/\alpha}$, $\Norm{Q_i}_F \leq \frac{1+\Delta^2}{\sqrt{\eta} \alpha^5} $ and $Q_i$ has rank $\mathcal{O}(k^2)$.

Next, Proposition~\ref{prop:construction-of-low-dim-subspace-for-enumeration} implies that for every such $\hat{S}_i \in L'$, we can construct a subspace $V' =V'_{\hat S_i}$ of dimension $\bigO{(k(1+ 1/\alpha + \Delta))^{4k+5} /\new{\eta^2}}$ such that $V'$ contains $\mu_i'$ that satisfies $\Norm{\mu_i - \mu_i'}^2_2 \leq \frac{\Delta}{\alpha^2} \cdot \sqrt{\delta}$, and there is a rank $O(k^2)$ matrix $\hat Q_i$ with range space contained in $V'$ such that $\Norm{Q_i - \hat Q_i}_F \leq \calO(\frac{k^2}{\alpha} \delta^{1/4} \Delta^{1/2})$.

Now, let $V_{\tau} \subseteq V'$ be a $\tau = \delta^{1/4}$-cover, in $\ell_2$-norm,
of vectors with $\ell_2$ norm at most $2/\sqrt{\alpha}$ in $V'$.
Then, since $\Norm{\mu_i}_2 \leq \frac{2}{\sqrt{\alpha}}$, there is a vector $\hat \mu_i \in V_{\tau}$
such that $\Norm{\mu_i - \hat \mu_i}^2_2 \leq \tau + \frac{20}{\alpha^2} \sqrt{\delta} \Delta \leq \frac{40}{\alpha^2} \sqrt{\delta} \Delta$.

Further, there exist $\tau_1, \tau_2, \ldots \tau_{\mathcal{O}(k^2)}$ in a $\tau$-cover of
$[-10(1+\Delta^2)/(\sqrt{\eta}\alpha^5), 10(1+\Delta^2)/(\sqrt{\eta}\alpha^5)]$
and vectors $v_1, v_2, \ldots, v_{\mathcal{O}(k^2)} \in V_\tau$ such that
$\Norm{\sum_{i = 1}^{\mathcal{O}(k^2)} \tau_i v_iv_i^{\top} - Q_i}_F \leq \mathcal{O}(k^4\delta^{1/4} \Delta^{1/2}/\alpha )$.
In particular, $\hat{\Sigma}_i = I+\hat S - \sum_{i = 1}^{\mathcal{O}(k^2)} \tau_i v_iv_i^{\top}$
satisfies
\begin{equation}
\norm{\hat{\Sigma}_i-\Sigma_i}_F = \mathcal{O}(\sqrt{\eta}) +  \bigO{ \frac{ k^4 \delta^{1/4} \Delta^{1/2} } {\alpha} } = \bigO{ \frac{ k^4 \delta^{1/4} \Delta^{1/2} } {\alpha}} \;.
\end{equation}
The size of this search space for every fixed $\hat{S} \in L'$ can be bounded above
by $\Paren{\frac{1+\Delta^2}{\delta \alpha^5}}^{\mathcal{O}(k^5\dim{V'})}$.
Thus, the size of $L$ can be bounded from above by
\begin{equation*}
\begin{split}
k^5 \Paren{ \frac{\Delta^4}{\eta} + \frac{1}{\alpha^4 \eta } }^{4k}  \cdot \Paren{\frac{1+\Delta^2}{\delta \sqrt{\eta} \alpha^5} }^{\mathcal{O}(k^5\dim{V'})}
& \leq  \exp\Paren{ \log(1/\epsilon) \Paren{k + 1/\alpha + \Delta}^{O(k)}/\eta^2 } \;.
\end{split}
\end{equation*}
%where $F(k) = c_1^{-k} \frac{1}{(k+1)!}$, for some universal constant $c_1>0$.
This completes the proof.
\end{proof}

\subsection{Robust Estimation of Hermite Tensors} \label{ssec:robust-estimation-hermite-tensors}
In this section, we will prove Proposition~\ref{prop:robust-estimation-hermite-tensors}.

%\Pnote{Prove here: 1. moment-closeness implies hermite closeness. 2. moment-closeness implies certifiable hypercontractivity.}

% We first prove that for any $X$ that satisfies Condition~\ref{cond:convergence-of-moment-tensors}, the empirical estimates of the hermite moment tensors obtained from $X$ are close to $\E h_m(\calM)$.

% \begin{lemma}[Convergence of Hermite Moment Tensors] \label{lem:convergence-of-hermite-moment-tensors}

% \end{lemma}

% Next, we prove that the uniform distribution on $X$ is certifiably hypercontractive whenever $X$ satisfies Condition~\ref{cond:convergence-of-moment-tensors}.

% \begin{lemma}[Certifiable Hypercontractivity under Sampling] \label{lem:certifiable-hypercontractivity-moment-tensors-converge}

% \end{lemma}

% We can now complete the proof of Proposition~\ref{prop:robust-estimation-hermite-tensors}.

\begin{proof}[Proof of Proposition~\ref{prop:robust-estimation-hermite-tensors}]
Consider the uniform distribution on the uncorrupted sample $X$.
We want to analyze the effect of applying the robust mean estimation algorithm (Fact~\ref{fact:robust-mean-estimation})
to the points $h_m(x)$, for $x \in X$. In order for us to apply Fact~\ref{fact:robust-mean-estimation},
we need to ensure that \new{the uniform distribution on} $\{h_m(x)\}_{x \in X}$ has bounded covariance.
This step gives us a good approximation
to $\E_{x \sim_u X} h_m(x)$.  In order for us to obtain an approximation to $\E h_m(\calM)$, we need to bound
the difference between $\E h_m(\calM)$ and $\E_{x \sim_u X} h_m(x)$. We will do both these steps below.

The second part is immediate.
\new{By the definition of $h_m(X)$, we have that
$$
\Norm{\frac{1}{|X|} \sum_{x \in X} h_m(x)-\E h_m(\calM)}_F \leq \sum_{j\leq m/2} m^{2j} d^j \Norm{\frac{1}{|X|} \sum_{x \in X} x^{\otimes (m-2j)}-\E \calM^{\otimes (m-2j)}}_F.
$$
By Lemma \ref{moment closeness lemma}, this is at most $O(1+\Delta+1/\alpha)^m m^{O(m)} d^{m/2}\gamma \leq \eta/2$.
We note that a similar argument bounds
$$\Norm{\frac{1}{|X|} \sum_{x \in X} h_m(x) {\otimes h_m(x)} - \E h_m(\calM) \new{\otimes h_m(\calM)}}_F \leq \eta^2.$$
} %\inote{Removed $\neq$ above.}
Let us now verify the first part. We proceed via bounding the operator norm of the covariance of $h_m(\calM)$.
We can then use the bound on the Frobenius norm
$\Norm{\frac{1}{|X|} \sum_{x \in X} h_m(x)\new{\otimes h_m(X)}-\E h_m(\calM)\new{\otimes h_m(\calM)}}_F$
to get a bound on $\Norm{\frac{1}{|X|} \sum_{ x\in X} h_{m}(x)h_m(x)^{\top}}_{\new{\mathrm{op}}}$
(the operator norm of the canonical square flattening of the of the $2m$-th empirical Hermite moment tensor of $X$).
This will complete the proof.

%and then appealing to the robu}st mean estimation algorithm in Fact~\ref{fact:robust-mean-estimation}.
Let $G_i=\cN(\mu_i,\Sigma_i)$ be the components of $\calM$.
%We make use of Theorem \ref{basicCovLearnerTheorem}. For this we merely need to prove a bound on the covariance of $h_m(\calM)$.
%As $h_m(X)$ is just a mixture of the \new{$h_m(X_i)$'s, where $X_i$'s are the $k$ subsets in the partition of $X$ defined
%in Condition~\ref{cond:convergence-of-moment-tensors}},
We have that
\begin{equation}
\begin{split}
\cov(h_m(\calM)) =& \sum_{i\in[k]} w_i \cov(h_m(G_i)) \\
&+ \frac{1}{2}\sum_{i,j\in[k]} w_iw_j \Paren{\E[h_m(G_i)]-\E[h_m(G_j)]} \Paren{\E[h_m(G_i)]-\E[h_m(G_j)]}^\top \;.
\end{split}
\end{equation}
% $$
% \cov(h_m(\calM)) = \sum_i w_i \cov(h_m(G_i)) + \frac{1}{2}\sum_{i,j} w_iw_j (\E[h_m(G_i)]-\E[h_m(G_j)])(\E[h_m(G_i)]-\E[h_m(G_j)])^T.
% $$
By Lemma \ref{Hermite Covariance Lemma}, we have that for all $i\in [k]$, it holds
\begin{equation*}
\Norm{\cov(h_m(\new{G}_i))}_{\textrm{op} }
=\mathcal{O}\left(m(1+\norm{\mu_i}_2+\norm{\Sigma_i-I}_F)\right)^{2m}=\mathcal{O}\left(m(1+2/\sqrt{\alpha}+\Delta)\right)^{2m} \;,
\end{equation*}
where for any matrix $M$,  $\Norm{M}_{\textrm{op}} = \max_{ \Norm{u}_2=1 } \Norm{Mu}_2$ is the operator norm of the matrix.
Further, for any $i, j\in[k]$,
\begin{equation}
\begin{split}
\Norm{\Paren{\E[h_m(G_i)]-\E[h_m(G_j)]} \Paren{\E[h_m(G_i)]-\E[h_m(G_j)]}^\top}_{\textrm{op}}
& =  \Norm{ \E[h_m(G_i)]-\E[h_m(G_j)]}_{2} \\
& = \mathcal{O}\Paren{m(1+1/\alpha+\Delta)}^{2m} \;.
\end{split}
\end{equation}
%The operator norm of $(\E[h_m(G_i)]-\E[h_m(G_j)])(\E[h_m(G_i)]-\E[h_m(G_j)])^T$ is $|\E[h_m(G_i)]-\E[h_m(G_j)]|_2$, which can be seen by Lemma \ref{HermiteExpectationLem} to be $O_m(1+1/\alpha+\Delta)^{2m}$.
This claim follows from the triangle inequality of the operator norm.
\end{proof}

\subsection{List-Recovery of Covariances up to Low-Rank Error} \label{ssec:td_up_to_low_rank}
%%% This is the file i am currently working on.
%!TEX root = main.tex

In this section, we prove Proposition~\ref{prop:tensor_decomposition_upto_lowrank}.
We first set some useful notation. We will write $\Sigmaa_i \eqdef \Sigma_i-I$ throughout this section.
We will also use $\Sigmaa_i'$ to denote $\Sigmaa_i + \mu_i \otimes \mu_i$.

We first show that for every $i$, there exists a matrix $P$ such that $\Paren{\sum_{i\in [k]} w_i \Sigmaa'_i \otimes \Sigmaa'_i } \Paren{\cdot, \cdot, P}$ is close to $\Sigmaa'_i$.

\begin{lemma}[Existence of a $2$-Tensor] \label{lem:existence_2_tensor}
Under the hypothesis of Proposition~\ref{prop:tensor_decomposition_upto_lowrank}, for each $i\in[k]$,
there exists a matrix $P$ such that $\Norm{P}_F =  \bigO{1/(\sqrt{\eta} \alpha)}$ and
$\Norm{ T'_4\Paren{\cdot, \cdot, P} - \Sigmaa_i'}_F = \bigO{\sqrt{\eta/\alpha}}$,
where $T'_4 = \Paren{\sum_{i\in [k]} w_i \Sigmaa'_i \otimes \Sigmaa'_i }$.
\end{lemma}

Note that throughout this section it will be useful to think of $T_4'$ as a $d^2 \times d^2$ matrix rather than as a tensor. In this case, we can think of $T_4'$ as $\sum_{i=1}^k w_i (\Sigmaa'_i) (\Sigmaa'_i)^T.$ From standard facts about positive semidefinite matrices it follows that $\Sigmaa_i'$ is in the image of $T_4'$, and Lemma \ref{lem:existence_2_tensor} is just a slightly robustified version of this (saying that we can find an approximate preimage that it not itself too large).

The proof of this Lemma \ref{lem:existence_2_tensor} will involve linear programming duality with an infinite system of constraints. As the application of duality with infinitely many constraints has some technical issues, we state below an appropriate version of duality.

\begin{fact}[Linear Programming Duality for Compact, Convex Constraint Sets]
\label{fact:duality_compact}
Let $K \subset \R^{n+1}$ be a compact convex set. There exists an $x\in \R^n$ so that $(x,1)\cdot z > 0$ for all $z\in K$ if and only if there is no element $(0,0,\ldots,0,a)\in K$ for any $a\leq 0$.
\end{fact}
This fact can be proved by noting that if no such $a$ exists, there must be a hyperplane separating $K$ from the set of such points $(0,a)$. This separating hyperplane will be of the form $(z,y)\in H$ if and only if $y=x\cdot z$ for some $x$ and this $x$ will provide the solution to the linear system.

\begin{proof}[Proof of Lemma \ref{lem:existence_2_tensor}]

To show that such a $P$ exists for each $i$, we apply linear programming duality.
In particular, the conditions imposed on $P$ define a linear program, which has a feasible
solution unless there is a solution to the dual linear program. For sufficiently large constants $c_1$ and $c_2$, consider the following primal in the variable $P$:
\begin{align} \label{eqn:primal}
\Iprod{v, P}  & \leq \frac{c_1}{ \sqrt{\eta} \alpha} \Norm{v}_F     &  \forall\hspace{0.1in} v \in \mathbb{R}^{d \times d} \\
\label{eqn:primal_a}
\Iprod{u, T'_4\Paren{\cdot, \cdot, P} - \Sigmaa'_i } & \leq c_2\sqrt{\eta}  \Norm{u}_F
& \forall\hspace{0.1in} u \in \mathbb{R}^{d \times d}.
\end{align}
It is not hard to see that $\Norm{P}_F \leq \frac{c_1}{ \sqrt{\eta} \alpha}$ if and only if \eqref{eqn:primal} holds for all $v$ and $\Norm{ T'_4\Paren{\cdot, \cdot, P} - \Sigmaa_i'}_F \leq c_2{\sqrt{\eta/\alpha}}$ if and only if \eqref{eqn:primal_a} holds for all $u$. Throughout the proof, we suggest that the reader think of $u$ and $v$ as vectors in $d^2$-dimensional vector space.

Our goal is to show that there exists a feasible solution $P$ such that \eqref{eqn:primal} and \eqref{eqn:primal_a} hold simultaneously for all $u, v \in \R^{d\times d}$. We first note that this is equivalent to saying that
\begin{align}
\label{eqn:dual}
\Iprod{v, P} + \Iprod{u, T'_4\Paren{\cdot, \cdot, P} }  -\Iprod{u, \Sigmaa'_i }  & \leq \frac{c_1}{\sqrt{\eta} \alpha} \Norm{v}_F   +  c_2 \sqrt{\eta}  \Norm{u}_F  \;,
\end{align}
for all $u,v\in \R^{d\times d}$. This is not quite in the form necessary to apply Fact~\ref{fact:duality_compact}, so we note that this is in turn equivalent to saying that
\begin{align}
\label{eqn:dual reformulation}
\Iprod{v, P} + \Iprod{u, T'_4\Paren{\cdot, \cdot, P} }  -\Iprod{u, \Sigmaa'_i }  & \leq 1  \;,
\end{align}
for all $u,v\in \R^{d\times d}$ so that $\frac{c_1}{\sqrt{\eta} \alpha} \Norm{v}_F   +  c_2 \sqrt{\eta}  \Norm{u}_F \leq 1$, and $u \in \mathrm{span}\{\Sigmaa_i'\}$. As this is a convex set of linear equations, we have by Fact \ref{fact:duality_compact} that there exists such a $P$ unless there exists such a pair of $u$ and $v$ so that the coefficient of $P$ in Equation \eqref{eqn:dual reformulation} is $0$ and so that the resulting inequality of constants is either false or holds with equality. In particular, the coefficient of $P$ vanishes if and only if $v = - T'_4\Paren{u, \cdot, \cdot} $. We then get a contradiction only if for some $u \in \mathrm{span}\{\Sigmaa_i'\}$
\begin{align}\label{duality equation in u}
-\Iprod{u, \Sigmaa'_i } & \geq 1 \geq \frac{c_1}{\sqrt{\eta} \alpha} \Norm{T_4'\Paren{u, \cdot, \cdot}}_F   +  c_2 \sqrt{\eta}  \Norm{u}_F.
\end{align}
We claim that this is impossible.

In particular, squaring Equation \eqref{duality equation in u} would give
\begin{equation}
\label{eqn:contradict_this}
\begin{split}
\Iprod{u, \Sigmaa'_i}^2 & \geq  \Paren{ \frac{c_1}{ \sqrt{\eta} \alpha} \Norm{ T'_4\Paren{u, \cdot, \cdot} }_F  +  c_2 \sqrt{\eta} \Norm{u}_F  }^2\\
& \geq \frac{c}{ \alpha}  \Norm{ T'_4\Paren{u, \cdot, \cdot} }_F \cdot \Norm{u}_F \;,
\end{split}
\end{equation}
for some large enough constant $c>1$, where the last inequality follows from the AM-GM inequality.
However, using the dual characterization of the Frobenius norm, we have
\begin{equation}
\label{eqn:dual_frob}
\begin{split}
\Norm{ T'_4\Paren{u, \cdot, \cdot} }_F & \geq  \frac{\Iprod{ u , T'_4\Paren{u, \cdot, \cdot} } }{\Norm{u}_F}
\geq \frac{w_i}{\Norm{u}_F} \Iprod{ u , \Sigmaa'_i}^2 \;,
\end{split}
\end{equation}
where the last inequality follows from $T'_4$ containing a $w_i \Sigmaa_i\otimes \Sigmaa_i$ term, and the other terms contributing non-negatively.
Rearranging Equation~\eqref{eqn:dual_frob}, we have
\begin{equation*}
 \Iprod{ u , \Sigmaa'_i}^2  \leq  \frac{ 1}{ w_i } \Norm{ T'_4\Paren{u, \cdot, \cdot} }_F \Norm{u}_F  \leq \frac{1}{\alpha}\Norm{ T'_4\Paren{u, \cdot, \cdot} }_F \Norm{u}_F\;.
\end{equation*}
This contradicts Equation~\eqref{eqn:contradict_this} unless $T_4'\Paren{u, \cdot, \cdot} = 0$. This therefore suffices to prove the feasibility of the primal.
\end{proof}

We have thus shown that there is some matrix $P$ so that $T_4'\Paren{P, \cdot, \cdot} $ suffices for our purposes. We need to show that our appropriate random linear combination of $x^{(j)}\otimes y^{(j)}$ suffices. In fact, we will show that with reasonably high probability over our choice of $x^{(j)},y^{(j)}$ that there is some linear combination of the $x^{(j)}\otimes y^{(j)}$ (with coefficients that are not too large) so that their projection onto the space spanned by the $\Sigmaa_i'$ (which is all that matters when applying $T_4'$) equal to $P$.

For the sake of intuition, we note that if we removed the bound on the coefficients, we would need that the projections of the $x^{(j)}\otimes y^{(j)}$ spanned $\mathrm{span}\{\Sigmaa_i'\}$. Since there are at least $k$ of them, this will hold unless there is some $v\in \mathrm{span}\{\Sigmaa_i'\}$ so that $v$ is orthogonal to all of the $x^{(j)}\otimes y^{(j)}$. This shouldn't happen because each $x^{(j)}\otimes y^{(j)}$ is very unlikely to be orthogonal to $v$.

To deal with the constraint that the coefficients are not too large, we use linear programming duality to show that there will be a solution unless there is some $v$ that is \emph{nearly} orthogonal to all of the $x^{(j)}\otimes y^{(j)}$. Again, this is unlikely to happen for any individual term, and thus, by independence, highly unlikely to happen for all $j$ simultaneously. Combining this with a cover argument will give our proof.

\begin{lemma}[Existence of a Bi-Linear Form] \label{lem:existence_of_bilinear}
Given the preconditions in Proposition \ref{prop:tensor_decomposition_upto_lowrank},
with probability at least $99/100$
%\footnote{Daniel: I think we need to quantify this better. What if $\poly(k)=1$?} 
over the choice of $x^{(j)},y^{(j)}$,
there exist $b_j \in [-D, D]$ for $j \in [4k]$, where $D = \bigO{k^4/(\sqrt{\eta}\alpha)}$,
such that the projection of $\sum_{j=1}^{t} b_j x^{(j)}\otimes y^{(j)}$ onto the space spanned
by the $\Sigmaa_i'$ is $P$, where $P$ satisfies the conclusion of Proposition \ref{lem:existence_2_tensor}.
%We note that this will imply that $(T_4')\left(\sum_{i=1}^{2j} b_j x^j\otimes y^j \right) = (T_4')P$, which is sufficient.
\end{lemma}
\begin{proof}
To prove this lemma, we again use a linear programming based argument.
Consider the following (primal) linear program in the variables $b_j$, for $j \in [4k]$:
\begin{align}\label{eqn:primal_2}
\sum_{j \in [4k]} b_j \langle \Sigmaa'_i, x^{(j)} \otimes y^{(j)} \rangle
& = \langle \Sigmaa'_i,P \rangle   & \forall\hspace{0.1in} i\in [k]  \\
\label{eqn:primal_2_a}
-D  \leq b_j &\leq D   & \forall\hspace{0.1in} j \in [4k]
\end{align}
We note that a set of $b_j$ satisfying Equation \eqref{eqn:primal_2} will have the projection of $\sum_{j\in[4k]} b_j x^{(j)} \otimes y^{(j)}$ onto the span of the $\Sigmaa_i'$ be the same as the projection of $P$, and that if the $b_j$'s satisfy Equation \eqref{eqn:primal_2_a} then we will have $|b_j|\leq D$ for all $j$. Thus, it suffices to show that with high probability over our choice of $x^{(j)}$ and $y^{(j)}$ that the above system is feasible.

We will show this by linear programming duality (since this is now a finite system of equations, we can use standard results rather than Fact \ref{fact:duality_compact}). In particular, we have that Equations \eqref{eqn:primal_2} and \eqref{eqn:primal_2_a} are simultaneously satisfiable unless there are real numbers $c_i$ and non-negative real numbers $z_j,z_j'$ so that
$$
\sum_{i=1}^k c_i \sum_{j \in [4k]} b_j \langle \Sigmaa'_i, x^{(j)} \otimes y^{(j)} \rangle + \sum_{j\in [4k]} (z_j-z_j')b_j \leq \sum_{i=1}^k c_i  \langle \Sigmaa'_i,P \rangle+\sum_{j\in [4k]}(z_j+z_j')D
$$
yields a contradiction. Setting $v= \sum_{i=1}^k c_i \Sigmaa'_i$, the above simplifies to
\begin{align}\label{equation to contradict 2}
\sum_{j \in [4k]} b_j\left(\langle v, x^{(j)} \otimes y^{(j)} \rangle+z_j-z_j' \right) & \leq \langle v,P \rangle+\sum_{j\in [4k]}(z_j+z_j')D
\end{align}
We note that in order for Equation \eqref{equation to contradict 2} to be a contradiction, it must be the case that the coefficients of $b_j$ are all $0$. In particular, we must have
$$
z_j'-z_j = \langle v, x^{(j)} \otimes y^{(j)} \rangle
$$
for all $j$. In particular, this means that 
$$z_j+z_j' \geq \left| \langle v, x^{(j)} \otimes y^{(j)} \rangle \right|.$$
In such a case, the right hand side of Equation \eqref{equation to contradict 2} will be at least
$$
\langle v,P \rangle+\sum_{j\in [4k]}\left| \langle v, x^{(j)} \otimes y^{(j)} \rangle \right|D
$$
Therefore, Equation \eqref{equation to contradict 2} can only yield a contradiction if there exists a $v\in \mathrm{span}\{\Sigmaa'_i\}$ so that
\begin{equation}\label{v to contradict equation}
\langle v,P \rangle < - \sum_{j\in [4k]}\left| \langle v, x^{(j)} \otimes y^{(j)} \rangle \right|D.
\end{equation}

We want to show that with high probability over our choice of $x^{(j)}, y^{(j)}$ that there is no $v\in \mathrm{span}\{\Sigmaa'_i\}$ satisfying Equation \eqref{v to contradict equation}. In fact, we will show that for every such $v$ that 
$$
\sum_{j\in [4k]}\left| \langle v, x^{(j)} \otimes y^{(j)} \rangle \right| \geq \frac{c_1}{\sqrt{\eta}\alpha}\Norm{v}_F.
$$
We can scale $v$ so that $\Norm{v}_F = 1$, and it suffices to show that
\begin{equation}
\label{eqn:to_obtain_contradiction}
\sum_{j \in [4k]}\left|\Iprod{ \tilde{v} ,  x^{(j)} \otimes y^{(j)}   }\right|  \geq \Paren{\frac{c_1 }{\sqrt{\eta} \alpha D }}
\end{equation}
holds for all unit vectors $v$ in $\mathrm{span}\{\Sigmaa_i'\}$ with high probability.

Since we need to show that infinitely many equations all hold with high probability, we will use a cover argument. In particular, we can construct $\calC$, a $\tau$-cover for all unit vectors $v$ in the span of the $\Sigmaa_i'$, where we take $\tau = \Paren{\frac{c'_1 }{ k^2 \sqrt{\eta} \alpha D}}$. Since this is a cover of a unit sphere in a $k$-dimensional subspace, we can construct such a cover so that
$\abs{\calC} = \bigO{ 1/ \tau }^{k}$. Replacing $v$ with the closest point in $\calC$,
denoted by $v'$, it suffices to show that with high probability for all $v$ that
\begin{equation}\label{cover bound equation}
\sum_{j \in [4k]}\left|\Iprod{ {v} ,  x^{(j)} \otimes y^{(j)}   } \right| \geq \sum_{j \in [4k]}\left|\Iprod{ {v'} ,  x^{(j)} \otimes y^{(j)}   }\right| - \sum_{j \in [4k]}\left|\Iprod{ {v}  - {v}',  x^{(j)} \otimes y^{(j)}   } \right|  \geq \Paren{\frac{2c_1 }{\sqrt{\eta} \alpha D}}.
\end{equation}
We begin by bounding the terms
$$
\sum_{j \in [4k]}\left|\Iprod{ {v}  - {v}',  x^{(j)} \otimes y^{(j)}   } \right|.
$$
For this we notice by Cauchy-Schwartz that each term is at most $\Norm{v-v'}_F$ times the Frobenius norm of the projection of $x^{(j)} \otimes y^{(j)}$ onto the span of the $\Sigmaa_i'$. We note that for any $k$-dimensional subspace $W$ with orthonormal basis $w_1,\ldots,w_k$ we have that
\begin{align*}
\E\left[\Norm{\mathrm{Proj}_W(x^{(j)} \otimes y^{(j)} )}_F^2 \right] & = \sum_{i=1}^k \left|\Iprod{ w_i,  x^{(j)} \otimes y^{(j)}   } \right|^2\\
& = k.
\end{align*}
Therefore, with high probability over the choice of $x^{(j)}, y^{(j)}$ each of the projections of $x^{(j)} \otimes y^{(j)}$ onto the span of the $\Sigmaa_i'$ has Frobenius norm $\tilde O(\sqrt{k})$. Therefore, if this condition holds over our choice of $x^{(j)}$ and $y^{(j)}$, we can show Equation \eqref{cover bound equation} if we can show that
\begin{align}\label{simplified cover equation}
\sum_{j \in [4k]}\left|\Iprod{ {v'} ,  x^{(j)} \otimes y^{(j)}   } \right| & \geq \Paren{\frac{c_1 }{\sqrt{\eta} \alpha D}} \geq \Paren{\frac{2c_1 }{\sqrt{\eta} \alpha D}} - \tau\tilde O(k^{3/2})
\end{align}
for all $v' \in\mathcal{C}$.

Each term in $\sum_{j \in [4k]}\Iprod{ {v}' ,  x^{(j)} \otimes y^{(j)}}$ is a random bi-linear form
given by $z_j = \sum_{\ell, p\in [d]} {v}'_{\ell,p} x^{(j)}_{\ell}y^{(j)}_{p}$.
Then, we have that $\expecf{}{z_j} = 0$ and
\begin{equation*}
\begin{split}
\expecf{}{z^2_j} = \expecf{}{ \Paren{\sum_{\ell, p\in [d]} {v}'_{\ell,p} x^{(j)}_{\ell}y^{(j)}_{p}}^2} & = \sum_{\ell, \ell', p, p'} \expecf{}{{v}'_{\ell,p} {v}'_{\ell',p'} x^{(j)}_{\ell} x^{(j)}_{\ell'} y^{(j)}_{p} y^{(j)}_{p'} } \\
& = \sum_{\ell, p\in [d] } \Paren{{v}'_{\ell,p}}^2 \cdot \expecf{}{ \Paren{x^{(j)}_{\ell}}^2} \cdot  \expecf{}{ \Paren{y^{(j)}_{p}}^2}\\
& = 1 \;,
\end{split}
\end{equation*}
where the last equality follows from ${\tilde{v}'}_F = 1$.

Using Lemma \ref{lem:anti-concentration_of_bilinear_forms} with $\zeta = \frac{2c_1 }{\sqrt{\eta} \alpha D}$,
\begin{equation}
\Pr\left[ \abs{z_j} \leq \frac{c_1 }{\sqrt{\eta} \alpha D}\right] \leq c_5 \Paren{\frac{2c_1 }{\sqrt{\eta} \alpha D}}^{1/2} \;.
\end{equation}
However, we note that Equation \eqref{simplified cover equation} will hold unless $\abs{z_j} \leq \frac{c_1 }{\sqrt{\eta} \alpha D}$ for all $j\in [4k]$.
Since the $z_j$'s are independent, we  conclude that
\begin{equation}
\Pr\left[ \sum_{j \in [4k]}\left|\Iprod{ {v}' ,  x^{(j)} \otimes y^{(j)}   }\right|
\leq \frac{c_1 }{\sqrt{\eta} \alpha D}\right] \leq O\Paren{\frac{c_1 }{\sqrt{\eta} \alpha D}}^{2k} \;.
\end{equation}
Since the above argument holds for any $v' \in \calC$, we can union bound over all elements in the cover $\calC$,
and the probability that there exists a $\tilde{v}'$ in the cover that does not satisfy Equation \eqref{simplified cover equation}
is at most $O\Paren{  k^2 \sqrt{\eta} \alpha D }^{k} \cdot O\Paren{\frac{c_1 }{\sqrt{\eta} \alpha D}}^{2k}$.
Setting $D$ to be a sufficiently large multiple of $(k^4/ (\sqrt{\eta} \alpha))$ suffices to
conclude that with probability at least $1-1/{\poly(k)}$, the primal is feasible.
\end{proof}

% We note that the equations defining $b_j$ define a linear program that will have a solution unless the dual does. The dual linear program in this case asks for a $2$-tensor $v$ so that
% $$
% \sum_{i=1}^{2j} |v\cdot (x^j\otimes y^j)| \leq  (1/D)(v\cdot P) \leq (1/2kM)^3 |v|_2,
% $$
% for $M$ a sufficiently large constant. We note that we can assume here that $v$ is a unit norm element of the span of the $\Sigmaa_i'$. Let $\mathcal{C}$ be a $(1/2kM)^3$-cover of this set of size $O(kM)^{3k}$. We note that if the above has a solution, replacing $v$ by $v'$, it's closest representative in $\mathcal{C}$, we have that
% $$
% \sum_{i=1}^{2j} |v'\cdot (x^j\otimes y^j)| \leq (1/2kM)^2.
% $$
% We claim that for each $v'\in\mathcal{C}$ that this happens with probability substantially smaller than $O(kM)^{-3k}$. In particular, for it to occur, it must be the case that $|v'\cdot (x^j\otimes y^j)| \leq (1/2kM)^2$ for all $j$. However, for a given $j$, $v'\cdot (x^j\otimes y^j)$ is a bilinear polynomial in $x^j$ and $y^j$ with mean square equal to $1$. From this it is easy to show that this happens with probability $O(1/kM)^2$. Thus, the probability that it holds for all $j$ is $O(1/kM)^{4k}$.

\begin{proof}[Proof of Proposition~\ref{prop:tensor_decomposition_upto_lowrank}]
We begin by bounding the Frobenius norm of $\hat{T}_4$.
Let $T_4 = \E[h_4(\calX)]$. It  then follows from Lemma \ref{HermiteExpectationLem} that
\begin{equation}
\label{eqn:expected_4th_hermite_tensor}
T_4 = \sym\left(\sum_{i=1}^k w_i \left(3\Sigmaa_i\otimes \Sigmaa_i +6\Sigmaa_i\otimes \mu_i^{\otimes 2}+\mu_i^{\otimes 4} \right) \right) \;.
\end{equation}
Further, $\Norm{\Sigmaa_i\otimes \Sigmaa_i}_F \leq \Norm{\Sigmaa_i}^2_F\le\Delta^2$,
$\Norm{\Sigmaa_i\otimes \mu_i^{\otimes 2}}_F \leq \Norm{\Sigmaa_i}_F\Norm{\mu_i}^2_2\le4\Delta/\alpha$,
and $\Norm{\mu_i^{\otimes 4}}_F \leq \Norm{\mu_i}^4_2\le16/\alpha^2$.
Since $T_4$ is an average of terms of the form $\Sigmaa_i^{\otimes 2}, \Sigmaa_i\otimes \mu_i^{\otimes 2}$
and $\mu_i^{\otimes 4}$, and each such term is upper bounded, we can conclude that
$\Norm{T_4}_F = \bigO{\Delta^2 + 1/\alpha^2 }$, and by the triangle inequality that
$\Norm{ \hat{T}_4 }_F \leq \bigO{\Delta^2 + 1/\alpha^2 + \eta}$.
%Note that $|T|_F  \leq M = O(\Delta^2+1/\alpha^2+1)$.
%This is because $\E[h_4(X)]$ is an average of terms of the form $\Sigmaa_i^{\otimes 2}, \Sigmaa_i\otimes \mu_i^{\otimes 2}$ and $\mu_i^{\otimes 4}$. Since each of these have an appropriate bound, this bounds $|\E[h_4(X)]|_2$ and thus $T$.
% Let
% $$
% T_4 := \E[h_4(X)] = \sym\left(\sum_{i=1}^k w_i \left(3\Sigmaa_i\otimes \Sigmaa_i +6\Sigmaa_i\otimes \mu_i^{\otimes 2}+\mu_i^{\otimes 4} \right) \right).
% $$
Let $\Sigmaa_i'=\Sigmaa_i+\mu_i^{\otimes 2}$ and let $T_4':= \sum_{i=1}^k w_i \left(\Sigmaa_i'\otimes \Sigmaa_i' \right)$.
We can then rewrite Equation~\eqref{eqn:expected_4th_hermite_tensor} as follows:
\begin{equation}
\label{eqn:rewrite_4th_hermite_tensor}
T_4 = \sym\left(\sum_{i=1}^k w_i \left(3\Sigmaa_i'\otimes \Sigmaa_i' -2\mu_i^{\otimes 4} \right) \right) \;.
\end{equation}
% $$
% T_4 = \sym\left(\sum_{i=1}^k w_i \left(3\Sigmaa_i'\otimes \Sigmaa_i' -2\mu_i^{\otimes 4} \right) \right).
% % $$
% Finally,
% $$
% T_4':= \sum_{i=1}^k w_i \left(\Sigmaa_i'\otimes \Sigmaa_i' \right).
% $$
For $j \in [4k]$, let $x^{(j)}, y^{(j)} \sim \calN(0, I)$. Collapsing two modes of $\hat{T}_4$, it follows from Equation \eqref{eqn:rewrite_4th_hermite_tensor} that for any fixed $j$,
\begin{equation}
\label{eqn:symmeterizing_hermite_tensor}
\begin{split}
\hat{T}_4\Paren{ \cdot, \cdot , x^{(j)},  y^{(j)}} &= \Paren{\hat{T}_4 - T_4}\hspace{-0.06in}\Paren{ \cdot, \cdot , x^{(j)},  y^{(j)}} + T_4\Paren{ \cdot, \cdot , x^{(j)},  y^{(j)}}\\
&  = \Paren{\hat{T}_4 - T_4}\hspace{-0.06in}\Paren{ \cdot, \cdot , x^{(j)},  y^{(j)}}  + \sym\left(\sum_{i=1}^k w_i \left(3\Sigmaa_i'\otimes \Sigmaa_i' -2\mu_i^{\otimes 4} \right) \right)\Paren{\cdot, \cdot, x^{(j)}, y^{(j)}} \\
& =  \Paren{\hat{T}_4 - T_4 + T'_4}\hspace{-0.06in}\Paren{ \cdot, \cdot , x^{(j)},  y^{(j)}} + \sum_{i\in [k]} w_i \Paren{\Sigmaa_i' x^{(j)}} \otimes \Paren{\Sigmaa_i' y^{(j)}} \\
& \hspace{0.2in} + \sum_{i\in [k]} w_i\Paren{\Sigmaa_i' y^{(j)}} \otimes \Paren{\Sigmaa_i' x^{(j)}}  + \sum_{i \in [k]} w_i \Paren{ -2 \mu_i^{\otimes 2} \Iprod{\mu_i, x^{(j)}} \Iprod{\mu_i, y^{(j)}} } \;,
\end{split}
\end{equation}
where we use that $\sym(\cdot)$ is a linear operator satisfying $\sym\Paren{\mu_i^{\otimes 4}} = \mu_i^{\otimes 4}$, and
\begin{equation*}
\sym\Paren{ \Sigmaa'_i \otimes \Sigmaa'_i } = \frac{1}{3} \Sigmaa'_i  \otimes \Sigmaa'_i + \frac{1}{3} \Sigmaa'_i  \oplus \Sigmaa'_i + \frac{1}{3} \Sigmaa'_i  \ominus \Sigmaa'_i
\end{equation*}
where for indices $(i_1, i_2, i_3, i_4)$, $\Paren{\Sigmaa'_i  \oplus \Sigmaa'_i}\Paren{i_1, i_2, i_3, i_4} = \Paren{\Sigmaa'_i  \otimes \Sigmaa'_i}\Paren{i_1, i_3, i_2, i_4}$ and $\Paren{\Sigmaa'_i  \ominus \Sigmaa'_i}\Paren{i_1, i_2, i_3, i_4} = \Paren{\Sigmaa'_i  \otimes \Sigmaa'_i}\Paren{i_1, i_4, i_2, i_3}$.

Next, it follows from Lemma \ref{lem:existence_2_tensor} that there exists a matrix $\tilde{P}_i$ such that
$\Norm{\tilde{P}_i}_F = \bigO{1/(\sqrt{\eta} \alpha)}$ and
$\Norm{T'_4\Paren{\cdot, \cdot, \tilde{P}_i} - \Sigmaa'_i }_F = \bigO{\sqrt{\eta/\alpha}}$.
%Then, we use Lemma \ref{lem:existence_of_bilinear} to conclude that $\tilde{P}_i$ lies in the vector space
%given by $\cup_{i \in [k]} \textrm{span}\Paren{ \Sigmaa'_i }$.
Furthermore, with probability at least $0.99$, there exists a sequence of $b_j \in [-D,D]$,
for $j \in [4k]$, such that $T'_4\Paren{\cdot,\cdot, \sum_{j \in[4k] } b_j x^{(j)} \otimes y^{(j)} } = T'_4\Paren{\cdot, \cdot, \tilde{P}_i}$.

Consider a cover, $\calC$, of the interval $[-D, D]$ with points spaced at intervals of length
$\tau = \bigO{ \frac{\sqrt{\eta}}{\alpha k\Paren{\Delta^4 + 1/\alpha^4}} }$. Since we uniformly sample $a_j$'s, with probability at least $(\tau/D)^{\calO(k)}$, for all $j\in [4k]$, $\abs{b_j-a_j} \leq \tau$, and we condition on this event. Thus,
\begin{equation}
\label{eqn:closeness_T'_and_S'}
\begin{split}
\Norm{ T'_4\Paren{\cdot,\cdot, \sum_{j \in[4k] } a_j x^{(j)} y^{(j)} }  - \Sigmaa'_i}_F & \leq  \Norm{ T'_4\Paren{\cdot,\cdot, \sum_{j \in[4k] } b_j x^{(j)}\otimes y^{(j)} }  - \Sigmaa'_i}_F + \Norm{ T'_4\Paren{\cdot,\cdot, \sum_{j \in[4k] } (b_j-a_j)x^{(j)}\otimes y^{(j)} } }_F\\
& \leq \bigO{\sqrt{\eta/\alpha} } +\bigO{\tau\Delta^2}
\le \bigO{\sqrt{\eta/\alpha} } \;.
\end{split}
\end{equation}
%Further, the probability of picking each $a_j$ correctly by uniformly sampling is $\Paren{\tau/D}^t$.
Taking the linear combinations with coefficients $a_j$ in Equation \eqref{eqn:symmeterizing_hermite_tensor}, we have
\begin{equation} \label{eqn:final_guarantee_split_frob_low_rank}
\begin{split}
\hat{T}_4\Paren{ \cdot, \cdot , \sum_{j\in [4k]} a_j x^{(j)}\otimes  y^{(j)}} - \Sigmaa_i
& =  \Paren{\hat{T}_4 - T_4 + T'_4 }\hspace{-0.04in}\Paren{ \cdot, \cdot , \sum_{j \in [4k]}a_j x^{(j)} \otimes y^{(j)}} - \Sigmaa'_i - \mu_i \otimes \mu_i  \\
& \hspace{0.2in} + \sum_{j \in [4k]} a_j \sum_{i\in [k]} w_i \Paren{\Sigmaa_i' x^{(j)}} \otimes \Paren{\Sigmaa_i' y^{(j)}} + \sum_{j \in [4k]} a_j \sum_{i\in [k]} w_i\Paren{\Sigmaa_i' y^{(j)}} \otimes \Paren{\Sigmaa_i' x^{(j)}}  \\
& \hspace{0.2in}  +\sum_{j \in [4k]} a_j \sum_{i \in [k]} w_i \Paren{ -2 \mu_i^{\otimes 2} \Iprod{\mu_i, x^{(j)}} \Iprod{\mu_i, y^{(j)}} } \;.
\end{split}
\end{equation}
Setting $P_i = \Paren{\hat{T}_4-T_4 + T'_4}\Paren{\cdot,\cdot, \sum_{j \in[4k] } a_j x^{(j)} y^{(j)}} - \Sigmaa'_i$, it follows from Lemma \ref{lem:concentration_of_low_degree_gaussians} that with probability at least $0.99$, $\Paren{\hat{T}_4 -T_4}\Paren{\cdot,\cdot, \sum_{j \in[4k] } a_j x^{(j)} y^{(j)}}$ has Frobenius norm $\bigO{k D \eta}$ and it follows from Equation \eqref{eqn:closeness_T'_and_S'} that with probability at least $0.99$, $T'_4\Paren{\cdot,\cdot, \sum_{j \in[4k] } a_j x^{(j)} y^{(j)}} - \Sigmaa'_i$ has Frobenius norm $\bigO{\sqrt{\eta/\alpha}}$. Setting the remaining terms to $Q_i$, with probability at least $0.99$ we can bound their Frobenius norm as follows:
\begin{equation}
\begin{split}
\Norm{Q_i}_F & \leq \Norm{\mu_i \otimes \mu_i}_F + \Norm{ \Paren{\sum_{i \in [k]} w_i \Sigmaa'_i \oplus \Sigmaa'_i + w_i\Sigmaa'_i\ominus \Sigmaa'_i -2 w_i\mu_i^{\otimes 4} }\Paren{\cdot, \cdot, \sum_{j \in [4k]} a_j x^{(j)} y^{(j)} }  }_F \\
& \leq \frac{4}{\alpha} + \Paren{2 \max_{i\in k} \Norm{\Sigmaa'_i}^2_F + \frac{32}{\alpha^2} + k\tau }\cdot \Norm{\tilde{P}}_F \\
& \leq \frac{4}{\alpha}+ \bigO{ \frac{1}{\sqrt{\eta}\alpha} \Paren{\Delta +\frac{1}{\alpha}}^2}\\
& \le \bigO{\frac{1+\Delta^2}{\sqrt{\eta}\alpha^3}} \;,
\end{split}
\end{equation}
where the first inequality follows from the triangle inequality,
the second follows from our assumptions that $\Norm{\mu_i}_2 \leq 2/\sqrt{\alpha}$,
$\sum_{j \in [4k]} b_j x^{(j)} y^{(j)} = \tilde{P}_i$ in the span of the $\Sigmaa'_i$, and $\abs{a_j - b_j} \leq \tau$ for all $j\in[4k]$,
and the third inequality follows from the definition of $\Sigmaa'_i$,
the bound on $\Norm{\tilde{P}}_F$ and the bound on $\Norm{\Sigmaa_i -I}_F$.
\end{proof}

%!TEX root = main.tex

\subsection{Finding a Low-dimensional Subspace for Exhaustive Search} \label{ssec:td_recover_low_rank}

In this subsection, we will prove Proposition~\ref{prop:construction-of-low-dim-subspace-for-enumeration}. 
%We will use the following standard translation between Hermite and raw moments in our proofs.

%\Anote{the wiki identity relates $\expecf{}{X^i}$ and $\expecf{}{h_m(X)}$, how to show closeness of $\expecf{}{h_m(X)}$ and $\expecf{Z \sim \cN(0,1)}{Z^i}$? }
%\footnote{We use the Inverse Hermite Explicit Expression from \url{https://en.wikipedia.org/wiki/Hermite_polynomials \# Inverse_explicit_expression}.}

We start by extending Theorem 4 of~\cite{MoitraValiant:10}, 
which shows that large parameter distance between pairs of univariate Gaussian mixtures 
implies large distance between their low-degree moments. 
In the following, we use $M_j(F) = \expecf{}{F^j}$ to denote the 
$j$-th moment of a distribution $F$. We show:

\begin{lemma}\label{lem:mv}
There exists a constant $C>0$ such that the following holds: Fix any $D>0$ and $0\le \beta\le1/(2(2k-1)!D^{2k-3})$. Suppose that $F=\sum_{i=1}^k w_i\cN(\mu_i,\sigma_i^2)$ is a univariate $k$-mixture of Gaussians with $w_i \geq \beta$, and $|\mu_i|,\sigma_i\le D$, 
for all $i \in [k]$. If $|\mu_i|+|\sigma_i^2-1|\ge\beta$ for some $i\le k$, then 
$$
\max_{j \in [2k] } \abs{ M_j(F)-M_j\Paren{\calN(0,1)} } \ge \beta^{C^{k+1}(k+1)!-1} \;.
$$
\end{lemma}

\noindent We give the proof of Lemma~\ref{lem:mv} in Section~\ref{ssec:mv}.

\begin{lemma}[Bounding $\mu_i$'s and $S_i$'s in non-influential directions for $\expecf{}{h_m(\calM)}$] 
\label{lem:bound-on-projection-onto-non-influential-directions}
Let $\calM = \sum_{i = 1}^k w_i \cN(\mu_i,\Sigma_i)$ be a $k$-mixture of Gaussians on $\R^d$ 
satisfying $w_i \geq \alpha$, $\Norm{\mu_i}_2 \leq 2/\sqrt{\alpha}$, and $\Norm{\Sigma_i-I}_F \leq \Delta$ for every $i \in [k]$. 
For some  $B \in \R$, let  $u \in \R^d$ be a unit vector such that $\abs{\expecf{}{ h_m\Paren{ \Iprod{\calM, u} } }} \leq B$ 
for all $ m \in [2k]$. Then, for $\delta = 2^{O(k)} B^{1/(C^{k+1}(k+1)!)}$ and $S_i=\Sigma_i-I$, we have that:
\begin{enumerate}
\item for all $i \leq k$, $\abs{\iprod{u,\mu_i}}, \abs{u^{\top}(I-\Sigma_i)u} \leq \delta$,
\item $\Norm{S_i u}_2^2 \leq 20\delta \Delta/\alpha^2 + B/\alpha$,
\end{enumerate}
where $C>0$ is a fixed universal constant. 
\end{lemma}
%\Anote{How are we ensuring that the Frobenius norm is < B? We are getting recovery only in operator norm. B is presumably less than 1 }

\begin{proof}
The 1-D random variable $\iprod{u,\calM}$ is a mixture of Gaussians described by 
$\sum_{i = 1}^k w_i \cN(\iprod{\mu_i,u}, u^{\top} \Sigma_i u)$. Towards a contradiction, 
assume that there is an $i \in [k]$ such that $\abs{\iprod{u,\mu_i}} + \abs{u^{\top}(I-\Sigma_i)u} \geq \delta$. 
Then, applying Lemma~\ref{lem:mv}, yields that there is a $j\in [2k]$ such that 
$|M_j(\iprod{u,\calM}) - M_j(\cN(0,1))| \geq \delta^{C^{k+1} (k+1)! -1}$. 
Applying Fact~\ref{fact:hermite-vs-raw-moments} implies that there exists an $m \in [2k]$ such that 
\[
\abs{\expecf{}{ h_m(\iprod{u,\calM}) } } > 2^{-O(k)} \delta^{C^{k+1} (k+1)!-1} \gg B \;,
\] 
yielding a contradiction. 

We can now prove the second part. 
Recall that for $S_i = \Sigma_i -I$ for every $i$, we have that
\[
\expecf{}{h_4(\calM)} = \sum_{i = 1}^k w_i \textrm{Sym} \Paren{ 3 \Paren{ S_i \otimes S_i}  + 6 \Paren{S_i \otimes \mu_i^{\otimes 2}} + \mu_i^{\otimes 4} }\mper
\]
We consider the $d \times d$ matrix obtained by the natural flattening of the $d \times d$ tensor 
$u^{\otimes 2} \cdot \expecf{}{h_4(\calM)}$. 
Then, we can write:
\begin{multline}
u^{\otimes 2} \cdot \expecf{}{h_4(\calM)} = \sum_{i = 1}^k w_i \Bigl( (u^{\top}S_iu) S_i + 2(S_iu)(S_iu)^{\top} +  \iprod{u,\mu_i}^2 S_i \\ + 2\iprod{u,\mu_i} \mu_i (S_i u)^{\top} +  2\iprod{u,\mu_i}(S_i u)  \mu_i^{\top} + (u^{\top} S_i u) \mu_i \mu_i^{\top} + \iprod{u,\mu_i}^2\mu_i \mu_i^{\top} \Bigr) \;.
\end{multline}
Now, from the first part, we know that for all $i \in [k]$, $|u^{\top}S_i u| \leq \delta$ and the hypothesis of the lemma gives us that $\Norm{S_i}_F= \Norm{\Sigma_i -I}_F \leq \Delta$. Thus, for each $i$, the first term in the summation above has Frobenius norm at most $\Delta \delta$. Using that $\iprod{u,\mu_i}_2^2 \leq \delta^2$ from the first part of the lemma, 
yields that, for each $i$, the Frobenius norm of the third term is at most $\Delta \delta^2$. 

Next, using in addition that $\Norm{\mu_i}_2 \leq 2/\sqrt{\alpha}$ yields that, for each $i$, 
the Frobenius norm of the 4th and 5th terms are at most $2\delta\Delta/\sqrt{\alpha}$ 
and the Frobenius norm of the 6th and 7th terms are at most $\delta/\alpha$. 
Thus, for each $i$ and all but the 2nd term in the summation above, 
we have an upper bound on the Frobenius norm of $4\delta\Delta/\alpha$. 

Now, since $\abs{\expecf{}{h_4\Paren{ \Iprod{\calM, u} }} } \leq B$, and $u$ is a unit vector, 
we have that $\Norm{u^{\otimes 2} \expecf{}{ h_4(\calM)} }_F \leq B$. 
Thus, combining the aforementioned argument with the triangle inequality, we have 
for each $i$, 
\begin{equation*}
\begin{split}
\Norm{S_i u}_2^2 = \Norm{ S_i u \Paren{S_i u}^\top }_F & \leq \frac{1}{\alpha} \Norm{ u^{\otimes 2} \cdot \expecf{}{h_4(\calM)} }_F 
+ \sum_{i \in [k]} w_i \Bigl( \Paren{u^{\top}S_iu + \Iprod{u, \mu_i}^2 }   \Norm{ S_i  }_F \Bigr)  \\
& \hspace{0.2in} + \sum_{i \in [k]} 4 w_i \Bigl( \Paren{ \Iprod{u, \mu_i}  }   \Norm{ \mu_i }_2 \Norm{S_i u}_2 \Bigr) + \sum_{i \in [k] } 4 w_i \Bigl( \Paren{ \Iprod{u, \mu_i}^2 + u^\top S_i u  }   \Norm{ \mu_i }^2_2 \Bigr) \\
& \leq B/\alpha  +  15 \delta  \Delta /\alpha \;,
\end{split}
\end{equation*}
and the claim follows.  
\end{proof}

%\Anote{the claim follows only if B is bounded, it did not follow as written before.} 

%$w_i \Norm{S_i u}_2^2 \leq B + 15\delta\Delta/\alpha$ or $\Norm{S_i u}_2^2 \leq B/\alpha + 15\delta\Delta/\alpha^2 \leq 20 \delta \Delta/\alpha^2$.

\begin{lemma}[Subspace covering all the means and large singular vectors of $S_i = \Sigma_i-I$] 
\label{lem:subspace-from-large-estimated-moment-tensor}
Let $\calM = \sum_{i = 1}^k w_i \cN(\mu_i,\Sigma_i)$ be a $k$-mixture of Gaussians on $\R^d$
satisfying $w_i \geq \alpha$, $\Norm{\mu_i}_2 \leq 2/\sqrt{\alpha}$, 
and $\Norm{\Sigma_i-I}_F \leq \Delta$ for all $i \in [k]$. Given $0<\eta<1$, 
let $\hat T_m$ satisfy $\Norm{\hat T_m-\expecf{}{h_m(\calM)} }_F \leq \eta$ 
for every $ m \in [4k]$ and let $\lambda \geq 2 \eta$. 
% Let $m \leq 4k$ and $\lambda > k^{O(k)} O(1+ 1/\alpha + \Delta)^{4k}\sqrt{\epsilon}$ be such that it exceeds the upper-bound on the error $\Norm{\E h_m(X)-\hat T_m}_F$ shown in Proposition~\ref{prop:robust-estimation-hermite-tensors}. 
Let $V$ be the span of all the left singular vectors of the $d \times d^{m-1}$ matrix 
obtained by the natural flattening of $\hat T_m$ with singular values at least $\lambda$. 
Then, for $\delta = 2 \lambda^{1/(2C^{k+1} (k+1)!)}$, we have that:
\begin{enumerate}
\item $\dim{V} \leq \Paren{ 4k\eta^2 + k^{O(k)} } \bigO{1+ 1/\alpha + \Delta}^{4k} /\lambda^2$, 
\item Let 
\[
V_{\inf} = \{\mu_i\}_{i \in [k] } \cup \Set{ v \mid \exists i \in [k],  \text{ s.t. $\Norm{v}_2 =1$ and $v$ is an eigenvector of $S_i$ and } \Norm{S_i v}_2 \geq \sqrt{\delta} }_{i \leq k}\mper
\] 
Then, for every unit vector $v \in V_{\inf}$, $\Norm{v-\Pi_V v}^2_2 \leq 20\delta^{1/4} \Delta/\alpha^2$, 
where $\Pi_V v$ is the projection of $v$ onto $V$. 
\end{enumerate}
\end{lemma}

\begin{proof}
From Fact \ref{HermiteExpectationLem}, we have that
$\expecf{}{ h_m(\calM)}  = \sum_{i \in [k]} w_i \expecf{}{ h_m(\new{G_i})}$,
\new{where $G_i = \calN(\mu_i, \Sigma_i)$},
and since $\Norm{\mu_i}_2 \leq 2/\sqrt{\alpha}$ and $\Norm{\Sigma_i - I}_F \leq \Delta $, 
it follows that $\Norm{\expecf{}{ h_m(\calM)} }^2_F \leq  \calO\Paren{m( 1 + 1/\alpha + \Delta)}^{4m}$.
%Applying Lemma~\ref{Hermite Covariance Lemma} along with Jensen's inequality yields that $\Norm{\expecf{X\sim\calM}{ h_m(X)} }^2_F \leq m^{O(m)} O(1 + 1/\alpha + \Delta)^{4m}$. \Anote{Applying lemma 2.7 does not work, it only obtains bounds in operator norm...}
From Proposition~\ref{prop:robust-estimation-hermite-tensors}, we know that 
\[
 \Norm{ \hat T_m}^2_F \leq 2\Norm{ \hat T_m - \expecf{}{ h_m(\calM)} }^2_F 
 + 2\Norm{ \expecf{}{ h_m(\calM)} }^2_F \leq  \eta^2 + \calO\Paren{m(1 + 1/\alpha + \Delta)}^{4m} \;.
 \] 
Thus, the number of singular vectors of the $d \times d^{m-1}$ flattening of $\hat T_m$ 
with a singular value $ \geq \lambda$ is at most 
$(\eta^2 +   \bigO{m(1 + 1/\alpha + \Delta)}^{4m})/\lambda^2$. 
Summing up this bound for all $m \in [4k]$, yields the claimed upper bound on $\dim{V}$. 

For the second part, we will first bound $\iprod{u,v}$ for any unit vector $u$ orthogonal to the subspace $V$. 
Towards this, observe that since $u$ is orthogonal to $V$ and $\norm{u}_2 = 1$, we have
$$
\Norm{u \cdot \expecf{}{ h_m(\calM)}}_F 
\leq \norm{u \hat T_m}_F+\norm{\hat T_m-\E h_m(\calM)}_F\\
\leq \lambda + \eta \leq 2 \lambda \;,
$$
where $u \cdot \expecf{}{ h_m(\calM)}$ is a matrix-vector product of $u$ with 
a $d \times d^{m-1}$ flattening of $\expecf{}{ h_m(\calM)}$. 
For $\delta = 2 \lambda^{1/(C^{k+1} (k+1)!)}$, 
applying Lemma~\ref{lem:bound-on-projection-onto-non-influential-directions} yields that 
\begin{equation}
\label{eqn:upperbound_along_u}
\iprod{\mu_i,u}^2 +  \Norm{S_i u}_2^2 \leq \delta^2+20 \delta \Delta/\alpha\le20 \delta \Delta/\alpha^2 \;.
\end{equation} 
Now, if $v$ is one of the $\mu_i$'s, then we immediately get from Equation~\ref{eqn:upperbound_along_u} 
that $\iprod{v,u}^2 \leq 20 \delta \Delta/\alpha^2$. Similarly, note that if $v$ is a unit length eigenvector 
of $S_i$ satisfying $\Norm{S_iv}_2^2 \geq \sqrt\delta$, then, 
\begin{equation*}
\Iprod{u, v}^2 = \frac{1}{\Norm{S_iv}^2_2} \Iprod{u, S_iv}^2 = 
\frac{1}{\Norm{S_iv}^2_2} \Iprod{S_iu, v}^2 \leq \frac{\Norm{S_iu}^2_2}{\Norm{S_iv}^2_2} \;.
\end{equation*}
%by the above bound, $\iprod{u,v}^2 \Norm{S_i v}_2^2 \leq \Norm{S_i v}_2^2\le 20 \delta \Delta/\alpha^2$ and thus, $\iprod{u,v}^2 \leq 20 \sqrt{\delta} \Delta/\alpha^2$. 
In both cases, setting $u = (v-\Pi_V v)/\Norm{v-\Pi_V v}_2$ completes the proof. 
\end{proof}

We can now complete the proof of Proposition~\ref{prop:construction-of-low-dim-subspace-for-enumeration}:

\begin{proof}[Proof of Proposition~\ref{prop:construction-of-low-dim-subspace-for-enumeration}]
We know that $\hat S_i - P_i -S_i$ is a symmetric, rank-$k'$ matrix such that $k'=\bigO{k^2}$, 
described by the eigenvalue decomposition $\sum_{i = 1}^{k'} \tau_i v_i v_i^{\top}$, 
where $v_i$'s are the eigenvectors and $\tau_i$'s are the corresponding eigenvalues. 
Since $\Norm{S_i}_F \leq \Delta$ and 
$$\Norm{\hat S_i}_F 
\leq \Norm{P_i}_F+ \Norm{Q_i}_F + \Norm{S_i}_F 
\leq  \bigO{ \sqrt{\eta/\alpha} } +  \bigO{ \frac{1+\Delta^2}{\sqrt{\eta} \alpha^3}  } + \Delta
= \bigO{ \frac{1+\Delta^2}{\sqrt{\eta} \alpha^3}} \;,
$$ 
we have that the number of singular values of $\hat S_i$ that exceed $\delta^{1/4}$ 
is at most $\calO \Paren{\frac{1+\Delta^2}{\sqrt{\eta}\alpha^3\sqrt{\delta}}}$. 
Recall that from Lemma~\ref{lem:subspace-from-large-estimated-moment-tensor} 
it follows that the dimension of the subspace $V$ is at most 
$k^{\mathcal{O}(k)} \mathcal{O}(1+ 1/\alpha + \Delta)^{4k}/\lambda^2$. 
Thus, the dimension of $V'$ is at most
\begin{equation*}
k^{\mathcal{O}(k)} \bigO{ \frac{\Paren{1+ 1/\alpha + \Delta}^{4k}}{\lambda^2} }  +  \bigO{\frac{1+\Delta^2}{\sqrt{\eta}\alpha^3\sqrt{\delta}}} =  \bigO{ \frac{k^{\mathcal{O}(k)} \Paren{1+ \frac{1}{\alpha} + \Delta}^{4k+5} }{\eta^2}} \;.
\end{equation*}
Since $V'$ contains $V$ constructed in Lemma~\ref{lem:subspace-from-large-estimated-moment-tensor}, 
we immediately obtain that for every $\mu_i$, 
$\Norm{\mu_i-\Pi_{V'} \mu_i}^2_2 \leq \frac{20}{\alpha^2} \sqrt\delta \Delta$.

Next, let $u$ be a unit vector orthogonal to $V'$. 
Then, since $V'$ contains the $V$ described in Lemma~\ref{lem:subspace-from-large-estimated-moment-tensor}, 
we know that $\Norm{S_i u}^2_2 \leq \frac{20}{\alpha^2} \sqrt{\delta} \Delta$. 
Similarly, since $V'$ contains all eigenvectors of $\hat S_i$ with singular values exceeding $\delta^{1/4}$, 
we know that $\Norm{\hat S_i u}_2^2 \leq \delta^{1/2}$. 
Thus, we can conclude that $\Norm{ (\hat S_i - S_i)u}_2^2 \leq \frac{100}{\alpha^2} \sqrt{\delta} \Delta$. 
Let $Q_i=\sum_{j=1}^{k'}\tau_j v_jv_j^\top$ with orthonormal $v_j\in\R^d$. 
We know such $\tau_j$'s and $v_j$'s exist because of the upper bound on $\mathrm{rank}(Q_i)$. 
Therefore, for any $j$, $|v_j^{\top} (\hat S_i -S_i)u| \leq \frac{10}{\alpha} \delta^{1/4} \Delta^{1/2}$. 
On the other hand, for any $j$, we have that
\[
v_j^{\top} (\hat S_i- S_i) u \geq \iprod{v_j,u} \tau_j - \Norm{P_i}_F = \iprod{v_j,u} \tau_j - \mathcal{O}(\sqrt{\eta}) \;.
\]
Combining the two bounds above, yields that whenever $\tau_j \geq \delta^{1/4}$,
\begin{equation*}
|\iprod{v_j,u}| \leq \mathcal{O}(\sqrt{\eta}/ \tau_j) + \frac{10}{\alpha \tau_j} \delta^{1/4} \Delta^{1/2} \leq \frac{10}{\alpha} \delta^{1/2} \Delta^{1/2} \;.
\end{equation*}
Thus, the matrix $\hat Q_i = \sum_{j = 1}^{k'} \tau_j \Pi_{V'} v_j (\Pi_{V'}v_j)^{\top}$ 
has its range space in $V'$ and satisfies 
\begin{equation*}
\Norm{\hat Q_i - Q_i}_F \leq \bigO{ k^2 \delta^{1/4} }  + \bigO{  \frac{k^2}{\alpha} \delta^{1/2} \Delta^{1/2} }  = \bigO{  \frac{k^2}{\alpha} \delta^{1/2} \Delta^{1/2} } \;.
\end{equation*}
\end{proof}

\subsection{Parameter vs Moment Distance for Gaussian Mixtures} \label{ssec:mv}

In this subsection, we prove Lemma~\ref{lem:mv}. 
To that end, we will use the following two results; the second one is from~\cite{MoitraValiant:10}.

\begin{lemma}\label{lem:moment-distance-from-parameter}
Suppose $\calN(\mu_1,\sigma_1^2)$ and $\calN(\mu_2,\sigma_2^2)$ are univariate Gaussians 
with $|\mu_i|, |\sigma_i|\le D$, for some $D \in \R_{+}$. If $|\mu_1-\mu_2|+|\sigma_1^2-\sigma_2^2|\le\beta$, 
then the distance between raw moments of two Gaussians is
$$
\left|M_j(\calN(\mu_1,\sigma_1^2))-M_j(\calN(\mu_2,\sigma_2^2))\right|\le (j+1)!D^{j-1}\beta.
$$
\end{lemma}
\begin{proof}

By Proposition~\ref{momentsProp}, the $j$-th raw moment of a Gaussian $\cN(\mu,\sigma^2)$
is a sum of monomials in $\mu$ and $\sigma^2$ of degree $j$. 
There are at most $(j+1)!$ terms in the polynomial. Thus, changing the mean 
or the variance by at most $\beta$ will change the $j$-th moment by at most $(j+1)!D^{j-1}\beta$.
\end{proof}

\begin{theorem}(\cite{MoitraValiant:10})\label{thm:mv}
Let $F,F'$ be two univariate mixtures of Gaussians: 
$F=\sum_{i=1}^{k} w_i \calN(\mu_i,\sigma_i^2)$ and 
$F'=\sum_{i=1}^{k'} w_i'\calN(\mu_i',{\sigma_i'}^2).$
There is a constant $c>0$ such that, for any $\beta<c$, if $F,F'$ satisfy:
\begin{enumerate}
\item $w_i,w_i'\in[\beta,1]$
\item $|\mu_i|,|\mu_i'|\le 1/\beta$
\item $|\mu_i-\mu_{i'} |+|\sigma_i^2-\sigma_{i'}^2|\ge\beta$ and $|\mu_i'-\mu_{i'}'|+|{\sigma_i'}^2-{\sigma_{i'}'}^2|\ge\beta$ for all $i\neq i'$
\item $\beta \le\min_{\pi}\sum_i\left(|w_i-w_{\pi(i)}'|+|\mu_i-\mu_{\pi(i)}'|+|\sigma^2-{\sigma_{\pi(i)}'}^2|\right)$, 
where the minimization is taken over all mappings $\pi:\{1,\dots,k\}\to\{1,\dots,k'\}$,
\end{enumerate}
then
$$
\max_{j \in [2(k+k'-1)] } |M_j(F)-M_j(F')|\ge\beta^{O(k)} \;.
$$
\end{theorem}

We are now ready to complete the proof of Lemma~\ref{lem:mv}.

\begin{proof}[Proof of Lemma~\ref{lem:mv}]

We proceed via induction on $k$. Consider the base case, i.e., $k=1$. 
Then, either $\abs{\mu_1} \geq \beta/2 $ or $\abs{\sigma_1-1}\geq \beta/2$, 
and thus the first or second moment differ by at least $\beta^2/4$. 
Let the inductive hypothesis be that Lemma~\ref{lem:mv} holds for at most $k$ components. 

Consider the case where $|\mu_i-\mu_{i'}|+|\sigma_i^2-\sigma_{i'}^2|\ge\beta^{C^k k!}$ for all pairs of components $i, i' \in [k]$. Then,  by Theorem~\ref{thm:mv}, we have that
$$\max_{j \in [2k] }|M_j(F)-M_j(\cN(0,1))|\ge\beta^{C^{k+1}k!}\ge\beta^{C^{k+1}(k+1)!-1} \;,$$
and the lemma follows. 

Otherwise, we know that there exists a pair of components with parameter distance less than $\beta^{C^k k!}$. 
In this case, we merge these two components and get a $(k-1)$-mixture $F'$. 
By Lemma~\ref{lem:moment-distance-from-parameter}, the distance between 
the $j$-th moments of $F'$ and $F$ is at most $(j+1)!D^{j-1}\beta^{C^kk!}$. 
Since we still have $|\mu_i'|+|{\sigma_i'}^2-1| \geq \beta- 3\beta^{C^kk!}$ for all components $i$ in $F'$, 
the inductive hypothesis implies that 
%\Anote{this induction does not make sense, first $i$ is both the index between 1 and 2k and also the index of the i-th component of the mixture.} \Anote{second, the claim to prove only assumes some component has mean at least $\beta$ and covariance at least $1-\beta$.} \Anote{where does the computation for why we need this particular function of k appear? }
$$
\max_{j\in [2k-2 ] }|M_j(F')-M_j(\cN(0,1))|\ge\left(\beta- 3\beta^{C^kk!}\right)^{C^{k}(k)!-1} \;.
$$
By the triangle inequality, we can write
\begin{align*}
\max_{j\in [2k] }|M_j(F)-M_j(\cN(0,1))|
&\ge\max_{j\in [2k-2]}|M_j(F')-M_j(\cN(0,1))|-\max_{j\in [2k-2]  }|M_j(F)-M_j(F')|\\
&\ge\left(\beta- 3\beta^{C^kk!}\right)^{C^{k}k!-1}-(2k-1)!D^{2k-3}\beta^{C^kk!}\\
&\ge\beta^{C^{k+1}(k+1)!-1} \;.
\end{align*}
The last inequality follows from the assumption that $\beta\le1/(2(2k-1)!D^{2k-3})$.
This completes the proof of Lemma~\ref{lem:mv}.
\end{proof}

%\clearpage

%%% Standalone result for robustly estimating mean and covariance for a mixture of gaussians with min mixing weight dependence. 

%!TEX root = main.tex

\section{Robust Partial Cluster Recovery}
\label{sec:robust-partial-clustering} 
In this section, we give two robust \emph{partial clustering} algorithms. 
A partial clustering algorithm takes a set of points $X = \cup_{i \leq k}X_i$ 
with true clusters $X_1, X_2, \ldots, X_k$ and outputs a partition of the sample $X = X'_1 \cup X'_2$ such that $X'_1= \cup_{i \in S} X_i$ and 
$X'_2 = \cup_{i \not \in S} X_i$, 
for some subset $S \subseteq [k]$ of size $1 \leq |S| < k$. 
That is, a partial clustering algorithm partitions the sample into two non-empty parts 
so that each part is a sample from a ``sub-mixture''. This is a weaker guarantee than clustering the entire mixture, which must find each of the original $X_i$'s. We show that the relaxed guarantee is feasible even when the mixture as a whole 
is not clusterable. In our setting, we will get an approximate (that is, a small fraction of points are misclassified) partial clustering that works for $\epsilon$-corruptions $Y$ of any i.i.d. sample $X$ 
from a mixture of $k$ Gaussians, as long as there is a pair of components in the original mixture 
that have large total variation distance between them. 
%\Pnote{this discussion should appear in overview perhaps.}

A partial clustering algorithm such as above was one of the innovations in~\cite{BK20} 
that allowed for a polynomial-time algorithm for clustering all fully clusterable Gaussian mixtures. 

In this section, we  build on the ideas in~\cite{BK20} to derive two new partial clustering algorithms 
that work even when the original mixture is not fully clusterable. Both upgrade the results of ~\cite{BK20} by handling mixtures with arbitrary weights $w_i$s instead of uniform weights and handling mixtures where not all pairs of components are well-separated in TV distance. The first algorithm succeeds under the information-theoretically minimal separation assumption (i.e. separation in total variation distance) but runs in time exponential in the inverse mixing weight. The second algorithm is a key innovation of this paper -- it gives an algorithm that runs in polynomial time in the inverse mixing weight at the cost of handling separation only in relative Frobenius distance. This improved running time guarantee (at the cost of strong separation requirement that we mitigate through a novel standalone spectral separation step in Section~\ref{sec:spectral-separation}) is crucial to obtaining the fully polynomial running time in our algorithm.

In order to state the guarantees of our algorithms, 
we first formulate a notion of parameter separation 
(same as the one employed in~\cite{BK20,DHKK20}) as the next definition.

\begin{definition}[$\Delta$-Parameter Separation] \label{def:separated-mixture-model}
We say that two Gaussian distributions $\cN(\mu_1,\Sigma_1)$ and $\cN(\mu_2,\Sigma_2)$ 
are $\Delta$-parameter separated if at least one of the following three conditions hold:
\begin{enumerate}
  \item \textbf{Mean-Separation:} $\exists v \in \R^d$ such that 
  $\langle \mu_1 - \mu_2, v \rangle^2 > \Delta^2 v^{\top} (\Sigma_1 + \Sigma_2) v$,
  \item \textbf{Spectral-Separation:} $\exists v \in \R^d$ such that $v^{\top} \Sigma_1 v > \Delta \, v^{\top}\Sigma_2v$,
  \item \textbf{Relative-Frobenius Separation:} $\Sigma_i$ and $\Sigma_j$ have the same range space and $\Norm{ \Sigma_1^{\dagger/2} (\Sigma_2 - \Sigma_1) \Sigma_1^{\dagger/2}}^2_F > \Delta^2 \Norm{\Sigma_1^{\dagger} \Sigma_2}_{\mathrm{op}}^2$.
\end{enumerate}
\end{definition}
As shown in ~\cite{DHKK20,BK20}, if a pair of Gaussians is $(1-\exp(-O(\Delta \log \Delta))$-separated in total variation distance, then, they are $\Delta$-parameter separated. 

%\Pnote{add def for pseudo-inverse to preliminaries}\Anote{this exists.}

Our first algorithm succeeds in robust partial clustering whenever there is a pair of component Gaussians that are 
$\Delta$-parameter separated. The running time of this algorithm grows exponentially 
in the reciprocal of the minimum weight in the mixture.  

\begin{theorem}[Robust Partial Clustering in TV Distance] \label{thm:partial-clustering-non-poly}
Let $0 \leq \epsilon < \alpha \leq 1$, and $\eta > 0$.  There is an algorithm with the following guarantees: 
Let $\{\mu_i,\Sigma_i\}_{i \leq k}$ be means and covariances of $k$ unknown Gaussians. 
Let $Y$ be an $\epsilon$-corruption of a sample $X$ of size $n\geq \paren{dk}^{Ct}/\epsilon$ for a large enough constant $C>0$, from $\calM = \sum_i w_i \cN(\mu_i,\Sigma_i)$ satisfying Condition~\ref{cond:convergence-of-moment-tensors} with parameters $t = (k/\eta)^{O(k)}$ and 
$\gamma \leq \epsilon d^{-8t}k^{-Ct}$, for a sufficiently large constant $C>0$.  
Suppose further that $w_i \geq \alpha > 2\epsilon$ for every $i$ and that 
there are $i, j$ such that $\cN(\mu_i,\Sigma_i)$ and $\cN(\mu_j,\Sigma_j)$ 
are $\Delta$-parameter separated for $\Delta = (k/\eta)^{O(k)}$. 

Then, the algorithm on input $Y$,
runs in time $n^{(k/\eta)^{O(k)}}$, and with probability at least $2^{-O\Paren{\frac{1}{\alpha} \log\Paren{\frac{k}{\eta \alpha}}}}$ 
over the draw of $X$ and the algorithm's random choices, 
the algorithm outputs a partition of $Y$ into $Y_1,Y_2$ satisfying: 
\begin{enumerate}
 \item \textbf{Partition respects clustering:} for each $i$, 
 $\max \{ \frac{k}{n}|Y_1 \cap X_i|, \frac{k}{n}|Y_2 \cap X_i|\} \geq 1-\eta- O(\eps/\alpha^4)$, and, 
 \item \textbf{Partition is non-trivial:} $\max_{i}\frac{k}{n} |X_i \cap Y_1|, \max_{i}\frac{k}{n} |X_i \cap Y_2| \geq 1-\eta-O(\eps/\alpha^4)$. 
\end{enumerate}
\end{theorem}

Our proof of the above theorem is based on a relatively straightforward extension of the ideas of~\cite{BK20}, albeit
with two key upgrades 1)  allowing the input mixtures to have arbitrary mixing weights 
(at an exponential cost in the inverse of the minimum weight) and 2) handling mixtures where some pair of components may not be well-separated in TV distance. 

In order to get our main result that gives a fully polynomial algorithm (including in the inverse mixing weights), 
we will use a incomparable variant of the above partial clustering method that only handles 
a weaker notion of parameter separation, but runs in fixed polynomial time.

\begin{theorem}[Robust Partial Clustering in Relative Frobenius Distance] \label{thm:partial-clustering-poly}
Let $0 \leq \epsilon < \alpha/k \leq 1$ and $t \in \N$. There is an algorithm with the following guarantees: 
Let $\{\mu_i,\Sigma_i\}_{i \leq k}$ be means and covariances of $k$ unknown Gaussians. 
Let $Y$ be an $\epsilon$-corruption of a sample $X$ of size $n\geq \paren{dk}^{Ct}/\epsilon$ for a large enough constant $C>0$, from 
$\calM = \sum_i w_i \cN(\mu_i,\Sigma_i)$ that satisfies Condition~\ref{cond:convergence-of-moment-tensors} 
with parameters $2t$ and $\gamma \leq \epsilon d^{-8t}k^{-Ck}$, for a large enough constant $C >0$. 
Suppose further that $w_i \geq \alpha > 2 \epsilon$ for each $i \in [k]$, 
and that for some $t \in \N$, $\beta > 0$ there exist $i,j \leq k$ such that 
$\Norm{\Sigma^{\dagger/2}(\Sigma_i -\Sigma_j)\Sigma^{\dagger/2}}_F^2 = \Omega\left((k^2 t^4)/(\beta^{2/t} \alpha^4)\right)$, 
where $\Sigma$ is the covariance of the mixture $\calM$. 
Then, the algorithm runs in time $n^{O(t)}$, and with probability at least $2^{-O\Paren{ \frac{1}{\alpha} \log\Paren{\frac{k}{\eta}} } }$ over the random choices of the algorithm, 
outputs a partition $Y = Y_1 \cup Y_2$ satisfying:   
\begin{enumerate}
 \item \textbf{Partition respects clustering:} for each $i$, $\max \left\{ \frac{1}{w_i n}|Y_1 \cap X_i|, \frac{1}{w_i n}|Y_2 \cap X_i|\right\} \geq 1- \beta - O(\epsilon/\alpha^4)$, and, 
 \item \textbf{Partition is non-trivial:} $\max_{i}\frac{1}{w_i n} |X_i \cap Y_1|, \max_{i}\frac{1}{w_i n} |X_i \cap Y_2| \geq 1- \beta - O(\epsilon/\alpha^4)$. 
\end{enumerate}
\end{theorem}

The starting point for the proof of the above theorem is the observation that the running time of our first algorithm above is exponential in the inverse mixing weight almost entirely because of dealing with spectral separation 
(which requires the use of ``certifiable anti-concentration'' that we define in the next subsection). 
We formulate a variant of relative Frobenius separation (that is directly useful to us) and prove that 
whenever the original mixture has a pair of components separated in this notion, 
we can in fact obtain a fully polynomial partial clustering algorithm building on the ideas in~\cite{BK20}.

% Our partial clustering algorithm is based on the robust clustering algorithm introduced in~\cite{DHKK20,BK20} (we follow the treatment in ~\cite{BK20} here). The algorithm of~\cite{BK20} succeeds in clustering a mixture of any distributions that satisfy two analytic properties that we define in the next subsection. 

% The clustering algorithm of~\cite{BK20} work for mixtures of any distribution that satisfy two analytic notions that we next define. 

% Let $\sum_{i = 1}^n w_i \cN(\mu_i, \Sigma_i)$ be a $d$-dimensional mixture of Gaussians with $w_i \geq \alpha$ for every $i$. Let $\mu = \sum_i w_i \mu_i$ and $\Sigma = \sum_i w_i \Sigma_i + \sum_i w_i \bar{\mu}_i\bar{\mu}_i^{\top}$ for $\bar{\mu}_i = \mu_i - \mu$. 

% Let $X = \{x_1, x_2, \ldots, x_n\}$ be an i.i.d. sample of size $n$ drawn according to $\sum_{i = 1}^n w_i \cN(\mu_i, \Sigma_i)$ with true clusters $C_1, C_2, \ldots, C_k$. By a slight abuse of notation, we always think of $X$ as being generated by taking i.i.d. sample from $\cN(\mu_i,\Sigma_i)$ of size $w_i n$ for each $i$ and then taking the union of all such points. This allows us to conclude that $|X_i| = w_i n$ for each $i$. 

% Let $Y \subseteq \R^d$ of size $|Y| = n$ such that $|Y \cap X| \geq (1-\epsilon)n$. 

\subsection{Algorithm}

Our algorithm will solve SoS relaxations of a polynomial inequality system. 
The constraints here use the input $Y$ to encode finding a sample $X'$ 
(the intended setting being $X'=X$, the original uncorrupted sample) 
and a cluster $\hat{C}$ in $X'$ of size $= \alpha n$, indicated by $z_i$s 
(the intended setting is simply the indicator for any of the $k$ true clusters) 
satisfying properties of Gaussian distribution (certifiable hypercontractivity and anti-concentration).

Covariance constraints introduce a matrix valued indeterminate $\Pi$ intended to be the square root of $\hat{\Sigma}$, the sos variable for the covariance of a single component.
\begin{equation}
\text{Covariance Constraints: $\cA_1$} = %  \cA_{w,\Pi}\colon
  \left \{
    \begin{aligned}
      &
      &\Pi
      &=UU^{\top}\\
      % &
      % &\Theta \Pi = \Pi \Theta
      % &= I\\
      &
      &\Pi^2
      &=\hat{\Sigma}\\
      % &
      % &\Theta^2\Sigma  = \Sigma \Theta^2
      % &=\Theta\Sigma\Theta =I\\
    \end{aligned}
  \right \}
\end{equation}
The intersection constraints force that $X'$ be $\epsilon$-close to $Y$ (and thus, $2\epsilon$-close to unknown sample $X$). 
\begin{equation}
\text{Intersection Constraints: $\cA_2$} = %  \cA_{w,\Pi}\colon
  \left \{
    \begin{aligned}
      &\forall i\in [n],
      & m_i^2
      & = m_i\\
      &&
      \textstyle\sum_{i\in[n]} m_i 
      &= (1-\epsilon) n\\
      &\forall i \in [n],
      &m_i (y_i-x'_i)
      &= 0
    \end{aligned}
  \right \}
\end{equation}
The subset constraints introduce $z$, which indicates the subset $\hat{C}$ intended to be the true clusters of $X'$.
\begin{equation}
\text{Subset Constraints: $\cA_3$} = %  \cA_{w,\Pi}\colon
  \left \{
    \begin{aligned}
      &\forall i\in [n].
      & z_i^2
      & = z_i\\
      &&
      \textstyle\sum_{i\in[n]} z_i 
      &= \alpha n\\
    \end{aligned}
  \right \}
\end{equation}

Parameter constraints create indeterminates to stand for the covariance $\hat{\Sigma}$ and mean $\hat{\mu}$ of $\hat{C}$ (indicated by $z$).
\begin{equation}
\text{Parameter Constraints: $\cA_4$} = %  \cA_{w,\Pi}\colon
  \left \{
    \begin{aligned}
      &
      &\frac{1}{\alpha n}\sum_{i = 1}^n z_i \Paren{x'_i-\hat\mu}\Paren{x'_i-\hat\mu}^{\top}
      &= \hat{\Sigma}\\
      &
      &\frac{1}{\alpha n}\sum_{i = 1}^n z_i x'_i
      &= \hat\mu\\
    \end{aligned}
  \right \}
\end{equation}

      % &
      % &\frac{1}{n}\sum_{i = 1}^n w_i x_ix_i^{\top}
      % &= \Sigma\\
      % &
      % &\frac{1}{n}\sum_{i = 1}^n w_i x_i
      % &= \mu\\

\text{Certifiable Hypercontractivity : $\cA_4$}=\\
\begin{equation}
 % \cA_{w,\Pi}\colon
  \left \{
    \begin{aligned}
     &\forall t \leq 2s
     &\frac{1}{\alpha^2 n^2} \sum_{i,j \leq n} z_i z_j (Q(x'_i-x'_j)-\E_zQ)^{2t}
     &\leq \Paren{\frac{1}{\alpha^2 n^2} \sum_{i,j \leq n} z_i z_j (Q(x'_i-x'_j)-\E_zQ)^{2}}^t\\
     &
     &\nnnew{\frac{1}{\alpha^2 n^2} \sum_{i,j \leq n} z_i z_j \Paren{Q(x'_i-x'_j)-\E_z Q}^{2}}
     &\nnnew{\leq \frac{6}{\alpha^2} \Norm{Q}_F^{2}}
    \end{aligned}
  \right \}
\end{equation}
Here, we used the shorthand $\E_z Q = \frac{1}{\alpha^2 n^2} \sum_{i,j \leq n} z_i z_j Q(x'_i-x'_j)$.

In the constraint system for our first algorithm, we will use the following certifiable anti-concentration constraints on $\hat{C}$ for $\delta = \alpha^{-\poly(k)}$ and $\tau = \alpha/\poly(k)$ and $s(u) = 1/u^2$ for every $u$.
\begin{equation}
\text{Certifiable Anti-Concentration : $\cA_5$} =% \cA_{w,\Pi}\colon
  \left \{
    \begin{aligned}
      % &\forall i,j\in [n].
      % & w_i w_j \Paren{\norm{v}_2^{2t-2} \Iprod{\Paren{\frac{x_i-x_j}{\sqrt{2}}},v}^2 +\delta^2 p^2 (\Iprod{\Theta \Paren{\frac{x_i-x_j}{\sqrt{2}}},,v})}
      % &\geq w_i w_j \Paren{\delta^2 \norm{v}_2^{2t}}\\
      &
      &\frac{1}{\alpha^2 n^2}\sum_{i,j=  1}^n z_i z_j q_{\delta,\Sigma}^2\left(\Paren{x'_i-x'_j},v\right)
      &\leq 2^{s(\delta)} C\delta \Paren{v^{\top}\Sigma v}^{s(\delta)}\\
      &
      &\frac{1}{\alpha^2 n^2}\sum_{i,j=  1}^n z_i z_j q_{\tau,\Sigma}^2\left(\Paren{x'_i-x'_j},v\right)
      &\leq 2^{s(\tau)} C\tau \Paren{v^{\top}\Sigma v}^{s(\eta)}\\
     \end{aligned}
    \right\}
 \end{equation}

We note that the constraint system for our second algorithm (running in fixed polynomial time), we will not use $\cA_5$. 
Towards proving Theorems~\ref{thm:partial-clustering-poly} and~\ref{thm:partial-clustering-non-poly} we use the following algorithm that differs only in the degree of the pseudo-distribution computed and the constraint system that the pseudo-distribution satisfies. 
 %and instead force the following constraints on the parameters of the mixture:
% \begin{equation}
% \text{Mixture Parameters : $\cA_6$} =% \cA_{w,\Pi}\colon
%   \left \{
%     \begin{aligned}
%       % &\forall i,j\in [n].
%       % & w_i w_j \Paren{\norm{v}_2^{2t-2} \Iprod{\Paren{\frac{x_i-x_j}{\sqrt{2}}},v}^2 +\delta^2 p^2 (\Iprod{\Theta \Paren{\frac{x_i-x_j}{\sqrt{2}}},,v})}
%       % &\geq w_i w_j \Paren{\delta^2 \norm{v}_2^{2t}}\\
%       &
%       &\frac{1}{n^2}\sum_{i,j=  1}^n  (x_i'- x_j')(x_i'-x_j')^{\top}
%       &= 4 \Sigma \\
%       &
%       &0.99 I \preceq \Sigma 
%       &\preceq 1.01I\\
%      \end{aligned}
%     \right\}
%  \end{equation}

\begin{mdframed}
  \begin{algorithm}[Partial Clustering]
    \label{algo:robust-partial-clustering}\mbox{}
    \begin{description}
    \item[Given:]
        A sample $Y$ of size $n$. An outlier parameter $\epsilon > 0$ and an accuracy parameter $\eta > 0$. 
    \item[Output:]
      A partition of $Y$ into partial clustering $Y_1 \cup Y_2$.
    \item[Operation:]\mbox{}
    \begin{enumerate}
    	%\item \textbf{Base Case:} If $|Y'| = n/k$, accept if Algorithm~\ref{algo:verification-subroutine} applies to $Y'$ with outlier parameter $\tau$ accepts. Otherwise output fail.
   		\item \textbf{SDP Solving:} Find a pseudo-distribution $\tzeta$ satisfying $\cup_{i =1}^5 \cA_i$ ($\cup_{i = 1}^4 \cA_i$ for Theorem~\ref{thm:partial-clustering-poly}) such that $\pE_{\tzeta}z_i \leq \alpha+o_d(1)$ for every $i$. If no such pseudo-distribution exists, output fail.
      	\item \textbf{Rounding:} Let $M = \pE_{z \sim \tzeta} [zz^{\top}]$. 
       		\begin{enumerate} 
        		 \item Choose $\ell = \bigO{\frac{1}{\alpha} \log (k/\eta)}$ rows of $M$ uniformly at random and independently. 
       			\item For each $i \leq \ell$, let $\hat{C}_i$ be the indices of the columns $j$ such that $M(i,j) \geq \eta^2 \alpha^5/k$.
            %\item Let $\hat{C}_{\ell+1} = [n] \setminus \cup_{i \leq \ell} \hat{C}_i$.
       		
    	\item Choose a uniformly random $S \subseteq [\ell]$ and output $Y_1 = \cup_{i \in S} \hat{C}_i$ and $Y_2 = Y \setminus Y_1$.
    	\end{enumerate}%\textbf{Brute-Force Search Over Partial Clusterings:} For each subset $S \subseteq [\ell+1]$, recursively run two instances of Algorithm~\ref{algo:polynomial-rounding-for-pseudo-distribution-robust} with inputs $\cup_{i \in S} \hat{C}_i$, $\cup_{i \not \in S} \hat{C}_i$ respectively with outlier parameters $\eta + O(k^3\tau)$ for both runs.
    	%\item If either run fails, output fail and return. Otherwise output the union of clusters returned by the two runs of the algorithm.
      \end{enumerate}
    \end{description}
  \end{algorithm}
\end{mdframed}
% \begin{mdframed}
%   \begin{algorithm}[Partial Clustering General Mixtures]
%     \label{algo:rounding-for-pseudo-distribution-robust}\mbox{}
%     \begin{description}
%     \item[Given:]
%         An $\epsilon$-corruption $Y$ of original uncorrupted sample $X = C_1 \cup C_2 \cup \ldots C_k$ with true clusters $C_1, C_2 , \ldots, C_k$.
%     \item[Output:]
%       A partition of $Y$ into an approximately correct clustering $\hat{C}_1, \hat{C}_2, \ldots, \hat{C}_k$.
%     \item[Operation:]\mbox{}
%     \begin{enumerate}
%     \item Find a pseudo-distribution $\tzeta$ satisfying $\cA_{rob}$ minimizing $\Norm{\pE[w]}_2^2$.
%       \item For $M = \pE_{w \sim \tzeta} [ww^{\top}]$, repeat for $1 \leq \ell \leq k$:
%         \begin{enumerate} \item Choose a uniformly random row $i$ of $M$.
%         \item Let $\hat{C}_{\ell}$ be the largest $\frac{n}{k}$ entries in the $i$th row of $M$.
%         \item Remove the rows and columns with indices in $\hat{C}_{\ell}$.
%        \end{enumerate}
%       \end{enumerate}
%     \end{description}
%   \end{algorithm}
% \end{mdframed}

\subsection{Analysis}

\paragraph{Simultaneous Intersection Bounds.}

The key observation for proving the first theorem is the following lemma 
that gives a sum-of-squares proof that no $z$ that satisfies the constraints $\cup_{i= 1}^5 \cA_i$ 
can have simultaneously large intersections with the $\Delta$-parameter 
separated component Gaussians.

\begin{lemma}[Simultaneous Intersection Bounds for TV-separated case] \label{lem:simultaneous-intersection-bounds}
Let $Y$ be an $\epsilon$-corruption of a sample $X$ of size $n\geq \paren{dk}^{Ct}/\epsilon$ for a large enough constant $C>0$, from $\calM = \sum_i w_i \cN(\mu_i,\Sigma_i)$ satisfying Condition~\ref{cond:convergence-of-moment-tensors} with parameters $t = (k/\eta)^{O(k)}$ and 
$\gamma \leq \epsilon d^{-8t}k^{-Ct}$, for a sufficiently large constant $C>0$.  
Suppose further that $w_i \geq \alpha > 2\epsilon$ for every $i$ and that 
there are $i, j$ such that $\cN(\mu_i,\Sigma_i)$ and $\cN(\mu_j,\Sigma_j)$ 
are $\Delta$-parameter separated for $\Delta = (k/\eta)^{O(k)}$. 
Then, there exists a partition of $[k]$ into $S \cup L$ such that, $|S|, |L| < k$ 
and for $z(X_r) = \frac{1}{w_r n}\sum_{i \in X_r} z_i $,% \cdot \1 (y_i = x_i)$,  
\[
\Set{\cup_{i=1}^5\cA_i} \sststile{{(k/\eta \alpha)}^{\poly(k)}}{z} \Set{\sum_{i \in S, j \in L} z(X_i)z(X_{j}) \leq  O(k^2 \epsilon/\alpha) + \eta/\alpha}\mper
\]
\end{lemma}

\noindent The proof of Lemma~\ref{lem:simultaneous-intersection-bounds} is given in 
Section~\ref{ssec:simultaneous-intersection-bounds}.

For the second theorem, we use the following version that strengthens the separation assumption 
and lowers the degree of the sum-of-squares proof (and consequently the running time of the algorithm) as a result.

\begin{lemma}[Simultaneous Intersection Bounds for Frobenius Separated Case] 
\label{lem:simultaneous-intersection-bounds-poly-clustering}
% Let $X$ be a sample of size $n\geq n_0 = d^{(k/\eta)^{O(k)}}$ from a $k$-mixture of 
% Gaussians $\sum_{i} w_i \cN(\mu_i,\Sigma_i)$ with covariance $\Sigma$, 
% and $w_i \geq \alpha$ for each $i$. Suppose there exist $r,r'$ such that 
% $\Norm{\Sigma^{-1/2}(\Sigma_r-\Sigma_{r'})\Sigma^{-1/2}}_F^2 \geq 10^8 k^2 \frac{t^4}{{(\beta \alpha^4)}^{2/t}}$. 
%Let $t \in \N$. ANote{t is being used twice.}
Let $X$ be a sample of size $n\geq \paren{dk}^{Ct}/\epsilon$ for a large enough constant $C>0$, from 
$\calM = \sum_i w_i \cN(\mu_i,\Sigma_i)$ that satisfies Condition~\ref{cond:convergence-of-moment-tensors} 
with parameters $2t$ and $\gamma \leq \epsilon d^{-8t}k^{-Ck}$, for a large enough constant $C >0$. 
Suppose further that $w_i \geq \alpha > 2 \epsilon$ for each $i \in [k]$, 
and that for some $t \in \N$, $\beta > 0$ there exist $i,j \leq k$ such that 
$\Norm{\Sigma^{\dagger/2}(\Sigma_i -\Sigma_j)\Sigma^{\dagger/2}}_F^2 = \Omega\left((k^2 t^4)/(\beta^{2/t} \alpha^2)\right)$, 
where $\Sigma$ is the covariance of the mixture $\calM$. 
Then, for any $\epsilon$-corruption $Y$ of $X$, there exists a partition of $[k] = S \cup T$ such that  
\[
\Set{\cup_{i = 1}^4\cA_i } \sststile{2t}{z} \Set{\sum_{i \in S} \sum_{j\in T} z(X_i) z(X_j) \leq O(k^2) \beta +O(k^2)\epsilon/\alpha}\mper
\]
Here,  $z(X_r) = \frac{1}{w_r n}\sum_{i \in X_r} z_i $ for every $r$.
\end{lemma}

\noindent The proof of Lemma~\ref{lem:simultaneous-intersection-bounds-poly-clustering} is given in 
Section~\ref{ssec:simultaneous-intersection-bounds-poly-clustering}.

Notice that the main difference between the above two lemmas 
is the constraint systems they use. Specifically, the second lemma does \emph{not} enforce 
certifiable anti-concentration constraints. As a result, there is a difference in 
the degree of the sum-of-squares proofs they claim; the degree of the SoS proof in the second lemma
does not depend on the inverse minimum mixture weight. 

First, we complete the proof of the Theorem~\ref{thm:partial-clustering-non-poly}. 
The proof of Theorem~\ref{thm:partial-clustering-poly} is exactly the same except 
for the use of Lemma~\ref{lem:simultaneous-intersection-bounds-poly-clustering} 
(and thus has the exponent in the running time independent of $1/\alpha$) 
instead of Lemma~\ref{lem:simultaneous-intersection-bounds}.

\begin{proof}[Proof of Theorem~\ref{thm:partial-clustering-non-poly}]

Let $\eta' = O(\eta^2 \alpha^3/k)$. We will prove that whenever $\Delta \geq \poly(k/\eta')^k = \poly\Paren{\frac{k}{\eta \alpha}}^k$, Algorithm~\ref{algo:robust-partial-clustering}, when  run with input $Y$, with probability at least $0.99$, 
recovers a collection $\hat{C}_1, \hat{C}_2,\ldots, \hat{C}_{\ell}$ of $\ell = \bigO{\frac{1}{\alpha} \log k/\eta}$ 
subsets of indices satisfying $|\cup_{i \leq \ell} \hat{C}_i| \geq (1-\eta'/k^{40})n$ 
such that there is a partition $S \cup L = [\ell]$, $0 < |S| <\ell$ satisfying:
\begin{equation}\label{eq:partial-recovery-clusterwise}
\min\Biggl\{ \frac{1}{\alpha n} |\hat{C}_i \cap \cup_{j \in S} X_j|, \frac{1}{\alpha n} |\hat{C}_i \cap \cup_{j \in L} X_j| \Biggr\} \leq 100 \eta' /\alpha^3 + O(\epsilon/\alpha^4)\mper 
\end{equation}
We first argue that this suffices to complete the proof.
Split $[\ell]$ into two groups $G_S, G_L$ as follows.
For each $i$, let $j = \argmax_{r \in [\ell]} \frac{1}{\alpha n} |\hat{C}_i \cap X_r|$.
If $j \in S$, add it to $G_S$, else add it to $G_L$.
%Observe that this process is well-defined in that we cannot have two distinct maximizers $j, j'$ of $\frac{k}{n} |\hat{C}_i \cap X_r|$ such that $j$ and $j'$ lie in $S,L$ or $L,S$, respectively. This is because our rounding outputs $\hat{C}_i$ that satisfies $\frac{k}{n}  |\hat{C}_i \cap X_r| \geq \frac{k}{n}\cdot \frac{n}{k^2} = \frac{1}{k}$. OTOH, for appropriate  $\eta, \tau$ it follows from by Lemma~\ref{lem:partial-cluster-recovery}, $\min\left\{ \frac{k}{n} |\hat{C}_i \cap \cup_{j \in S} X_j|, \frac{k}{n} |\hat{C}_i \cap \cup_{j \in L} X_j| \right\} < 1/k$, and thus the maximizer must lie in exactly one of $S$ or $L$.
Observe that this process is well-defined - i.e, there cannot be $j\in S$ and $j' \in L$ that both maximize $\frac{1}{\alpha n} |\hat{C}_i \cap X_r|$ as $r$ varies over $[k]$. To see this, WLOG, assume $j \in S$.  Note that $\frac{1}{\alpha n} |\hat{C}_i \cap X_j| \geq 1/k$. Then, we immediately obtain: $\frac{1}{\alpha n} |\cup_{j \in S} X_j \cap \hat{C}_i| \geq 1/k$. Now, if we ensure that $\eta' \leq \alpha^3/k^2$ and $\epsilon \leq O(\alpha^4/k)$, then, $\frac{1}{\alpha n} |\hat{C}_i \cap \cup_{j' \in L} X_{j'}|$ is at most the RHS of \eqref{eq:partial-recovery-clusterwise} which is $\ll 1/k$. This completes the proof of well-definedness. %Combining the last two claims  < \eta/\alpha + O(\epsilon/\alpha) \ll 1/k$. 
% We now show that there is a non-trivial partition of $[\ell]$ into $S$ and $L$ that satisfies the requirements of the theorem. This immediately implies that the algorithm succeeds with probability at least $0.99 \cdot 2^{-\ell} = \alpha^{O(k)}$ as desired. 
% For each $i \in G_S$, we have that $\frac{1}{|\hat{C}_i|} |\hat{C}_i \cap \cup_{j \in L} X_j| \leq \eta' + O(\epsilon/\alpha)$.
Next, adding up \eqref{eq:partial-recovery-clusterwise} for each $i \in S$ yields that 
\[
\frac{1}{|\hat{C}_i|} | \Paren{\cup_{i \in G_S} X_i} \cap \cup_{j \in L} X_j| \leq O(\log(k/\eta')/\alpha)  \Paren{\eta' + \bigO{\epsilon/\alpha}},
\] 
where we used that $|G_S| \leq \ell$. Combined with $|\cup_{i \leq \ell} \hat{C}_i| \geq (1-\eta'/k^{40})n$, we obtain that 
\[
|\cup_{i \in G_S} X_i| \geq 1- \eta'/k^{40} - O(\log(k/\eta')/\alpha)  \Paren{\eta' + O(\epsilon/\alpha)} = \eta + \bigO{\log(k/\eta \alpha)\epsilon/\alpha^2}
\]
for $\eta' \leq \bigO{\eta^2 \alpha^3/k}$.

We now go ahead and establish \eqref{eq:partial-recovery-clusterwise}. 
Let $\tzeta$ be a pseudo-distribution satisfying $\cA$ of degree $(k/\eta)^{\poly(k)}$ satisfying $\pE_{\tzeta} z_i = \alpha$ for every $i$. Such a pseudo-distribution exists. To see why, let $\tzeta$ be the actual distribution that always sets $X' = X$, chooses an $i$ with probability $w_i$ and outputs a uniformly subset $\hat{C}$ of size $\alpha n$ of $X_i$ conditioned on $\hat{C}$ satisfying $\cA$. Then, notice that since $X$ satisfies Condition~\ref{cond:convergence-of-moment-tensors}, by Fact~\ref{fact:moments-to-analytic-properties}, the uniform distribution on each $X_i$ has $t$-certifiably $C$-hypercontractive degree $2$ polynomials and is $t$-certifiably $C\delta$-anti-concentrated. By an concentration argument using high-order Chebyshev inequality, similar to the proof of Lemma~\ref{lem:det-suffices} (applied to uniform distribution on $X_i$ of size $n \geq (dk)^{O(t)}$, $\hat{C}$ chosen above satisfies the constraints $\cA$ with probability at least $1-o_d(1)$. Observe that the probabililty that $z_i$ is set to $1$ under this distribution is then at most $\alpha+o_d(1)$. Thus, such a distribution satisfies all the constraints in $\cA$. 

Next, let $M = \pE_{\tzeta}[zz^{\top}]$. Then, we claim that:
\begin{enumerate}
  \item $o_d(1) + \alpha \geq M(i,j) \geq 0$ for all $i,j$,
  \item $M(i,i) \in  \alpha \pm o_d(1)$ for all $i$,
  \item $\E_{j \sim [n]} M(i,j) \geq \alpha^2-o_d(1)$ for every $i$.
\end{enumerate}

The proofs of these basic observations are similar to those presented in Chapter 4.3 of~\cite{TCS-086} (see also the proof of Theorem 5.1 in ~\cite{BK20}):
Observe that $\cA \sststile{4}{} \Set{z_i z_j = z_i^2 z_j^2 \geq 0}$ for every $i$. Thus, by Fact~\ref{fact:sos-completeness}, $\pE[z_i z_j] \geq 0$ for every $i,j$. Next, observe that $\cA \sststile{2}{} \Set{(1-z_i) = (1-z_i)^2 \geq 0}$ for every $i$ and thus, $\cA \sststile{2}{} \Set{z_i (1-z_j) \geq 0}$. Thus, by Fact~\ref{fact:sos-completeness} again, we must have $\pE[z_i z_j] \leq \pE[z_i] \leq \alpha + o_d(1)$. Finally, $\cA \sststile{2}{} \Set{\sum_{j} z_i z_j = z_i \sum_j z_j = \alpha n z_i}$. Thus, by Fact~\ref{fact:sos-completeness} again, we must have $\sum_{j} M(i,j) = \sum_{j} \pE[z_i z_j] = \alpha n \sum_j \pE[z_i] \in (\alpha^2 \pm o_d(1)) n$. Let $B_i$ be the entries in the $i$-th row $M_i$ that are larger than $\alpha^2/2$. Then, by (1) and (2), we immediately derive that $B_i$ must have at least $\alpha n/2$ elements. 
%Let $\eta' = \eta^2/k^3$, choose $\delta = {\eta'}^3/k^{6}$. 
Call an entry of $M$ large if it exceeds $\alpha^2 \eta'$. For each $i$, let $B_i$ be the set of large entries in row $i$ of $M$. Then, using (3) and (1) above gives that $|B_i| \geq \alpha(1-\alpha\eta')n$ for each $1 \leq  i \leq n$. Next, call a row $i$ ``good'' if $\frac{1}{\alpha n}\min \{ \abs{\cup_{r \in L} X_r \cap B_i}, \abs{\cup_{r' \in S} X_{r'} \cap B_i}\} \leq 100 \eta' /\alpha^3 + O(\epsilon/\alpha^4)$. Let us estimate the fraction of rows of $M$ that are good. 

Towards that goal, let us apply Lemma~\ref{lem:simultaneous-intersection-bounds} with $\eta$ set to $\eta'$ and use Fact~\ref{fact:sos-completeness} (SoS Completeness),  to obtain $\sum_{r \in S, r' \in L}\E_{i \in X_r} \E_{j \in X_{r'}} M(i,j)  \leq \eta' +O(\epsilon/\alpha)$. Using Markov's inequality, with probability $1-\alpha^3/100$ over the uniformly random choice of $i$, $\E_{ j \in X_{r'}} M(i,j) \leq 100 \frac{1}{\alpha^3}\eta' + O(\epsilon/\alpha^4)$. Thus, $1-\alpha^3/100$ fraction of the rows of $M$ are good.

Next, let $R$ be the set of $\frac{100}{\alpha} \log\Paren{\frac{k^{50}}{\eta'} }$ rows sampled in the run of the algorithm and set $\hat{C}_i = B_i$ for every $i \in R$. The probability that all of them are good is then at least $(1-\alpha^3/100)^{\frac{100}{\alpha} \log \Paren{\frac{k^{50}}{\alpha \eta'}}} \geq 1-\alpha$. Let us estimate the probability that $|\cup_{i \in R} \hat{C}_i| \geq (1-\eta'/k^{40})n$. For a uniformly random $i$, the chance that a given point $t \in B_i$ is at least $\alpha(1-\alpha \eta')$. Thus, the chance that $t \not \in \cup_{i \in R}B_i$ is at most $(1-\alpha/2)^{100/\alpha \log\Paren{ k^{50}/(\alpha \eta')} } \leq \eta'/k^{50}$. Thus, the expected number of $t$ that are not covered by $\cup_{i \in R} \hat{C}_i$ is at most $n \eta' /k^{50}$. Thus, by Markov's inequality, with probability at least $1-1/k^{10}$, $1-\eta'/k^{40}$ fraction of $t$ are covered in $\cup_{i \in R} \hat{C}_i$. By the above computations and a union bound, with probability at least $1-\eta'/k^{10}$ both the conditions below hold simultaneously: 1) each of the $\frac{100}{\alpha} \log\Paren{ k^{50}/\eta'}$ rows $R$ sampled are good and 2) $|\cup_{i \in R} \hat{C}_i| \geq (1-\eta'/k^{40})n$.  This completes the proof.
\end{proof}

%Let's now condition on the events that   Let $\hat{C}_{\ell+1} =[n] \setminus \cup_{i \leq \ell} \hat{C}_i$ be the set of indices that are not covered in $\cup_{i \in R} \hat{C}_i$. Then, $\cup_{i \leq \ell+1} \hat{C}_i$ covers all points in $[n]$.

% We will show that the following way of grouping this partition into two buckets: $R_L = R \cap \cup_{i \in L} X_i$ and $R_S = R \setminus R_L$ satisfies the requirements of the lemma. To see this, note that $\abs{\cup_{i \in R_L} \hat{C}_i \cap \cup_{i \in S} X_i} \leq n/k 100 k^3 \eta' +O(k^3\tau)$. Similarly, $\abs{\cup_{i \in R_S} \hat{C}_i \cup P \cap \cup_{i \in L} X_i} \leq n/k 100 k^3 \eta' + |P| \leq n/k (100 k^3 \eta' + \eta' k^{-40})+ O(k^3\tau)$. 

% Choosing $\eta' \leq \eta/k^{10}$ completes the proof. 

\subsection{Proof of Lemma~\ref{lem:simultaneous-intersection-bounds}} \label{ssec:simultaneous-intersection-bounds}
% Our analysis will rely on the following simultaneous proxy intersection bounds from \cite{BK20}. Informally, the following two results say that there is a partition of the unknown clusters into two groups such that any solution $z$ to the constraint system above cannot simultaneously have a large intersection with cluster pairs that cross the partition. The methodology of the proof is largely based on the ideas in~\cite{BK20} and the first lemma follows directly by a slight modification of the analysis of the partial clustering algorithm in~\cite{BK20}.

Our proof is based on the following simultaneous intersection bounds from~\cite{BK20}. 
We will use the following lemma that forms the crux of the analysis of the clustering algorithm in~\cite{BK20}:

\begin{lemma}[Simultaneous Intersection Bounds, Lemma~5.4 in~\cite{BK20}] \label{lem:intersection-bounds-non-poly}
Fix $\delta >0, k \in \N$. Let $X = X_1 \cup X_2 \cup \ldots X_k$ be a good sample of size $n$ from a $k$-mixture 
$\sum_i w_i \cN(\mu_i,\Sigma_i)$ of Gaussians. Let $Y$ be any $\epsilon$-corruption of $X$. 
Suppose there are $r,r' \leq k$ such that one of the following three conditions hold for some $\Delta \geq (k/\delta)^{O(k)}$:
\begin{enumerate}
\item there exists a $v$ such that $v^{\top}\Sigma(r') v > \Delta v^{\top}\Sigma(r') v$ and 
$B = \max_{i \leq k} \frac{v^{\top} \Sigma(i)v}{v^{\top}\Sigma(r')v}$, or 

\item there exists a $v \in \R^d$ such that $\Iprod{\mu(r) -\mu(r'),v}_2^2 \geq \Delta^2 v^{\top} \Paren{\Sigma(r) +\Sigma(r')} v$, or, 

\item $\Norm{\Sigma(r')^{-1/2}\Sigma(r)\Sigma(r')^{-1/2} - I}_F^2 \geq 
\Delta^2 \Paren{ \Norm{\Sigma(r')^{-1/2}\Sigma(r)^{1/2}}_{\mathrm{op}}^4}$.
\end{enumerate}
Then, for the linear polynomial $z(X_r) = \frac{1}{\alpha n} \sum_{i \in X_r} z_i$ in indeterminates $z_i$s satisfies:
\[
\Set{ \cup_{i \leq 5}  \cA_i } \sststile{(k/\delta)^{O(k)}\log (2B)}{z} \Set{z(X_r)z(X_{r'}) \leq  O(\sqrt{\delta}) + O(\epsilon/\alpha) }\mper
\]
\end{lemma}

\begin{proof}[Proof of Lemma~\ref{lem:simultaneous-intersection-bounds}]
Without loss of generality, assume that the pair of separated components are $\cN(\mu_1,\Sigma_1)$ and $\cN(\mu_2,\Sigma_2)$. Let us start with the case when the pair is spectrally separated. 
Then, there is a $v \in \R^d$ such that $\Delta v^{\top} \Sigma_1v \leq v^{\top} \Sigma_2 v$. 

Consider an ordering of the true clusters along the direction $v$, renaming cluster indices if needed, such that $v^{\top} \Sigma_1 v \leq v^{\top}\Sigma_2 v \leq \ldots v^{\top}\Sigma_k v$. Let $j \leq k'$ be the largest integer such that $\poly(k/\eta)v^{\top}\Sigma_j v \leq v^{\top}\Sigma_{j+1} v$. Further, observe that since $j$ is defined to be the largest index which incurs separation $\poly(k/\eta)$, all indices in $[j,k]$ have spectral bound at most $\poly(k/\eta)$ and thus $\frac{v^{\top} \Sigma_k v}{v^{\top}\Sigma_j v} \leq \poly(k/\eta)^{k}$.

% \textbf{Case 1}: No pair of clusters $X_{r},X_{r'}$ is spectrally separated.
% In this case, for every direction $v$, either $v^{\top} \Sigma(i) v = 0$ for all $i \leq k'$ or $\frac{v^{\top} \Sigma(r) v}{v^{\top} \Sigma(r') v} \leq \Delta \leq O(s(\poly(\eta/k))^k)$ for all $r,r'$. Thus, in particular, the spread $\kappa \leq \Delta$. Applying Lemma~\ref{lem:intersection-bounds-from-separation-robust} and plugging in the upper bound on $\kappa$ immediately yields that for every $1 \leq r < r' \leq k'$
% \[
% \cA \sststile{O( {k'}^2 s^2 \poly \log (s))}{w} \Set{\sum_{r \neq r'} w(X_r) w(X_{r'}) \leq O(k'\tau) + \eta}\mper
% \]

% Thus, in this case, we recover every cluster approximately and thus can set $S$ and $L$ to be any non-trivial partition (that is, both $S$ and $L$ are non-empty) and finish the proof. 

Applying Lemma~\ref{lem:intersection-bounds-non-poly} with the above direction $v$ to every $r <j$ and $r' \geq j$ and observing that the parameter $B$ in each case is at most $\frac{v^{\top} \Sigma_k v}{v^{\top}\Sigma_j v} \leq \Delta^{k}$ yields:

\[
\cA \sststile{ O({k}^2 s^2 \poly \log (\Delta) )}{z} \Set{z(X_r)z(X_{r'}) \leq O(\epsilon/\alpha) +\sqrt{\delta}}\mper
\]

Adding up the above inequalities over all $r \leq j-1$ and $r' \geq j+1$ and taking $S = [j-1]$, $T = [k]\setminus [j-1]$ completes the proof in this case. 

Next, let us take the case when $\cN(\mu_1,\Sigma_1)$ and $\cN(\mu_2,\Sigma_2)$ are mean-separated. WLOG, suppose $\langle \mu_1,v \rangle \leq \langle \mu_2, v \rangle \ldots \leq \langle \mu_k, v\rangle$. Then, we know that $\langle \mu_k-\mu_1, v \rangle \geq \Delta v^{\top} \Sigma_i v$. Thus, there must exist an $i$ such that $\langle \mu_i - \mu_{i+1},v \rangle \geq \Delta v^{\top} \Sigma_i v/k$. Let $S = [i]$ and $L= [k] \setminus S$. Applying Lemma~\ref{lem:intersection-bounds-non-poly} and arguing as in the previous case (and noting that $\kappa = \poly(k)$) completes the proof.

Finally, let us work with the case of relative Frobenius separation. Since $\|\Sigma_1^{-1/2}\Sigma_k^{1/2}\| \leq \poly(k)$, the hypothesis implies that $\Norm{\Sigma_1- \Sigma_2}_F \geq \Delta/\poly(k)$. Let $B = \Sigma_1 -\Sigma_2$ and let $A = B/\Norm{B}_F$.
WLOG, suppose $\iprod{\Sigma_1,A}\leq \ldots \iprod{\Sigma_k,A}$.  
Then, since $\iprod{\Sigma_k,A} - \iprod{\Sigma_1,A} \geq \Delta/\poly(k)$, there must exist an $i$ such that $\iprod{\Sigma_{i+1},A} - \iprod{\Sigma_{i},A} \geq \Delta/\poly(k)$. Let us now set $S = [i]$ and $L = [k] \setminus S$. 

Then, for every $i \in S$ and $j \in L$, we must have: $\iprod{\Sigma_j,A} - \iprod{\Sigma_i,A}\geq \Delta/\poly(k)$. 
Thus, $\Norm{\Sigma_j-\Sigma_i}_F \geq \Delta/\poly(k)$. 
And thus, $\Delta/\poly(k)  \leq \Norm{\Sigma_j-\Sigma_i}_F \leq \Norm{\Sigma_i^{-1/2}\Sigma_j^{1/2}}_2^2 \Norm{\Sigma_i^{-1/2}\Sigma_j\Sigma_i^{-1/2}-I}_F$. Rearranging and using the bound on $\Norm{\Sigma_i^{-1/2}\Sigma_j^{1/2}}_2^2$ yields that $\Norm{\Sigma_i^{-1/2}\Sigma_j\Sigma_i^{-1/2}-I}_F \geq \Delta/\poly(k)$. 

A similar argument as in the two cases above now completes the proof.

\end{proof}

\subsection{Proof of Lemma~\ref{lem:simultaneous-intersection-bounds-poly-clustering}} 
\label{ssec:simultaneous-intersection-bounds-poly-clustering}

We use $\E_z$ as a shorthand for $\frac{1}{\alpha n}\sum_{i = 1}^n z_i$. We will write $\frac{1}{w_r n} \sum_{j \in X_r} z_j = z(X_r)$. 
Note that $z(X_r) \in [0,1]$. And finally, we will write $z'(X_r) = \frac{1}{w_r n} \sum_{j \in X_r} z_j \1(x_j = y_j)$ -- the version of $z(X_r)$ that only sums over non-outliers. 

We will use the following technical facts in the proof:

\begin{fact}[Lower Bounding Sums, Fact 4.19~\cite{BK20}] \label{fact:lower-bounding-sums-SoS}
Let $A,B,C,D$ be scalar-valued indeterminates. Then, for any $\tau >0$,
\[
\Set{0 \leq A, B \leq A+B \leq 1} \cup \Set{0\leq C,D} \cup \Set{C+D\geq \tau} \sststile{2}{A,B,C,D} \Set{ AC+ BD \geq \tau AB}\mper
\]

\end{fact}

\begin{fact}[Cancellation within SoS, Lemma 9.2 in~\cite{BK20}] \label{fact:cancellation-within-sos-constant-rhs}
For indeterminate $a$ and any $t \in \N$, 

\[
\Set{a^{2t} \leq 1} \sststile{2t}{a} \Set{ a\leq 1}\mper
\]
\end{fact}

% \begin{lemma}[Upper Bound on Variance of Degree 2 Polynomials] \label{lem:variance-w-upper-bound}
% For a $d \times d$ matrix valued indeterminate $Q$.  
% \[
% \cA \sststile{4}{Q} \Set{ \E (Q-\E_z Q)^2 \leq 2C/\alpha^2 \Norm{\Sigma^{1/2}Q\Sigma^{1/2}}_F^2 } \mper
% \]
% \end{lemma}

% \begin{proof}
% $X$ is a sample of size $n \geq n_0 = O(d^{4}/\epsilon^2)$ from $\sum_i w_i \cN(\mu_i,\Sigma_i)$. From Fact~\ref{fact:subgaussianity-of-mixtures}, $\sum_i w_i \cN(\mu_i,\Sigma_i)$ is $4$-certifiably $C/\alpha$-subgaussian. Thus, the uniform distribution on $X$ is $4$-certifiably $2C/\alpha$-subgaussian with probability at least $0.99$. 

% We have:
% \[
% \frac{1}{\alpha n} \sum_i z_i (Q(x_i) - \E_w Q)^2 \leq \frac{1}{\alpha n} \sum_i (Q(x_i)- 1/n\sum_i Q_i)^2 \leq 1/\alpha (2C/\alpha-1) \Norm{\Sigma^{1/2} Q \Sigma^{1/2}}_F^2\mcom
% \]
% where we used that $\Sigma$ is the covariance of $\sum_i w_i \cN(\mu_i,\Sigma_i)$. Applying Fact~\ref{fact:mean-variance-subgaussian} completes the proof.

% \end{proof}

\begin{lemma}[Lower-Bound on Variance of Degree 2 Polynomials] \label{lem:lower-bound-variance}
Let $Q \in \R^{d \times d}$. Then, for any $i,j \leq k$,  and $z'(X_r) = \frac{1}{w_r n} \sum_{i \in X_r} z_i \1(x_i = y_i) $, we have:
\begin{align*}
\cA \sststile{4}{z} \Biggl\{z'(X_r)^2 z'(X_r')^2 \leq  \frac{(32Ct)^{2t}}{(\E_{X_r} Q - \E_{X_{r'}}Q)^{2t}}  & \Biggl(\frac{\alpha^4}{w_r^2 w_{r'}^2}  \Paren{\E_z (Q-\E_z Q)^{2}}^{t} \\
&+ \frac{\alpha^2}{w_r^2} \Paren{\E_{X_{r'}} (Q-\E_{X_{r'}}Q)^{2}}^t + \frac{\alpha^2}{w_{r'}^2} \Paren{\E_{X_r}(Q-\E_{X_r}Q)^{2}}^t\Biggr)\Biggr\}\mper
\end{align*}
\end{lemma}
\begin{proof}
Let $z_i' = z_i \1(x_i = y_i)$ for every $i$. 
Using the substitution rule and non-negativity constraints of the $z_i$'s, we have
\begin{equation}
\label{eqn:lowerbound_expec_Q}
\begin{split}
\cA \sststile{4}{z} \Biggl\{ \E_z (Q-\E_z Q)^{2t}  & = \frac{1}{\alpha^2 n^2} \sum_{i, j \leq n } z_i' z_j' \Paren{Q(x_i - x_j)- \E_z Q}^{2t} \\
& \geq \frac{1}{\alpha^2 n^2} \sum_{i, j \in X_r \text{ or } i,j \in X_{r'}} z_i' z_j' \Paren{Q(x_i - x_j)- \E_z Q}^{2t}\Biggl\}
\end{split}
\end{equation}
Using the SoS almost triangle inequality, we have 

\begin{equation}
\label{eqn:lowerbound_expec_Q_almost_tri}
\begin{split}
\cA \sststile{4}{z} \Biggl\{  & \frac{1}{\alpha^2 n^2} \sum_{i, j \in X_r \text{ or } i,j \in X_{r'}} z_i' z_j' \Paren{Q(x_i - x_j)- \E_z Q}^{2t} \\
& \geq \Paren{ \frac{1}{2^{2t}}} \Paren{ \frac{1}{\alpha^2 n^2} \sum_{i, j \in X_r } z_i' z_j' \Paren{\E_{X_r}Q -\E_z Q}^{2t} - \frac{1}{\alpha^2n^2} \sum_{i, j \in X_r } z_i' z_j' \Paren{Q(x_i - x_j)- \E_{X_r}Q}^{2t} } \\
& \hspace{0.2in} + \Paren{\frac{1}{2^{2t}}}\Paren{ \frac{1}{\alpha^2 n^2} \sum_{i, j \in  X_r } z_i' z_j' \Paren{\E_{X_{r'}}Q- \E_z Q}^{2t}-  \frac{1}{\alpha^2 n^2} \sum_{i, j \in  X_{r'} } z_i' z_j' \Paren{Q(x_i - x_j)- \E_{X_{r'}}Q}^{2t}}\\
&= 2^{-2t} \Paren{ (w_r/\alpha)^2 z(X_r)^2 \Paren{\E_{X_r}Q -\E_z Q}^{2t} - \frac{1}{\alpha^2 n^2} \sum_{i, j \in  X_r } \Paren{Q(x_i - x_j)- \E_{X_r}Q}^{2t}}\\
&\hspace{0.2in}+ 2^{-2t}\Paren{ (w_{r'}/\alpha)^2 z(X_{r'})^2 \Paren{\E_{X_{r'}}Q- \E_z Q}^{2t}- \frac{1}{\alpha^2 n^2} \sum_{i, j \in  X_{r'} } \Paren{Q(x_i - x_j)- \E_{X_{r'}}Q}^{2t} }\Biggl\}
\end{split}
\end{equation} 
Using  Fact~\ref{fact:lower-bounding-sums-SoS}, we can further simplify the above as follows: 

\begin{equation}
\label{eqn:lowerbound_expec_Q}
\begin{split}
\cA \sststile{4}{z} \Biggl\{ \E_z (Q-\E_z Q)^{2t}  
& \geq  2^{-6t} \frac{w_r^2 w_{r'}^2}{\alpha^4} z'(X_{r})^2z'(X_{r'})^2 (\E_{X_r}Q-\E_{X_{r'}}Q)^{2t}  \\& -  2^{-6t}(w_r/\alpha)^2  \E_{X_r} (Q-\E_{X_r}Q)^{2t} - 2^{-6t}
 (w_{r'}/\alpha)^2  \E_{X_{r'}} (Q-\E_{X_{r'}}Q)^{2t} \\
 &\geq 2^{-6t} \frac{w_r^2 w_{r'}^2}{\alpha^4} z'(X_{r})^2z'(X_{r'})^2 (\E_{X_r}Q-\E_{X_{r'}}Q)^{2t} - (w_r/\alpha)^2 (Ct)^{2t}\Paren{\E_{X_r} (Q-\E_{X_r}Q)^{2}}^t\\
&- (w_{r'}/\alpha)^2  (Ct)^{2t}\Paren{\E_{X_{r'}} (Q-\E_{X_{r'}}Q)^{2}}^t \Biggl\}
\end{split}
\end{equation}
where the last inequality follows from the Certifiable Hypercontractivity constraint ($\calA_4$). Rearranging completes the proof.

\end{proof}

We can use the lemma above to obtain a simultaneous intersection bound guarantee when there are relative Frobenius separated components in the mixture. 
\begin{lemma} \label{lem:pair-rel-frob}
Suppose $\Norm{\Sigma^{-1/2}(\Sigma_r-\Sigma_{r'})\Sigma^{-1/2}}_F^2 \geq 10^8 \frac{C^6 t^4}{{\beta}^{2/t}\alpha^2}$. Then, for $z'(X_r) = \frac{1}{\alpha n} \sum_{i \in X_r} z_i \cdot \1(y_i = x_i)$ for every $r$,  
\[
\cA \sststile{2t}{w} \Set{z'(X_r) z'(X_r') \leq \beta}\mper
\]
\end{lemma}
\begin{proof}
We work with the transformation $x_i \rightarrow \Sigma^{-1/2}x_i$. Let $\Sigma_z' = \Sigma^{-1/2}\Sigma_z \Sigma^{-1/2}$, $\Sigma_r' = \Sigma^{-1/2} \Sigma_r\Sigma^{-1/2}$ and $\Sigma_{r'}' =\Sigma^{-1/2}\Sigma_{r'}\Sigma^{-1/2}$ be the transformed covariances. Note that transformation is only for the purpose of the argument - our constraint system does not depend on $\Sigma$.

Notice that $\Norm{\Sigma_{r'}'}_2 \leq \frac{1}{w_r}$ and $\Norm{\Sigma_{r'}'}_2 \leq \frac{1}{w_{r'}}$.

We now apply Lemma~\ref{lem:lower-bound-variance} with $Q = \Sigma_{r}'-\Sigma_{r'}'$. Then, notice that $\E_{X_r}Q - \E_{X_{r'}} Q = \Norm{\Sigma_r' - \Sigma_{r'}'}_F^2 = \Norm{Q}_F^2$. Then, we obtain:

\begin{multline}\label{eq:basic-bound}
\cA \sststile{2t}{z} \Biggl\{z'(X_r)^2 z'(X_{r'})^2 \\\leq \Paren{\frac{32Ct}{\E_{X_r} Q - \E_{X_{r'}}Q}}^{2t}  \Paren{\frac{\alpha^4}{w_r^2 w_{r'}^2} \Paren{\E_z (Q-\E_z Q)^2}^t + \frac{\alpha^2}{w_r^2} \Paren{\E_{X_{r'}} (Q-\E_{X_{r'}}Q)^2}^t + \frac{\alpha^2}{w_{r'}^2} \Paren{\E_{X_r}(Q-\E_{X_r}Q)^2}}^t\Biggr\}\mper
\end{multline}

\nnnew{Since $X_{r}$ and $X_{r'}$ have certifiably $C$-bounded variance polynomials for $C = 4$ (as a consequence of Condition~\ref{cond:convergence-of-moment-tensors} and Fact~\ref{fact:moments-to-analytic-properties} followed by an application of Lemma~\ref{lem:frob-of-product}), we have:}
\[
\cA \sststile{2}{Q} \Set{\E_{X_{r'}} (Q-\E_{X_{r'}}Q)^2 \leq \nnnew{6\Norm{{\Sigma_{r'}'}^{1/2}Q{\Sigma_{r'}'}^{1/2}}_F^2} \leq \frac{6}{w_{r'}^2} \Norm{Q}_F^2} \;,
\]
and
\[
\cA \sststile{2}{Q} \Set{\E_{X_{r}} (Q-\E_{X_{r}}Q)^2 \leq \nnnew{6\Norm{{\Sigma_{r}'}^{1/2}Q{\Sigma_{r}'}^{1/2}}_F^2} \leq \frac{6}{w_{r}^2} \Norm{Q}_F^2} \mper
\]
Finally, using the bounded-variance constraints in $\cA$, we have: 
\[
\cA \sststile{4}{Q,z} \E (Q-\E_z Q)^2 \leq \frac{6}{\alpha^2}\Norm{Q}_F^2\mper
\]

Plugging these estimates back in \eqref{eq:basic-bound} yields:
\begin{equation}\label{eq:final-bound}
\cA \sststile{4}{z} \Set{z'(X_r)^2 z'(X_{r'})^2 \leq \frac{(1000Ct)^{2t}}{\alpha^{2t} \Norm{Q}_F^{2t}}} \mper
\end{equation}

Plugging in the lower bound on $\Norm{Q}_F^{2t}$ and applying Fact~\ref{fact:cancellation-within-sos-constant-rhs} completes the proof. 
\end{proof}

% Observe that for $z'(X_r) = \frac{1}{\alpha n}\sum_{i \in X_r} z_i \1(y_i = x_i)$, using that $\Set{z_i^2=z_i}\sststile{2}{z} \Set{z_i \leq 1}$, we have (as in the proof of proxy intersections bounds in Section 5 of ~\cite{BK20}):
% $\cA \sststile{2}{z} \Set{z(X_r) = \frac{1}{\alpha n}\sum_{i \in X_r} z_i (\1(y_i = x_i)+(1-\1(y_i=x_i))) \leq  z'(X_r)+ \frac{1}{\alpha n}\sum_{i \leq [n]} (1-\1(y_i=x_i)) = z'(X_r) + \epsilon/\alpha}$. Thus, we immediately obtain:

% \begin{corollary}[Proxy Intersection Bounds] \label{cor:rel-frob-poly-clustering}
% Suppose $\Norm{\Sigma^{-1/2}(\Sigma_r-\Sigma_{r'})\Sigma^{-1/2}}_F^2 \geq 10^8 \frac{C^6 t^4}{{(\beta p^4)}^{2/t}}$. Then, for $z(X_r) = \frac{1}{\alpha n} \sum_{i \in X_r} z_i$,  
% \[
% \cA \sststile{2t}{w} \Set{z(X_r) z(X_{r'}) \leq \sqrt{\beta} + O(\epsilon/\alpha)}\mper
% \]
% \end{corollary}

We can use the above lemma to complete the proof of Lemma~\ref{lem:simultaneous-intersection-bounds-poly-clustering}:

\begin{proof}[Proof of Lemma~\ref{lem:simultaneous-intersection-bounds-poly-clustering}]
WLOG, assume that $\Sigma = I$. 
Let $Q = \Sigma_r - \Sigma_{r'}$ and let $\bar{Q}= Q/\Norm{Q}_F$.
Consider the numbers $v_i = \tr(\Sigma_r \cdot Q)$. 
Then, we know that $\max_{i,j} |v_i - v_j| \geq \Norm{Q}_F$. 
Thus, there must exist a partition of $[k] = S \cup T$ such that $|v_i -v_j| \geq \Norm{Q}_F/k$ whenever $i \in S$ and $j \in T$. 
 
Thus, for every $i \in S$  and $j \in T$, $\Norm{\Sigma_i - \Sigma_j}^2_F \geq \Norm{Q}^2_F/k^2 = 10^8\frac{C^6 t^4}{{(\beta^{2/t} \alpha^2)}}$. 
We can now apply Lemma above to every $i \in S, j \in T$, observe that $\cA \sststile{4}{} \Set{z(X_r) z(X_{r'}) \leq z'(X_r) z'(X_{r'}) + 2 \epsilon/\alpha}$, and add up the resulting inequalities to finish the proof.

\end{proof}

\subsection{Special Case: Algorithm for Uniform and Bounded Mixing Weights}
In this subsection, we obtain a polynomial time algorithm when the input mixture has weights that are bounded from below. This includes the case of uniform weights and when the minimum mixing weight is at least some function of $k$. At a high level, our algorithm  partitions the sample into clusters as long as there is a pair of components separated in TV distance and given samples that are not clusterable, runs the tensor decomposition algorithm to list decode. 
We then use standard robust tournament results to pick a hypothesis from the list. 

\begin{theorem}[Robustly Learning Mixtures of Gaussians with Bounded Weights]
\label{thm:robust-GMM-equiweigthed}
Given $0< \epsilon < \bigOk{1}$,
%$0< \epsilon < 1/k^{k^{O(k)}}$ 
let $Y = \{ y_1, y_2, \ldots , y_n\}$ be a multiset of $n \geq n_0=\poly_k\Paren{d, 1/\eps}$ $\epsilon$-corrupted samples 
from a $k$-mixture of Gaussians $\calM =  \sum_{i \leq k} w_i \calN\Paren{\mu_i, \Sigma_i}$, such that $w_i \geq \alpha$. 
Then, there exists an algorithm with running time $\poly_k(n^{1/\alpha} )\cdot\exp\Paren{ \poly_k(1/\alpha,1/\epsilon)}$ 
such that with probability at least $9/10$ it outputs a hypothesis $k$-mixture of Gaussians 
$\widehat{\calM} = \sum_{i \leq k} \hat{w}_i \calN\Paren{\hat{\mu_i}, \hat{\Sigma}_i}$ 
such that $d_{\textrm{TV}}\Paren{\calM, \widehat{\calM}} = \bigOk{\eps}$.  
\end{theorem}

Briefly, our algorithm simply does the following: 
\begin{enumerate}
  \item \textbf{Clustering via SoS}: Guess a partition of the mixture such that each component in the partition is not clusterable. Let the resulting partition have $t \leq k$ components. In parallel, try all possible ways to run Algorithm \ref{algo:robust-partial-clustering} repeatedly to obtain a partition of the samples, $\{\tilde{Y}_j\}_{j \in [t]}$ into exactly $t$ components. For each such partition repeat the following.

  \item \textbf{Robust Isotropic Transformation}: Run the algorithm corresponding to Lemma \ref{lem:sampling_from_nearl_isotropic} on each set $\tilde{Y}_j$ to make the sample approximately isotropic. 
  Grid search for weights over $[\alpha, 1/k]^k$ with precision $\alpha$.
  \item \textbf{List-Decoding via Tensor Decomposition}: Run Algorithm \ref{algo:list-recovery-tensor-decomposition} on each $\tilde{Y}_j$. Concatenate the lists to obtain $\calL$.
   \item \textbf{Robust Tournament}: Run the tournament from Fact~\ref{lem:tournament} over all the hypotheses in $\calL$, and output the winning hypothesis. 
\end{enumerate}

\begin{proof}[Proof Sketch]
 Setting $\Delta = (k^{k^{O(k)}})$, it follows from Theorem \ref{thm:partial-clustering-non-poly} that we obtain a partition of $Y$ into $\{\tilde{Y}_j\}_{j \in [t]}$, for some $t\in[k]$ such that $\tilde{Y}_j$ has at most $\bigO{k\epsilon/\alpha}$ outliers, $\Paren{1-\bigO{k\epsilon/\alpha}}$-fraction of samples from at least one component of the input mixture and the resulting samples are not $\Delta$-separated (see Definition \ref{def:separated-mixture-model}). 
 It then follows from Lemma \ref{lem:sampling_from_nearl_isotropic} that the mean $\mu_j$ and covariance $\Sigma_j$ of $\tilde{Y}_j$ satisfy : a) $\Norm{\mu_j}_2 \leq \bigO{ \sqrt{\epsilon}k^{1.5}/\alpha^{1.5} } $, 
 b) $\Paren{1- \sqrt{\epsilon}k^{1.5}/\alpha^{1.5} } I \preceq \Sigma_j\preceq \Paren{1- \sqrt{\epsilon}k^{1.5}/\alpha^{1.5}  } I$, 
 and c) $\Norm{\Sigma_j - I }_F \leq \bigO{\sqrt{\epsilon}k^{1.5}/\alpha^{1.5}}$. 

 Each component, $\tilde{Y}_j$, of the partition  can have at most $k$ components. Assuming these correspond to $\{w^{(j)}_i,\mu^{(j)}_i \Sigma^{(j)}_i \}_{i \in [k]}$, observe, $\sum_{i \in[k]} w^{(j)}_i \Sigma^{(j)}_i + w^{(j)}_i \mu^{(j)}_i \Paren{\mu^{(j)}_i}^\top \preceq \Paren{1+ \sqrt{\epsilon}k^{1.5}/\alpha^{1.5} } I$. Thus, we have that 
 $\Norm{\mu^{(j)}_i}^2_2 \leq \Paren{1 +\sqrt{\epsilon}k^{1.5} }/\alpha^{2.5}$ 
 and combined with not being $\Delta$-separated, it follows that for all $i' \in[k]$,
\begin{equation*}
\begin{split}
\Norm{\Sigma^{(j)}_{i'} - I}_F  = \Norm{\Sigma^{(j)}_{i'} - \Sigma_j + \Paren{\Sigma_j - I}  }_F 
& \leq  \Norm{ \Sigma^{(j)}_{i'}  - \sum_{i \in[k]} w^{(j)}_i \Sigma^{(j)}_i  +  \sum_{i\in[k]} w^{(j)}_i \mu^{(j)}_i \Paren{\mu^{(j)}_i}^\top }_F + \Norm{\Sigma_j - I }_F \\
& \leq \Norm{   \sum_{i \in[k]} w^{(j)}_i \Paren{ \Sigma^{(j)}_{i'}  - \Sigma^{(j)}_i  }  }_F + \bigO{k^{1.5}/\alpha^{2.5}} \\
& \leq  \bigO{\Delta/\alpha} \;.
\end{split}
\end{equation*}
% Further, $\Sigma^{(j)}_i - I \succeq \frac{1}{\Delta} I$. Note, the total number of non-zero eigenvalues at at most $\bigO{\Delta^2/\alpha}$ and it suffices to restrict to a subspace of the same dimension for exhaustive search in the tensor decomposition. 
There are at most $\bigO{k^k}$ ways in which we can partition the set of input points such that each resulting component is not partially clusterable. We run the algorithm in parallel for each one. 
Then, for the correct iteration, we apply Theorem \ref{thm:list-recovery-by-tensor-decomposition} to get a list $\calL$ of size $\exp( \poly_k(1/\alpha, 1/\epsilon) )$ such that it contains a hypothesis $\{\hat{w}^{(j)}_i, \hat{\mu}^{(j)}_i, \hat{\Sigma}^{(j)}_i \}_{i\in [k]}$ such that $\abs{\hat{w}^{(j)}_i- w^{(j)}_i } \leq \alpha$, $\Norm{\hat{\mu}^{(j)}_i - \mu^{(j)}_i}_2 \leq \bigOk{\epsilon}$ and  $\Norm{\hat{\Sigma}^{(j)}_i - \Sigma^{(j)}_i }_F \leq \bigOk{\epsilon}$. Since $\Paren{1- 1/\Delta } I \preceq \Sigma^{(j)}_i $, it then follows from Lemma \ref{lem:frobenius_to_tv} that the hypothesis is $\bigOk{\epsilon}$-close to the input in total variation distance.

 Algorithm~\ref{algo:robust-partial-clustering} is called at most $\bigO{k^k}$ times, and along with the robust isotropic transformation, this requires $\poly_k\Paren{n^{1/\alpha},1/\epsilon}$. The grid search contributes a multiplicative factor of $(1/\alpha)^k$. The tensor decomposition algorithm and robust hypothesis section  $\poly_k(n^{1/\alpha} )\cdot\exp\Paren{ \poly_k(1/\alpha,1/\epsilon)}$.
\end{proof}

% 

% 

% We run full clustering and tensor decomposition, we get a lower bound in lowner ordering on each covariance of 1/poly(k) to poly(k) and also a poly(k) for Frobenius error. then isotropize to apply tensor decomposition and we immediately get tv closeness since we have additive Frobenius + lower bound on eigenvalues

% \paragraph{Running Time.}   

%
%!TEX root = main.tex

\section{Spectral Separation of Thin Components}
\label{sec:spectral-separation}

\new{In this section, we show how to efficiently separate a thin
component, if such a component exists, given sufficiently accurate
approximations to the component means and covariances.
This is an important step in our overall algorithm and is
required to obtain total variation distance guarantees.

Specifically, the main algorithmic result of this section is described
in the following lemma:
}

\begin{lemma}\label{cor:recursion}
There is a polynomial-time algorithm with the following properties:
Let $\calM=\sum_{i=1}^k w_i G_i$ with $G_i = \cN(\mu_i,\Sigma_i)$ be a $k$-mixture of Gaussians on $\R^d$,
and let $X$ be a set of points in $\R^d$ satisfying Condition \ref{cond:convergence-of-moment-tensors}
with respect to $\calM$ for some parameters $(\gamma,t)$. The algorithm takes input parameters
$\eta, \delta$, satisfying $0< \delta < \eta < 1/(100k)$, and $Y$, an $\eps$-corrupted version of $X$,
as well as candidate parameters $\{\hat{\mu}_i, \hat{\Sigma}_i\}_{i \leq k}$. Then as long as
\begin{enumerate}
\item $\cov(\calM) \succeq I/2$,
\item $\|\mu_i-\hat{\mu}_i\|_2< \delta$ and $\|\Sigma_i - \hat{\Sigma}_i \|_F < \delta$, for all $i \in [k]$, and
\item there exists an $s \in [k]$ such that $\Sigma_s$ has an eigenvalue $< \eta$,
\end{enumerate}
the algorithm outputs a partition of $Y$ into $Y_1 \cup Y_2$
\new{such that there is a non-trivial partition of $[k]$ into $Q_1\cup Q_2$,
so that letting $\calM_j$, $j \in \{1, 2\}$, be proportional to
$\sum_{i\in Q_j}w_i G_i$ and $W_j =\sum_{i\in Q_j} w_i$,
then $Y_j$ is an $((O(k^\newblue{2}\gamma)+\tilde O(\eta^{1/2k}))/\newblue{W}_j)$-corruption of a set satisfying Condition \ref{cond:convergence-of-moment-tensors} with respect to $\calM_j$
with parameters $(O(k\gamma/\newblue{W}_j),t)$. }
\end{lemma}

The key component in the proof of Lemma~\ref{cor:recursion} is the following lemma:

\begin{lemma} \label{lem:recursion}
Let $\calM = \sum_{i=1}^k w_i G_i$ with $G_i = \cN(\mu_i,\Sigma_i)$
be a $k$-mixture of Gaussians in $\R^d$ with $\cov(\calM) \succeq I/2$. Suppose that, for some
$0< \delta < 1/(100k)$, we are given $\hat{\mu}_i$ and $\hat{\Sigma}_i$ satisfying
$\|\mu_i-\hat{\mu}_i\|_2< \delta$ and $\|\Sigma_i - \hat{\Sigma}_i \|_F < \delta$, for all $i \in [k]$.
Suppose furthermore that for some $\eta>\delta$, there is a $\Sigma_{s}$, $s \in [k]$,
with an eigenvalue less than $\eta$. \new{There exists a computationally efficient algorithm that takes
inputs $\eta$, $\delta$, $\hat{\mu}_i$, $\hat{\Sigma}_i$, and computes a function $F: \R^d \to \{0,1\}$ such that:}
\begin{enumerate}
\item For each $i \in [k]$, $F(G_i)$ returns the same value in $\{0,1\}$ with probability
at least $1-\tilde{O}_k(\eta^{1/(2k)})$. We define the most likely value of $F(G_i)$ to be this value.
\item There exist $i, j \in [k]$ such that the most likely values of $F(G_i)$ and
$F(G_j)$ are different.
\end{enumerate}
\new{Furthermore, $F(x)$ can be chosen to be of the form $f(v\cdot x)$,  for some $v\in \R^d$,
and $f:\R \rightarrow \{0,1\}$ is an $O(k)$-piecewise constant function.}
\end{lemma}

\noindent Given Lemma~\ref{lem:recursion}, it is easy to finish the proof of Lemma~\ref{cor:recursion}.

\begin{proof}[Proof of Lemma~\ref{cor:recursion}]

We simply take the candidate parameters, obtain $F$ from Lemma~\ref{lem:recursion},
and partition $Y = Y_1 \cup Y_2$, so that $F$ is constant on both $Y_1$ and $Y_2$. \new{We let $Q_j$ be the set of $i$ so that $F(G_i)$ returns the value $j-1$ with large probability. Letting the partition of $X$ for Condition \ref{cond:convergence-of-moment-tensors} be $X=X_1\cup \ldots \cup X_k$, we let $X^j = \bigcup_{i\in Q_j} X_i$. Lemma \ref{submixture condition lemma} shows that the $X^j$ satisfy the appropriate conditions for $\calM_j$.
It remains to prove that $Y_j$ equals $X^j$ with a sufficiently small rate of corruptions.}
The fraction of points misclassified by $F$ equals $\epsilon$ (the fraction of outliers in the sample $Y$)
plus the misclassification error of $F$. \new{We note that given the form of $F$ and the fact that
the uncorrupted samples in $Y$ satisfy Condition \ref{cond:convergence-of-moment-tensors}, the fraction of misclassified samples from each component $i$ is at most the probability that a random sample from $G_i$ gets misclassified (at most $\tilde{O}_k(\eta^{1/(2k)})$ by Lemma \ref{lem:recursion}) plus $O(k\gamma)$. Summing this over components, gives Lemma \ref{cor:recursion}.}
\end{proof}

Let us now describe the algorithm to prove Lemma~\ref{lem:recursion} (and evaluate $F$),
which is given in pseudocode below (Algorithm~\ref{algo:recursive-clustering}).

% By applying the function $F$ constructed in Lemma~\ref{lem:recursion} to each point in an $\epsilon$-corrupted sample, we immediately obtain the following clustering algorithm that succeeds whenever there's a covariance with a small enough eigenvalue:

% \begin{corollary}\label{cor:recursion}
% Under the assumptions of Lemma~\ref{lem:recursion}, given a sample of size $n$ from $\calM$,
% we can efficiently cluster the points into two corrupted Gaussian mixtures,
% such that in each subset the number of corruptions is at most $\tilde{O}_k(\eta^{1/(2k)}) \, n$
% and the number of components is at least one.
% \end{corollary}

%\inote{We can probably remove the second term in the error above. It is always dominated since $\eta> \delta$.}
\begin{mdframed}
  \begin{algorithm}[Algorithm for Spectrally Separating Thin Components]
    \label{algo:recursive-clustering}\mbox{}
    \begin{description}
    \item[Input:] %An $\epsilon$-corrupted sample $Y$ from a mixture of $k$ Gaussians, $\sum_{i \in [k]} w_i \cN(\mu_i,\Sigma_i)$.
    Estimated parameters $\Set{\hat\mu_i,\hat\Sigma_i}_{i \leq k}$, \new{parameters $\eta, \delta$}.
    %>\delta$. %such that $\|\mu_i-\hat{\mu}_i\|_2< \delta$ and $\|\Sigma_i - \hat{\Sigma}_i \|_F < \delta$ for all $i$.
     %such that there is a $\Sigma_{s}$ with an eigenvalue less than $\eta$.

    \item[Output:] A function $F: \R^d \rightarrow \{0,1\}$. %A partition of $Y$ into partial clustering $Y_1\cup Y_2$.

    \item[Operation:]\mbox{}
       \begin{enumerate}
       \item Find a unit-norm direction $v$ such that there exists $s\in[k]$, $v^T \hat{\Sigma}_{s} v < 2\eta$.
       \item Compute $(v^T \hat{\Sigma}_i v)$ for all $i \in [k]$.
       \begin{enumerate}
       \item If there exists $j \in [k]$ such that $(v^T \hat{\Sigma}_j v) > \sqrt{\eta}$,
       find a $t$ such that $\sqrt{\eta} > t > 2\eta$ and there is no $j \in [k]$ with
       $t < v^T \hat{\Sigma}_j v < t \, \Omega(\eta^{-1/(2k)})$. Set $F(x) = 1$ if there is an $i$ such that $|v \cdot (x-\hat{\mu}_i) | < \sqrt{t} \log(1/\eta)$ and $0$ otherwise.
      % Output $Y_1=\{x\in Y: \exists i, |v \cdot (x-\hat{\mu}_i) | < \sqrt{t} \log(1/\eta)\}$ and $Y_2=Y \setminus Y_1$.
       \item Otherwise, compute $v \cdot \hat{\mu}_i$ for all $i \in [k]$. Find a $t$
       between the minimum and the maximum of $v \cdot \hat{\mu}_i$ such that
       there is no $v \cdot \hat{\mu}_{i}$ within $1/(20k)$ of $t$.
       Set $F(x) = 1$ if $v\cdot x >t$ and $0$ otherwise.
       %Output $Y_1=\{x\in Y: v\cdot x >t\}$ and $Y_2=Y \setminus Y_1$.
       \end{enumerate}
       \end{enumerate}
    \end{description}
  \end{algorithm}
\end{mdframed}

\medskip

\begin{proof}[Proof of Lemma~\ref{lem:recursion}]

Let $v$ be a unit vector and $s \in [k]$ such that $v^T \hat{\Sigma}_{s} v < 2\eta$.
By assumption, we have that $\var{v \cdot \calM} \geq 1/2$.
Furthermore,
$$\var{v \cdot \calM} = \sum_i w_i (v^T \Sigma_i v) + \sum_i w_i (v \cdot (\mu_i-\mu))^2
\leq \sum_i w_i (v^T \Sigma_i v) + \sum w_i (v \cdot (\mu_i-\mu_{s}))^2 \;,$$
where $\mu$ is the mean of $\calM$. This means that either
there exists  $j \in [k]$ such that $(v^T \Sigma_j v) > 1/4$, or there exists
$j \in [k]$ such that $|v \cdot (\mu_j-\mu_s)| > 1/4$.
Since we have approximations of these quantities to order $\delta$,
we have that there is $j \in [k]$ such that $(v^T \hat{\Sigma}_j v) > 1/10$
or that there is $j \in [k]$ with $|v \cdot (\hat{\mu}_j-\hat{\mu}_s)| > 1/10$.

We first consider the case that there is a $j \in [k]$ such that $(v^T \hat{\Sigma}_j v) > \sqrt{\eta}$.
Since there is a $j \in [k]$ with $(v^T \hat{\Sigma}_j v) > \sqrt{\eta}$ and
another $s \in [k]$ with $(v^T \hat{\Sigma}_s v) < 2\eta$,
there must be some $\sqrt{\eta} > t > 2\eta$ such that there is no $j \in [k]$
with $t < v^T \hat{\Sigma}_j v < t \, \Omega(\eta^{-1/(2k)})$.
Otherwise, there must be at least one $\hat\Sigma_i$ in each $2\eta\le\Omega(\eta^{-1/(2k)})^i\le\sqrt{\eta}$,
where we need more than $k$ components.

For a given $x$, we define $F(x)$ to be $1$ if there exists $i$ such that
$|v \cdot (x-\hat{\mu}_i) | < \sqrt{t} \log(1/\eta)$, and $F(x) = 0$ otherwise.

To show that this works, we note that for all $i \in [k]$,
if $v^T \hat{\Sigma}_i v \leq t$, then $\var{v \cdot G_i} \le t+\delta$,
and since $|v \cdot (\mu_i-\hat{\mu}_i)|<\delta$,
by the Gaussian tail bound, we have that
$$
\Pr_{x\sim G_i}\left(|x-\mu_i|\ge(\sqrt{t} \log(1/\eta)-\delta)\right) \leq
\exp\left(-\frac{(\sqrt{t} \log(1/\eta)-\delta)^2}{2(t+\delta)}\right)=O(\eta) \;.
$$
Thus, all but an $O(\eta)$-fraction of the samples of $G_i$ have $F(x)=1$.

On the other hand, for components $i$ with
$v^T \hat{\Sigma}_i v \gg t \eta^{-1/(2k)}$, we have that
$\var{v \cdot G_i} \gg t \eta^{-1/(2k)}$.
Then, the density of $G_i$ is at most $1/\sqrt{2\pi t \eta^{-1/(2k)}}$.
So, the probability that a sample from $v \cdot G_i$ lies in any interval of length $2\sqrt{t} \log(1/\eta)$ is at most
$$
\frac{1}{\sqrt{2\pi t \eta^{-1/(2k)}}}2\sqrt{t} \log(1/\eta)=\tilde{O}(\eta^{1/(4k)}) \;.
$$
Since there are $k$ such intervals, the probability that $F(x)$ is $1$ when $x$ is drawn from $G_i$
is at most $\tilde{O}_k (\eta^{1/(4k)})$. This completes our proof of point (1), and point (2)
follows from the fact that we know of component $G_j$ in one class and $G_s$ in the other class.

We next consider the case where $(v^T \hat{\Sigma}_j v) \leq  \sqrt{\eta}$ for all
$j \in [k]$, and where $|v \cdot (\hat{\mu}_j - \hat{\mu}_s)| > 1/10$ for some $j \in [k]$.
Then we can find some $t$ between $v \cdot \hat{\mu}_j$ and $v \cdot \hat{\mu}_s$ such
that no $v \cdot \hat{\mu}_{i}$ is within $1/(20k)$ of $t$.
In this case, we define $F(x)$ be $1$ if $v \cdot x > t$ and $0$ otherwise.
To show part (1), first consider $i \in [k]$ such that $v \cdot \hat{\mu}_i  < t-1/(20k)$.
Then we have that $v \cdot \mu_i < t-1/(30k)$. Furthermore,
$\var{v \cdot G_j} \le \delta+\sqrt{\eta}$. Therefore, the probability that $v \cdot G_i > t$
is at most $\exp(-\Omega_k((\delta+\sqrt{\eta})^{-2}))$, which is sufficient.

A similar argument holds in the other direction for $i \in [k]$ such that $v \cdot \hat{\mu}_i > t+1/(20k)$,
and statement (2) holds because we know that there are both kinds of components.
This completes the proof.
\end{proof}

% \begin{proof}[Proof of Lemma~\ref{cor:recursion}]
% The algorithm establishing Lemma~\ref{lem:recursion} is given in pseudocode below
% (Algorithm~\ref{algo:recursive-clustering}).

% \end{proof}

% \end{document}

%!TEX root = main.tex

\section{Robust Proper Learning: Proof of Theorem~\ref{thm:main-informal}}
\label{sec:full-algo-analysis}

In this section, we show how to combine the partial clustering, tensor decomposition,  and recursive clustering algorithms to establish
our main result. The main theorem we prove is as follows: 

%---main theorem---
\begin{theorem}[Robustly Learning $k$-Mixtures of Arbitrary Gaussians] \label{thm:robust-GMM-arbitrary}
Given $0< \epsilon < 1/k^{k^{O(k^2)}}$ and a multiset $Y = \{ y_1, y_2, \ldots , y_n\}$ of $n$ i.i.d. 
samples from a distribution $F$ such that $\dtv(F, \calM) \leq \eps$,
%$\epsilon$-corrupted samples 
for an unknown $k$-mixture of Gaussians $\calM = \sum_{i \leq k} w_i\calN\Paren{\mu_i, \Sigma_i}$, 
where \new{$n \geq n_0=d^{O(k)}/\poly(\eps)$,} 
Algorithm \ref{algo:robust-GMM-arbitrary} runs 
\new{in time $n^{O(1)}\exp\left(O(k)/\eps^2\right)$}
and with probability at least $0.99$ outputs a hypothesis $k$-mixture of Gaussians 
$\widehat{\calM} = \sum_{i \leq k} \hat{w}_i \calN\Paren{\hat{\mu_i}, \hat{\Sigma}_i}$ 
such that $d_{\textrm{TV}}\Paren{\calM, \widehat{\calM}} = \bigO{\eps^{c_k}}$, 
with $c_k=1/(100^k C^{(k+1)!}k!\mathrm{sf}(k+1))$, where $C>0$ is a universal constant 
and $\mathrm{sf}(k)=\Pi_{i \in[k]}(k-i)!$ is the super-factorial function.  
\end{theorem} 

As an immediate corollary, we obtain the following:

\begin{corollary}[Robustly Learning $k$-Mixtures of Gaussians in Polynomial Time] \label{thm:robust-GMM-arbitrary-poly-time}
Given $0< \epsilon < 1/\exp\left(k^{k^{O(k^2)}}\right)$,  and a 
multiset $Y = \{ y_1, y_2, \ldots , y_n\}$ of $n$ i.i.d. samples from a distribution $F$ 
such that $\dtv(F, \calM) \leq \eps$, for an unknown $k$-mixture of Gaussians 
$\calM = \sum_{i \leq k} w_i\calN\Paren{\mu_i, \Sigma_i}$, 
%an $\epsilon$-corrupted set of samples $Y = \{ y_1, y_2, \ldots , y_n\}$ 
where $n \geq n_0=d^{O(k)}\log^{O(1)}(1/\epsilon)$, 
there exists an algorithm that runs in time $\poly_k(n,1/\eps)$ \Anote{is this accurate?}
and with probability at least $0.99$ outputs a $k$-mixture of Gaussians 
$\widehat{\calM} = \sum_{i \leq k} \hat{w}_i \calN\Paren{\hat{\mu_i}, \hat{\Sigma}_i}$ 
such that $\dtv \Paren{\calM, \widehat{\calM}} = \bigO{ \left(1/\log(1/\epsilon)\right)^{1/\left(k^{O(k^2)}\right) }}$.
\end{corollary} 

\noindent The corollary follows by running Algorithm \ref{algo:robust-GMM-arbitrary} with $\epsilon \gets \sqrt{1/\log(1/\epsilon)}$ 
and applying Theorem \ref{thm:robust-GMM-arbitrary}.

\medskip

The algorithm establishing Theorem~\ref{thm:robust-GMM-arbitrary} is given in pseudocode below. Algorithm~\ref{algo:cluster_or_decode} takes as input a corrupted sample from a $k$-mixture of Gaussians 
and outputs a set of $k$ mixing weights, means, and covariances, 
such that the resulting mixture is close to the input mixture in total variation distance 
with non-negligible probability.
Algorithm~\ref{algo:robust-GMM-arbitrary} simply runs Algorithm~\ref{algo:cluster_or_decode} many 
times to create a small list of candidate hypotheses (consisting of mixing weights, means, and covariances), and finally runs a robust tournament to outputs a winner. This boosts the probability of success to at least $0.99$.

\medskip

%!TEX root = main.tex
\begin{mdframed}
 \begin{algorithm}[Algorithm for Robustly Learning Arbitrary GMMs]
   \label{algo:robust-GMM-arbitrary}\mbox{}
   \begin{description}
   \item[Input:] An outlier parameter $\epsilon>0$ and a component-number parameter $k\in \N$. 
   An $\epsilon$-corrupted sample $Y = \{y_1, y_2, \ldots, y_n \}$ from a $k$-mixture of Gaussians 
   $\calM = \sum_{i \in [k]} w_i \cN(\mu_i,\Sigma_i)$. 

   \item[Parameters:]  Let  $c_k=1/(100^k C^{(k+1)!}\mathrm{sf}(k+1)k!)$ be a scalar function of $k$, where $\textrm{sf}(k)=\prod_{i\in[k]}(k-1)!$ and $C$ is a sufficiently large constant.
   
   \item[Output:] A set of parameters $\{ (\hat w_i, \hat \mu_i, \hat\Sigma_i)\}_{i \in [ k]}$,  such that with probability at least $0.99$ the mixture 
   $\wh{\calM} = \sum_{i \in [ k]} \hat w_i \cN(\hat \mu_i, \hat \Sigma_i)$ is $\bigO{\epsilon^{c_k}}$-close in total variation distance to $\calM$. 

   \item[Operation:]\mbox{}
   \begin{enumerate}
   %% Make mixture isotropic.  
      %For $m \leq 4k$, compute $\hat T_m$ such that $\max_{m \leq 4k} \Norm{T_m - \E h_m(X)}_F \leq ??$ using robust mean estimation algorithm. 
   \item Let $\calL = \{ \phi\}$ be an empty list. Repeat the following $\exp\left(O(k)/\eps^2\right)$ times : 
   %\item Let $G$ be a grid of width $\epsilon$ over %the interval 
 %   $[\epsilon,1]^k$. For every vector $\hat{w} \in G$ such that 
 %   $\sum_{i \in [k]} \hat w_i \in [1-k\eps, 1+k\eps]$, let $\{\hat{w}_i\}_{i \in [k]}$ be the estimates of the weights of $\calM$. 
 %   Perform the following steps in parallel for all such candidates $\hat{w}$.
 % \\  \textit{// There exists $\hat{w} \in G$ such that for all $i$, $\abs{\hat{w}_i - w_i}\leq \epsilon$. }
   \begin{enumerate}
      \item Run Algorithm \ref{algo:cluster_or_decode} with input $Y$, fraction of outliers $\eps$, and number of components $k$. 
      Let the resulting output be a set of $k$ mixing weights, means and covariances, denoted by 
      $\left\{ (\hat w_{i}, \hat \mu_i ,\hat \Sigma_i)\right\}_{i \in [k]} $. 
      Add $\left\{ (\hat w_{i}, \hat \mu_i ,\hat \Sigma_i)\right\}_{i \in [k]} $ to $\calL$.
   \end{enumerate}
   \item Run the robust tournament from Fact~\ref{lem:tournament} over all the hypotheses in 
   $\calL$. Output the winning hypothesis, denoted by 
   $\{(\hat w_i,\hat \mu_i,\hat\Sigma_i)\}_{i\in[ k]}$. 
   \end{enumerate}
   
   \end{description}
 \end{algorithm}
\end{mdframed}

\medskip

\begin{mdframed}
 \begin{algorithm}[Cluster or List-Decode]
   \label{algo:cluster_or_decode}\mbox{}
   \begin{description}
   \item[Input:] An outlier parameter $0<\epsilon<1$ and a component-number parameter $k\in \N$. 
   An $\epsilon$-corrupted version $Y = \{y_1, y_2, \ldots, y_n \}$ of $X$, 
   where $X$ is a set of $n$ samples from a $k$-mixture of Gaussians 
   $\calM = \sum_{i \in [k]} w_i \cN(\mu_i,\Sigma_i)$ such that 
   % %$w_i \geq \alpha$ for some $0< \alpha <1/k$ and 
   $X$ satisfies Condition~\ref{cond:convergence-of-moment-tensors} with respect to $\calM$ 
   with parameters $(\eps d^{-8k}k^{-C'k},8k+48)$,  where $C'>0$ is a sufficiently large constant.

   \item[Parameters:]  Let  $c_k=1/(100^k C^{(k+1)!}\mathrm{sf}(k+1)k!)$ be a scalar function of $k$,
   %. For $\ell \in [k]$, let $f_{k,\ell} = \textrm{sf}(k-\ell+1)/(10^\ell c^{(k+1)!/(k-\ell+1)!}\textrm{sf}(k+1))$
   where $\textrm{sf}(k)=\prod_{i\in[k]}(k-1)!$ and $C$ is a sufficiently large constant.  

   \item[Output:] A set of parameters $\{ (\hat{w}_i, \hat \mu_i, \hat\Sigma_i)\}_{i \in [k]}$ such that with probability at least 
   $\exp\left(-O(k)/\eps^2\right)$, 
   $d_{\textrm{TV}}\left(\sum_{i \in [k]}\hat{w}_i \calN(\hat{\mu}_i, \hat{\Sigma}_i) , \calM \right) \leq \bigO{\epsilon^{c_k}}$.

   %such that the mixture 
   %$\wh{\calM} = \sum_{i \leq k} \hat w_i \cN(\hat \mu_i, \hat \Sigma_i)$ is $\poly_k(\epsilon)$-close in total variation distance to $\calM$. 

   \item[Operation:]\mbox{}
   \begin{enumerate}
   %% Make mixture isotropic.  
      %For $m \leq 4k$, compute $\hat T_m$ such that $\max_{m \leq 4k} \Norm{T_m - \E h_m(X)}_F \leq ??$ using robust mean estimation algorithm. 
%    \item Let $G$ be a grid of width $\epsilon$ over %the interval 
%    $[\epsilon,1]^k$. For every vector $\hat{w} \in G$ such that 
%    $\sum_{i \in [k]} \hat w_i \in [1-k\eps, 1+k\eps]$, let $\{\hat{w}_i\}_{i \in [k]}$ be the estimates of the weights of $\calM$. 
%    Perform the following steps in parallel for all such candidates $\hat{w}$.
%  \\  \textit{// There exists $\hat{w} \in G$ such that for all $i$, $\abs{\hat{w}_i - w_i}\leq \epsilon$. }
% %    \begin{enumerate}
\item \textbf{Treat Light Component as Noise}: {If $k = 0$, ABORT.} With probability $1/2$, run Algorithm \ref{algo:cluster_or_decode} on samples $Y$, with fraction of outliers $\epsilon + \epsilon^{1/(10C^{k+1}(k+1)!)}$ and number of components $k-1$. { Return the resulting set of estimated parameters, $\{ (\hat{w}_i, \hat \mu_i, \hat \Sigma_i) \}_{i \in [k-1]}$, appended with $(0, 0,I)$.} Else, do the following:
\\ \textit{// We guess whether the event that the minimum mixing weight $\alpha$ is at least $\epsilon^{1/(10C^{k+1}(k+1)!)}$ \\ // holds. If it does not, we proceed with the algorithm. Else, we treat the 
smallest weight \\
// component as noise and recurse with $k-1$ components.}
%    %Let $\tau = \{ \epsilon^{1/8k\ell} \mid \ell \in [1, \log(1/\epsilon)/8] \}$. 
%    Find the smallest integer $\ell\in [0,k]$ such that there are $\ell$ components with weights $\hat{w}_i< \eps^{f_{k,\ell}}$ and the remaining  $k-\ell$ components have weights $\hat{w}_i\ge \eps^{f_{k,\ell+1}}$. If $\ell>0$, recurse by calling Algorithm \ref{algo:robust-GMM-arbitrary} with samples $Y$, number of components $k-\ell$, fraction of outliers $\zeta=\ell \eps^{f_{k,\ell}}+\eps$ and minimum weight $\gamma=\eps^{f_{k,\ell+1}}$.
% \\  \textit{//Lemma \ref{lem:gap_in_weights} shows that such an $\ell$ always exists. The fraction of outliers goes from $\epsilon \to \ell \eps^{f_{k,\ell}}$ and the minimum weight goes from $\min_i w_i \to \eps^{f_{k,\ell+1}}$. }
%Else if $\ell=0$, let the fraction of outliers be $\zeta=\eps$ and the 
%minimum weight be $\gamma=\eps^{f_{k,1}}=\eps^{1/(10C^{k+1}(k+1)!)}$.
\item \textbf{Robust Isotropic Transformation}: \nnnew{With probability $0.5$,} run the algorithm corresponding to 
Lemma~\ref{lem:sampling_from_nearl_isotropic} on the samples $Y$, and let $\hat{\mu},\hat{\Sigma} $ be the robust estimates 
of the mean and covariance. If $k=1$, return $\left(\hat{w}=1, \hat{\mu}, \hat{\Sigma} \right)$. Else, compute $\hat{U} \hat{\Lambda} \hat{U}^\top $, the eigendecomposition of $\Sigma$, and for all $i \in [n]$, apply the affine transformation $y_i \to \hat{U}^\top \hat{\Sigma}^{\dagger/2}\Paren{y_i- \hat{\mu}}$.
 \\ \textit{// The resulting estimates $\hat{\mu}, \hat{\Sigma}$ satisfy Lemma \ref{lem:sampling_from_nearl_isotropic}, and the uncorrupted samples are\\// effectively drawn from a nearly isotropic $k$-mixture.}

\item With probability $1/2$, run either $(a)$ or $(b)$ in the following: 
%    \item  Then, consider the event $\zeta = \{\exists \hspace{0.05in} \calN(\mu_i, \Sigma_i), \calN(\mu_j, \Sigma_j) \mid \Norm{\Sigma_i - \Sigma_j }_F > \Paren{1/\epsilon}^{1/100k} \}$. Guess whether $\zeta$ is true. 
%    \\
%    \textit{//The event $\zeta$ captures whether the mixture is covariance separated, in Frobenius norm. If $\zeta_1$ is true, we can run a clustering algorithm, otherwise we can list-decode.}
%\item If $\ell = 0$, do the following in parallel:
\begin{enumerate}
   \item \textbf{Partial Clustering via SoS}: Run Algorithm \ref{algo:robust-partial-clustering} with outlier parameter $\epsilon$ and accuracy parameter $\epsilon^{1/(5C^{k+1}(k+1)!)}$. Let $Y_1$, $Y_2$ be the partition returned. Guess the number of components in $Y_1$ to be some $k_1 \in [k-1]$ uniformly at random. Run Algorithm \ref{algo:cluster_or_decode} with input $Y_1$, fraction of outliers $ \epsilon^{1/(10c^{k+1}(k+1)!)}$, and number of components $k_1$, and let $\Big\{ ( \hat w^{(1)}_i, \hat \mu^{(1)}_i, \hat \Sigma^{(1)}_i )\Big\}_{i \in [k_1]}$ be the resulting output. Similarly, run Algorithm \ref{algo:cluster_or_decode} with input $Y_2$, fraction of outliers $\epsilon^{1/(10c^{k+1}(k+1)!)}$, and number of components $k - k_1$, and let $\Big\{ ( \hat w^{(2)}_i, \hat \mu^{(2)}_i, \hat \Sigma^{(2)}_i )\Big\}_{i \in [k-k_1]}$ be the resulting output. 
   \nnnew{Output the set $\Big\{ ( \hat w^{(1)}_i |Y_1|/|Y| ,  \hat \mu^{(1)}_i, \hat{\Sigma}^{(1)}_i   )\Big\}_{i \in [k_1]} \cup \Big\{ ( \hat w^{(2)}_i |Y_2|/|Y|,   \mu^{(2)}_i,  \hat \Sigma^{(2)}_i )\Big\}_{i \in [k-k_1]}$.} 
   % Output the set $\Big\{ ( \hat w^{(1)}_i |Y_1|/|Y| ,  \hat{U} \hat{\Lambda}^{1/2} \hat{U}^\top  \hat \mu^{(1)}_i + \hat \mu, \hat{U} \hat{\Lambda}^{1/2} \hat{U}^\top \hat{\Sigma}^{(1)}_i \hat{U} \hat{\Lambda}^{1/2} \hat{U}^\top  )\Big\}_{i \in [k_1]} \cup \Big\{ ( \hat w^{(2)}_i |Y_2|/|Y|,  \hat{U} \hat{\Lambda}^{1/2} \hat{U}^\top \mu^{(2)}_i + \hat \mu,  \hat{U} \hat{\Lambda}^{1/2} \hat{U}^\top \hat \Sigma^{(2)}_i \hat{U} \hat{\Lambda}^{1/2} \hat{U}^\top )\Big\}_{i \in [k-k_1]}$.
   %$Y_2$ to be $k_1$ and $k_2 = k - k_1$ respectively. Recurse, by running Algorithm \ref{algo:robust-GMM-arbitrary} on $Y_1$, with number of components being $k_1$, and the fraction of outliers $$ and similarly on $Y_2$. 
%Combine the lists obtained from all recursive calls.   
      \\ \textit{// When the mixture is covariance separated, the preconditions of Theorem \ref{thm:partial-clustering-poly} 
      are  \\
      // satisfied (see Lemma \ref{lem:covariance_sep}). The partition is non-trivial, and the fraction of outliers \\
      // increases from $\epsilon \to \epsilon^{1/(10c^{k+1} (k+1)!)}$ .}   

   \item \textbf{ List-Decoding via Tensor Decomposition}: Run Algorithm \ref{algo:list-recovery-tensor-decomposition} 
   and let $L$ be the resulting list of hypotheses such that each hypothesis is a set of parameters $\{(\hat{\mu}_i, \hat{\Sigma}_i) \}_{i \in[k]}$. 
   Let $\tau = \Theta\Paren{  \epsilon^{1/(40C^{k+1}(k+1)!)}}$ be an eigenvalue threshold. 
   %Let $\calL$ be the list obtained by combining 
   %the estimates of the weights $\hat{w}_i$ with $\hat{\mu}_i, \hat{\Sigma}_i$. 
   Select a hypothesis, $\{(\hat{\mu}_i, \hat{\Sigma}_i) \}_{i \in[k]} \in L$ uniformly at random. 
 \\  \textit{// Conditioned on not being covariance separated, we satisfy the preconditions of \\
 // Theorem \ref{thm:list-recovery-by-tensor-decomposition} (see Lemma \ref{lem:list_decoding}). 
 The output is a list that contains $\{\hat{\mu}_i, \hat{\Sigma}_i\}_{i \in [k]}$  \\
 // such that for all $i \in [k]$, $\Norm{ \hat{\mu}_i - \mu_i }_2 =\bigO{\epsilon^{1/(20C^{k+1}(k+1)!)}}$ and \\ 
 // $\Norm{ \hat{\Sigma}_i - \Sigma_i }_F =\bigO{\epsilon^{1/(20C^{k+1}(k+1)!)}} $.}
  
   \begin{enumerate}
   \item \textbf{Large Eigenvalues}: If for all $i \in [k]$, $\hat{\Sigma}_i \succeq \tau I$, 
   sample $\hat{w}_i$ from $[0,1]$ uniformly at random such that $\sum_{i} \hat w_i = (1\pm k\epsilon)$. 
   Return 
   $\Big\{ \Paren{ \hat{w}_i,  \hat{U} \hat{\Lambda}^{1/2} \hat{U}^\top\hat{\mu}_i + \hat\mu,  \hat{U} \hat{\Lambda}^{1/2} \hat{U}^\top \hat{\Sigma}_i \hat{U} \hat{\Lambda}^{1/2} \hat{U}^\top  } \Big\}_{i \in [k]}$. \\
   \textit{// If all estimated covariances have all eigenvalues larger than $\tau$, the recursion \\
   // bottoms out and the hypothesis is returned.
   %it follows from Lemma \ref{lem:frobenius_to_tv} that one hypothesis is $\bigOk{\epsilon^{c_k}}$-close\\
   %// in parameter distance.
   }
   \item \textbf{Spectral Separation of Thin Components}: Else, $\exists v, i$ s.t. $v^\top \hat{\Sigma}_i v \leq \tau$.  
   %compute a direction $v$ such that there exists a $\hat{\Sigma}_i$ which satiesfies $v^\top \hat{\Sigma}_i v \leq \bigOk{\epsilon^{f_k/8}}\Norm{v}^2_2$. 
   Run the  algorithm corresponding to Lemma \ref{cor:recursion} with input $Y$, parameter estimates $\{(\hat{\mu}_i, \hat{\Sigma}_i) \}_{i \in [k]}$ and threshold $\tau$.  Let $Y_1$ and $Y_2$ be the resulting partition. 
   \\ \textit{// Use small eigenvalue directions to partition the points. 
   % It follows from\\
   % // Lemma \ref{lem:hyperplane_clustering} that the resulting partition is non-trivial and the fraction of outliers \\
   % // goes from $\zeta$ to $\gamma^{1/(8k)}$. \Anote{Fix comments when algo and analysis are stable.
   }
\begin{enumerate}
   \item If $\min(|Y_1|, |Y_2|) < \epsilon^{1/(400kC^{k+1}(k+1)!)}n$, run Algorithm \ref{algo:cluster_or_decode} with input $Y$, fraction of outliers $ 2\epsilon^{1/(400kC^{k+1}(k+1)!)}$ and number of components being $k-1$, and let $\Big\{ ( \hat w^{(1)}_i, \hat \mu^{(1)}_i, \hat \Sigma^{(1)}_i )\Big\}_{i \in [k_1]}$ be the resulting output. Output the resulting hypothesis $\Big\{(\hat{w}_i,  \hat{U} \hat{\Lambda}^{1/2} \hat{U}^\top \hat{\mu}_i + \hat\mu ,  \hat{U} \hat{\Lambda}^{1/2} \hat{U}^\top \hat{\Sigma}_i \hat{U} \hat{\Lambda}^{1/2} \hat{U}^\top ) \Big\}_{i \in [k-1]} \cup (0, 0, I)$.

   \item Else, select $k_1 \in [k-1]$ uniformly at random. Run Algorithm \ref{algo:cluster_or_decode} with input $Y_1$, fraction of outliers $\epsilon^{1/(100kC^{k+1}(k+1)!)}$ and number of components being $k_1$. Similarly, run Algorithm \ref{algo:cluster_or_decode} with input $Y_2$, fraction of outliers $\epsilon^{1/(100kC^{k+1}(k+1)!)}$ and number of components $k-k_1$, and let $\Big\{ ( \hat w^{(2)}_i, \hat \mu^{(2)}_i, \hat \Sigma^{(2)}_i )\Big\}_{i \in [k-k_1]}$ be the resulting output. Output the set $\Big\{ ( \hat w^{(1)}_i |Y_1|/|Y| , \hat{U} \hat{\Lambda}^{1/2} \hat{U}^\top \hat \mu^{(1)}_i + \hat \mu,  \hat{U} \hat{\Lambda}^{1/2} \hat{U}^\top \hat{\Sigma}^{(1)}_i \hat{U} \hat{\Lambda}^{1/2} \hat{U}^\top )\Big\}_{i \in [k_1]} \cup \Big\{ ( \hat w^{(2)}_i |Y_2|/|Y|,  \hat{U} \hat{\Lambda}^{1/2} \hat{U}^\top \mu^{(2)}_i + \hat \mu,  \hat{U} \hat{\Lambda}^{1/2} \hat{U}^\top \hat{\Sigma}^{(2)}_i \hat{U} \hat{\Lambda}^{1/2} \hat{U}^\top  )\Big\}_{i \in [k-k_1]}$.
   %  guess the number of components $k_1$ and $k_2 = k' -k_1$ and recurse with fraction of outliers to be 
   % $\bigO{\gamma^{1/(10k)} }$ in each one.
\end{enumerate}   
   \end{enumerate}
   \end{enumerate}
   %\item \textbf{Robust Tournament}: Collect the candidate hypotheses obtained from all recursive calls, for each estimate of the weights $\hat{w}$. Run the robust tournament from Fact~\ref{lem:tournament} over all the hypotheses. Let the winning hypothesis be $\{\hat w_i,\hat \mu_i,\hat\Sigma_i\}_{i\in[k]}$. 
   %Output $\{\hat w_i,\hat \Sigma^{1/2}(\mu_i-\hat\mu),\hat \Sigma^{1/2}\hat\Sigma_i\hat \Sigma^{1/2}\}_{i\in[k]}$.
   \end{enumerate}
   
   \end{description}
 \end{algorithm}
\end{mdframed}

% \Pnote{push robust estimation etc in the description of the main algorithm.}
% After robustly computing the mean and covariance of the mixture, 

\subsection{Analysis of Algorithm~\ref{algo:robust-GMM-arbitrary}}

To prove Theorem~\ref{thm:robust-GMM-arbitrary}, we will require the following intermediate results.
We defer some proofs in this subsection to Appendix~\ref{sec:omitted_proofs}.

%\begin{definition}[Parameter Closeness]
%\label{def:parameter_closeness_2_mixtures}
%Two $k$-mixtures of Gaussians, $\calM_1 = \sum_{i \leq k } w_i^{(1)}\calN\Paren{\mu^{(1)}_i, \Sigma^{(1)}_i}$ and 
%$\calM_2 = \sum_{i \leq k }  w_i^{(2)}\calN\Paren{\mu^{(2)}_i, \Sigma^{(2)}_i}$ are $\delta$-close in parameter distance, for $0\leq\delta<1$, 
%if there exists a permutation $\pi:[k]\to [k]$ such that 
%\begin{enumerate}
%    \item Weights are Close: for all $i\in[k]$ such that $w^{(1)}_i+ w^{(2)}_{\pi[i]} \geq \delta$, 
%    $ w^{(1)}_i - \delta \leq w^{(2)}_{\pi[i]} \leq w^{(1)}_i +\delta$, and, 
%    \item Gaussian Components are Close: for all $i \in [k]$ such that $w^{(1)}_i+ w^{(2)}_{\pi[i]} \geq \delta$, $d_{\textrm{TV}}\Paren{\calN\Paren{\mu^{(1)}_i, \Sigma^{(1)}_i}, \calN\Paren{\mu^{(2)}_{\pi[i]}, \Sigma^{(2)}_{\pi[i]}}} \leq \delta$.
%\end{enumerate}
%\end{definition}

%\begin{fact}[Parameter Closeness to TV Closeness]
%Two $k$-mixtures of Gaussians, $\calM_1$ and $\calM_2$, that are $\delta$-close in parameter distance 
%must  be $\bigO{k\delta}$-close in total variation distance, i.e., $d_{\textrm{TV}}\Paren{\calM_1 , \calM_2} = \bigO{k \delta}$.  
%\end{fact}

% \begin{definition}[Nearly Isotropic Mixture]
% For any $1>\epsilon>0$ a mixture of $k$ Gaussians with mean $\mu$, covariance $\Sigma$ and minimum mixing weight $\alpha$ is $\epsilon$-nearly-isotropic if $\Norm{\mu }_2 = \bigO{\sqrt{\epsilon}/\alpha}$ and $\Norm{ \Sigma - I }_F = \bigO{\epsilon/\alpha}$. 
% \end{definition}

We use the following lemma to relate the Frobenius distance of covariances to the total variation distance between two Gaussians, 
when the eigenvalues of the covariances are bounded below. 

\begin{lemma}[Frobenius Distance to TV Distance]\label{lem:frobenius_to_tv}
Suppose $\cN(\mu_1,\Sigma_1),\cN(\mu_2,\Sigma_2)$ are Gaussians with $\norm{\mu_1-\mu_2}_2 \le\delta$ and 
$\norm{\Sigma_1-\Sigma_2}_F\le\delta$. If the eigenvalues of $\Sigma_1$ and $\Sigma_2$ are at least $\lambda>0$, then 
$\dtv(\cN(\mu_1,\Sigma_1),\cN(\mu_2,\Sigma_2))=O(\delta/\lambda)$.
\end{lemma}

% \begin{proof}
% By Fact~\ref{GaussianTVFact}, we have 
% $$
% \dtv\Paren{ \calN(\mu_1,\Sigma_1), \calN(\mu_2,\Sigma_2) } = \bigO{ \Paren{ (\mu_1-\mu_2)^\top \Sigma_1^{-1} (\mu_1-\mu_2)}^{1/2}+\|\Sigma_1^{-1/2}\Sigma_2 \Sigma_1^{-1/2}-I \|_F )}.
% $$
% Then the first term is 
% $\Iprod{\mu_1-\mu_2,\Sigma_1^{-1} (\mu_1-\mu_2)}^{1/2}\le (\norm{\Sigma_1^{-1}}_{\mathrm{op}} \norm{\mu_1-\mu_2}_2^2)^{1/2}\le\delta/\sqrt{\lambda}$.
% The second term is 
% \begin{align*}
% \|\Sigma_1^{-1/2}\Sigma_2 \Sigma_1^{-1/2}-I \|^2_F
% &=\norm{\Sigma_1^{-1/2}(\Sigma_1-\Sigma_2)\Sigma_1^{-1/2}}^2_F\\
% &=\tr\left(\left(\Sigma_1^{-1/2}(\Sigma_1-\Sigma_2)\Sigma_1^{-1/2}\right)^2\right)\\
% &\le\tr\left((\Sigma_1-\Sigma_2)^2\right)(1/\lambda)^2\\
% &\le(\delta/\lambda)^2.
% \end{align*}
% Thus, 
% $$
% \dtv(\cN(\mu_1,\Sigma_1),\cN(\mu_2,\Sigma_2))=O(\delta/\sqrt{\lambda}+\delta/\lambda)=O(\delta/\lambda) \;.
% $$
% \end{proof}

%We begin by applying Theorem \ref{} \anote{fill in correct reference here.} to $\epsilon$-corrupted samples from the input mixture $\calM$, with mean $\mu$ and covariance $\Sigma$, such that the resulting estimates, $\hat{\mu}$ and $\hat{\Sigma}$, with probability at least $99/100$, satisfy  $\Norm{\Sigma^{-1/2} \Paren{\mu - \hat{\mu}} }_2 \leq \bigO{\epsilon^{1/2}/\alpha }$ and $\Norm{\Sigma^{-1/2} \hat{\Sigma} \Sigma^{-1/2}  - \II }_F \leq \bigO{ \sqrt{\epsilon} /\alpha }$. Conditioned on this event, we apply the Affine transformation $y_i \to \hat{\Sigma}^{-1/2}\Paren{y_i} - \hat{\mu}$ to each sample $y_i$. 

%!TEX root = main.tex

We start by showing that when Condition \ref{cond:convergence-of-moment-tensors} holds, the uniform distribution on a $(1-\epsilon)$-fraction of the points is certifiably hypercontractive.

\begin{lemma}[Component Moments to Mixture Moments]\label{lem:component_to_mixture}
Let $\calM = \sum_{i \in [k]} w_i \calN(\mu_i, \Sigma_i)$ be a $k$-mixture with mean $\mu$ and covariance $\Sigma$ such that $w_i\geq \alpha$, 
for some $0 < \alpha <1$, \nnnew{and for all $i,j \in [k]$, $\Norm{\Sigma^{\dagger/2}\Paren{\Sigma_i - \Sigma_j}\Sigma^{\dagger/2}}_F \leq 1/\sqrt{\alpha}$}. 
Let $X$ be a multiset of $n$ samples satisfying Condition \ref{cond:convergence-of-moment-tensors} 
with respect to $\calM$  with parameters $(\gamma, t)$, for $0<\gamma<(dk/\alpha)^{-ct}$, 
for a sufficiently large constant $c$, and $t\in\N$. Let $\calD$ be the uniform distribution over $X$. 
Then, $\calD$ is $2t$-certifiably $(c/\alpha)$-hypercontractive \nnnew{and for $d \times d$-matrix-valued indeterminate $Q$, $\sststile{2}{Q}\Set{\E_{\calM} \Paren{x^{\top}Qx-\E_{\calM}x^{\top}Qx}^2 \leq \bigO{1/\alpha} \Norm{\Sigma^{1/2} Q\Sigma^{1/2}}_F^2}$}.
\end{lemma}

% \begin{proof}
% First, since $\calM$ is a $k$-mixture of Gaussians with minimum mixing weight $w_{\min} \geq \alpha$, it follows from Corollary \ref{cor:hypercontractivity_of_mixtures} that $\calM$ is $t$-certifiably $(4/\alpha)$ hypercontractive. Further, since $X$ satisfies Condition \ref{cond:convergence-of-moment-tensors} with parameters $(\gamma, t)$, it follows from Lemma \ref{lem:covergence_affine_invariant} that the set $X' = \{ \Sigma^{\dagger/2}(x_i - \mu) \}_{x_i \in X}$ also satisfies  Condition \ref{cond:convergence-of-moment-tensors} with parameters $(\gamma, t)$ w.r.t. $\calM' = \sum_{i \in [k]} w_i \calN\Paren{ \Sigma^{\dagger/2}(\mu_i - \mu), \Sigma^{\dagger/2} \Sigma_i \Sigma^{\dagger/2} }$. Since $\Norm{\Sigma^{\dagger/2} \Sigma_i \Sigma^{\dagger/2}}_{\textrm{op}} \leq \bigO{1/\alpha}$, it follows from Lemma \ref{moment closeness lemma} that for all $m \leq t$, $\Norm{\E_{x\in_u X'}[x^{\otimes m}] - \E_{x\sim M'}[x^{\otimes m}]}_F^2 \leq \gamma^2 m^{O(m)} d^m (1/\alpha)^m $. Since $\gamma < (dk/\alpha)^{-O(t)}$, it follows from Fact~\ref{fact:moments-to-analytic-properties} that $X$ is $2t$-certifiably $(c/\alpha)$-hypercontractive.
% \end{proof}

Next, we show how to robustly estimate the mean and covariance of an $\epsilon$-corrupted set of samples satisfying Condition~\ref{cond:convergence-of-moment-tensors} when the mixture is not partially clusterable, and make the inliers nearly isotropic.

\begin{lemma}[Robust Isotropic Transformation] \label{lem:sampling_from_nearl_isotropic}
Given $0<\epsilon <1$, and $k \in \mathbb{N}$, let $\alpha=\epsilon^{1/(10C^{k+1}(k+1)!)}$. 
Let $\calM=\sum_{i=1}^k w_i G_i$ with $G_i = \cN(\mu_i,\Sigma_i)$ be a $k$-mixture of Gaussians with $w_i\ge\alpha$ for all $i \in [k]$, 
and let $\mu$ and $\Sigma$ be the mean and covariance of $\calM$ such that $r= \rank(\Sigma)$ \nnnew{and for all $i,j \in [k]$, $\Norm{\Sigma^{\dagger/2}\Paren{\Sigma_i - \Sigma_j}\Sigma^{\dagger/2}}_F \leq 1/\sqrt{\alpha}$}. 
Let $X$ be a set of points satisfying Condition \ref{cond:convergence-of-moment-tensors} 
with respect to $\calM$ for some parameters $(\gamma, t)$. 
Given a set $Y$, an $\epsilon$-corrupted version of $X$, of size $n\ge n_0=d^{O(1)}$, 
there exists an algorithm that takes $Y$ as input and in time $n^{\mathcal{O}(1)}$ outputs 
estimators $\hat{\mu}$ and $\hat{\Sigma}$ such that 
$\hat{\Sigma} = \hat{U} \hat{\Lambda}\hat{U}^\top$ is the eigenvalue decomposition, 
where $\hat{U}\in\R^{n \times r}$ has orthonormal columns and $\Lambda\in \R^{r \times r}$ is a diagonal matrix. Further, we can obtain $n$ samples $Y'$ by applying the affine transformation 
$y_i \to \hat{U}^\top\hat{\Sigma}^{\dagger/2}\Paren{y_i - \hat{\mu}}$ to each sample,
such that a $(1-\epsilon)$-fraction have mean $\mu'$ and covariance $\Sigma'$ satisfying 
\begin{enumerate}
	\item $\Norm{\mu'}_2 \leq \bigO{\Paren{1+ \frac{\sqrt{\epsilon}k}{\alpha}}\sqrt{\epsilon/\alpha}}$,
	\item $\Paren{ \frac{1}{ 1+ (k \sqrt{\epsilon}/\alpha )} } I_r \preceq \Sigma' \preceq\Paren{ \frac{1}{ 1- (k \sqrt{\epsilon}/\alpha )} } I_r$,
	\item $\Norm{ \Sigma' - I_r }_F \leq \bigO{\sqrt{\epsilon}k/\alpha}$,
\end{enumerate}
where $I_r$ is the $r$-dimensional Identity matrix,
and the remaining points are arbitrary. 
Let $X'$ be the set obtained by $\hat{U}^\top \hat\Sigma^{\dagger/2}\Paren{x_i - \hat \mu}$. 
Then, $X'$ satisfies Condition \ref{cond:convergence-of-moment-tensors} with respect to 
$\sum_{i=1}^k w_i \calN\Paren{ \hat{U}^\top \hat{\Sigma}^{\dagger/2} (\mu_i-\hat{\mu}), 
\hat{U}^\top \hat{\Sigma}^{\dagger/2} \Sigma_i \hat{\Sigma}^{\dagger/2}\hat{U} }$ 
and parameters $(\gamma, t)$, and $Y'$ is an $\epsilon$-corruption of $X'$. 
\end{lemma}
% \begin{enumerate}
%     \item $\Norm{\Sigma^{-1/2} \Paren{\mu - \hat{\mu}} }_2 \leq \bigO{\sqrt{\epsilon/\alpha}}$
%     \item $ \Paren{1-\frac{\sqrt{\epsilon}k}{\alpha}} \Sigma  \preceq  \hat{\Sigma} \preceq \Paren{1+\frac{\sqrt{\epsilon}k}{\alpha}} \Sigma $
%     \item $\Norm{\Sigma^{-1/2} \hat{\Sigma} \Sigma^{-1/2} - I }_F \leq \bigO{\sqrt{\epsilon}k/\alpha}$
% \end{enumerate}

\begin{proof}
% First, it follows from Fact \ref{}, that the standard Gaussian is $t$-certifiably $1$-hypercontractive, for all $t \in \mathbb{N}$. (see Definition \ref{def:cert_hypercontractivity}). Then, it follows from Fact \ref{fact:affine-invariance-certifiable-hypercontractivity}, that $\calN(0, \Sigma)$ is $t$-certifiably $1$-hypercontractive, for any covariance matrix $\Sigma$. Lemma \ref{lem:shifts-of-certifiably-hypercontractive-distributions} shows that shifts of certifiably hypercontractive distributions remain certifiably hypercontractive and thus $\calN(\mu, \Sigma)$ is $t$-certifiably $1$-hypercontractive. Now, consider the mixture $\calM = \sum_i w_i \calN(\mu_i, \Sigma_i)$. 
For any $t' \in \mathbb{N}$, it follows from Corollary \ref{cor:hypercontractivity_of_mixtures} 
that $\calM$ has $2t'$-certifiably $(4/\alpha)$-hypercontractive degree-$2$ polynomials, 
 since $w_i\geq \alpha$ for all $i$.  
%\nnnew{Then, it follows from Fact \ref{fact:param-estimation-main} that if $\frac{16}{\alpha} t'\epsilon^{1-4/t'} \ll 1$, there exists an algorithm that runs in time $n^{O(t')}$ and outputs estimates $\hat{\mu}$ and $\tilde{\Sigma}$  satisfying $\Norm{\Sigma^{\dagger/2}(\mu -\hat{\mu})}_2 \leq \bigO{t'/\alpha}^{1/2} \epsilon^{1-1/t'}$ and $
% \Paren{ 1- (k/\alpha) \epsilon^{1-2/t'} } \Sigma \preceq \tilde{\Sigma} \preceq \Paren{ 1+ (k/\alpha) \epsilon^{1-2/t'} } \Sigma$.
% Note, these guarantees only require that the mixture is certifiably hypercontractive. Applying the Affine transformation $y_i \to \tilde\Sigma^{\dagger/2}\Paren{y_i - \hat\mu}$, the resulting mean ($\mu - \hat{\mu}$) satisfies $\Norm{\mu - \hat{\mu}} \leq \bigO{t'/\alpha}^{1/2} \epsilon^{1-1/t'}$. Further, since the resulting mixture covariance satisfies $\tilde{\Sigma}^{\dagger/2}\Sigma\tilde{\Sigma}^{\dagger/2} \preceq \bigO{1}I$ and thus for $i\in[k]$, the component wise means satisfy $\norm{\mu_i-\mu}_2 \leq\bigO{1/\sqrt{\alpha}}$. } 
Next,  Lemma \ref{lem:component_to_mixture} implies that the uniform distribution over $X$ 
also has $2t'$-certifiably $(8/\alpha)$-hypercontractive degree-$2$ polynomials 
and \nnnew{for $d \times d$-matrix-valued indeterminate $Q$, 
\[\sststile{2}{Q}\Set{\E_{\calM} \Paren{x^{\top}Qx-\E_{\calM}x^{\top}Qx}^2 \leq \bigO{1/\alpha} \Norm{\Sigma^{1/2} Q\Sigma^{1/2}}_F^2}.
\]}
Then, it follows from Fact \ref{fact:param-estimation-main} that if $\frac{16}{\alpha} t'\epsilon^{1-4/t'} \ll 1$, there exists an algorithm that runs in time $n^{O(t')}$ and outputs estimates $\hat{\mu}$ and $\hat{\Sigma}$ satisfying:
\begin{enumerate}
\item \label{prop_mean}$\Norm{\Sigma^{\dagger/2}(\mu -\hat{\mu})}_2 \leq O(t'/\alpha)^{1/2} \epsilon^{1-1/t'}$, 
\item \label{prop_cov1}$\Paren{ 1- (k/\alpha) \epsilon^{1-2/t'} } \Sigma \preceq \hat{\Sigma} \preceq \Paren{ 1+ (k/\alpha) \epsilon^{1-2/t'} } \Sigma$  and, 
\item \label{prop_cov2}$\Norm{\Sigma^{\dagger/2} \Paren{ \hat{\Sigma} - \Sigma} \Sigma^{\dagger/2}}_F \leq (t'/\alpha) O(\epsilon^{1-1/t'})$.
\end{enumerate}
% \begin{equation}
% \label{prop_cov2}
% \Norm{\Sigma^{\dagger/2} \Paren{ \hat{\Sigma} - \Sigma} \Sigma^{\dagger/2}}_F \leq (t'/\alpha) O(\epsilon^{1-1/t'}).
% \end{equation}

%we have that if $\frac{16}{\alpha} t'\epsilon^{1-4/t'} \ll 1$, 
%there exists an algorithm that runs in time $n^{O(t')}$ and outputs an estimate $\hat{\mu}$ and $\hat{\Sigma}$ satisfying:
% \begin{enumerate}
% \item \label{prop_mean}$\Norm{\Sigma^{\dagger/2}(\mu -\hat{\mu})}_2 \leq O(t'/\alpha)^{1/2} \epsilon^{1-1/t'}$, 
% \item \label{prop_cov1}$\Paren{ 1- (k/\alpha) \epsilon^{1-2/t'} } \Sigma \preceq \hat{\Sigma} \preceq \Paren{ 1+ (k/\alpha) \epsilon^{1-2/t'} } \Sigma$  and, 
% \item \label{prop_cov2}$\Norm{\Sigma^{\dagger/2} \Paren{ \hat{\Sigma} - \Sigma} \Sigma^{\dagger/2}}_F \leq (t'/\alpha) O(\epsilon^{1-1/t'})$.
% \end{enumerate}
Setting $t' =2$,  compute $\hat{\Sigma} = \hat{U} \hat{\Lambda} \hat{U}^\top$, the eigendecomposition of $\hat{\Sigma}$, 
such that $\hat{U} \in \R^{n \times r}$ has orthonormal columns, where $r\leq d$ is the rank of $\hat{\Sigma}$ and 
$\hat\Lambda \in \R^{r \times r }$ is a diagonal matrix. Similarly, let $\Sigma = U \Lambda U^\top$ be the 
eigendecomposition of $\Sigma$. We apply the affine transformation $y_i \to \hat{U}^\top \hat{\Sigma}^{\dagger/2}\Paren{y_i - \hat{\mu}}$ 
to each sample and thus we can assume throughout the rest of our argument 
that we have access to $\epsilon$-corrupted samples from a $k$-mixture of Gaussians 
with mean $\mu' = \hat{U}^\top  \hat{\Sigma}^{\dagger/2}(\mu - \hat \mu)$ 
and covariance $\Sigma' = \hat{U}^\top \hat{\Sigma}^{\dagger/2} \Sigma \hat{\Sigma}^{\dagger/2} \hat{U}$. 
Then, we have that
\begin{equation*}
\begin{split}
\Norm{\mu'}_2 = \Norm{\hat{U}^\top  \hat{\Sigma}^{\dagger/2}(\mu - \hat \mu)}_2  
& \leq \Norm{ \hat{U}^\top}_{\textrm{op}}  \Norm{ \hat{\Sigma}^{\dagger/2}(\mu - \hat \mu)}_2   \\
& \leq \bigO{\Paren{1+ \frac{\sqrt{\epsilon}k}{\alpha}}\sqrt{\epsilon/\alpha}} \;,
\end{split}
\end{equation*}
where the last inequality follows from \eqref{prop_mean} and \eqref{prop_cov1}. 
It also follows from \eqref{prop_cov1} that 
\begin{equation}
\label{eqn:lowner_guarantee}
\Paren{ \frac{1}{ 1+ (k \sqrt{\epsilon}/\alpha )} } \hat{\Sigma} \preceq \Sigma \preceq\Paren{ \frac{1}{ 1- (k \sqrt{\epsilon}/\alpha )} } \hat{\Sigma} \;.
\end{equation}
Multiplying out \eqref{eqn:lowner_guarantee} with $\hat{U}^\top \hat{\Sigma}^{\dagger/2}$ on the left 
and $\hat{\Sigma}^{\dagger/2} \hat{U}$ on the right, we have
\begin{equation*}
\begin{split}
\Paren{ \frac{1}{ 1+ (k \sqrt{\epsilon}/\alpha )} } \hat{U}^\top \hat{\Sigma}^{\dagger/2}\hat{\Sigma} \hat{\Sigma}^{\dagger/2} \hat{U} \preceq \Sigma' \preceq\Paren{ \frac{1}{ 1- (k \sqrt{\epsilon}/\alpha )} } \hat{U}^\top \hat{\Sigma}^{\dagger/2} \hat{\Sigma} \hat{\Sigma}^{\dagger/2} \hat{U} \;.
\end{split}
\end{equation*}
Observe that \eqref{prop_cov1} implies that the rank of $\hat{\Sigma}$ and $\Sigma$ is the same, 
and thus $\hat{U}^\top \hat{\Sigma}^{\dagger/2}\hat{\Sigma} \hat{\Sigma}^{\dagger/2} \hat{U} = I_r$, 
where $I_r$ is the $r$-dimensional Identity matrix. Finally, we have that
\begin{equation*}
\begin{split}
\Norm{ \Sigma' - I_r }_F = \Norm{ \hat{U}^\top \hat{\Sigma}^{\dagger/2} \Sigma \hat{\Sigma}^{\dagger/2} \hat{U}  - \hat{U}^\top \hat{\Sigma}^{\dagger/2} \hat{\Sigma} \hat{\Sigma}^{\dagger/2} \hat{U}  }_F & \leq \Norm{ \hat{U} \hat{\Lambda}^{-1/2} \hat{U}^\top \Paren{  \Sigma  - \hat{\Sigma} }  \hat{U} \hat{\Lambda}^{-1/2} \hat{U}^\top    }_F \\
& = \Norm{ \hat{U} \hat{\Lambda}^{-1/2} \Lambda^{1/2}\Lambda^{-1/2}  \hat{U}^\top \Paren{  \Sigma  - \hat{\Sigma} }  \hat{U}\Lambda^{-1/2} \Lambda^{1/2}\hat{\Lambda}^{-1/2} \hat{U}^\top  }_F \\
& \leq \Norm{\hat{\Lambda}^{-1/2} \Lambda^{1/2}}^2_{\textrm{op}} \Norm{\Sigma^{\dagger/2} \Paren{ \hat{\Sigma} - \Sigma} \Sigma^{\dagger/2}}_F \\
&\leq \bigO{ \sqrt{\epsilon} k /\alpha} \;,
\end{split}
\end{equation*}
where we use that $ \hat{\Lambda}^{-1/2} =  \hat{\Lambda}^{-1/2} \Lambda^{1/2}\Lambda^{-1/2}$, 
the sub-multiplicative property of the Frobenius norm, the column span $U$ and $\hat{U}$ is identical 
(see \eqref{prop_cov1}), and the Frobenius recovery guarantee in \eqref{prop_cov2}.  

% $\Paren{1-\frac{\sqrt{\epsilon}k}{\alpha}} \Sigma^{\dagger/2}  \Sigma \Sigma^{\dagger/2} \preceq \Sigma'\preceq \Paren{1+\frac{\sqrt{\epsilon}k}{\alpha}}  \Sigma^{\dagger/2}\Sigma \Sigma^{\dagger/2}  $ and $\Norm{ \Sigma' -  \Sigma^{\dagger/2}\Sigma\Sigma^{\dagger/2} }_F = \Norm{ \hat{\Sigma}^{\dagger/2} \Sigma \hat{\Sigma}^{\dagger/2} -  \Sigma^{\dagger/2}\Sigma\Sigma^{\dagger/2} }_F \leq \Norm{ \hat{\Sigma}^{\dagger/2} \Sigma \hat{\Sigma}^{\dagger/2} -  \Sigma^{\dagger/2}\hat{\Sigma}\Sigma^{\dagger/2} }_F+  \Norm{  \Sigma^{\dagger/2}\Paren{\hat\Sigma- \Sigma}\Sigma^{\dagger/2} }_F \leq \bigO{\sqrt{\epsilon}k/\alpha}$

% from an $\epsilon$-nearly isotropic $k$-mixture of Gaussians, 
% with mean $\mu$ and covariance $\Sigma$, such that 
% $\Norm{\mu}_2 \leq \bigO{\Paren{1+ \frac{\sqrt{\epsilon}k}{\alpha}}\sqrt{\epsilon/\alpha}}$

% and $\Norm{\Sigma - I }_F \leq \bigO{\sqrt{\epsilon}k/\alpha}$. 

Finally, it follows from Lemma \ref{lem:covergence_affine_invariant} that Condition \ref{cond:convergence-of-moment-tensors} 
is affine invariant and is thus preserved under $x_i \to \hat{U}^\top\hat{\Sigma}^{-1/2}\Paren{x_i - \hat{\mu}}$, 
for $i\in[n]$, with parameters $(\gamma, t)$.
\end{proof}

The above robust isotropic transformation lemma allows us to obtain a covariance that is close to the identity matrix 
in a full-dimensional subspace (potentially smaller than the input dimension). 
Therefore, we will subsequently drop the subscript for the dimension, wherever it is clear from the context.

Next, we show that whenever the minimum mixing weight is sufficiently larger than the fraction of outliers, 
and a pair of components is covariance separated, we can partially cluster the samples. 

%\Anote{the hypothesis here needs to change since the clustering step now needs to for arbitrary mixture of gaussians.}

\begin{lemma}[Non-negligible Weight and Covariance Separation] \label{lem:covariance_sep}
Given $0<\epsilon<  1/k^{k^{O(k^2)}}$ and $k \in \mathbb{N}$, 
let $\alpha=\epsilon^{1/(10C^{k+1}(k+1)!)}$. 
\new{
Let $\calM=\sum_{i=1}^k w_i G_i$ with $G_i = \cN(\mu_i,\Sigma_i)$ be a $k$-mixture of Gaussians 
with mixture covariance $\Sigma$ such that $w_i \geq \alpha$ for all $i \in [k]$
%$\Norm{\Sigma- I}_F \leq \bigO{\sqrt{\epsilon}k/\alpha}$, $w_i \geq \alpha$ for all $i \in [k]$, 
 and there exist $i, j\in[k]$ such that  
$\Norm{ \Sigma^{\dagger/2}\Paren{\Sigma_i - \Sigma_j}\Sigma^{\dagger/2} }_F > 1/\sqrt{\alpha}$. Further, let $X$ be a set of points satisfying Condition \ref{cond:convergence-of-moment-tensors} with respect to $\calM$ 
for some parameters $\gamma\leq \eps d^{-8k}k^{-Ck}$, for a sufficiently large constant $C$, and $t\geq 8k$. 
Let $Y$ be an $\epsilon$-corrupted version of $X$ of size $n\ge n_0= \Paren{ dk }^{\Omega(1)}/\epsilon$,} 
Algorithm~\ref{algo:robust-partial-clustering} partitions $Y$ into $Y_1$, $Y_2$ in time $n^{O(1)}$ such that 
\new{with probability at least $\alpha^{k\log(k/\alpha)}$
there is a non-trivial partition of $[k]$ into $Q_1\cup Q_2$ so that letting $\calM_j$ 
be a distribution proportional to $\sum_{i\in Q_j}w_iG_i$ and $W_j=\sum_{i\in Q_j}w_i$, then
$Y_j$ is an $\bigO{\epsilon^{1/(10C^{k+1}(k+1)!)}}$-corrupted version of $\bigcup_{i\in Q_j} X_i$ 
satisfying Condition~\ref{cond:convergence-of-moment-tensors} with respect to $\calM$ 
with parameters $(\bigO{k\gamma/W_j},t)$.
}
\end{lemma}

\begin{proof}
We run Algorithm \ref{algo:robust-partial-clustering} with sample set $Y$, number of components $k$, 
the fraction of outliers $\epsilon$ and the accuracy parameter $\beta$. 
Since $X$ satisfies Condition~\ref{cond:convergence-of-moment-tensors}, 
we can set $t'\geq24$, $\beta=\alpha^{t'/4 - 4} k^{t'} (t')^{2t'}\le\alpha$ in Theorem \ref{thm:partial-clustering-poly}.
Then, by assumption, there exist $i,j$ such that 
$$
\Norm{\Sigma^{\dagger/2}\Paren{ \Sigma_i - \Sigma_j }\Sigma^{\dagger/2} }_F > \frac{1}{\sqrt{\alpha}}=\Omega\left(\frac{k^2(t')^4}{(\beta\alpha^4)^{2/t'}}\right) \;.
$$
We observe that we also satisfy the other preconditions for Theorem \ref{thm:partial-clustering-poly}, 
since $n \geq \Paren{ dk/ }^{\Omega(1)}/\epsilon$. 
% and 
% the covariance of the uncorrupted points satisfies  
% $\Paren{ 1 - \frac{\sqrt{\epsilon}k }{\alpha} }I \preceq \Sigma \preceq \Paren{ 1 + \frac{\sqrt{\epsilon}k }{\alpha} }I$. 

Then, Theorem \ref{thm:partial-clustering-poly} implies that with probability at least $\alpha^{k\log(k/\alpha)}$, 
the set $Y$ is partitioned in two sets $Y_1$ and $Y_2$ such that 
\new{
there is a non-trivial partition of $[k]$ into $Q_1\cup Q_2$ so that letting $\calM_j$ be a distribution proportional 
to $\sum_{i\in Q_j}w_iG_i$ and $W_j =\sum_{i\in Q_j}w_i$, then
$Y_j$ is an $\bigO{\epsilon^{1/(10C^{k+1}(k+1)!)}}$-corrupted version of $\bigcup_{i\in Q_j} X_i$. 
By Lemma \ref{submixture condition lemma}, $\bigcup_{i\in Q_j} X_i$ 
satisfies Condition~\ref{cond:convergence-of-moment-tensors} with respect to $\calM$ with parameters $(\bigO{k\gamma/W_j},t)$.
}
\end{proof}

When the mixture is not covariance separated and nearly isotropic, we can obtain a small list of hypotheses such that one of them is 
close to the true parameters, via tensor decomposition.

\begin{lemma}[Mixture is List-decodable] \label{lem:list_decoding}
Given $0<\epsilon<  1/k^{k^{O(k^2)}}$ let $\alpha=\epsilon^{1/(10C^{k+1}(k+1)!)}$. 
\new{Let $\calM=\sum_{i=1}^k w_i G_i$ with $G_i = \cN(\mu_i,\Sigma_i)$ 
be a $k$-mixture of Gaussians with mixture mean $\mu$ and mixture covariance $\Sigma$,
such that $\Norm{\mu}_2 \leq \bigO{\sqrt{\epsilon/\alpha}}$,
$\norm{\Sigma-I}_F\le\bigO{\sqrt{\eps}/\alpha}$,
$w_i \geq \alpha$ for all $i \in [k]$, 
and $ \Norm{\Sigma_i - \Sigma_j }_F \leq 1/\sqrt{\alpha}$ for any pair of components,
and let $X$ be a set of points satisfying Condition \ref{cond:convergence-of-moment-tensors} 
with respect to $\calM$ for some parameters $\gamma= \eps d^{-8k}k^{-Ck}$, for a sufficiently large constant $C$, 
and $t= 8k$. Let $Y$ be an $\epsilon$-corrupted version of $X$ of size $n$,
Algorithm~\ref{algo:list-recovery-tensor-decomposition} 
outputs a list $L$ of hypotheses of size $\exp\left(1/\eps^2\right)$ in time $\poly(|L|,n)$ 
such that if we choose a hypothesis 
$\{\hat{\mu}_i, \hat{\Sigma}_i \}_{i \in [k]}$ uniformly at random,
$\Norm{\mu_i - \hat{\mu}_i}_2 \leq\bigO{\epsilon^{1/(20C^{k+1}(k+1)!)}}$ and 
$\Norm{\Sigma_i - \hat{\Sigma}_i }_F \leq \bigO{\epsilon^{1/(20C^{k+1}(k+1)!)}}$ for all $i$ with probability at least $\exp\left(-1/\eps^2\right)$.
}
\end{lemma}
\begin{proof}
Recall we run Algorithm \ref{algo:list-recovery-tensor-decomposition} on \new {the samples $Y$, the number of clusters $k$, the fraction of outliers $\eps$ and the minimum weight $\alpha=\epsilon^{1/(10C^{k+1}(k+1)!)}$}. Next, we show that the preconditions of Theorem \ref{thm:list-recovery-by-tensor-decomposition} are satisfied. First, \new{the upper bounds on $\norm{\mu}_2$ and $\norm{\Sigma-I}_F$ imply} $\sum_{i \in k} w_i \Paren{\Sigma_i  + \mu_i \mu_i^\top} = \Sigma+\mu\mu^\top \preceq (1+\bigO{\sqrt{\eps}/\alpha}) I$. Since the LHS is a conic combination of PSD matrices, it follows that  for all $i \in [k]$, $\mu_i \mu_i^\top \preceq \frac{1}{\alpha}\left(1+\bigO{\sqrt{\eps}/\alpha}\right) I$, 
and thus $\Norm{\mu_i \mu_i^\top}_F \leq \frac{2}{\alpha}$. Next, we can write:
\begin{align*}
\Norm{\Sigma_i  - I }_F 
&\le \norm{\Sigma_i-(\Sigma+\mu\mu^\top)}_F+\norm{\Sigma-I}_F+\norm{\mu\mu^\top}_F\\
&= \Norm{\Sigma_i  - \sum_{j \in [k]} w_j \Paren{\Sigma_j + \mu_j\mu_j^\top} }_F + \frac{\sqrt{\eps}k}{\alpha}+\frac{\eps}{\alpha}\\
&\leq \Norm{\sum_{j \in [k]} w_j \Paren{ \Sigma_i  -  \Sigma_j} }_F + \frac{2}{\alpha}+ \frac{\sqrt{\eps}k}{\alpha}+\frac{\eps}{\alpha} \\
&\leq  \frac{4}{\alpha} \;,
\end{align*}
where the first and the third inequalities follow from the triangle inequality and the upper bound on $\Norm{\mu_i \mu_i^\top}_F$, 
and the last inequality follows from the assumption that $\Norm{\Sigma_i - \Sigma_j }_F \leq 1/\sqrt{\alpha}$ 
for every pair of covariances $\Sigma_i,\Sigma_j$. So, we can set $\Delta=4/\alpha$ 
in Theorem~\ref{thm:list-recovery-by-tensor-decomposition}. 
Then, given the definition of $\alpha$, we have that
$$
\eta=2k^{4k}\bigO{1+\Delta/\alpha}^{4k}\sqrt{\eps} =\bigO{\epsilon^{2/5}}
$$
and $1/\eps^2\ge \log(1/\eta)(k+1/\alpha+\Delta)^{4k+5}/\eta^2$.
Therefore, Algorithm \ref{algo:list-recovery-tensor-decomposition} outputs a list $L$ of hypotheses 
such that $\abs{L} = \exp\left(1/\eps^2\right)$, and with probability at least $0.99$, 
%\Anote{why is this probability $1-1/d$? if the deterministic conditions are satisfied, this should deterministically be true. } 
$L$ contains a hypothesis that satisfies the following: for all $i \in [k]$, 
\begin{equation}
\label{eqn:true_hypothesis}
\begin{split}
\Norm{ \hat{\mu}_i - \mu_i }_2 & 
=\bigO{\frac{\Delta^{1/2}}{\alpha}}\eta^{G(k)}
=\bigO{\epsilon^{-1/(20C^{k+1}(k+1)!)}\cdot\eps^{1/(10C^{k+1}(k+1)!)}}
=\bigO{\epsilon^{1/(20C^{k+1}(k+1)!)}}  \textrm{ and }\\
 \Norm{ \hat{\Sigma}_i - \Sigma_i }_F & =\bigO{k^4}\frac{\Delta^{1/2}}{\alpha}\eta^{G(k)}
=\bigO{\epsilon^{1/(20C^{k+1}(k+1)!)}} \;.
\end{split}
\end{equation}
\new{
Then if we choose a hypothesis in $L$ uniformly at random, the probability that we choose the hypothesis satisfying (\ref{eqn:true_hypothesis}) is at least $1/|L|=\exp\left(-1/\eps^2\right)$.
} %\Anote{this needs to be edited.}
\end{proof}

{ Finally, if the mixture has a covariance matrix with small variance along any direction, we can further cluster the points by projecting the mixture along that direction.}

\begin{lemma}[Spectral Separation of Thin Components]\label{lem:hyperplane_clustering}
{
Given $0<\epsilon<  1/k^{k^{O(k^2)}}$, let $\alpha=\epsilon^{1/(10C^{k+1}(k+1)!)}$. Let $\calM=\sum_{i=1}^k w_i G_i$ with $G_i = \cN(\mu_i,\Sigma_i)$ be a $k$-mixture of Gaussians with mixture covariance $\Sigma$ 
such that $\Norm{\Sigma - I}_F \leq \bigO{\sqrt{\epsilon}k/\alpha}$, 
%\Anote{$\epsilon$ and $\alpha$ are not defined, i think $\alpha = \epsilon^{1/(10C^{K=1}(k+1)!)}$ here. }
and let $X$ be a set of points satisfying Condition \ref{cond:convergence-of-moment-tensors} with respect to $\calM$ for some parameters $(\gamma,t)$. Given a set $Y$ being an $\epsilon$-corrupted version of $X$ of size $n$,
} 
and estimates $\{\hat{\mu}_i, \hat{\Sigma}_i \}_{i \in [k]}$, 
such that $\Norm{\mu_i - \hat{\mu}_i}_2 \leq\bigO{\epsilon^{1/(20C^{k+1}(k+1)!)}}$, 
$\Norm{\Sigma_i - \hat{\Sigma}_i }_F \leq \bigO{\epsilon^{1/(20C^{k+1}(k+1)!)}}$, 
suppose there exists a unit vector $v \in \R^d$ such that 
$v^\top \hat{\Sigma}_s v \leq \bigO{\epsilon^{1/(40C^{k+1}(k+1)!)}}$, 
for some $s \in [k]$. 
Then, there is an algorithm that efficiently partitions $Y$ into $Y_1$ and $Y_2$ 
such that 
\new{
there is a non-trivial partition of $[k]$ into $Q_1\cup Q_2$ so that letting $\calM_j$ be a distribution proportional to $\sum_{i\in Q_j}w_iG_i$ and $W_j=\sum_{i\in Q_j}w_i$, then
$Y_j$ is an $\left(\bigO{k^2\gamma}+\bigO{\epsilon^{1/(80kC^{k+1}(k+1)!)}/W_j}\right)$-corrupted version of $\bigcup_{i\in Q_j} X_i$ satisfying Condition~\ref{cond:convergence-of-moment-tensors} 
with respect to $\calM_j$ with parameter $(\bigO{k\gamma/W_j},t)$.
}
\end{lemma}
\begin{proof}
We run the algorithm from Lemma~\ref{cor:recursion} with the input being the samples $Y$, 
the current hypothesis $\{\hat{\mu}_i, \hat{\Sigma}_i \}_{i \in [k]}$, 
and the minimum eigenvalue $\eta=\bigO{\epsilon^{1/(40C^{k+1}(k+1)!)}}$. 
Observe that the mixture covariance satisfies $\Sigma \succeq \left(1-\bigO{\sqrt{\eps}k/\alpha}\right)I\succeq I/2$ 
and the upper bound on means and covariance is $\delta =\bigO{\epsilon^{1/(20kC^{k+1}(k+1)!)}n}$ by assumption. 
Therefore, we satisfy the preconditions of Lemma \ref{cor:recursion}. 
Thus, we obtain a partition $Y_1, Y_2$ such that 
\new{
there is a non-trivial partition of $[k]$ into $Q_1\cup Q_2$ so that letting $\calM_j$ be a distribution proportional to $\sum_{i\in Q_j}w_iG_i$ and $W_j=\sum_{i\in Q_j}w_i$, then it follows from Lemma \ref{submixture condition lemma} that
$Y_j$ is an $\left(\bigO{k^2\gamma}+\bigO{\epsilon^{1/(80kC^{k+1}(k+1)!)}/W_j}\right)$-corrupted version of $\bigcup_{i\in Q_j} X_i$ satisfying Condition~\ref{cond:convergence-of-moment-tensors} with respect to $\calM_j$ with parameter $(\bigO{k\gamma/W_j},t)$.
}
\end{proof}

% The following lemma from \cite{DBLP:journals/corr/DiakonikolasKKL16} shows that we can do the tournament step only with access to $\epsilon$-noisy samples.
% \begin{lemma}[Tournament]\label{lem:tournament}
% Let $\mathcal{C}$ be a class of probability distributions. Suppose that for some $N,\eps,\tau > 0$ there exists a
% polynomial time algorithm that given $N$ independent samples from some $\mathcal{M}\in\mathcal{C}$, of which up to a $\eps$-fraction
% have been arbitrarily corrupted, returns a list $\mathcal{L}$ of distributions whose probability density functions
% are explicitly computable and which can be effectively sampled from such that with $1-\tau/2$ probability
% there exists a $\mathcal{M}'\in\mathcal{L}$ with $\dtv(\mathcal{M},\mathcal{M}')<\delta$. Suppose furthermore that the distributions returned by this
% algorithm are all in some fixed set $\mathcal{\hat C}$. Then there exists another polynomial time algorithm, which given
% $O(N + (\log(|\mathcal{\hat C}|)+\log(1/\tau ))/\eps^2
% )$ samples from $\mathcal{M}$, an $\eps$-fraction of which have been arbitrarily corrupted,
% returns a single distribution $\mathcal{M}'$
% such that with $1-\tau$ probability $\dtv(\mathcal{M},\mathcal{M}') < O(\delta + \eps)$.
% \end{lemma}

%---proof of main thm---
\subsection{Proof of the Main Theorem}\label{sec:proof-main-thm}
We are now ready to complete the proof of Theorem~\ref{thm:robust-GMM-arbitrary}.

\begin{figure}[h!]
\includegraphics[scale=0.8]{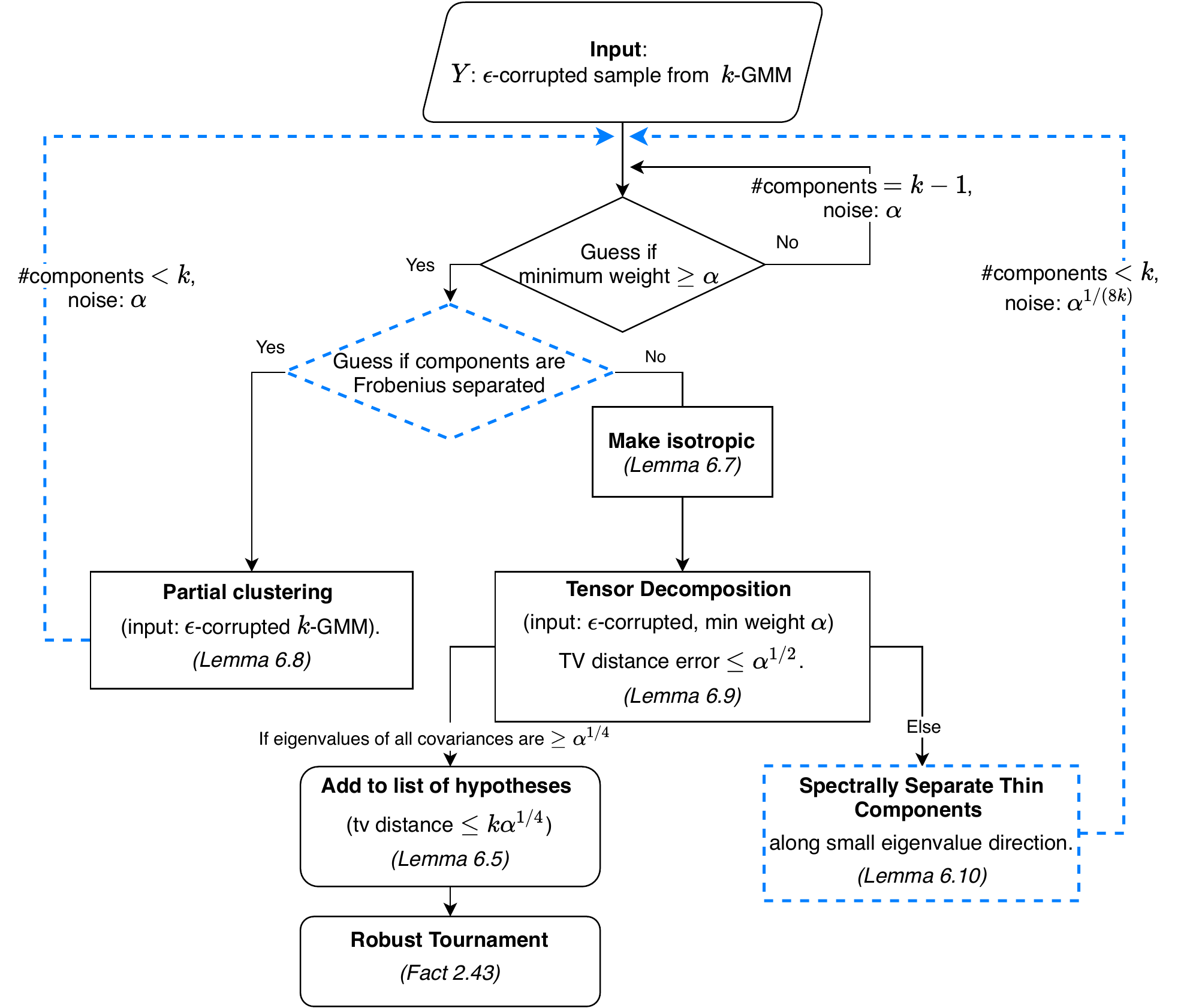}
\centering
\caption{If we assume a $1/\poly(k)$ lower bound on minimum weight, 
then we can skip all blue steps above; the partial clustering is carried out till it can no longer be done within a cluster and then followed by the tensor decomposition step.}
\end{figure}

\begin{proof}[Proof of Theorem \ref{thm:robust-GMM-arbitrary}]
\new{
We divide the proof into two parts: first we show that Algorithm~\ref{algo:cluster_or_decode} outputs a hypothesis $\widehat\calM=\sum_{i\in[k]}\hat w_i\calN(\hat\mu_i,\hat\Sigma_i)$ such that $\widehat\calM$ and $\calM$ are $\bigO{\epsilon^{c_k}}$-close in total variation distance with probability at least $\exp\left(-O(k)/\eps^2\right)$; then we show that Algorithm~\ref{algo:robust-GMM-arbitrary} outputs a $k$-mixture of Gaussians $\widehat \calM$ such that $\widehat \calM$ and $\calM$ are $\bigOk{\epsilon^{c_k}}$-close in total variation distance with probability 0.99.}

We proceed the first part by induction on $k$. 
Let $c_k = \frac{1}{(100)^k C^{(k+1)!}\mathrm{sf}(k+1)k!}$ be a scalar that only depends on $k$, 
where $C>0$ is a sufficiently large universal constant.

\textbf{Induction Hypothesis}: 
\new{
Let $X$ be a set of points satisfying Condition \ref{cond:convergence-of-moment-tensors} with respect to a $k$-mixture of Gaussians $\calM$ for some parameters $\gamma=\eps d^{-8k}k^{-C'k}$, 
where $C'$ is a sufficiently large constant and $t=8k+48$. Given a set $Y$ being an $\epsilon$-corrupted version of $X$ of size $n$, the outlier parameter $\eps$ and the component-number parameter $k$,
Algorithm~\ref{algo:robust-GMM-arbitrary} returns a $k$-mixture of Gaussians $\widehat{\calM}$ 
such that $\widehat{\calM}$ and $\calM$ are $\bigOk{\epsilon^{c_k}}$-close in total variation distance with probability $\exp\left(-(3k-2)/\eps^2\right)$.
}

\textbf{Base Case}: 
For $k=1$, the algorithm returns the single Gaussian with mean $\hat\mu$ and $\hat\Sigma$ at Step 2. Suppose the true Gaussian is $N(\mu,\Sigma)$.
It follows from the proof of Lemma~\ref{lem:sampling_from_nearl_isotropic}, 
\begin{equation*}
\Norm{  \Sigma^{\dagger/2} \Paren{\hat{\mu} - \mu}}_2=  \Norm{\Sigma^{\dagger/2} \Paren{\hat{\mu} - \mu}}_2 \leq \bigO{\sqrt{\epsilon}}
\end{equation*}
and 
\begin{equation*}
\Norm{\Sigma^{\dagger/2} \Paren{ \hat{\Sigma} - \Sigma }\Sigma^{\dagger/2}  }_F \leq \bigO{\sqrt{\epsilon}} \;,
\end{equation*}
and thus it follows from Fact \ref{GaussianTVFact} that the total variation distance 
between the hypothesis Gaussian and  the true Gaussian is at most $\bigO{\sqrt{\epsilon}}$. 
%\anote{this part is annoying, we have to guess weights here, which messes with the induction hypothesis.} 
We can then conclude that the base case is true.

\textbf{Inductive Step}: We assume that our induction hypothesis holds for any $m<k$ and then prove that the induction hypothesis holds for $k$.
\paragraph{Small Clusters Can be Treated as Noise.}
\new{
Conditioning on the base case being true, we begin by guessing 
whether the minimum weight is less than $\epsilon^{1/(10C^{k+1}(k+1)!)}$ with equal probability.
%\Anote{we are now just guessing whether the minimum weight is $\leq \epsilon^{1/(10C^{k+1}(k+1)!)}$ or not, no longer searching over a grid.} 
%\Anote{Again, we aren't trying to estimate the minimum weight just guess the above event and we are correct wp $0.5$.}
%If the smallest weight w_i is less than alpha, there is a 0.5 chance our algorithm takes step 1. In this case, since Y is an O(k(eps+alpha+gamma))-corruption of a set satisfying Condition… (by the submixtures lemma) for a (k-1) mixture of Gaussians (removing the light component) that is alpha-close to M, our inductive hypothesis says that we learn an appropriate hypothesis with probability at least 

Let $w_{\min}=\min_i w_i$. If $w_{\min}\le\epsilon^{1/(10C^{k+1}(k+1)!)}$, our algorithm takes step 1 with probability 0.5. In this case,
we treat the smallest component as noise and recurse on the set of samples $Y$. We set the number of components to be $k-1$ 
and the fraction of outliers being $\epsilon +\epsilon^{1/(10C^{k+1}(k+1)!)}\le 2\epsilon^{1/(10C^{k+1}(k+1)!)}$. By Lemma~\ref{submixture condition lemma}, $Y$ is an $2\epsilon^{1/(10C^{k+1}(k+1)!)}$-corrupted version of a set satisfying Condition~\ref{cond:convergence-of-moment-tensors} with respect to a $(k-1)$-mixture for parameters $\gamma=\bigO{k\eps d^{-8k}k^{-C'k}/(1- w_{\min})}\le\eps d^{-8(k-1)}(k-1)^{-C'(k-1)}$ and $t=8k+48$.
Thus applying the inductive hypothesis to $Y$, we learn the mixture up to total variation distance $\bigOk{ \Paren{2\epsilon^{1/(10C^{k+1}(k+1)!)}}^{c_{k-1}}} \le\bigOk{\epsilon^{c_k}}$ with probability 
$0.5\exp\left(-(3(k-1)-2)/\eps^2\right)\ge\exp\left(-(3k-2)/\eps^2\right)$.  Now we may assume for all $i\in[k]$, $ w_i \geq \epsilon^{1/(10C^{k+1}(k+1)!)}$.
}
\paragraph{Mixture is Covariance Separated.}
Let $\alpha=\epsilon^{1/(10C^{k+1}(k+1)!)}$ and 
$\psi_1 = \{\exists \hspace{0.05in} \calN(\mu_i, \Sigma_i), \calN(\mu_j, \Sigma_j) \mid \Norm{\Sigma_i - \Sigma_j }_F > \alpha^{-1/2} \}$ 
be the event that the samples were drawn from a mixture that is \textit{covariance separated}. 
First, consider the case where $\psi_1$ is true. We will run 3(a) with probability 0.5.
Then it follows from Lemma \ref{lem:covariance_sep} that $Y$ 
can be partitioned into $Y_1$ and $Y_2$ in time $d^{O(1)}$, such that they both have at least one component 
and the fraction of outliers in each set $Y_1,Y_2$ is at most $\epsilon^{1/(10C^{k+1}(k+1)!)}$ with probability $\alpha^{O(k\log(k/\alpha))}$. 
% We guess whether the event that mixture $\calM$ is covariance separated, i.e. whether there exists a component $\calN(\mu_i, \Sigma_i)$ such that $\Norm{\Sigma^{-1/2} \Sigma_i\Sigma^{-1/2}  - I  }_F > \epsilon/\alpha$. With probability $1/2$, our guess is correct and we condition on it. 
%Consider the case where there exists an $i$ such that $\Norm{\Sigma^{-1/2} \Sigma_i\Sigma^{-1/2}  - I  }_F > \epsilon/\alpha$.
\new{
Then, we can guess the number of components in $Y_1$ and we will be correct with probability $1/k$. Conditioned on our guess being correct, let $Y_1$ consist of $k_1$ components and $Y_2$ consist of $k_2$ components and $k_1+k_2=k$.

Let $Q_1\cup Q_2$ be the non-trivial partition of $[k]$ in Lemma~\ref{lem:covariance_sep}, $\calM_j$ be a distribution proportional to $\sum_{i\in Q_j}w_iG_i$ and $W_j=\sum_{i\in Q_j}w_i$, then By Lemma~\ref{submixture condition lemma}, 
$Y_j$ is an $\bigO{\epsilon^{1/(10C^{k+1}(k+1)!)}}$-corrupted version of $\bigcup_{i\in Q_j} X_i$ satisfying Condition~\ref{cond:convergence-of-moment-tensors} with respect to $\calM$ with parameters $\gamma=\bigO{k\eps d^{-8k}k^{-C'k}/\alpha}\le\eps d^{-8k_j}(k_j)^{-C'k_j}$ and $t=8k+48$.}
%\Anote{what is $X'$? We need to define $Q_1$ and $Q_2$, a partition of the mixture and $X_1$ and $X_2$ are samples from $Q_1$ and $Q_2$ that satisfy the condition with the relevant parameters. then say $Y_1$ and $Y_2$ are corruptions of samples $X_1$ and $X_2$. Then, we can apply inductive hypothesis on $Y_1$ and $Y_2$. }
Then, applying the inductive hypothesis on $Y_j$ for $j=1,2$, with number of components $k_j$, 
we can learn the mixtures $\calM_j$ up to total variation distance error 
$\bigOk{\epsilon^{ c_{k_j}/(10C^{k+1}(k+1)!) }}$ 
\new{
with probability $\exp\left(-(3k_j-2)/\eps^2\right)$. Finally if this is the case, we combine the two hypotheses on $Y_1,Y_2$ by multiplying each weight in the hypothesis of $Y_j$ by $|Y_j|/|Y|$ and then taking union of two hypotheses.
Then our combining method gives a final output that learns our full hypothesis to total variation distance error
$\bigOk{\epsilon^{ c_{k_1}/(10C^{k+1}(k+1)!) }}+\bigOk{\epsilon^{ c_{k_2}/(10C^{k+1}(k+1)!) }}\le\bigOk{\epsilon^{c_k}}$ with probability at least 
$0.5\cdot0.5\cdot\frac{1}{k}\cdot\alpha^{O(k\log(k/\alpha))} 
\exp\left(-(3k_1-2)/\eps^2\right)
\exp\left(-(3k_2-2)/\eps^2\right)
\ge\exp\left(-(3k-2)/\eps^2\right)$.
%\Anote{here we also need to account for the event that the algorithm guessed $k_1\in[k]$ correctly. this happens w.p. $1/k$.}
}
% with parameters \textcolor{red}{BLAH}.  It follows from the guarantees of Theorem \ref{}, that the resulting partition is non trivial and each compoment has $\bigOk{\epsilon^{f_k}}$ outliers. Then, with probability at least $1/k$, we can correctly guess the number of components in partition. Further, \anote{add here that if one part is negligible treat as noise and recurse on the second, else recurse on both.}

\paragraph{Mixture is not Covariance Separated.}
Next, consider the case where $\psi_1$ is false. 
\nnnew{ With probability 0.5, the algorithm guesses correctly and executes Step 2}. Since the mixture is not covariance separated, we satisfy the preconditions of  Lemma~\ref{lem:sampling_from_nearl_isotropic}, and after applying the transformation in Step 2, $\Sigma$, the covariance of the mixture $\calM$, 
is $\sqrt{\eps}k/\alpha$-close to the $r$-dimensional identity, 
where $r$ is the rank of $\Sigma$. However, since we obtain the subspace exactly, 
we can simply project all samples on the subspace and we drop the $r$ in the subsequent exposition.

Let $X'$ be the set of points obtained by applying the Affine transformation from Step 2 as defined in Lemma \ref{lem:sampling_from_nearl_isotropic}.  Then, $X'$  satisfies Condition~\ref{cond:convergence-of-moment-tensors} with respect to a nearly isotropic mixture and parameters $\gamma=\eps d^{-8k}k^{-C'k}$ and $t=8k+48$ so that we can continue the algorithm with $X'$. Whenever we return a hypothesis in the following steps, we will first apply the inverse of the transformation on our estimates $\hat\mu_i$ and $\hat\Sigma_i$. Since total variation distance is affine invariant, we have the same error guarantee in total variation distance after applying the transformation.
From now on, we reduce to the case where $\Sigma$ is $\sqrt{\eps}k/\alpha$-close to the Identity.

There is a 50\% chance our algorithm runs Step 3(b) and we will analyze the remainder of this case under that assumption.
\new{
It follows from Lemma \ref{lem:list_decoding} that we obtain a hypothesis $\{\hat{\mu}_i, \hat{\Sigma}_i \}_{i \in [k]}$ such that 
$\Norm{\mu_i - \hat{\mu}_i}_2 \leq\bigO{\epsilon^{1/(20C^{k+1}(k+1)!)}}$ and 
$\Norm{\Sigma_i - \hat{\Sigma}_i }_F \leq \bigO{\epsilon^{1/(20C^{k+1}(k+1)!)}}$ with probability $\exp\left(-1/\eps^2\right)$. 
}
Conditioned on the hypothesis being correct, we now split into two cases: either all eigenvalues of all the estimated covariances are large 
(in which case we obtain total variation distance guarantees), 
or there is a direction along which we can project and cluster further. 

% Then, it follows from Theorem \ref{thm:list-recovery-by-tensor-decomposition} that \textcolor{red}{A: there is an issue here, the second term for covariance estimation does not become smaller than $1$. First, we cannot trade it off with $\lambda$, there's no $\lambda$ multiplying it. Second, we thought that an appropriate assumption on the $\alpha$, the input min mixing wt suffices, but it does not. We apply this theorem when we have already thrown away all the tiny wt clusters, so we should think of min wt being event $\epsilon^{f_k}$ and outliers being $\epsilon$-fraction}

 % \anote{argue here we satisfy the preconditions of this algorithm}. With probability bla, we now obtain estimates $\hat{\mu}_i$ and $\hat{\Sigma}_i$ such that for all $i \in [k]$, $\Norm{ \mu_i - \hat{\mu}_i }_2 = \poly_k\Paren{\epsilon^{f_k}}$ and  $\Norm{ \Sigma_i - \hat{\Sigma}_i }_F = \poly_k\Paren{\epsilon^{f_k}}$. 

\paragraph{Covariance Estimates have Large Eigenvalues.}
For the hypothesis $\{\hat{\mu}_i, \hat{\Sigma}_i \}_{i \in [k]}$ from the last step, we compute all the eigenvalues of the estimated covariance matrices, 
$\hat{\Sigma}_i$, for all $i \in [k]$.    
If, for all $i \in [k]$, $\lambda_{\min}\Paren{\hat{\Sigma}_i} \geq c \epsilon^{1/(40C^{k+1}(k+1)!)}$, 
we land in Step 3(b).i that we guess the mixing weights $\hat w_i$ uniformly in the range $[0,1]$ and then we output the corresponding hypothesis $ \{ \hat{w}_i, \hat{\mu}_i, \hat{\Sigma}_i \}_{i \in [k]}$.
With probability at least $\eps^k$, $\hat w_i$ are within $\eps$ of the true mixing weights. 
%\Anote{here we should state that we sample the mixing weights uniformly in the range $[0,1]$ and  then we output the corresponding hypothesis. We are within additive $\epsilon$ of the true weights w.p. $\epsilon^k$. }
Under this condition, by Lemma~\ref{lem:frobenius_to_tv}, the mixture 
$\widehat{\calM}= \sum_{i\in[k]} \hat{w}_i\calN\Paren{\hat{\mu}_i, \hat{\Sigma}_i}$ is 
$\bigOk{\epsilon^{1/(40C^{k+1}(k+1)!)}}\le\bigOk{\epsilon^{c_k}}$-close to $\calM$ in 
total variation distance 
\new{
with probability $0.5\cdot0.5\cdot\eps^k\cdot\exp\left(-1/\eps^2\right)\ge\exp\left(-(3k-2)/\eps^2\right)$. 
}
  % For each $\hat{w} \in [\epsilon, 1]^k $, let $\hat{w}_i$ be the corresponding estimate for the weight. Given the closest vector in the cover, we show that we recovered the mixture up to TV distance $\bigOk{\epsilon^{f_k}}$ \anote{small lemma here that getting weights to additive error $\epsilon$, means as above and covariance in Frob norm + lower bound on eigenvalues suffices for TV recovery.}.

\paragraph{One Covariance Has a Small Eigenvalue.}
%If psi_1 is true, there is a 50% chance we go to step 3a. Then by Lemma [whatever], in step 3a we compute a correct partition. With probability at least 1/k we correctly guess the number of parts in each partition. By Lemma [the submixtures lemma], Y_i are [whatever]-noisy copies of sets that satisfy our Condition with respect to the submixtures with parameters [whatever]. Thus, by the inductive hypothesis with probability [whatever] we learn each of the parts of the partition to small error. Finally if this is the case, our combining method gives a final output that learns our full hypothesis to parameter error [whatever]. On the other hand, if psi_1 is false, there is a 50% chance our algorithm picks step 3b and we will analyze the remainder of this case under that assumption.
Consider the case (Step 3(b).ii) where there exists a unit-norm direction $v$ 
and an estimate $\hat{\Sigma}_i$ such that $v^\top \hat{\Sigma}_i v \leq c \epsilon^{1/(40C^{k+1}(k+1)!)}$. 
It then follows from Lemma \ref{lem:hyperplane_clustering} that we can partition $Y$ into $Y_1$ and $Y_2$ 
such that each has at least one cluster and the total number of outliers in both $Y_1$ and $Y_2$
is at most $\bigO{\epsilon^{1/(80kC^{k+1}(k+1)!)}}n$.
If $Y_1$ or $Y_2$ has size less than $\epsilon^{1/(400kC^{k+1}(k+1)!)}n$, 
then we can treat it as noise and get an additive $O(\epsilon^{1/(400kC^{k+1}(k+1)!)})$-error in total variation distance. 
Otherwise, the fraction of outliers in both sets is at most 
$\bigO{(\epsilon^{1/(80kC^{k+1}(k+1)!)}n)/(\epsilon^{1/(400kC^{k+1}(k+1)!)}n)}=\bigO{\epsilon^{1/(100kC^{k+1}(k+1)!)}}$.
\new{
We then guess the number of components, $k_1$, in $Y_1$ with success probability $1/k$. Let $k_2=k-k_1$ be the number of components in $Y_2$. Then, conditioned on this event holding,
$Y_j$ is an $\bigO{\epsilon^{1/(100kC^{k+1}(k+1)!)}}$-corrupted version of a set satisfying Condition~\ref{cond:convergence-of-moment-tensors} with respect to a mixture of $k_j$ components with parameter $\gamma=k\eps d^{-8k}k^{-C'k}/\alpha\le\eps d^{-8(k_j)}(k_j)^{-C'(k_j)}$ and $t=8k+48$.
}
We can apply the inductive hypothesis to $Y_1$ 
with number of components $k_1$ and fraction of outliers $\bigO{\epsilon^{1/(100kC^{k+1}(k+1)!)}}$, 
and conclude that we learn the components of $Y_1$ 
to total variation distance $\bigOk{\epsilon^{c_{k_1}/(100kC^{k+1}(k+1)!)}}$ 
\new{
with probability $\exp\left(-(3k_1-2)/\eps^2\right)$. 
A similar argument holds for $Y_2$.
Finally if this is the case, we combine the two hypotheses on $Y_1,Y_2$ by multiplying each weight by $|Y_j|/|Y|$ and then taking union of two hypotheses.
Then our combining method gives a final output that learns our full hypothesis to total variation distance error
$\bigOk{\epsilon^{c_{k_1}/(100C^{k+1}(k+1)!)  }}+\bigOk{\epsilon^{c_{k_2}/(100C^{k+1}(k+1)!) }}+\bigO{\epsilon^{1/(400kC^{k+1}(k+1)!)}}\le\bigOk{\epsilon^{c_k}}$ with probability at least 
$0.5\cdot0.5\cdot\frac{1}{k}\cdot\exp\left(-1/\eps^2-(3k_1-2)/\eps^2-(3k_2-2)/\eps^2\right)\ge\exp\left(-(3k-2)/\eps^2\right)$.

\paragraph{Sample Size and Running Time of Algorithm~\ref{algo:cluster_or_decode}}
By Lemma~\ref{lem:det-suffices}, we need $n\ge kt^{C't}d^t/\gamma^3$ samples to generate $X$ satisfying Condition~\ref{cond:convergence-of-moment-tensors} with parameters $(\gamma,t)$. We set $\gamma=\eps d^{-8k}k^{-C'k}$ and $t=8k+48$. Then $n\ge n_0=(8k)^{O(k)}d^{O(k)}/\eps^3$.
The running time in each sub-routine we invoke is dominated by the running time 
of the tensor decomposition algorithm, and by Lemma~\ref{lem:list_decoding} in the worst case 
this is $\poly(|L|,n)=\poly\left(\exp\left(1/\eps^2\right),d^{O(k)}/\eps^3\right)=
d^{O(k)}\exp\left(1/\eps^2\right)$.

This completes the first part of the proof.
}

\paragraph{Aggregating Hypotheses.}
\new{
We run Algorithm~\ref{algo:cluster_or_decode} repeatedly on set $Y$ and add the return hypothesis into a list $\calL$ until with probability 0.99, there exists a hypothesis $\widehat\calM \in \calL$ such that $\widehat 
\calM$ and $\calM$ are $\bigOk{\eps^{c_k}}$-close in total variation distance. Since Algorithm~\ref{algo:cluster_or_decode} outputs a correct mixture with probability $\exp\left(-(3k-2)/\eps^2\right)$, we will run Algorithm~\ref{algo:cluster_or_decode} for $\exp\left(O(k)/\eps^2\right)$ times. Then the total running time is $\exp\left(O(k)/\eps^2\right)\cdot d^{O(k)}\exp\left(1/\eps^2\right)=d^{O(k)}\exp\left(O(k)/\eps^2\right)$.
}
%The grid $G$ of weights we search over has at most $(1/\epsilon)^k$ points. 
%For each vector of weights in the grid, we generate more than one hypotheses if the event $\psi_1$ is false. 
%In the worst case, for each setting of the weights, we generate a list, and aggregating the lists results in a total number of 
%$\exp\Paren{ (1/\epsilon)^{k^{O(k^2)}} }$ many hypotheses. 
%%Therefore, combining the lists, we have $\bigO{\poly_k(1/\epsilon)}$ hypothesis. 
%Further,  observe that each lemma employed above has at least $99/100$ probability of success. 
%Additionally, we try every possible setting of $k_1$ and $k_2$ in the partition output by Algorithm \ref{algo:robust-partial-clustering} 
%and Algorithm \ref{algo:recursive-clustering}, which increases the number of hypotheses by a $k^2$ factor. 
%We can then conclude that we have a list of size $\exp\Paren{ (1/\epsilon)^{k^{O(k^2)}} }$ 
%such that with probability at least $9/10$, the list contains at least one hypothesis 
%$\{\hat{w}_i, \hat{\mu}_i,\hat{\Sigma}_i \}_{i \leq [k]}$
%that is $\bigO{\epsilon^{c_k}}$-close to the parameters of the input mixture $\calM$. 

% \anote{check preconditions, set params} such that the output is two mixtures of Gaussians, and the fraction of misclassified points is at most $\bigOk{\epsilon^{f_k/10k}}$. Then guess the number of components in each cluster again and recurse. Use inductive hypothesis to show that the resulting recursive call learns mxitures in tv. 

\paragraph{Robust Tournament.}
%\Anote{Analysis of success probability of the algorithm, repeating to boost, this part is simple since all statements are at least constant probability}. Bound the size of the mega list of hypothesis, we are searching over a grid of weights, we  such that one hypothesis is close in parameter distance to the input mixture. Running a robust tournament on fresh samples and determining the winner.  
%We have now handled all the cases involved. 
%We then repeat the algorithm $\poly(k)$ times to boost the probability of success, and get a list of size $\exp(\poly_k(1/\epsilon))$. 
Then we need to run a robust tournament in order to find a hypothesis that is close to the true mixture in total variation distance.
%This procedure has been studied in several robust learning works ([]). 
Fact~\ref{lem:tournament}  shows that we can do this efficiently only with access to an $\epsilon$-corrupted set of samples 
of size $\bigOk{\log(1/\eps)/\eps^{2c_k}}$.
%We will set $\mathcal{C}$ to be the class of mixtures of $k$ Gaussian and $\mathcal{\hat C}$ to be a collection of hypotheses that means and covariances are in an $\eps$-net in $\R^d$ and $\R^{d\times d}$. Hence the size of $\mathcal{\hat C}$ is $O_k(1/\eps^{\poly(k)d^2})$. By the sample size in Theorem~\ref{thm:partial-clustering-poly} and Theorem~\ref{thm:list-recovery-by-tensor-decomposition}, we have $N=k d^{O(1)}+k d^{O(k)}/O_k(\epsilon)$. We can set $\tau=2/d$. Then by Fact~\ref{lem:tournament}, the sample size of the full algorithm is 
%\begin{align*}
%n&\ge O(N+(\log(|\mathcal{\hat C}|)+\log(1/\tau))/\eps^2)\\
%&= O(k d^{O(1)}+k d^{O(k)}/O_k(\epsilon)+(\poly(k)d^2\log(1/\eps)+\log(d/2))/\eps^2)\\
%&=O_k(d^{O(k)}/\eps^2)
%\end{align*}

%\paragraph{Sample Size and Running Time.}
%The sample size of the full algorithm is dominated by the sample size of the tensor decomposition algorithm, i.e.,
%$d^{O(1)}+d^{O(k)}/\bigOk{\epsilon}+\bigOk{\log(1/\eps)/\eps^{2c_k}}=d^{O(k)}/\eps$.
%The running time in each sub-routine we invoke is dominated by the running time 
%of the tensor decomposition algorithm, and in the worst case 
%this is $\poly(|\calL|,n)=\poly\left(\exp\Paren{ (1/\epsilon)^{k^{O(k^2)}} },d^{O(k)}/\eps\right)=
%d^{O(k)}\exp\left(1/\eps^{k^{O(k^2)}}\right)$.

This completes the proof.
\end{proof}

%!TEX root = main.tex

\section{More Efficient Robust Partial Cluster Recovery}
\label{sec:robust-partial-clustering-upg} 

In this section, we prove the following upgraded partial 
clustering theorem. In contrast to Theorem~\ref{thm:partial-clustering-poly}, here we obtain a probability of success that is inverse exponential in $k$ instead of $1/\alpha$.  

\begin{theorem}[Robust Partial Clustering in Relative Frobenius Distance] \label{thm:partial-clustering-poly-upg}
Let $0 \leq \epsilon < \alpha/k \leq 1$ and $t \in \N$. There is an algorithm with the following guarantees: 
%Let $\{\mu_i,\Sigma_i\}_{i \leq k}$ be means and covariances of $k$ unknown Gaussians\Anote{don't really need this quantifier, defined in the next line.}. 
Let $Y$ be an $\epsilon$-corruption of a sample $X$ of size $n\geq \paren{dk}^{Ct}/\epsilon$ for a large enough constant $C>0$, from 
$\calM = \sum_i w_i \cN(\mu_i,\Sigma_i)$ that satisfies Condition~\ref{cond:convergence-of-moment-tensors} 
with parameters $2t$ and $\gamma \leq \epsilon d^{-8t}k^{-Ck}$, for a large enough constant $C >0$. 
Suppose further that $w_i \geq \alpha > 2 \epsilon$ for each $i \in [k]$, 
and that for some $t \in \N$, $\beta > 0$ there exist $i,j \leq k$ such that 
$\Norm{\Sigma^{\dagger/2}(\Sigma_i -\Sigma_j)\Sigma^{\dagger/2}}_F^2 = \Omega\left((k^2 t^4)/\beta^{2/t} \alpha^4\right)$, 
where $\Sigma$ is the covariance of the mixture $\calM$. 
Then, for any $\eta \gg \sqrt{\epsilon/\alpha}$, the algorithm runs in time $n^{O(t)}$, and with probability at least $2^{-O\Paren{k}} (1- O(\eta/\alpha - \sqrt{\eta}) )$ over the random choices of the algorithm, 
outputs a partition $Y = Y_1 \cup Y_2$ satisfying:   
\begin{enumerate}
 \item \textbf{Partition respects clustering:} for each $i$, $\max \left\{ \frac{1}{w_i n}|Y_1 \cap X_i|, \frac{1}{w_i n}|Y_2 \cap X_i|\right\} \geq 1- O(\sqrt{\eta}) - O(\frac{\beta}{\eta \alpha^2})$, where $X_i \subset X$ corresponding to the points drawn from $\calN(\mu_i, \Sigma_i)$. 
 \item \textbf{Partition is non-trivial:} $\max_{i}\frac{1}{w_i n} |X_i \cap Y_1|, \max_{i}\frac{1}{w_i n} |X_i \cap Y_2| \geq 1- O(\sqrt{\eta}) - O(\frac{\beta}{\eta \alpha^2})$. 
\end{enumerate}
%\Anote{introduce notation $X_i$ here.}
\end{theorem}

\subsection{Algorithm} 

Our algorithm will solve SoS relaxations of a polynomial inequality system. 
The indeterminates in this system are $X'$ (that is intended to be the guess for the original uncorrupted sample), a cluster of size $\alpha n$ within $X'$ (indicated by $z_i$s) with mean $\hat{\mu}$ and covariance matrix $\hat{\Sigma}$ and $\Pi$ (intended to be the square root of $\hat{\Sigma}$). The input corrupted sample $Y$ is a constant in this inequality system. Let $U\in \R^{d\times d}$ and $m, z \in \R^d$ also be indeterminates of the proof system.
The system can be thought of as encoding the task of finding clusters $\hat{C}$ within $Y$ that satisfies certifiable hypercontractivity of degree $2$ polynomials.% (for Theorem~\ref{thm:partial-clustering-poly}) and (only for Theorem~\ref{thm:partial-clustering-non-poly}) certifiable anti-concentration. 

We present the constraints grouped together into meaningful categories below: 
The first set of constraints enforce that $\hat{\Sigma}$ is the square of $\Pi$. 
%\Anote{we didn't define $U$, though it is clear that these are indeterminates. }
\begin{equation}
\text{Covariance Constraints: $\cA_1$} = %  \cA_{w,\Pi}\colon
  \left \{
    \begin{aligned}
      &
      &\Pi
      &=UU^{\top}\\
      % &
      % &\Theta \Pi = \Pi \Theta
      % &= I\\
      &
      &\Pi^2
      &=\hat{\Sigma}\\
      % &
      % &\Theta^2\Sigma  = \Sigma \Theta^2
      % &=\Theta\Sigma\Theta =I\\
    \end{aligned}
  \right \}
\end{equation}

The intersection constraints force that $X'$ intersects $Y$ in all but an $\epsilon n$ points (and thus, $2\epsilon$-close to unknown sample $X$). 
\begin{equation}
\text{Intersection Constraints: $\cA_2$} = %  \cA_{w,\Pi}\colon
  \left \{
    \begin{aligned}
      &\forall i\in [n],
      & m_i^2
      & = m_i\\
      &&
      \textstyle\sum_{i\in[n]} m_i 
      &= (1-\epsilon) n\\
      &\forall i \in [n],
      &z_i (\tilde{y}_i-x'_i)
      &= 0
    \end{aligned}
  \right \}
\end{equation}

The subset constraints enforce that $z$ indicate a subset of size $\alpha n$ of $X'$. % $\hat{C}$ intended to be the true clusters of $X'$.
\begin{equation}
\text{Subset Constraints: $\cA_3$} = %  \cA_{w,\Pi}\colon
  \left \{
    \begin{aligned}
      &\forall i\in [n].
      & z_i^2
      & = z_i\\
      &&
      \textstyle\sum_{i\in[n]} z_i 
      &= \alpha n\\
    \end{aligned}
  \right \}
\end{equation}

Parameter constraints create indeterminates to stand for the covariance $\hat{\Sigma}$ and mean $\hat{\mu}$ of $\hat{C}$ (indicated by $z$).
\begin{equation}
\text{Parameter Constraints: $\cA_4$} = %  \cA_{w,\Pi}\colon
  \left \{
    \begin{aligned}
      &
      &\frac{1}{\alpha n}\sum_{i = 1}^n z_i \Paren{x'_i-\hat\mu}\Paren{x'_i-\hat\mu}^{\top}
      &= \hat{\Sigma}\\
      &
      &\frac{1}{\alpha n}\sum_{i = 1}^n z_i x'_i
      &= \hat\mu\\
    \end{aligned}
  \right \}
\end{equation}

\text{Certifiable Hypercontractivity : $\cA_4$}=\\
\begin{equation}
 % \cA_{w,\Pi}\colon
  \left \{
    \begin{aligned}
     &\forall t \leq 2s
     &\E_{z} (Q-\E_{z}Q)^{2t}
     &\leq (Ct/\alpha)^{t}2^{2t} \Paren{\E_{z} (Q-\E_{z}Q)^2}^t\\
     &
     &\E_{z}(Q-\E_{z}Q)^2
     &\leq 10\Paren{\frac{1}{\alpha}}^2 \Norm{Q}_F^2
    \end{aligned}
  \right \}
\end{equation}

where we write $\E_{z} Q$ as a shorthand for the polynomial $\frac{1}{\alpha^ n} \sum_{i} z_i Q(x_i)$ and $\E_{z} (Q-\E_{X_r}Q)^{2j}$ for the polynomial $\frac{1}{ \alpha n} \sum_{i} z_i \Paren{Q(x'_i)-\frac{1}{ \alpha  n} \sum_{i \leq n} z_i Q(x'_i)}^{2j}$ for any $j$. Note that $Q$ is a $d \times d$-matrix valued indeterminate. Observe that $Q$ itself can be eliminated from the system as is standard in several applications~\cite{KS17,KStein17,HopkinsL18,BK20} of SoS proofs in obtaining a succinct set of polynomial constraints (see Section 4.3 on ``Succinct Representation of Constraints'' in ~\cite{TCS-086} for an exposition).

%\Anote{we didn't define $Q$ and quanitify over all deg 2 polynomials in the constraint system.}

\begin{mdframed}
  \begin{algorithm}[Polynomial Time Partial Clustering]
    \label{algo:robust-partial-clustering-upg}\mbox{}
    \begin{description}
    \item[Given:]
        A sample $Y$ of size $n$. An outlier parameter $\epsilon > 0$ and an accuracy parameter $\eta > 0$. 
    \item[Output:]
      A partition of $Y$ into partial clustering $Y_1 \cup Y_2$.
    \item[Operation:]\mbox{}
    \begin{enumerate}
      \item \textbf{Mean and Covariance Estimation}: Apply Robust Mean and Covariance Estimation (Fact~\ref{fact:ks-cov-est}) to estimate $\hat{\mu}$ and $\tilde{\Sigma}$ such that $\frac{1}{2} \Sigma \preceq \tilde{\Sigma} \preceq 1.5 \Sigma$ where $\Sigma$ is the covariance of the uncorrupted input mixture. 
      \item \textbf{Approximate Isotropic Transformation}: For each $y_i \in Y$, let $\tilde{y}_i = \tilde{\Sigma}^{\dagger/2} (y_i-\hat{\mu})$. Let $\tilde{Y} = \cup_{i \leq n} \tilde{y}_i$. 
      \item \textbf{SDP Solving:} Find a pseudo-distribution $\tzeta$ satisfying $\cup_{i =1}^4 \cA_i$ such that $\pE_{\tzeta}z_i \in \alpha\pm o_d(1)$ for every $i$. If no such pseudo-distribution exists, output fail.
        \item \textbf{Rounding:} Let $M = \pE_{z \sim \tzeta} [zz^{\top}]$. 
          \begin{enumerate} 
            \item \textbf{Generate candidate clusters}: For $\ell = O(1/\alpha \log \eta/\alpha)$ times, draw a uniformly random $i \in [n]$ and let $\hat{C}_i = \{j \mid M(i,j) \geq \alpha^2/2 \}$.  Let $\cL = \cup_{i \leq \ell} \hat{C}_i$. 
            %\Anote{$\eta$ not set or defined.}
            \item \textbf{Candidate 2nd Moment Estimation:} For each $\hat{C}_i \in \cL$, let $S_i$ be the output of running robust 2nd moment estimation with Frobenius error (Lemma~\ref{lem:2nd-moment-est}) on $\hat{C}_i$ with outlier parameter $\eta_i'= O(\frac{\epsilon}{\alpha} + \frac{\beta}{\alpha^2 \eta})$ . 
             \item \textbf{Merge candidate clusters}: For each $i \leq \ell$, find $\cL_i$ to be the collection of all $j$ such that $\Norm{S_i- S_j}_F \leq 2C\tau$ for a large enough constant $C>0$. Set $\hat{C}_i \cup \cL_i = \hat{B}_i$. Repeat on $\cL \setminus \{\cL_i \cup i\}$. 
            \item \textbf{Output a union of a random subset of candidates}: For $\cL' = \cup_{i} \hat{B}_i$, choose a uniformly random subset $S$ of $\cL'$, set $Y_1 = \cup_{j \in S} \hat{B}_j$ and set $Y_2 = Y\setminus Y_1$. 
      \end{enumerate}
      \end{enumerate}
    \end{description}
  \end{algorithm}
\end{mdframed}

\paragraph{Analysis of Algorithm} 

\begin{lemma}[Success of Step 1]
Let $\tilde{\Sigma}$ be the output of the robust covariance estimation algorithm (Fact~\ref{fact:ks-cov-est}) applied to the input sample $Y$ with outlier parameter $\epsilon$. If $Y$ is an $\epsilon$-corruption of a sample $X$  from a GMM with minimum weight $\geq \alpha \geq  \Omega(\sqrt{\epsilon})$, mixture mean $\mu$ and covariance $\Sigma$ satisfying Condition~\ref{cond:convergence-of-moment-tensors}, then, 

\[
0.5 \Sigma \preceq \tilde{\Sigma} \preceq 1.5 \Sigma\mcom
\]
\[
\Norm{\tilde{\Sigma}^{-1/2}(\mu-\hat{\mu})}_2 \leq O(\sqrt{\epsilon}/\alpha)\mper
\]
\end{lemma}
%\Anote{this mean estimation guarantee seems off, for things to make sense, there shouldn't be a $\Sigma^{-1/2}$ right multiplying $(\mu-\hat{\mu})$. We should get $\Norm{\tilde{\Sigma}^{-1/2}(\mu-\hat{\mu})}_2 \leq \sqrt{\eps}/\alpha$ from Kothari-Steurer.}

\begin{proof}
The lemma immediately follows by noting that GMMs with minimum weight $\alpha$ are $4$-certifiably $1/\alpha$-subgaussian (Fact~\ref{fact:subgaussianity-of-mixtures}) and $\alpha \geq \Omega(\sqrt\epsilon)$.

\end{proof}

\begin{lemma}[Simultaneous Intersection Bounds for Frobenius Separated Case] 
\label{lem:simultaneous-intersection-bounds-poly-clustering-upg}
Let $X = X_1 \cup X_2 \cup \ldots X_k$ be a sample of size $n\geq \paren{dk}^{Ct}/\epsilon$ for a large enough constant $C>0$, from 
$\calM = \sum_i w_i \cN(\mu_i,\Sigma_i)$ that satisfies Condition~\ref{cond:convergence-of-moment-tensors} 
with parameters $2t$ and $\gamma \leq \epsilon d^{-8t}k^{-Ck}$. 
Suppose further that $\Norm{\mu_i}_2 \leq \frac{2}{\alpha}$ for every $i$, $\Norm{\Sigma_i}_2 \leq \frac{1}{\alpha}$ for every $i$ and the mixture mean $\mu$, covariance $\Sigma$ satisfy $\Norm{\mu}_2 \leq 1$ and $0.5 I \preceq \Sigma \preceq 1.5 I$. 
Let $\tau = 10^8 \frac{C^6 t^4}{\beta^{2/t}\alpha^2}$, for any $\beta>0$. 
Then, given any $\epsilon$-corruption $Y$ of $X$, for every $i,j$ such that $\Norm{\Sigma_i -\Sigma_j}_F^2 \geq  \Omega( \tau)$, 
%\Anote{is the constant in front of $\tau$ really required?, quantification over $\beta$ missing, assuming for any $\beta>0$} 
\[
\Set{\cup_{i = 1}^4\cA_i } \sststile{2t}{z} \Set{z'(X_i) z'(X_j) \leq \beta}\mcom
\]
where $z'(X_i) = \frac{1}{w_i n} \sum_{j \in X_i} z_j \1(x_j=y_j)$ for every $i$. 
\end{lemma}

\begin{proof}[Proof of Theorem~\ref{thm:partial-clustering-poly-upg}]
First, since $Y$ is an $\epsilon$-corruption of a sample $X$ from a GMM such that $X$ satisfies Condition~\ref{cond:convergence-of-moment-tensors}, our robust mean and covariance estimation procedure (Step 1) applied to the mixture succeeds and recovers an estimate of the covariance that is multiplicative $1\pm 0.5$-factor approximation in Löwner order. Thus, for the rest of the analysis, we can assume that the smallest and largest eigenvalue of the mixture covariance are at least $0.5$ and at most $1.5$. Since each component has weight at least $\alpha$, this means that each of the constitute component covariance can now be assumed to have a spectral norm at most $1.5/\alpha$. 

Next, by an argument similar to the one presented in the proof of Theorem~\ref{thm:partial-clustering-non-poly}, the convex program we wrote is approximately solvable in polynomial time and is feasible whenever the uncorrupted sample $X$ satisfies Condition~\ref{cond:convergence-of-moment-tensors}. The only change here is in the certifiable hypercontractivity constraints where instead of the RHS of the bounded variance constraint is stated in terms of $\Norm{Q}_F^2$ instead of $\Norm{\Pi Q \Pi}_F^2$ with an additional slack of $O(1/\alpha^2)$. This modified constraint is satisfied by all true clusters by an application of Lemma~\ref{lem:frob-of-product} since each of their covariance has spectral norm at most $1.5/\alpha$.   %Observe that $\pE[z_i] \in \alpha \pm o_d(1)$ is a linear constraint. Thus, the program solved in Step 1 of the algorithm can be solved approximately via the ellipsoid method in polynomial time. We now prove that the program we solve is feasible: that is, that is has a solution pseudo-distribution $\tzeta$ satisfying $\pE_{\tzeta}[z_i] = \alpha \pm o_d(1)$ for each $i$. To see why, let $\tzeta$ be the actual distribution that always sets $X' = X$, chooses an $i \in [k]$ with probability $w_i$ and outputs a uniformly subset $\hat{R}$ of size $\alpha n$ of $X_i$ conditioned on $\hat{R}$ satisfying $\cA$. Then, notice that since $X$ satisfies Condition~\ref{cond:convergence-of-moment-tensors}, by Fact~\ref{fact:moments-to-analytic-properties}, the uniform distribution on each $X_i$ has $t$-certifiably $C$-hypercontractive degree $2$ polynomials. By a concentration argument using high-order Chebyshev inequality, similar to the proof of Lemma~\ref{lem:det-suffices} (applied to uniform distribution on $X_i$ of size $n \geq (dk)^{O(t)}$), $\hat{R}$ chosen above satisfies the constraints $\cA$ with probability at least $1-o_d(1)$. As a consequence, observe that the probabililty that $z_i$ is set to $1$ under this distribution is $\alpha\pm o_d(1)$. 

%And, further, for any $i \in X_r$ such that $\E_{\tzeta} z_i z_j \leq 2 w_r \alpha^2/w_r^2 = 2\alpha^2/w_r$. Thus, $\E_{i \sim [n]} \Norm{M^{off}_i}_{\infty} \leq \sum_r w_r 2\alpha^2/w_r \leq 2 \alpha^2$ where $M_i^{off}$ is the $i$-th row of $M^{off}$ -- the off diagonal part of $\pE_{\tzeta} zz^{\top}$. 

\paragraph{Rounding} Let $M = \pE_{\tzeta} z z^{\top}$. Then, by an argument similar to the proof of Theorem~\ref{thm:partial-clustering-non-poly}, we can conclude:
\begin{enumerate}
\item $o_d(1) + \alpha \geq M(i,j) \geq 0$.
\item $\sum_{j = 1}^n M(i,j) \geq (\alpha^2-o_d(1)) n$ for every $i$.
\item For every $i$, let $B_i$ be the set of ``large entries'': i.e. $j$ such that $M(i,j) \geq \alpha^2/2$. Then, $|B_i| \geq \alpha n/2$.
\end{enumerate}

In the following, let $M_i$ denote the $i$-th row of $M$ and $\Norm{M_i}_1$ for the sum of the non-negative entries of the vector $M_i$. 

\paragraph{Candidate Clusters}
For every $i$, let $F_i \subseteq [k]$ be the set of all $i' \in [k]$ such that $\Norm{\Sigma_i - \Sigma_{i'}}_F^2 \geq \tau$ (i.e., $F_i$ is the set of indices of true clusters whose covariances are far from that of the $i$-th cluster in Frobenius norm). 
For every row $j \in [n]$, let $C(j) \in [k]$ be such that $j \in X_{C(j)}$ 
%\Anote{$X_{C(j)}$ not really defined,but lets just say $C(j)$ corresponds to component label of j}. 
Let's call $j$-th row of $M$ ``good'' if $x_j = y_j$ (i.e $j$-th sample is not an outlier) and the following condition holds:
\[
\sum_{r \in F_{C(j)}} \sum_{\ell \in X_r: x_{\ell} = y_{\ell}} M(j,\ell) \leq \Norm{M_j}_1 \Paren{\frac{\beta}{\eta}} \mper
\]

Thus, by Markov's inequality, the fraction of non-outlier entries in $B_j$ that come from $X_{r'}$ such that $r' \in F_r$ is at most $2\Paren{\frac{\beta}{\eta \alpha^2}}$.

Let us estimate the fraction of good rows now. 
From Lemma~\ref{lem:simultaneous-intersection-bounds-poly-clustering} and Fact~\ref{fact:sos-completeness}, we have that for every $r$ and $r' \in F_r$:

\[
 \pE[ z'(X_r)z'(X_{r'})] \leq \beta \mper
\]
Here, recall that $z'(X_r) = \frac{1}{w_r n} \sum_{i \leq n} z_i \1(y_i = x_i)$ for every $r$.
Summing up over $r' \in F_r$ yields:
\[
\frac{1}{w_r n} \sum_{r' \in F_r} \sum_{i \in X_r: x_i = y_i} \sum_{j \in X_{r'}: x_j = y_j} \pE[z_i z_j] \leq  n\beta \mper
\]

Thus, by Markov's inequality, with probability at least $1-\eta$ over the choice of $i \in X_r$ such that $x_i = y_i$, it must hold that:
\[
\sum_{r' \in F_r} \sum_{j \in X_{r'}: x_j = y_j} \pE[z_i z_j] \leq  n \Paren{\frac{\beta}{\eta}} \mper
\]

Using that $(1-\epsilon/\alpha)$-fraction of $i \in X_r$ satisfy $x_i = y_i$, for every $r$, we conclude that $1-\eta-\epsilon/\alpha$-fraction of the rows $X_r$ are good. 

Thus, with probability at least $(1-\eta-\epsilon/\alpha)^{\ell}\geq (1- O(\ell( \eta + \epsilon/\alpha) ))$, every candidate cluster picked in Step 1 of our rounding algorithm corresponds to the large entries from a good row of $M$. 

We next claim that we cover most of the points in the input in the union of the candidate clusters:
\begin{equation}
|\cup_{i \leq \ell} \hat{C}_i| \geq \Paren{ 1-2\sqrt{\eta} - \frac{\epsilon}{\sqrt{\eta} \alpha}}n \label{eq:coverage}
\end{equation} with probability at least $1-\sqrt{\eta}$. To see why, let's estimate the chance that an element $j \in [n]$ does not appear in any of the $\hat{C}_i$s. First, we can assume that $j$-th row of $M$ is good (this loses us $\eta+ \epsilon/\alpha$-fraction $j$s). For each such $j$, there are at least $\alpha n/2$ large entries. Since $M$ is symmetric, the $j$-th column of $M$ also has $\alpha n/2$ large entries. Further, $j$ appears in $\cup_{i \leq \ell} \hat{C}_i$ if at least one of the $\alpha n/2$ large entries are chosen in our rounding. The chance that this does not happen in any of the $\ell$ picks is at most $(1-\alpha/2)^{\ell}$. Since $\ell =\Theta\Paren{ \frac{1}{\alpha} \log (1/\eta)}$, this chance is at most $O(\eta)$. Thus, in expectation $|[n] \setminus \cup_{i \leq \ell} \hat{C}_i| \leq O\Paren{  \eta + \epsilon/\alpha}n$. By Markov's inequality, with probability at least $1-\sqrt{\eta}$,  $|[n] \setminus \cup_{i \leq \ell} \hat{C}_i| \leq  O\Paren{ \sqrt{\eta} + \frac{\epsilon}{\sqrt{\eta} \alpha}}n$. 

By a union bound, with probability at least $1- O\Paren{ \eta \ell - \epsilon\ell/\alpha } - \sqrt{\eta} \geq  1- O\Paren{ \eta\log(1/\eta)/\alpha  - \sqrt{\eta} } $, we must thus have both the following events hold simultaneously:
\begin{equation}
|\cup_{i \leq \ell} \hat{C}_i| \geq \Paren{ 1-2\sqrt{\eta} - \frac{\epsilon}{\sqrt{\eta} \alpha}}n \geq (1-3\sqrt{\eta})n \label{eq:coverage}
\end{equation}
and, for every $1\leq i \leq \ell$, 
\begin{equation}
|\hat{C}_i \cap (\cup_{r' \in F_{C(r)}} X_{r'})| \leq 2\Paren{\frac{\beta}{\eta \alpha} + \epsilon/\alpha} \cdot |\hat{C}_i| \mper \label{eq:no-bad-intersection}
\end{equation}
%\Anote{where is the last inequality coming from? fraction of outliers in a given row plus? }

\paragraph{Merging Candidate Clusters}

Observe, following the proof of Theorem \ref{thm:partial-clustering-poly}, we know that there exists a partition of $Y$ into sets $Y_1$ and $Y_2$ such that for all $i$, 
\[
\max \left\{ \frac{1}{w_i n}|Y_1 \cap X_i|, \frac{1}{w_i n}|Y_2 \cap X_i|\right\} \geq 1- O(\sqrt{\eta}) - O(\frac{\beta}{\eta \alpha^2}) ,
\]
and
\[
\max_{i}\frac{1}{w_i n} |X_i \cap Y_1|, \max_{i}\frac{1}{w_i n} |X_i \cap Y_2| \geq 1- O(\sqrt{\eta}) - O(\frac{\beta}{\eta \alpha^2}).
\]

Next, we show that the merging step preserves this partition.
For each $\hat{C}_i$, let $\hat{C}_i' = \hat{C}_i \cap \cup_{j \not \in F_{C(i)}} X_j$. That is, $\hat{C}_i'$ is the subset of $\hat{C}_i$ obtained by removing points from ``far-off'' clusters and the outliers. Then, since we know that $|\hat{C}_i| \geq \alpha n/2$ and $|X \cap Y| \geq (1-\epsilon)n$, we must have $|\hat{C}_i|-|\hat{C}_i'| = \eta'_i |\hat{C}_i| \leq \Paren{\frac{3\epsilon}{\alpha} +  \frac{2\beta}{\eta \alpha^2}} |\hat{C}_i| $, where we note that $\eta'_i \leq \Paren{\frac{3\epsilon}{\alpha} +  \frac{2\beta}{\eta \alpha^2}}$.

Thus, $\hat{C}_i'$ is a collection of $\geq (1-\eta'_i) \alpha n/2$ points from the submixture $\cup_{j \not \in F_{C(i)}} X_j$. We know that each $\mu_i$ is of $\ell_2$ norm at most $1/\alpha$, each $\Sigma_i$ has spectral norm at most $1/\alpha$ and that for every $r,r' \not \in F_{C(i)}$, $\Norm{\Sigma_r -\Sigma_{r'}}_F^2 \leq \tau$. Further, $\Sigma_r$ is at most $\tau+1/\alpha = O(\tau)$-different in Frobenius norm from the covariance of the sub-mixture. By an argument similar to the proof of Lemma~\ref{fact:mean-variance-subgaussian-arbitrary-covariance}, we can establish that the submixture with components $r$ such that $r \not \in F_{C(i)}$ is $O(\tau)$-certifiably bounded variance. Since $\hat{C}_i'$ is a subset of this sub-mixture of size $\alpha n/2$, we immediately obtain that $\hat{C}_i'$ is $O(\tau/\alpha)$-certifiably bounded variance. 
%From Lemma~\ref{lem:mixtures-of-certifiably-hypercontractive-distributions} and Condition~\ref{cond:convergence-of-moment-tensors}, we know that $\cup_{j \not \in F_{C(i)}} X_j$ is certifiably $O(C/\alpha)$-hypercontractive. 
%Further, from Lemma~\ref{fact:mean-variance-subgaussian-arbitrary-covariance} and that fact that all covariance are of spectral norm at most $O(1/\alpha)$ and a Frobenius separated by at most $\tau$ in the mixture we know that $\cup_{j \not \in F_{C(i)}} X_j$ has degree 2 polynomials of variance at most $O(\tau/\alpha^2)$. 
Thus, applying Lemma~\ref{lem:2nd-moment-est} with outlier parameter $\eta_i'$ to input $\hat{C}_i$ yields an estimate $S_i$ of the 2nd moment of $\hat{C}_i'$ within a Frobenius error of at most $O(\tau/\alpha)$ 
%\Anote{maybe we can just say here briefly we use the trivial rounding applied to the sos proof that is in the first lemma here.}. 
From Lemma~\ref{lem:subsamples-close-bounded-variance}, this is an additional $O(1/\alpha)$ different in Frobenius norm from the 2nd moment of the sub-mixture which, as argued above, is itself at most $O(\tau)$ different in Frobenius norm from $\Sigma_i$. Chaining together yields that $\Norm{\Sigma_i - S_i}_F^2 \leq O(\tau/\alpha)$ for some constant $C$. 

Since for every $r \in S, r' \in T$ it holds that $\Norm{\Sigma_r -\Sigma_{r'}}_F^2 \gg \Omega(\tau/\alpha)$, conditioned on the good event above, our algorithm never merges $\hat{C}_i$ and $\hat{C}_j$ whenever $i,j$ are non-outliers and $i$ is in some cluster in $S$ and $j$ is in some cluster in $T$. On the other hand, if $i,j$ belong to the same cluster, then, the corresponding estimate $\Norm{S_i-S_j}_F^2 \leq 2 C \tau$. Thus, our merging process always merges together any such candidates. 

As a result, the output of the merging process can have at most one $i$ from any true cluster -- thus, the number of distinct members of $\cL'$ is at most $k$.  We note that the running time is dominated by computing a pseudo distribution satisfying the union of all the constraints (Step 3 in Algorithm \ref{algo:robust-partial-clustering-upg}) and requires $n^{O(t)}$ time. Step 4 computes a degree $O(1)$ sos relaxation for at most $O(\ell)$ components and the merging only requires a fixed polynomial in $d$ and $k$ time.  

%\Anote{the proof is not complete here right? we need that there exists a partition of the at most k things in $\cL'$ that is non-trivial, for instance we want to say we preserve laminarity as an invariant during the merging phase and thus if one partition existing before merging it must remain after. The existance before merging is the same argument that appears in section 4. we should also say the running time, now, we spend $n^{O(t)}$ for the first sos relaxation and solve, then we run 2nd moment estimation on  $1/\alpha$ diff candidate clusters, this is just a cosntant degree relaxation, and then the merging is trivially poly(n,k)}

\end{proof}

% \begin{lemma}[Merging Candidate Clusters]
% Let $\hat{C}_i$ and $\hat{C}_j$ be two clusters merged in Step 3 of the rounding algorithm. Then, either $i,j \in S$ or $i,j \in T$. 
% \end{lemma}

% \begin{proof}[Putting it Together, Proof of Theorem~\ref{thm:partial-clustering-poly}]

% \end{proof}

\subsection{Proof of Lemma~\ref{lem:simultaneous-intersection-bounds-poly-clustering-upg}} 
\label{sec:simultaneous-intersection-bounds-poly-clustering-upg}

In the following lemma, we show that the constraint system $\cA$, via a low-degree sum-of-squares proof, implies that a lower bound on the variance of any degree $2$ polynomial on $X'$ whenever the cluster $\hat{C}$ (indicated by $z$) appreciably intersects two well-separated true clusters. 

\begin{lemma}[Lower-Bound on Variance of Degree 2 Polynomials] \label{lem:lower-bound-variance-upg}
Let $Q \in \R^{d \times d}$ be any fixed matrix. 
Then, for any $i,j \leq k$,  and $z'(X_r) = \frac{1}{w_r n} \sum_{i \in X_r} z_i \cdot \1(y_i = x_i)$, we have for any $r \neq r' \in [k]$, 
\begin{align*}
\cA \sststile{4t}{z} \Biggl\{z'(X_r) z'(X_r') \leq  \frac{(32Ct/\alpha)^{2t}}{(\E_{X_r} Q - \E_{X_{r'}}Q)^{2t}}  & \Biggl(\frac{\alpha^4}{w_r^2 w_{r'}^2}  \Paren{\E_z (Q-\E_z Q)^{2}}^{t} \\
&+ \frac{\alpha^2}{w_r^2} \Paren{\E_{X_{r'}} (Q-\E_{X_{r'}}Q)^{2}}^t + \frac{\alpha^2}{w_{r'}^2} \Paren{\E_{X_r}(Q-\E_{X_r}Q)^{2}}^t\Biggr)\Biggr\}\mper
\end{align*}
\end{lemma}
%\Anote{it seems the def of $\E_{X_r} Q$ has changed from the constraint system to the conc of this lemma }
\begin{proof}
Let $z_i' = z_i \1(y_i = x_i)$ for every $i$. 
For every $1 \leq r \leq k$, let $\E_{X_r} Q$ denote the expectation of the homogenous degree $2$ polynomial defined by $Q$: $\E_{X_r}Q = \frac{1}{w_r n} \sum_{i,j \in X_r} Q(x_i)$ for every $r$ where $Q(x_i) = x_i^{\top}Qx_i$. Similarly, let $\E_z Q$ be the quadratic polynomial in $z$ defined by $\E_z Q = \frac{1}{\alpha n} \sum_{i \leq n} z_i Q(x_i)$. 
Using the substitution rule and non-negativity of the $z_i'$s, we have for any $r,r' \in [k]$:
\begin{equation}
\label{eqn:lowerbound_expec_Q-upg}
\begin{split}
\cA \sststile{4t}{z} \Biggl\{ \E_z (Q-\E_z Q)^{2t}  & = \frac{1}{ \alpha n} \sum_{i  \in [n]} z_i \Paren{Q(x_i)- \E_z Q}^{2t} \\
& \geq \frac{1}{ \alpha n} \sum_{i \in X_r\cup X_{r'}: x_i = y_j} z'_i \Paren{Q(x_i)- \E_z Q}^{2t}\Biggl\}
\end{split}
\end{equation}
  %(x_i$. 

Then, using the SoS almost triangle inequality (Fact~\ref{fact:almost-triangle-sos}),  we have:

\begin{equation}
\label{eqn:lowerbound_expec_Q_almost_tri--upg}
\begin{split}
\cA \sststile{4t}{z} \Biggl\{  & \frac{1}{ \alpha n } \sum_{i \in X_{r} \cup X_{r'}} z'_i \Paren{Q(x_i)- \E_z Q}^{2t} \\
& \geq 2^{-2t} \Paren{ \frac{1}{ \alpha n} \sum_{i \in X_r: i} z'_i \Paren{\E_{X_r}Q -\E_z Q}^{2t} - \frac{1}{ \alpha n} \sum_{i \in X_r} z'_i \Paren{Q(x_i)- \E_{X_r}Q}^{2t} } \\
& \hspace{0.2in} + 2^{-2t} \Paren{ \frac{1}{ \alpha n} \sum_{i \in  X_r: i:x_i=y_i } z'_i \Paren{\E_{X_{r'}}Q- \E_z Q}^{2t}-  \frac{1}{ \alpha n} \sum_{i \in  X_{r'}, x_i = y_i } z'_i \Paren{Q(x_i)- \E_{X_{r'}}Q}^{2t}}\\
&= 2^{-2t} \Paren{ \frac{w_r}{\alpha } z'(X_r) \Paren{\E_{X_r}Q -\E_z Q}^{2t} - \frac{1}{\alpha n} \sum_{i \in  X_r} \Paren{Q(x_i)- \E_{X_r}Q}^{2t}}\\
&\hspace{0.2in}+ 2^{-2t}\Paren{ \frac{w_{r'}}{\alpha} z'(X_{r'}) \Paren{\E_{X_{r'}}Q- \E_z Q}^{2t}- \frac{1}{\alpha n} \sum_{i \in  X_{r'} } \Paren{Q(x_i)- \E_{X_{r'}}Q}^{2t} }\Biggl\}
\end{split}
\end{equation} 

Next, observe that by the SoS almost triangle inequality (Fact~\ref{fact:almost-triangle-sos}), we must have:
\[
\cA \sststile{4t}{} \Set{ \Paren{\E_{X_r}Q -\E_z Q}^{2t} + \Paren{\E_{X_{r'}}Q -\E_z Q}^{2t}  \geq 2^{-2t} \Paren{\E_{X_r}Q -\E_{X_{r'}} Q}^{2t} }\mper
\] 

Further, note that $ \cA \sststile{O(1)}{} \Set{\frac{w_r}{\alpha} z'(X_r) +  \frac{w_{r'}}{\alpha} z'(X_{r'}) \leq \frac{1}{\alpha n} \sum_{i} z_i \leq 1}$. Thus, using  Fact~\ref{fact:lower-bounding-sums-SoS} with $A = \frac{w_r}{\alpha} z'(X_r)$, $B= \frac{w_{r'}}{\alpha} z'(X_{r'})$, $C=\Paren{\E_{X_r}Q -\E_z Q}^{2t}$, and $D =\Paren{\E_{X_{r'}}Q -\E_z Q}^{2t}$ and $\tau = 2^{-2t} \Paren{\E_{X_r}Q -\E_{X_{r'}} Q}^{2t}$, we can derive:

\begin{equation}
\label{eqn:lowerbound_expec_Q}
\begin{split}
\cA \sststile{4}{z} \Biggl\{ (Ct/\alpha)^{2t} \Paren{\E_z (Q-\E_z Q)^2}^t &\geq \E_z (Q-\E_z Q)^{2t}  
 \geq  2^{-6t} \frac{w_r w_{r'}}{\alpha^2} z'(X_{r})z'(X_{r'}) \Paren{ \E_{X_r}Q-\E_{X_{r'}}Q}^{2t}  \\& -  2^{-6t}\frac{w_r}{\alpha}  \E_{X_r} \Paren{ Q-\E_{X_r}Q}^{2t} - 2^{-6t}
 \frac{w_{r'}}{\alpha}  \E_{X_{r'}} \Paren{Q-\E_{X_{r'}}Q}^{2t} \\
 &\geq 2^{-6t} \frac{w_r w_{r'}}{\alpha^2} z'(X_{r})z'(X_{r'}) (\E_{X_r}Q-\E_{X_{r'}}Q)^{2t} - \frac{w_r}{\alpha} (Ct/\alpha)^{2t}\Paren{\E_{X_r} \Paren{ Q-\E_{X_r}Q}^{2}}^t\\
&- \frac{w_{r'}}{\alpha}  (Ct/\alpha)^{2t}\Paren{\E_{X_{r'}} \Paren{ Q-\E_{X_{r'}}Q}^{2}}^t \Biggl\}
\end{split}
\end{equation}
where the first inequality uses the Certifiable Hypercontractivity constraints ($\cA_4$) and the last inequality follows from the Certifiable Hypercontractivity of $X_r$ and $X_{r'}$ (Condition~\ref{cond:convergence-of-moment-tensors}).
Rearranging completes the proof.

\end{proof}

We can use the lemma above to obtain a simultaneous intersection bound guarantee when there are relative Frobenius separated components in the mixture. 
\begin{lemma}[Lemma~\ref{lem:simultaneous-intersection-bounds-poly-clustering}, restated] \label{lem:pair-rel-frob-upg}
Suppose $\Norm{\Sigma^{-1/2}(\Sigma_r-\Sigma_{r'})\Sigma^{-1/2}}_F^2 \geq 10^8 \frac{C^6 t^4}{\beta^{2/t} \alpha^4}$. Then, for $z(X_r) = \frac{1}{w_r n} \sum_{i \in X_r} z_i \cdot \1(y_i = x_i)$,  
\[
\cA \sststile{2t}{} \Set{z(X_r) z(X_{r'}) \leq \beta}\mper
\]
\end{lemma}
\begin{proof}
WLOG, we will work with the transformed points $x_i \rightarrow \Sigma^{-1/2}x_i$ where $\Sigma$ is the covariance of the mixture. Note that our algorithm does not need to know $\Sigma$ -- this transformation is only for simplifying notation in the analysis that follows. 

Let $\tilde{\Sigma}_z = \Sigma^{-1/2}\Sigma_z \Sigma^{-1/2}$, $\tilde{\Sigma}_r = \Sigma^{-1/2} \Sigma_r \Sigma^{-1/2}$ and $\tilde{\Sigma}_{r'} =\Sigma^{-1/2} \Sigma_{r'}\Sigma^{-1/2}$ be the transformed covariances. Then, notice that $\Norm{\tilde{\Sigma}_{r}}_2 \leq \frac{1}{w_r} \Norm{\Sigma}_2 \leq \frac{1.5}{w_r}$ and $\Norm{\tilde{\Sigma}_{r'}}_2 \leq \frac{1}{w_{r'}} \Norm{\Sigma}_2 \leq \frac{1.5}{w_{r'}}$. 

We now apply Lemma~\ref{lem:lower-bound-variance} with $Q = \tilde{\Sigma}_{r}-\tilde{\Sigma}_{r'}$. Then, notice that $\E_{X_r}Q - \E_{X_{r'}} Q = \Norm{\tilde{\Sigma}_r - \tilde{\Sigma}_{r'}}_F^2 + \mu_{r}^{\top} (\tilde{\Sigma}_r-\tilde{\Sigma_{r'}})\mu_{r} - \mu_{r'}^{\top} (\tilde{\Sigma}_r-\tilde{\Sigma_{r'}})\mu_{r'} \geq \Norm{Q}_F^2 - \frac{4}{\alpha}$. Then, we obtain:

\begin{multline}\label{eq:basic-bound-upg}
\cA \sststile{2t}{z} \Biggl\{z(X_r) z(X_{r'}) \\\leq \Paren{\frac{32Ct/\alpha}{\E_{X_r} Q - \E_{X_{r'}}Q}}^{2t}  \Paren{\frac{\alpha^2}{ w_r w_{r'}} \Paren{\E_z (Q-\E_z Q)^2}^t + \frac{\alpha}{w_r} \Paren{\E_{X_{r'}} (Q-\E_{X_{r'}}Q)^2}^t + \frac{\alpha}{w_{r'}} \Paren{\E_{X_r}(Q-\E_{X_r}Q)^2}}^t\Biggr\}\mper
\end{multline}

\nnnew{Since $X_{r}$ and $X_{r'}$ have certifiably $C$-bounded variance polynomials for $C = 4$ (as a consequence of Condition~\ref{cond:convergence-of-moment-tensors} and Fact~\ref{fact:moments-to-analytic-properties} followed by an application of Lemma~\ref{lem:frob-of-product}), we have:}
\[
\E_{X_{r'}} (Q-\E_{X_{r'}}Q)^2 \leq \nnnew{6\Norm{{\tilde{\Sigma}_{r'}}^{1/2}Q{\tilde{\Sigma}_{r'}}^{1/2}}_F^2} \leq \frac{10}{w_{r'}^2} \Norm{Q}_F^2 \leq \frac{10}{\alpha^2} \Norm{Q}_F^2\;,
\]
and
\[
\E_{X_{r}} (Q-\E_{X_{r}}Q)^2 \leq \nnnew{6\Norm{{\tilde{\Sigma}_{r}}^{1/2}Q{\tilde{\Sigma}_{r}}^{1/2}}_F^2} \leq \frac{10}{w_{r}^2} \Norm{Q}_F^2 \leq \frac{10}{\alpha^2} \Norm{Q}_F^2\mper
\]
Finally, using the bounded-variance constraints in $\cA$, we have: 
\[
\cA \sststile{4}{Q,z} \E (Q-\E_z Q)^2 \leq \frac{10}{\alpha^2} \Norm{Q}_F^2\mper
\]

Plugging these estimates back in \eqref{eq:basic-bound-upg} yields:
\begin{multline}
\cA \sststile{4}{z} \Biggl\{z(X_r) z(X_{r'}) \leq \frac{(1000Ct/\alpha)^{2t}}{\Norm{Q}_F^{2t} \alpha^{2t}} \Paren{\frac{\alpha^2}{ w_r  } + \frac{\alpha}{ w_{r'}  }  + \frac{\alpha}{ w_r w_{r'}}} \leq \frac{3}{w_r w_{r'}} \frac{(1000Ct)^{2t}}{\alpha^{2t} \Norm{Q}_F^{2t}}\leq \frac{3(1000Ct)^{2t}}{\alpha^{2t} \Norm{Q}_F^{2t}}\Biggr\} \mper
\end{multline}

Plugging in the lower bound on $\Norm{Q}_F^{2t}$ and applying cancellation within SoS (Fact~\ref{fact:cancellation-within-sos-constant-rhs}) completes the proof. 
\end{proof}

\subsection{2nd Moment Estimation Subroutine}
The following lemma gives a 2nd moment estimation algorithm with error in Frobenius norm for distributions that have a certifiably bounded covariance. The proof is very similar to the SoS based mean and covariance estimation algorithms but we provide it in full for completeness here.

\begin{lemma}[2nd Moment Estimation in Frobenius Norm] \label{lem:2nd-moment-est}
Let $1/100 \geq \eta >0$. There is an $n^{O(1)}$ time algorithm that takes input an $\eta$-corruption $Y$ of an sample $X$ of size $n$ and outputs an estimate $M_2$ of the 2nd moment of $X$ with the following properties:
Let $X \subseteq \R^d$ be a collection of $n$ points satisfying $\sststile{2}{Q} \Set{ \frac{1}{|X|} \sum_{x \in X} \Paren{ Q(x)-\frac{1}{|X|} Q(x)}^2 \leq C \Norm{Q}_F^2}$ for a matrix-valued indeterminate $Q$. Let $M_2 = \frac{1}{n} \sum_{x \in X} xx^{\top}$. Then, the estimate $\hat{M_2}$ output by the algorithm satisfies:
\[
\Norm{\hat{M_2}-M_2}_F^2 \leq 80C \eta \mper
\]

\end{lemma}

\begin{proof}
Consider the constraint system with scalar-valued indeterminates $z_i$ for $1 \leq i \leq n$ and $d$-dimensional vector-valued indeterminates $x'_1, x'_2, \ldots, x'_n$ with the following set of constraints:
$\cA =$ \begin{equation}
 % \cA_{w,\Pi}\colon
  \left \{
    \begin{aligned}
     &\forall i \leq n
     &z_i^2
     &=z_i\\
     &
     &\sum_{i = 1}^n z_i
     &=(1-\eta)n
     &\\
     &
     &\tilde{M}_2
     &=\frac{1}{n} \sum_{i =1}^n x'_i {x_i'}^{\top}\\
     &\forall i \leq n
     &z_i x_i'
     & = z_iy_i
     \\
     &
     &\frac{1}{n}\sum_{i=1}^n \Paren{{x'}_i^{\top}Qx'_i-\frac{1}{n}\sum_{i=1}^n {x'}_i^{\top}Qx'_i}^2 
     &\leq C\Norm{Q}_F^2
    \end{aligned}
  \right \}
\end{equation}
%\Anote{Constraints don't parse, fixing them. Intersection constraint missing, adding it}
Observe that $X'=X$ and $z_i$ set to the $0$-$1$ indicator of non-outliers satisfies the constraint system. Thus, the constraints are feasible. 

Our algorithm finds a pseudo-distribution $\tzeta$ of degree $10$ satisfying the above constraints and output $\pE[\tilde{M}_2]$.  Let us now analyze this algorithm. 
The key is the following statement that gives a sum-of-squares proof of closeness of $\tilde{M}_2$ and $M_2$ in Frobenius norm. 
We use the  notation $\E_X Q$ and $\E_{X'}Q$ to abbreviate $\frac{1}{n} \sum_{i = 1}^n x_i^{\top}Qx_i$ and $\frac{1}{n} \sum_{i = 1}^n {x'}_i^{\top} Q x'_i$ respectively. 

\begin{align*}
\cA \sststile{2}{Q} \Biggl\{ & \Paren{\frac{1}{n} \sum_{i = 1}^n x_i^{\top}Qx_i - \frac{1}{n} \sum_{i =1}^n {x'}_i^{\top}Q x'_i}^2 \\
&= \Paren{\frac{1}{n} \sum_{i = 1}^n (1-z_i \1(x_i = y_i)) x_i^{\top}Qx_i  - {x'}_i^{\top}Q x'_i}^2 \\
&\leq \Paren{\frac{1}{n} (1-z_i \1(x_i=y_i))^2 } \Paren{\frac{1}{n} \sum_{i = 1}^n \Paren{ x_i^{\top}Qx_i-{x'}_i^{\top}Q x'_i}^2}\\
&\leq 20\eta \cdot \Paren{\frac{1}{n} \sum_{i = 1}^n\Paren{x_i^{\top}Qx_i-\E_X Q}^2 + \frac{1}{n} \sum_{i = 1}^n \Paren{ {x'}_i^{\top}Q x'_i-\E_{X'}Q}^2 + \Paren{ \E_X Q-\E_{X'}Q}^2 }\\
&\leq 20 \eta (2C \Norm{Q}_F^2) + 20 \eta \Paren{ \E_X Q-\E_{X'}Q}^2  \Biggr\}
\end{align*}
where the first inequality follows by the SoS version of the Cauchy-Schwarz inequality and the 2nd by the SoS version of the Almost Triangle inequality. 

Rearranging and using that $1-20\eta>1/2$ now yields that:
\[
\cA \sststile{2}{Q} \Biggl\{ \Paren{\frac{1}{n} \sum_{i = 1}^n x_i^{\top}Qx_i - \frac{1}{n} \sum_{i =1}^n {x'}_i^{\top}Q x'_i}^2 
\leq 80C \eta \Norm{Q}_F^2  \Biggr\}
\]
Notice that the LHS above equals the linear polynomial $\langle \tilde{M}_2-M_2, Q\rangle$.
We now plug in $Q = \tilde{M}_2-M_2$ to obtain:
\[
\cA \sststile{2}{Q} \Biggl\{ \Norm{\tilde{M}_2-M_2}_F^4 \leq  80C \eta \Norm{\tilde{M}_2-M_2}_F^2  \Biggr\}
\]
Applying Fact~\ref{fact:sos-cancellation} yields:
\[
\cA \sststile{2}{Q} \Biggl\{ \Norm{\tilde{M}_2-M_2}_F^8 \leq   80^4 C^4 \eta^4 \Biggr\}
\]

Taking pseudo-expectations with respect to $\tzeta$ and using Hölder's inequality for pseudo-distributions yields that
\[
\Norm{\pE_{\tzeta} \tilde{M}_2 - M_2}_F^8 \leq \pE_{\tzeta} \Norm{\tilde{M}_2-M_2}_F^8 \leq 80^4C^4 \eta^4 \mper 
\] 

Taking the $4$-th root, we can conlcude our rounded value $\hat{M}_2 = \pE_{\tzeta} \tilde{M}_2$ satisfies:
\[
\Norm{\hat{M}_2 - M_2}_F^2 \leq 80 C  \eta \mper 
\] 
This completes the proof.

\end{proof}

We also note the following simple consequence of the certifiable bounded variance property that follows via an argument similar to the one employed in the proof of the previous lemma. 
\begin{lemma}[Subsamples of Bounded-Variance Distributions] \label{lem:subsamples-close-bounded-variance}
Let $X \subseteq \R^d$ be a collection of $n$ points satisfying $\sststile{2}{Q} \Set{ \frac{1}{|X|} \sum_{x \in X} (Q(x)-\frac{1}{|X|} Q(x))^2 \leq C \Norm{Q}_F^2}$ for a matrix-valued indeterminate $Q$. Let $M_2= \frac{1}{|X|} \sum_{x \in X} xx^{\top}$ be the 2nd moment of $X$. Let $S \subseteq X$ be a subset of size at least $\beta |X|$. Then, 
\[
\sststile{4}{} \Set{\Norm{\frac{1}{|S|} \sum_{x \in S} xx^{\top} - M_2}_F^2 \leq \frac{1}{\beta} }\mper
\]
\end{lemma}
\begin{proof}
We have by the Cauchy-Schwarz inequality:
\begin{align*}
\sststile{2}{Q} \Biggl\{ \Paren{\frac{1}{|X|} \sum_{x \in X} \1(x \in S) (x^{\top} Qx - M_2), Q \rangle)}^2 &\leq \Paren{\frac{1}{|X|} \1(x \in S)^2} \Paren{\sum_{x \in X}  (x^{\top} Qx - M_2), Q \rangle)^2}\\
&\leq \Paren{\frac{|S|}{|X|}} \Norm{Q}_F^2\mper
\end{align*}
We now substitute in $Q =\frac{1}{|X|} \sum_{x \in X} \1(x \in S) (x^{\top} Qx -  M_2$ to obtain:
\begin{align*}
\sststile{2}{Q} \Biggl\{ \Norm{\frac{1}{|X|} \sum_{x \in X} \1(x \in S) (x^{\top} Qx - M_2)}_F^4 &\leq \Paren{\frac{|S|}{|X|}} \Norm{\frac{1}{|X|} \sum_{x \in X} \1(x \in S) (x^{\top} Qx - M_2)}_F^2 \mper
\end{align*}
We now apply Fact~\ref{fact:sos-cancellation} to yield:
\begin{align*}
\sststile{2}{Q} \Biggl\{ \Norm{\frac{1}{|X|} \sum_{x \in X} \1(x \in S) (x^{\top} Qx - M_2)}_F^8 &\leq \Paren{\frac{|S|}{|X|}}^4 \mper
\end{align*}
We finally apply Fact~\ref{fact:cancellation-within-sos-constant-rhs} to conclude that:
\begin{align*}
\sststile{2}{Q} \Biggl\{ \Norm{\frac{1}{|X|} \sum_{x \in X} \1(x \in S) (x^{\top} Qx - M_2)}_F^2 &\leq \Paren{\frac{|S|}{|X|}} \mper
\end{align*}
Rescaling gives the claim.
\end{proof}

\def\proj{\textrm{proj}}
% \DeclareMathOperator{\spn}{span}
% \DeclareMathOperator{\diag}{diag}
% \DeclareMathOperator{\sos}{SoS}

% \newcommand{\santosh}[1]{{\color{green} \textbf{santosh:} #1}}
% \newcommand{\He}[1]{{\color{red} \textbf{He:} #1}}

% \newcommand{\one}{\mathbbm{1}}

% %========================================================================
% %========================================================================
% %========================================================================
% %========================================================================
% %========================================================================
% \begin{document}

\section{Getting $\poly(\epsilon)$-close in TV Distance: Proof of Theorem \ref{thm:poly-proper-inf} }
\label{sec:poly-proper}

% \begin{theorem}\label{thm:efficient_density_estimation}
% Fix any $\alpha> \epsilon > 0, \Delta > 0$.
% Let $Y$ be an $\eps$-corrupted sample from a $k$-mixture of Gaussians
% $\calM = \sum_{i\in [k]} w_i \calN(\mu_i,\Sigma_i)$ with mixture covariance $\Sigma$ such that $0.99 I \preceq \Sigma\preceq 1.01 I$ and satisfies $w_i\ge\alpha$, $\norm{\Sigma_i - I}_F \leq \Delta$ for every $i \in [k]$.
% %satisfying Condition~\ref{cond:convergence-of-moment-tensors} with parameters $\gamma = \eps d^{-8k}k^{-Ck}$, for $C$ a sufficiently large universal constant, and $t=8k$, and let $Y$ be an $\eps$-corruption of $X$.
% %Given $k,Y$ and $\eps$, 
% Then, Algorithm \ref{algo:efficient-list-recovery-tensor-decomposition} outputs a list $L$ of at most
% $\poly_k(1/\eps)$
% candidate hypotheses (component means and covariances),
% such that with probability at least $0.99$ there exist $\{\hat \mu_i, \hat \Sigma_i\}_{i \in [k]} \in L$ satisfying
% $\norm{\mu_i - \hat \mu_i}_2 \leq \poly_k(\eps)$ and
% $\norm{\Sigma_i - \hat\Sigma_i}_F \leq \poly_k(\eps)$ for all $i \in [k]$.
% The running time of the algorithm is $\poly(|L|, |Y|, d^k)$.
% \end{theorem}

% \Anote{Edited the following theorem statement to match sec 3 with improved list size.}
% \Anote{the following theorem statement has been modified, and is what we can prove right now. }

\begin{theorem}[Robustly Learning $k$-Mixtures with small error] \label{thm:robust-GMM-arbitrary-poly-eps}
Given $0< \epsilon < 1/k^{k^{O(k^2)}}$ and a multiset $Y = \{ y_1, y_2, \ldots , y_n\}$ of $n$ i.i.d. 
samples from a distribution $F$ such that $\dtv(F, \calM) \leq \eps$,
%$\epsilon$-corrupted samples 
for an unknown $k$-mixture of Gaussians $\calM = \sum_{i \leq k} w_i\calN\Paren{\mu_i, \Sigma_i}$, 
where \new{$n \geq n_0=d^{O(k)}\poly_k(1/\eps)$,} there exists an algorithm that runs 
\new{in time $n^{O(1)}\poly_k\left(1/\eps\right)$} 
and with probability at least $0.99$ outputs a hypothesis $k$-mixture of Gaussians 
$\widehat{\calM} = \sum_{i \leq k} \hat{w}_i \calN\Paren{\hat{\mu_i}, \hat{\Sigma}_i}$ 
such that $d_{\textrm{TV}}\Paren{\calM, \widehat{\calM}} = \bigO{\eps^{c_k}}$, 
with $c_k=1/(100^k C^{(k+1)!}k!\mathrm{sf}(k+1))$, where $C>0$ is a universal constant 
and $\mathrm{sf}(k)=\Pi_{i \in[k]}(k-i)!$ is the super-factorial function.  
\end{theorem}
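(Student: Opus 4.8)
The plan is to reduce the total-variation guarantee to a parameter-recovery guarantee and then prove the latter by a recursion on the number of components. The first step uses the quantitative correspondence between TV distance and parameter distance for mixtures of $k$ Gaussians: on the one hand, if we output weights, means and covariances $(\hat w_i,\hat\mu_i,\hat\Sigma_i)$ that match the true parameters of $\calM$ (under some permutation) up to error $\eta$ in the natural parameter metric — relative-Frobenius distance for covariances, Mahalanobis distance for means, additive error for weights — then $\dtv(\calM,\widehat\calM)\le\poly_k(\eta)$; on the other hand, any two $k$-mixtures that are $\eps'$-close in TV have parameters matching up to a $k$-dependent power of $\eps'$, which is what lets us treat the corrupted sample as essentially coming from a unique target. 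So it suffices to recover the parameters of $\calM$ to within $\eps^{\Omega(c_k)}$ in parameter distance from $\eps$-corrupted samples.

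For the recovery I would run a ``cluster-or-learn-in-place'' dichotomy. From robustly estimated degree-$O(k)$ empirical moments — computable in $n^{O(1)}\poly_k(1/\eps)$ time from $d^{O(k)}\poly_k(1/\eps)$ samples by a robust moment-estimation routine — the algorithm searches for a bounded-degree polynomial test that partitions the samples into two nonempty sub-populations, each of which, after absorbing an $O(\eps)$-sized TV corruption from misclassified points, is a mixture of strictly fewer than $k$ Gaussians. If such a test exists, recurse on each side with the smaller component count (and hence a sharper target exponent, since $c_j\ge c_k$ for $j<k$). If no such test exists, the mixture is ``non-clusterable'': no bounded-degree polynomial separates any pair of components, a structural condition that — crucially — should be certifiable by a low-degree sum-of-squares proof, so the algorithm can detect it.

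In the non-clusterable case I would estimate the first $O(k)$ raw moments of $\calM$ to accuracy $\poly(\eps)$ and then solve the polynomial system relating these moments to the parameters — a Gaussian-mixture analogue of Prony's method / symmetric tensor decomposition. Polynomial identifiability guarantees a solution near the truth, but the inverse condition number of this system — essentially a Vandermonde/resultant quantity for a system of size $\sim k!$ — is bounded only by $C^{(k+1)!}\,k!\,\mathrm{sf}(k+1)$, so moment accuracy $\eps$ yields parameter accuracy $\eps^{1/(C^{(k+1)!}\,k!\,\mathrm{sf}(k+1))}$. Since the recursion has depth at most $k$ and each of the $\le k$ levels both introduces an $O(\eps)$ misclassification loss and composes with the structural transformation used there, the losses multiply across levels to produce the extra $100^k$ factor, giving exactly $c_k=1/(100^k\,C^{(k+1)!}\,k!\,\mathrm{sf}(k+1))$; the hypothesis $\eps<1/k^{k^{O(k^2)}}$ guarantees every intermediate quantity stays in range. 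To finish, I would merge duplicated components, renormalize weights, and (if the recursion returns a short list of candidates) run a robust hypothesis-selection step; the final TV bound follows by summing per-component parameter errors through the parameter-to-TV lemma, and the $k^{O(k)}$ leaves of the recursion keep the running time at $n^{O(1)}\poly_k(1/\eps)$.

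The step I expect to be the main obstacle is the non-clusterable case: one must show both that non-clusterability admits an SOS certificate (so the algorithm knows it is safe to solve the moment system) and that the moment-to-parameter map is \emph{quantitatively} well-posed, i.e.\ that its condition number is no worse than the super-factorial bound. Establishing that condition-number bound, and verifying that iterating the reduction $k$ times does not incur an unintended tower-type blowup beyond the stated $c_k$, is the delicate heart of the argument and is exactly what pins down the precise constant.
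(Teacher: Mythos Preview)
Your high-level dichotomy (partially cluster and recurse when possible, else learn in place) matches the paper's architecture, but the non-clusterable branch has a genuine gap. You propose to ``estimate the first $O(k)$ raw moments and then solve the polynomial system'' by a Prony/tensor-decomposition analogue, attributing the super-factorial constant to the condition number of that system. In ambient dimension $d$ this is not a polynomial-time procedure: the moment tensors have $d^{O(k)}$ entries and, for arbitrary covariances, the moment-to-parameter map has no black-box inverse with the stated conditioning. The paper does something quite different here. From robustly estimated \emph{Hermite} tensors it performs a randomized two-mode collapse of $\hat T_4$ to produce a matrix $\hat S$, and from the top singular directions of the $\hat T_m$ and of $\hat S$ builds a subspace $V'_{\hat S}$ of dimension $\poly_k(1/\eps)$ (independent of $d$) that provably contains the means and captures each $\Sigma_i$ up to a common shift $I+\hat S$. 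It then projects a \emph{fresh} (hence, with constant probability, uncorrupted) sample into this subspace and runs the non-robust Moitra--Valiant \textsc{Partition Pursuit} there; the result is a short list of candidate parameters which are lifted back to $\R^d$ via $U\hat\Sigma_iU^\top+(I+\hat S)-\hat P(I+\hat S)\hat P$. The exponent $G(k)=1/(C^{k+1}(k+1)!)$, and ultimately the super-factorial in $c_k$, come from the quantitative analysis of this subspace construction and the error propagation through the random collapsing, not from a Vandermonde bound on a global moment system.

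Two smaller points. First, you do not need an SOS certificate of non-clusterability: the paper's split is the concrete test ``does some pair satisfy $\|\Sigma^{\dagger/2}(\Sigma_i-\Sigma_j)\Sigma^{\dagger/2}\|_F^2>1/\alpha^5$?''; when yes, an SOS-based partial-clustering routine (Theorem~\ref{thm:partial-clustering-poly-upg}) produces the partition, and when no, the resulting uniform bound $\|\Sigma_i-I\|_F\le\Delta$ is precisely what makes the Hermite-tensor dimension reduction go through. Second, your outline omits the separate handling of low-weight and ``thin'' components (the spectral-separation lemma the paper invokes alongside the two main branches). Without the dimension-reduction step in the learn-in-place branch, your plan does not yield a $\poly(d)$ algorithm there, and the derivation of the specific exponent $c_k$ does not go through as written.
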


In order to obtain the above theorem, we require recovering a polynomial sized list of candidate parameters, in addition to the efficient partial clustering result we obtained in the previous section.
To this end, we show the following list-recovery theorem which is similar to Theorem \ref{thm:list-recovery-by-tensor-decomposition}, but the algorithm outputs a polynomial-size list instead.

\begin{theorem}[Recovering a small list of candidate parameters]\label{thm:polynomial-size-list-recovery-by-tensor-decomposition}
Fix any $\alpha> \epsilon > 0, \Delta > 0$.
Let $X$, a sample from a $k$-mixture of Gaussians
$\calM = \sum_i w_i \cN(\mu_i,\Sigma_i)$ satisfying Condition~\ref{cond:convergence-of-moment-tensors}
with parameters \new{$\gamma = \eps d^{-8k}k^{-Ck}$, for $C$ a sufficiently large universal constant,
and $t=8k$, and let $Y$ be an $\eps$-corruption of $X$}. Let $X'$ be a set of $n'= O\Paren{\epsilon \eta /\Paren{ k^5\Paren{\Delta^4 + 1/\alpha^4 }} }^{-4c k}$ fresh samples from $\cal M$ and $Z$ be an $\epsilon$-corruptionn of $X'$. 
If $w_i\ge\alpha$, $\Norm{\mu_i}_2 \leq \frac{2}{\sqrt{\alpha}}$ and  $\Norm{\Sigma_i - I}_F \leq \Delta$ for every $i \in [k]$,
then\new{, given $k,Y, Z$ and $\eps$,} the algorithm outputs a list $L$ of at most
$\ell' = O\Paren{ \Paren{ k^5\Paren{\Delta^4 + 1/\alpha^4 }}^{4k}/ \eta^{4k} }$
candidate hypotheses (component means and covariances),
such that \new{with probability at least $99/100$ } 
%$(1- \epsilon)^{(m/\eps_1)^{O(1)}\cdot \ell'} $
there exist $\{\hat \mu_i, \hat \Sigma_i\}_{i \in [k]} \subseteq L$ satisfying
$\Norm{\mu_i - \hat \mu_i}_2 \leq \bigO{\frac{\Delta^{1/2}}{\alpha}} \eta^{G(k)}$ and
$\Norm{\Sigma_i - \hat\Sigma_i}_F \leq \bigO{k^4} \frac{\Delta^{1/2}}{\alpha} \eta^{G(k)}$, for all $i \in [k]$.
Here, $\eta = (2 k)^{4k} \bigO{1/\alpha+ \Delta}^{4k} \eps^{1/(k^{O(k^2)})}$ and $G(k) = \frac{1}{C^{k+1} (k+1)!}$.
%$m=O\Paren{ \frac{\Paren{ k (1 + \Delta + 1/\alpha) }^{4k +5 }}{\eta^2}}$ and $\eps_1 = \sqrt{\Delta} \delta^{1/4} /\alpha$
%for some sufficiently large absolute constants $C>0$. 
The running time of the algorithm is $\poly(|L|, \new{|Y|, d^k})\cdot\poly_k(1/\epsilon)$.
\end{theorem}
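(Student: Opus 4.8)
\medskip
\noindent\textbf{Proof plan.}
The plan is to run the method of moments through a robust tensor decomposition, but to arrange that the decomposition step --- the only step that can blow up the list size --- is carried out in a subspace of dimension $\poly(k)$ rather than in $\mathbb{R}^d$; this is exactly what makes $|L|$ independent of $d$ and upgrades Theorem~\ref{thm:list-recovery-by-tensor-decomposition} to the polynomial-size list of Theorem~\ref{thm:polynomial-size-list-recovery-by-tensor-decomposition}.

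First I would use the large corrupted sample $Y$ together with Condition~\ref{cond:convergence-of-moment-tensors} (which certifies that the degree-$\le 8k$ empirical moment tensors of the clean sample $X$ are within $\gamma$ of those of $\calM$) and sum-of-squares--based robust moment estimation to compute: (i) estimates of all degree-$\le 8k$ (Hermite) moment tensors of $\calM$, and (ii) robust estimates $\hat\mu,\hat\Sigma$ of a weighted mean and covariance of $\calM$, each accurate to $\poly(\eps)\cdot\gamma^{\Omega(1)}$ in the relevant norms. Rewhitening the Hermite moments by $(\hat\mu,\hat\Sigma)$ produces a tensor $T$ (or a short collection of them) that, up to this error, depends only on the normalized parameters $v_i := \bigl(\hat\Sigma^{-1/2}(\mu_i-\hat\mu),\ \hat\Sigma^{-1/2}\Sigma_i\hat\Sigma^{-1/2}-I\bigr)$, flattened into $\mathbb{R}^{d+\binom{d+1}{2}}$, and has a rank-$\le k$ structure whose decomposition recovers the $v_i$ (this recovery is the content of Theorem~\ref{thm:list-recovery-by-tensor-decomposition}). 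The structural point I would exploit is that the $k$ vectors $v_i$ span an $O(k)$-dimensional subspace, and that --- after the partial-clustering reduction from the previous section restricts us to the regime in which no component is far in TV distance from the others --- the bounds $w_i\ge\alpha$, $\Norm{\mu_i}_2\le 2/\sqrt\alpha$, $\Norm{\Sigma_i-I}_F\le\Delta$ confine the whole configuration $\{v_i\}$ to a ball of radius $\bigO{1/\alpha+\Delta}$.

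Second, I would compute the $O(k)$-dimensional span $V$ of the $v_i$ as the top eigenspace of a flattening of $T$, draw \emph{fresh} robust estimates of the $V$-projected moments from the second corrupted sample $Z$ --- fresh samples are needed precisely because $V$ is data-dependent, and $|Z|=n'$ is chosen so that these projected moments are known to accuracy $\bigO{\eps\eta/(k^5(\Delta^4+1/\alpha^4))}$ --- and then run the robust list tensor decomposition underlying Theorem~\ref{thm:list-recovery-by-tensor-decomposition} on the resulting $\poly(k)$-dimensional, rank-$\le k$ tensor. Because this decomposition now lives in $\poly(k)$ dimensions over a bounded domain, an $\eta$-net argument over $k$-tuples of candidate components bounds the number of rank-$\le k$ decompositions that must be emitted by $\ell'=O\Paren{(k^5(\Delta^4+1/\alpha^4))^{4k}/\eta^{4k}}$; un-whitening and un-flattening each emitted decomposition yields a candidate tuple $\{(\hat w_i,\hat\mu_i,\hat\Sigma_i)\}_{i\le k}$, and these form $L$. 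The running time is dominated by forming and contracting the degree-$\le 8k$ tensors ($\poly(|Y|,d^k)$), the robust estimation steps ($\poly_k(1/\eps)$), and iterating over the $|L|$ outputs, which is the claimed $\poly(|L|,|Y|,d^k)\cdot\poly_k(1/\eps)$.

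Finally, the accuracy guarantee follows by propagating the errors above through the stability of the method of moments for $k$-mixtures: robust identifiability from degree-$\le O(k)$ moments degrades like an iterated radical, $\eps\mapsto\eps^{1/(k+1)!}$ up to constants, which is the origin of the exponent $G(k)=1/(C^{k+1}(k+1)!)$ and of the $\eps^{1/k^{O(k^2)}}$ factor inside $\eta$; tracking the $\Delta^{1/2}/\alpha$ prefactors coming from the rewhitening, and the extra $\bigO{k^4}$ incurred when converting an error on $v_i$ into a Frobenius-norm error on $\Sigma_i$ (as opposed to an $\ell_2$ error on $\mu_i$), gives the stated $\Norm{\mu_i-\hat\mu_i}_2\le\bigO{\Delta^{1/2}/\alpha}\eta^{G(k)}$ and $\Norm{\Sigma_i-\hat\Sigma_i}_F\le\bigO{k^4}(\Delta^{1/2}/\alpha)\eta^{G(k)}$ for the correct candidate. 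I expect the main obstacle to be exactly this list-size-versus-accuracy tension: one must show that the rank-$\le k$ structure genuinely lets the list-producing tensor decomposition be confined to $\poly(k)$ dimensions (so that $|L|$ is $d$-free), while simultaneously verifying that $n'=\Paren{\ldots}^{-4ck}$ fresh samples still supply enough accuracy for the iterated-radical stability bound to deliver the very small $\eta^{G(k)}$ parameter error --- and making both hold with the precise quantitative constants in the statement is the bulk of the argument.
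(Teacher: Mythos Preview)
Your high-level architecture --- reduce to a low-dimensional subspace, use the fresh sample $Z$ to learn there, and thereby make the list size $d$-free --- matches the paper. But the concrete mechanisms you propose for almost every step differ from what the paper actually does, and a couple of your choices would not deliver the stated bounds as written.

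\textbf{Where the list comes from.} In the paper the list size $\ell'$ is \emph{not} the size of an $\eta$-net over rank-$k$ decompositions. It is the number of independent repeats of a random \emph{two-mode collapse} of the estimated fourth Hermite tensor $\hat T_4$: each repeat produces a matrix $\hat S$ that, with some probability, equals $\Sigma_i-I$ up to a small-Frobenius term $P_i$ and a low-rank term $Q_i$ (Proposition~\ref{prop:tensor_decomposition_upto_lowrank}). Repeating $\ell'$ times guarantees at least one good $\hat S$ with high probability; for each $\hat S$ exactly one hypothesis tuple is emitted. So the list is indexed by random $\hat S$'s, not by net points.

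\textbf{What happens in the subspace.} The paper does \emph{not} run a tensor decomposition inside the subspace. For each $\hat S$ it forms the subspace $V'_{\hat S}$ from the large singular vectors of the flattenings of $\hat T_m$, $m\le 4k$, together with the large singular vectors of $\hat S$ (Proposition~\ref{prop:construction-of-low-dim-subspace-for-enumeration}); its dimension is $O\bigl((k(1+\Delta+1/\alpha))^{4k+5}/\eta^2\bigr)$, not $O(k)$ as you assume. It then draws a \emph{small} random subset of $Z$, of size only $\poly(m/\eps_1)$, and runs the \emph{non-robust} Moitra--Valiant \textsc{Partition Pursuit} on the projected points. The point of the fresh sample is not to do robust projected-moment estimation: it is that when only $\poly(m/\eps_1)$ points are needed, a random subset of an $\eps$-corrupted sample is entirely clean with probability $(1-\eps)^{\poly(m/\eps_1)}$, so the classical algorithm applies directly. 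Proposition~\ref{prop:mv} (proved via Lemmas~\ref{lemma:merge} and~\ref{lemma:tvd-gap}) removes the $\eps$-separated-components assumption of Theorem~\ref{thm:partition-pursuit}; this ``merge nearby components and argue any poly-sample algorithm cannot tell the difference'' step is the main new lemma and is absent from your plan.

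\textbf{Covariance reconstruction.} Your ``un-whitening and un-flattening'' misses the role of $\hat S$. The projected learner only recovers $U^\top\Sigma_i U$; the component orthogonal to $V'_{\hat S}$ is filled in from $\hat S$ via the formula $U\hat\Sigma_i U^\top + (\hat S+I) - \hat P(\hat S+I)\hat P$. The error analysis then splits $\Sigma_i-\hat M$ into $\hat P(\cdot)\hat P$, $(I-\hat P)(\cdot)(I-\hat P)$, and cross terms, using $\hat S-(\Sigma_i-I)=P_i+Q_i$ with $\|P_i\|_F=O(\sqrt{\eta/\alpha})$ and $Q_i$ close to $V'_{\hat S}\otimes V'_{\hat S}$; this is where the $O(k^4)\Delta^{1/2}\alpha^{-1}\eta^{G(k)}$ comes from, not from a generic ``convert error on $v_i$ to Frobenius error on $\Sigma_i$.''

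In short: right skeleton, but the load-bearing pieces --- random collapsing of $\hat T_4$ to get $\hat S$, \textsc{Partition Pursuit} (not tensor decomposition) on a clean sub-sample inside $V'_{\hat S}$, and the $\hat S$-based covariance completion --- are different from what you propose.
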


\subsection{Proof of Theorem \ref{thm:polynomial-size-list-recovery-by-tensor-decomposition} }

We use the following notation and background from Moitra-Valiant~\cite{MoitraValiant:10}:

\begin{definition}[Statistically Learnable]
Given $\eps>0$, we call a mixture of Gaussians $\calM=\sum_iw_i\calN(\mu_i, \Sigma_i)$ $\eps$-statistically learnable if $\min_iw_i\ge\eps$ and $\min_{i\neq j}d_{TV}(\calN(\mu_i, \Sigma_i),\calN(\mu_j, \Sigma_j))\ge\eps$.
\end{definition}

\begin{definition}[Correct Subdivision]
Given a Gaussian mixture of $k$ Gaussians, $\calM=\sum_iw_iN(\mu_i,\Sigma_i)$ and a mixture of $k'\le k$ Gaussians $\hat \calM=\sum_i\hat w_iN(\hat\mu_i,\hat\Sigma_i)$, we call $\hat \calM$ an $\eps$-correct subdivision of $\calM$ if there is a function $\pi:[k]\to[k']$ that is onto and 
\begin{enumerate}
\item $\forall j\in[k'], \left|\sum_{i:\pi(i)=j}w_i-\hat w_j\right|\le\eps$
\item $\forall i\in[k], \norm{\mu_i-\hat\mu_{\pi(i)}}+\norm{\Sigma_i-\hat\Sigma_{\pi(i)}}_F\le\eps$.
\end{enumerate}
\end{definition}

\begin{theorem}[Theorem 8 in \cite{MoitraValiant:10}]\label{thm:partition-pursuit}
Given an $\eps$-statistically learnable Gaussian mixture $\calM$ in isotropic position, for some $\eps>0$, there exists an algorithm that requires $n = \poly(d/\eps)$ samples and runs in time $O(\poly_k(n))$ and with probability at least $99/100$ recovers an $\eps$-correct sub-division $\hat \calM$. Let the corresponding algorithm be referred to as PARTITION PURSUIT.
%and if $\calM$ has more than one component, $\hat F$ also has more than one component.
\end{theorem}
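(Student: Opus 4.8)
The plan is to carry out the projection/partition-pursuit strategy of Moitra--Valiant, exploiting that isotropic position makes the whole problem bounded. Since $\sum_i w_i(\Sigma_i+\mu_i\mu_i^\top)=I$ and $\min_i w_i\ge\eps$, we have $\|\mu_i\|\le 1/\sqrt\eps$ and $0\preceq\Sigma_i\preceq I/\eps$ for every $i$, so on the relevant range $d_{\mathrm{TV}}(\calN(\mu_i,\Sigma_i),\calN(\mu_j,\Sigma_j))$ is sandwiched between $\Omega(\poly(\eps))\cdot\big(\|\mu_i-\mu_j\|+\|\Sigma_j^{-1/2}\Sigma_i\Sigma_j^{-1/2}-I\|_F\big)$ and $O(1)$ times the same quantity; in particular $\eps$-statistical learnability says every pair of components is $\poly(\eps)$-separated either in means or in a \emph{relative} (Mahalanobis) sense. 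I would first draw $n=\poly(d/\eps)$ samples and, by a net argument over the relevant finite-dimensional families, record that the empirical expectation of every degree-$O(k)$ polynomial with bounded coefficients is within $\eps^{C}$ of its true value under $\calM$ --- in particular for all one-dimensional linear statistics $(v^\top x)^j$, $v\in S^{d-1}$, $j\le 8k$, and all quadratic statistics $(x^\top Bx)^j$, $\|B\|_F\le 1$, $j\le 4k$. This is the only step in which $d$ enters the sample complexity, and it enters polynomially.

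Second, I would develop the one-dimensional primitive: for a $\delta$-statistically-learnable mixture of at most $k$ Gaussians on the line with $\poly(1/\delta)$-bounded parameters, its first $6k$ raw moments determine the parameters, and quantitatively any parameter vector whose first $6k$ moments agree to within $\delta'$ lies within $\poly_k(\delta')/\poly_k(\delta)$ of the true one. This is algebraic identifiability of one-dimensional Gaussian mixtures from $O(k)$ moments (two distinct $\le k$-mixtures disagree in one of their first $4k-2$ moments) upgraded to a quantitative inverse-function/resultant statement, where the $\delta$-separation is used to lower bound the relevant Jacobian/discriminant and to isolate the correct branch. Combined with step~1, a grid search or polynomial-system solve over the $\poly_k$-dimensional parameter space then returns, for any fixed direction, estimates of all one-dimensional component parameters in time $\poly_k(n)$. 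The same machinery applies to mixtures of generalized-$\chi^2$ distributions (images of Gaussians under a quadratic form), which we will need below.

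Third comes the pursuit loop. I maintain a subdivision $\{P_1,\dots,P_m\}$ of $[k]$ together with an assignment of samples, and for each part $P$ I re-normalize the corresponding sub-sample to its own isotropic position; re-normalization is what converts a \emph{relative} covariance gap into an \emph{absolute} one, so that under statistical learnability every still-nontrivial part is splittable. Inside $P$, all parameter variation lies in an at-most-$(|P|-1)$-dimensional subspace of symmetric matrices (spanned by $\{\Sigma_i-\Sigma_j,\,\mu_i\mu_j^\top\}$), which I read off from the empirical fourth-moment operator of the (renormalized) part; picking an orthonormal basis $B_1,\dots,B_{k}$ of it and a basis $v_1,\dots,v_k$ of the mean span gives $\le 2k$ one-dimensional statistics $v_r^\top x$ and (shifted) $x^\top B_r x$. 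If $|P|\ge 2$ then, by the sandwich above and the isotropic re-normalization, some pair $i\neq j\in P$ is $\poly(\eps)$-separated in at least one of these statistics, so the one-dimensional primitive applied to each statistic detects two distinct components and returns their parameters. I then match these one-dimensional components across the $\le 2k$ statistics --- each $i\in P$ induces a consistent tuple of one-dimensional parameters, and the joint separation of the statistics makes the matching essentially forced --- and solve the resulting linear system for $\mu_i$ and $\Sigma_i$ (legitimate because outside the identified variation subspace all $\Sigma_i$ equal the empirical cluster covariance), undo the re-normalization, refine $P$ accordingly, and reassign its samples. Since each refinement strictly increases $m\le k$, the loop halts after $\le k$ rounds; stringing together the $\eps^{C}$ concentration error, the stability of the one-dimensional primitive, and the $\poly_k(1/\eps)$ condition numbers accumulated over $\le k$ re-normalizations, every recovered parameter is within $\poly_k(\eps^{C})$ of the truth, which for $C$ chosen large enough is $\le\eps$; this yields an $\eps$-correct subdivision (in fact the full partition), with weights $\hat w_j$ the estimated masses of the parts. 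The total running time is $\poly_k(n)=\poly_k(\poly(d/\eps))$.

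The main obstacle is the second step: the robust/quantitative identifiability of one-dimensional $k$-mixtures of Gaussians, i.e.\ controlling how $\delta'$-perturbations of the first $O(k)$ moments propagate to the $3k$ parameters under only a $\delta$-separation assumption --- all of the unavoidable super-polynomial-in-$k$ loss (the $\poly_k$'s and the tower-type constants) is generated here and in the repeated isotropic re-normalizations. A secondary difficulty is making the cross-statistic matching and the linear-algebraic lift robust to the potentially $\poly(1/\eps)$-size condition numbers that appear when two covariances differ only by a near-singular matrix; the resolution is that the $\le 2k$ chosen statistics jointly separate every pair, so no component can stay hidden in all of them at once, and the re-normalization rescales precisely the thin directions that would otherwise be invisible.
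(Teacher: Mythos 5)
You should first be aware that the paper contains no proof of this statement: it is imported verbatim as Theorem~8 of Moitra--Valiant and used strictly as a black box (the surrounding text only builds on it, e.g.\ in Proposition~\ref{prop:mv} and Step~4 of Algorithm~\ref{algo:efficient-list-recovery-tensor-decomposition}). So there is no internal argument to compare yours against; what you have written is an attempt to re-derive the cited result itself, and it should be judged on that basis.

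As such an attempt, your sketch follows the broad Moitra--Valiant strategy (isotropic position to bound the parameters, reduction to one-dimensional statistics, identifiability from the first $O(k)$ moments, and a recursive partition/re-isotropization loop), but the steps you yourself flag as ``obstacles'' are precisely the technical content of the theorem, and they are asserted rather than proved. Concretely: (i) the quantitative one-dimensional identifiability --- that two separated $k$-mixtures on the line must differ by $\mathrm{poly}_k(\epsilon)$ in one of their first $O(k)$ moments, uniformly over the bounded parameter range --- is the core of the Moitra--Valiant paper (established there via a deconvolution and zero-counting argument), and a generic ``inverse function / resultant'' appeal does not yield the needed uniform polynomial bound, particularly near the boundary where components nearly coincide or nearly degenerate; (ii) your claimed two-sided comparison between total variation distance and the quantity $\|\mu_i-\mu_j\|+\|\Sigma_j^{-1/2}\Sigma_i\Sigma_j^{-1/2}-I\|_F$ is false in the regime the theorem must handle: mixture isotropy only gives $\Sigma_i \preceq I/\epsilon$ and does not prevent components from being nearly singular in some directions, in which case the total variation distance can be close to $1$ while the Euclidean mean gap and the relative covariance gap are both tiny, so $\epsilon$-statistical learnability does not translate into the directional $\mathrm{poly}(\epsilon)$ separation your pursuit loop relies on --- handling exactly these thin directions is why the recursion with re-normalization in Moitra--Valiant is delicate; (iii) the cross-statistic matching of one-dimensional components and the linear-algebraic lift back to $d$ dimensions is only argued to be ``essentially forced,'' with no proof that distinct components cannot collapse simultaneously in all $\le 2k$ chosen statistics after projection. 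In short, your proposal is a reasonable roadmap of the known proof, but not a self-contained one; within this paper the correct disposition of the statement is simply the citation.
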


The algorithm has two steps: first run the first three steps of Algorithm 3.2 to get the list $L'$ of $\hat S$ and $V'_{\hat S}$; then apply the following proposition to learn the mixture in the subspace $V'_{\hat S}$. This proposition is a generalization of Theorem~\ref{thm:partition-pursuit} without the assumption that the total variation distance between each pair of components is at least $\eps$. The sample and time complexities has a worse, but still polynomial dependence on $\eps$. Note that although the algorithm in the proposition is non-robust, we can take a sample without noise with constant probability because the algorithm only requires a polynomial number of samples in $\eps$.

\begin{mdframed}
  \begin{algorithm}[Efficient List-Recovery of Candidate Parameters]
    \label{algo:efficient-list-recovery-tensor-decomposition}\mbox{}
    \begin{description}
    \item[Input:] An $\epsilon$-corruption $Y$ of a sample $X$ from a $k$-mixture of
    Gaussians $\calM = \sum_i w_i \cN(\mu_i,\Sigma_i)$. Let Z be an additional $\epsilon$-corrupted sample of size $n'$ from $\calM$. 
    %\Anote{$\ell'$ is defined again two lines below for now, slightly annoying since $\ell'$ depends on a bunch of other parameters.}
    \item[Requirements: ] The guarantees of the algorithm hold if the mixture parameters
    and the sample $X$ satisfy:
    \begin{enumerate}
        \item $w_i \geq \alpha$ for all $i \in [k]$,
%        \item $\Norm{\sum_i w_i \mu_i}_2 \leq 1$,
        \item $\Norm{\mu_i}_2 \leq 2/\sqrt{\alpha}$ for all $i\in[k]$,
        \item $\Norm{\Sigma_i - I}_F \leq \Delta$ for all $i\in[k]$.
        \item $X$ satisfies Condition~\ref{cond:convergence-of-moment-tensors} with parameters \new{$(\gamma, t)$,
        where $\gamma = \eps d^{-8k}k^{-Ck}$, for $C$ a sufficiently large universal constant, and $t=8k$}.
        \item The number of fresh samples $n'= O\Paren{\epsilon \eta /\Paren{ k^5\Paren{\Delta^4 + 1/\alpha^4 }} }^{-4c k}$, for a fixed constant $c$. 
    \end{enumerate}

    \item[Parameters: ] $\eta = (2 k)^{4k} \bigO{1/\alpha+ \Delta}^{4k} \eps^{1/(k^{O(k^2)})}$, $D = C(k^4/(\alpha\sqrt{\eta}))$,
    $\delta = 2 \eta^{1/(C^{k+1} (k+1)!)} $, $\ell'= 100\log k\Paren{ \eta /\Paren{ k^5\Paren{\Delta^4 + 1/\alpha^4 }} }^{-4k}$,
%    $\ell = \new{\exp\Paren{ \log(1/\epsilon) (1/\eta)^{O(k^5)}}}$
% $\ell$ was previously \exp(\log(1/\epsilon) \eta^{-F(k)} (1 + 1/\alpha + \Delta)^{Ck})
%with $F(k) = \frac{1}{(C'C)^{k+1}\paren{(k+1)!}^2}$
for some sufficiently large absolute constant $C>0$, $\lambda  = 4\eta$, $\phi = 10(1+\Delta^2)/(\sqrt{\eta}\alpha^5)$,  $\eps_1 =O\Paren{\sqrt{\Delta}\delta^{1/4}/\alpha}$.
    \item[Output:] A list $L$ of hypotheses such that there exists at least one, $\{\hat{\mu}_i, \hat{\Sigma}_i\}_{i \leq k}\in L$, satisfying: $\Norm{\mu_i - \hat \mu_i}_2 \leq \bigO{\frac{\Delta^{1/2}}{\alpha}} \eta^{G(k)}$ and
$\Norm{\Sigma_i - \hat\Sigma_i}_F \leq \bigO{k^4} \frac{\Delta^{1/2}}{\alpha} \eta^{G(k)}$, where $G(k) = \frac{1}{C^{k+1} (k+1)!}$.

    \item[Operation:]\mbox{}
    \begin{enumerate}
    %% Make mixture isotropic.
    \item \textbf{Robust Estimation of Hermite Tensors}: For $m \in [4k]$, compute $\hat T_m$ such that $\max_{m \in [4k]} \Norm{\hat T_m - \expecf{}{h_m(\calM)} }_F \leq \eta$ using the robust mean estimation algorithm in Fact~\ref{fact:robust-mean-estimation}.

    \item \textbf{Random Collapsing of Two Modes of $\hat{T}_4$}: Let $L'$ be an empty list. Repeat $\ell'$ times:
    For $ j \in [4k]$, choose independent standard Gaussians in $\mathbb{R}^d$, denoted by $x^{(j)}, y^{(j)} \sim \calN(0, I)$,
    and uniform draws $a_1, a_2, \ldots, a_t$ from $[-D,D]$. Let $\hat S$ be a $d \times d$ matrix such that
    for all $r,s\in[d]$, $\hat S(r,s) = \sum_{j \in [4k]} a_j \hat T_4(r,s,x^{(j)}, y^{(j)}) =
    \sum_{j \in [4k] } a_j \sum_{g,h \in [d]} \hat T_4 (r,s,g,h) x^{(j)}(g) y^{(j)}(h)$.
    Add $\hat S$  to the list $L'$.

    \item \textbf{Construct Low-Dimensional Subspace}: Let $V$ be the span of all singular vectors of the natural $d\times d^{m-1}$ flattening of $\hat T_m$ with singular values $\geq \lambda$ for $m \leq 4k$. For each $\hat S \in L'$, let $V'_{\hat S}$ be the span of $V$ plus all the singular vectors of $\hat S$ with singular value larger than $\delta^{1/4}$.

    \item \textbf{Moitra-Valiant for Low-Dimensional Subspace}: Initialize $L$ to be the empty list. For each $\hat S \in L'$, let $\hat P = U U^\top$ be the orthogonal projection matrix onto the span of $V'_{\hat S}$, where $U \in \R^{d \times d}$ has orthonormal columns. Let $m = \dim{V'_{\hat S}}$ and  let $\hat Z \subset Z$ be a randomly chosen subset of size $\poly(m/\eps_1)$. 
    %\Anote{this poly depends on MV and dimension of $V'_{\hat S}$ }. 
    Let $U_m$ denote the first $m$ columns of $U$ and for all $z \in \hat Z$, compute $U_m^\top z$. Run PARTITION PURSUIT on the resulting set of points and let $\{\hat\mu^{\hat P}_i , \hat\Sigma^{\hat P}_i \}_{i \in [k]}$ be the parameters corresponding to the $\epsilon$-correct subdivision output by PARTITION PURSUIT. Let $\hat \mu_i^\top = [(\hat\mu^{\hat P}_i)^\top, 0 ]$ be a $d$ dimensional vector padded with $0$s and $\hat \Sigma_i $ be a $d \times d$ matrix with $\hat\Sigma^{\hat P}_i$ in the top left $m \times m$ sub-matrix and $0$'s elsewhere. Add $\{ U \hat \mu_i, U \hat \Sigma_i U^\top + (\hat S + I) - \hat P\Paren{I + S} \hat P \}_{i \in [k]}$ to $L$. 
    %\Anote{address behavior of partition pursuit when the input is not an actual mixture of gaussians. }

    % $V_{\delta^{1/4}}$ be a $\delta^{1/4}$-cover of vectors in $V'_{\hat{S}}$
    % with $\ell_2$-norm at most $2/\sqrt{\alpha}$. Enumerate over vectors $\hat \mu$ in $V_{\delta^{1/4} }$.
    % Let $k' =C k^2$
    % and let $\calC_{\delta^{1/4}}$ be a $\delta^{1/4}$-cover of the interval $[-\phi, \phi]^{k'}$.
    % For $\{ \tau_j \}_{j\in [k']} \in \calC_{\delta^{1/4}}$ and for all $\{ v_j \}_{j\in [k']} \in V_{\delta^{1/4}}$,
    % let $\hat{Q} = \sum_{j \in [k']} \tau_j v_j v_j^\top$. Add  $\{ \hat \mu,I+\hat{S}+\hat{Q} \}$ to $L$.

    %Let $V_\tau\otimes V_\tau$ be a $\tau$ cover over $V'_{\hat{S}} \otimes V'_{\hat{S}}$ and  for all matrices $\hat Q$ of Frobenius norm at most $10(1+\Delta^2)/(\eta\alpha^5)$ in $V_\tau\otimes V_\tau$, add  $(\hat \mu,I+\hat S+\hat Q)$ to $L$.
     %rank $\bigO{k^2}$ matrices $\hat Q$ of Frobenius norm at most $10(1+\Delta^2)/(\eta\alpha^5)$ in $V_\tau\times V_\tau$. For each candidate $\hat \mu, \hat Q$, add $(\hat \mu,I+\hat S+\hat Q)$ to $L$. %$ each ?? add ?? in $V'_{\hat S}$ to $L$. Output $L$.
    \end{enumerate}

    \end{description}
  \end{algorithm}
\end{mdframed}

\begin{proposition}\label{prop:mv}
Given $\eps>0$ and a sample $X$ of size $\poly(d,1/\eps)$ from a $k$-mixture of Gaussians $\calM$ with mixture covariance $\Sigma$ such that $0.99 I \preceq \Sigma\preceq 1.01 I$ 
%There is an algorithm which given $\eps>0$ and a sample of size $\poly(d,1/\eps)$ from a $k$-mixture of Gaussians $\calM$ that is in near isotropic position 
and satisfies $w_i\ge\eps$, the PARTITION PURSUIT algorithm runs in time $\poly(d,1/\eps)$ and with probability at least $9/10$ 
%\Anote{$1-1/poly(d)$? even constant probabability should suffice here.} 
returns an $O(\eps)$-correct sub-division, denoted by $\hat\calM$.
\end{proposition}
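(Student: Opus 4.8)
The plan is to remove the pairwise‑separation hypothesis of Theorem~\ref{thm:partition-pursuit} by collapsing clusters of near‑identical components of $\calM$ and then invoking that theorem; the point is that although the collapsed mixture is never constructed by the algorithm, PARTITION PURSUIT run on samples from $\calM$ behaves as if run on that mixture. Throughout write $\calN_i := \calN(\mu_i,\Sigma_i)$, and note that after recentering by the (empirically estimated) mixture mean, the conditions $0.99 I \preceq \Sigma \preceq 1.01 I$ and $w_i \ge \eps$ force $\Norm{\Sigma_i}_{\mathrm{op}} \le 1.01/\eps$ and $\Norm{\mu_i}_2 \le \poly(d,1/\eps)$; consequently $d_{TV}(\calN_i,\calN_j)$ is comparable, up to $\poly(d,1/\eps)$ factors, to $\Norm{\mu_i-\mu_j}_2 + \Norm{\Sigma_i-\Sigma_j}_F$, and $d_{TV}$-closeness of two such mixtures translates to closeness of their low‑degree moments with the same polynomial loss. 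Whitening the sample by its empirical covariance is an $O(1)$‑distortion of all these quantities that puts it in isotropic position, so Theorem~\ref{thm:partition-pursuit} is applicable.

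\textbf{Scale separation and collapsing.} Among the $\binom{k}{2}$ values $d_{TV}(\calN_i,\calN_j)$ consider the $\binom{k}{2}+1$ disjoint intervals $\Paren{\gamma_0^{p^{j+1}}, \gamma_0^{p^{j}}}$, $j = 0,\dots,\binom{k}{2}$, where $\gamma_0 = \eps/\poly(d,1/\eps)$ (the polynomial fixed below) and $p$ is a large constant. By pigeonhole some interval $(\tau_{\mathrm{close}},\tau_{\mathrm{far}})$ contains no $d_{TV}(\calN_i,\calN_j)$; note $\tau_{\mathrm{close}} = \tau_{\mathrm{far}}^{p} \le \gamma_0$ and $1/\tau_{\mathrm{far}} \le \gamma_0^{-p^{\binom{k}{2}}} = \poly_k(d,1/\eps)$. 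Form the graph on $[k]$ joining $i\sim j$ iff $d_{TV}(\calN_i,\calN_j) < \tau_{\mathrm{close}}$ and let $C_1,\dots,C_r$ be its connected components; since $d_{TV}$ satisfies the triangle inequality and $|C_\ell|\le k$, any two Gaussians in a common $C_\ell$ are within $k\tau_{\mathrm{close}}$ in $d_{TV}$, while every cross‑component pair is $\ge \tau_{\mathrm{far}}$ (nothing lies strictly between $\tau_{\mathrm{close}}$ and $\tau_{\mathrm{far}}$). Pick a representative $i_\ell \in C_\ell$ and set $\tilde\calM = \sum_{\ell\le r} \tilde w_\ell\, \calN(\mu_{i_\ell},\Sigma_{i_\ell})$ with $\tilde w_\ell = \sum_{i\in C_\ell} w_i$. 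Then $d_{TV}(\calM,\tilde\calM) \le k\tau_{\mathrm{close}}$, and $\tilde\calM$ is $\tau_{\mathrm{far}}$‑statistically learnable: its weights are $\ge \eps > \tau_{\mathrm{far}}$ and its components are pairwise $\ge \tau_{\mathrm{far}}$‑separated in $d_{TV}$.

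\textbf{Running PARTITION PURSUIT and transferring back.} The one fact we need beyond the literal statement of Theorem~\ref{thm:partition-pursuit} is that its finite‑sample guarantee is stable under replacing the target $\tau_{\mathrm{far}}$‑statistically‑learnable mixture by any distribution whose low‑degree moments lie within the additive error already incurred by the $\poly(d/\tau_{\mathrm{far}})$ empirical moments — immediate from the method‑of‑moments analysis underlying PARTITION PURSUIT, whose decisions hinge only on $\poly(\tau_{\mathrm{far}})$‑scale gaps in those moments. Since $\calM$ and $\tilde\calM$ have $\poly(d,1/\eps)$‑bounded parameters, $d_{TV}(\calM,\tilde\calM)\le k\tau_{\mathrm{far}}^{p}$ makes their low‑degree moments agree to within $k\tau_{\mathrm{far}}^{p}\cdot\poly(d,1/\eps)$; choosing the polynomial in $\gamma_0$ and the constant $p$ large enough (using $1/\tau_{\mathrm{far}} \ge 1/\gamma_0 \ge \poly(d,1/\eps)$) makes this $\ll \poly(\tau_{\mathrm{far}})$, so running PARTITION PURSUIT on $n = \poly(d/\tau_{\mathrm{far}}) = \poly_k(d,1/\eps)$ samples of $X$ (whitened to isotropic position) returns, with probability $\ge 99/100$, a $\tau_{\mathrm{far}}$‑correct — hence $O(\eps)$‑correct — subdivision $\hat\calM$ of $\tilde\calM$. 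Composing the surjection $\pi' : [r]\to[\hat r]$ witnessing this with the cluster map $\kappa : [k]\to[r]$ yields a surjection $\pi = \pi'\circ\kappa$; the weight condition transfers exactly since $\sum_{i:\pi(i)=j}w_i = \sum_{\ell:\pi'(\ell)=j}\tilde w_\ell$, and for each $i$, $\Norm{\mu_i - \hat\mu_{\pi(i)}}_2 + \Norm{\Sigma_i - \hat\Sigma_{\pi(i)}}_F \le \poly(d,1/\eps)\cdot k\tau_{\mathrm{close}} + \tau_{\mathrm{far}} = O(\eps)$ by the triangle inequality and the choice of $\gamma_0$, the first term bounding the distance from $\calN_i$ to $\calN_{i_{\kappa(i)}}$ and the second being the guarantee on $\hat\calM$. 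Undoing the whitening costs an $O(1)$ factor, so $\hat\calM$ is an $O(\eps)$‑correct subdivision of $\calM$; the running time is that of the PARTITION PURSUIT call, $\poly_k(n) = \poly_k(d,1/\eps)$, and a union bound over the whitening‑estimate, moment‑concentration, and PARTITION PURSUIT failure events leaves the overall success probability $\ge 9/10$.

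The main obstacle is the stability claim above: everything rests on reading Theorem~8 of \cite{MoitraValiant:10} not as a black box but as a statement robust to a $\poly(\text{separation})$‑size perturbation of the underlying distribution — equivalently, to a tiny adversarial fraction of the input points — which is exactly what lets us ignore that PARTITION PURSUIT sees $\calM$ rather than the statistically‑learnable $\tilde\calM$, at the price of the ``worse but still polynomial'' sample complexity $\poly_k(d,1/\eps)$. The remainder is routine: the pigeonhole of the first step and the near‑isotropy and $d_{TV}$‑versus‑parameter‑distance bookkeeping.
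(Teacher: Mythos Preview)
Your high-level strategy matches the paper's exactly: pigeonhole the pairwise $d_{TV}$ values to find a gap, merge components within the small scale into a statistically-learnable mixture $\tilde\calM$, apply Theorem~\ref{thm:partition-pursuit} to $\tilde\calM$, then lift the resulting subdivision back to $\calM$. Where you diverge is in the transfer step. You argue that PARTITION PURSUIT run on samples from $\calM$ behaves as if run on $\tilde\calM$ because the algorithm's decisions depend only on low-degree moments, and those moments are close; you explicitly flag this white-box stability claim as the ``main obstacle,'' requiring one to read into the internals of the Moitra--Valiant analysis. The paper sidesteps this entirely with a black-box argument (Lemma~\ref{lemma:merge}): since $d_{TV}(\calM,\tilde\calM)\le k\tau_{\mathrm{close}}$, the product measures on $n$ samples are at TV distance $O(n\cdot k\tau_{\mathrm{close}})$, and therefore \emph{any} algorithm whose output is a bounded list of real numbers must have nearly identical output distributions on the two inputs --- no property of PARTITION PURSUIT is used beyond that its outputs are polynomially bounded. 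This is strictly cleaner: it removes precisely the obstacle you identified, and it works for any learner, not just a moment-based one. Your route is plausible but leaves the stability of PARTITION PURSUIT (which is not purely method-of-moments --- it involves random projections and univariate density estimation) as an unverified assertion; the paper's route closes the argument with no such dependency.
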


Recall, the PARTITION PURSUIT algorithm satisfies Theorem~\ref{thm:partition-pursuit} and we will prove that with an appropriately chosen parameter $\eps$, PARTITION PURSUIT also satisfies Proposition~\ref{prop:mv}. The main idea is that if any two components are actually close enough in total variation distance, then any algorithm with access to only a polynomial number of samples could never distinguish these two components from a single Gaussian. So if all pairwise distances are either sufficiently large or sufficiently small, the algorithm will behave as if it were given sample access to a mixture that meets the requirements of Theorem~\ref{thm:partition-pursuit}.

% \begin{lemma}\label{lemma:merge}
% Given $\eps>0$ and any pair of Gaussians $\calN(\mu_1, \Sigma_1)$ and $\calN(\mu_2, \Sigma_2)$ in $\mathbb{R}^d$ such that $d_{TV}(\calN(\mu_1, \Sigma_1) , \calN(\mu_2, \Sigma_2))<f(d,\eps)$, there exists a polynomial $f(d,\eps)$ such that PARTITION PURSUIT (Algorithm 4 in [MV]) will behave as if $\calN(\mu_1, \Sigma_1)$ and $\calN(\mu_2, \Sigma_2)$ are replaced with a single appropriately chosen Gaussian\Anote{parameters of this gaussian? probability of success should be at least $1/k^2$, would need to union bound over every pair}.
% \end{lemma}
\begin{lemma}\label{lemma:merge}
Given $0<\gamma, \delta<1$ and two distributions, $\calD_1$ and $\calD_2$ over $\mathbb{R}^d$ such that $d_{TV}( \calD_1,  \calD_2 )<\gamma$, let $X_1$ be set of $n$ i.i.d. samples from $\calD_1$ and  $X_2$ be $n$ i.i.d. samples from $\calD_2$. Let $\calA$ be any algorithm that takes as input $X_1$ and outputs a list of $m$ real numbers, $Y_1 = \{y_i\}_{i \in [m]}$, such that $y_i \in [-1,1]$ with probability at least $1-\delta$. Then, for any $\tau>0$, $\calA$ on input $X_2$ outputs a list of $m$ real numbers $Y_2 = \{ y'_i \}_{i \in [m]}$ such that with probability at least $1- \delta - (4 m n \gamma/ \tau)$, for all $i\in[m]$, $\abs{y_i - y'_i}\leq \tau$. 
\end{lemma}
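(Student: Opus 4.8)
The plan is to transfer a total-variation bound between the two input distributions to the outputs of $\calA$ by a coupling argument.

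First I would record that the $n$-fold product distributions satisfy $d_{TV}(\calD_1^{\otimes n},\calD_2^{\otimes n}) \le n\gamma$: this is the standard hybrid (telescoping) bound — change the coordinates from $\calD_1$ to $\calD_2$ one at a time, each change costing at most $d_{TV}(\calD_1,\calD_2) < \gamma$, and sum with the triangle inequality. Since $\calA$ is a (possibly randomized) procedure whose internal randomness is independent of the sample, running $\calA$ is a randomized post-processing and cannot increase total variation; hence, viewing $Y_1 = \calA(X_1)$ and $Y_2 = \calA(X_2)$ as $\R^m$-valued random variables, $d_{TV}(Y_1,Y_2) \le n\gamma$. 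Equivalently, every event measurable with respect to the output has probability that shifts by at most $n\gamma$ when the input law is changed from $\calD_1^{\otimes n}$ to $\calD_2^{\otimes n}$.

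Second, I would turn this into coordinatewise closeness. Condition on the event $\{Y_1\in[-1,1]^m\}$, which by hypothesis has probability at least $1-\delta$; this costs $\delta$ and is the only place the boundedness hypothesis is used. On this event, discretize each coordinate to a grid over $[-1,1]$ of mesh $\tau$, so each coordinate takes only $O(1/\tau)$ possible values (plus, for $Y_2$, one overflow value). For a fixed coordinate $i$, the statement that $y_i$ lies in a given grid cell $v$ is an output event, so its probability shifts by at most $n\gamma$ between the two input laws; summing over the $O(1/\tau)$ cells, the discretized law of the $i$-th coordinate moves by $O(n\gamma/\tau)$ in total variation, so there is a coupling under which the discretized coordinates agree except with probability $O(n\gamma/\tau)$, and on agreement $|y_i - y'_i|\le\tau$. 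A union bound over $i\in[m]$ then gives failure probability $\delta + O(mn\gamma/\tau)$, with the absolute constant as in the statement after accounting for the cell count and the overflow value. (A slicker route — couple $(X_1,X_2)$ maximally so they coincide with probability $\ge 1-n\gamma$, run $\calA$ with a shared random seed, and note the two executions are then literally identical — gives the stronger bound $1-n\gamma$, but the discretized argument above is what matches the statement's precise form and avoids reusing $\calA$'s randomness.)

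I expect the only genuine subtlety to be the meaning of the conclusion: the assertion that $\calA$ on $X_2$ outputs $Y_2$ with $|y_i - y'_i|\le\tau$ is implicitly a statement about the joint law of the two executions, so one must be explicit that it suffices to exhibit \emph{some} coupling of $Y_1$ and $Y_2$ under which the stated event holds. The hypothesis $y_i\in[-1,1]$ is exactly what lets one pass from a total-variation bound on outputs to a bound on coordinatewise displacement (a total-variation bound alone controls no displacement when the output range is unbounded), and the $\tau$-slack absorbs the discretization. Beyond that the argument is routine bookkeeping of three error terms: $\delta$ from unboundedness, $n\gamma$ from the $n$-fold hybrid, and $1/\tau$ from the grid.
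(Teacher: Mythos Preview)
Your proposal is correct and follows essentially the same approach as the paper: bound the total variation between the $n$-fold product distributions, discretize each of the $m$ output coordinates on a mesh of width $\tau$ over $[-1,1]$, observe that each of the $O(1/\tau)$ cell-indicator events shifts by at most the product TV, and union bound over cells and coordinates together with the $\delta$ failure event. The only cosmetic difference is that the paper routes the product-TV bound through Hellinger distance (tensorization of $H^2$) to get $d_{TV}\le \sqrt{2}\,n\gamma$, whereas you use the direct hybrid/telescoping bound $d_{TV}\le n\gamma$; both yield the same $O(mn\gamma/\tau)$ conclusion, and your discussion of the coupling interpretation is in fact more careful than the paper's.
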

\begin{proof}
Let $\calU_1$ be the uniform distribution over $X_1$ and $\calU_2$ be the uniform distribution over $X_2$. Then, 
\begin{equation}
\begin{split}
d_{TV}\Paren{\calU_1 , \calU_2 } \leq \sqrt{2}  H^2\Paren{\calU_1 , \calU_2 } &= \sqrt{2} n H^2\Paren{ \calD_1, \calD_2} \\
& \leq \sqrt{2} n d_{TV}\Paren{\calD_1, \calD_2} \\
&\leq  \sqrt{2}  \gamma n
\end{split} 
\end{equation} 
Consider the family of functions $\calF$ that take as input $n$ samples and output a single bit in $\{ 0,1\}$. We know that for any function $f \in \calF$, the probability that $f(X_1) \neq f(X_2)$ is at most $\sqrt{2}  \gamma n$. Recall, the algorithm outputs $m$ real numbers in the range $[-1,1]$, which we can discretize into a grid $\Delta$ of length $\tau$. There are at most $2/\tau$ distinct grid points and for any $y_i \in [-1,1]$, there exists a point $z_i \in \Delta$ such that $\abs{y_i - z_i}\leq \tau$. Further, observe we can represent each $y_i$ using $2/\tau$ functions $f \in \calF$. Then, union bounding over the events that each of the $2/\tau$ functions output different bits, for each of the $m$ parameters, we have that with probabiltiy at least $1- \Paren{2\sqrt{2}  \gamma n m/\tau}$, any algorithm outputs a list $\{y_i' \}_{i' \in [m]}$ such that $\abs{y_i -y'_i}\leq \tau$. Finally, union bounding over the event that algorithm $\calA$ fails with probability $\delta$ yields the claim. 
% Suppose the sample size of PARTITION PURSUIT is $P(d,\eps)$. Let $f(d,\eps)\le\delta/P(d,\eps)$. Given samples from a mixture $G$ of $\calN(\mu_1, \Sigma_1)$ and $\calN(\mu_2, \Sigma_2)$ \Anote{we would need to show that the argument following this line works for any weighting of the two mixtures? since for MV on a mixture of $k$ the marginal weights on a pair of components can be arbitrary?} when running the algorithm, with probability at least 
% $$
% 1-f(d,\eps)P(d,\eps)\ge1-\delta
% $$
% we can assume that all samples came from a single appropriately chosen Gaussian $\hat G$ because there is an approximate between $G$ and $\hat G$ that fails with probability at least $d_{TV}(G,\hat G)$ \Anote{how do we relate tv closeness between $G$ and $\hat G$ to $\delta$?} and with probability $1-\delta$ this coupling will never fail, given the number of samples obtained from $G$.
\end{proof}

We then prove there is a gap $[f(d,\eps_1),\eps_1)$ between pairwise distances of components so that if we merge components within distance $f(d,\eps_1)$, the resulting mixture is $\eps_1$-statistically learnable.

\begin{lemma}\label{lemma:tvd-gap}
Let $f(d)(\eps)=f(d,\eps)$. There exists $\ell\in[k^2]$ such that for every pair of components, either $d_{TV}(\calN(\mu_i, \Sigma_i) , \calN(\mu_j, \Sigma_j))<(f(d))^\ell(\eps)$ or $d_{TV}(\calN(\mu_i, \Sigma_i) , \calN(\mu_j, \Sigma_j))\ge (f(d))^{\ell-1}(\eps)$. Moreover, the set of Gaussians with total variation distance at most $(f(d))^\ell(\eps)$ is an equivalence class.
\end{lemma}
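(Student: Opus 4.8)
The plan is a layering/pigeonhole argument on the iterated thresholds of $f(d)$, followed by a short triangle-inequality argument for the equivalence-class statement.

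First I would set up the $k^2+1$ iterated thresholds $\eps=(f(d))^0(\eps)>(f(d))^1(\eps)>\dots>(f(d))^{k^2}(\eps)$, which are strictly decreasing since $f(d,x)<x$ on the relevant range. These carve $[(f(d))^{k^2}(\eps),\eps)$ into the $k^2$ pairwise-disjoint half-open ``gap intervals'' $I_\ell:=[(f(d))^\ell(\eps),(f(d))^{\ell-1}(\eps))$, $\ell\in[k^2]$. Since there are only $\binom{k}{2}<k^2$ unordered pairs of components and the $I_\ell$ are disjoint, each pairwise distance $d_{TV}(\calN(\mu_i,\Sigma_i),\calN(\mu_j,\Sigma_j))$ belongs to at most one $I_\ell$, so at most $\binom{k}{2}<k^2$ of the intervals contain some pairwise distance. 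By pigeonhole some $\ell\in[k^2]$ is avoided by all of them, and for that $\ell$ every pair satisfies either $d_{TV}(\calN(\mu_i,\Sigma_i),\calN(\mu_j,\Sigma_j))<(f(d))^\ell(\eps)$ or $d_{TV}(\calN(\mu_i,\Sigma_i),\calN(\mu_j,\Sigma_j))\ge(f(d))^{\ell-1}(\eps)$, which is the first claim.

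Next, for the equivalence-class statement, I would fix this $\ell$ and define $i\sim j$ iff $d_{TV}(\calN(\mu_i,\Sigma_i),\calN(\mu_j,\Sigma_j))\le(f(d))^\ell(\eps)$. Reflexivity and symmetry are immediate because $d_{TV}$ is a metric. For transitivity, if $i\sim j$ and $j\sim m$ then the triangle inequality gives $d_{TV}(\calN(\mu_i,\Sigma_i),\calN(\mu_m,\Sigma_m))\le 2(f(d))^\ell(\eps)$; using that $f(d)$ contracts by a factor more than $2$ on the relevant range (i.e. $f(d,x)<x/2$ — a property of the $f$ we invoke, which we may in any case enforce by passing to a smaller $f$), this is $2f(d,(f(d))^{\ell-1}(\eps))<(f(d))^{\ell-1}(\eps)$, so $d_{TV}(\calN(\mu_i,\Sigma_i),\calN(\mu_m,\Sigma_m))<(f(d))^{\ell-1}(\eps)$, and the dichotomy from the previous step then forces $d_{TV}(\calN(\mu_i,\Sigma_i),\calN(\mu_m,\Sigma_m))<(f(d))^\ell(\eps)$, i.e. $i\sim m$. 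Hence $\sim$ is an equivalence relation on $[k]$ and its classes are precisely the groups of components lying within pairwise TV distance $(f(d))^\ell(\eps)$.

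The combinatorics here is routine; the only real point of care is the quantitative behaviour of $f(d,\cdot)$. I need it strictly contractive — by more than a factor $2$ for the pairwise-transitivity argument above, or by more than a factor $k$ if one instead runs transitivity directly along chains of length up to $k-1$ — and I need it to remain positive through $k^2$ iterations so that the thresholds are genuinely nested and the $I_\ell$ are nonvacuous. Both hold comfortably for the $f$ coming from the Moitra--Valiant indistinguishability/sample-complexity bound (where $f(d,\eps)$ is polynomially smaller than $\eps$), and since shrinking $f$ only strengthens the conclusion, this is the one place to double-check once the definition of $f$ is pinned down.
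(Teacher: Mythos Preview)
Your proposal is correct and matches the paper's proof essentially line for line: both use the pigeonhole argument over the $k^2$ disjoint gap intervals to find an empty one, and both prove transitivity of the relation by combining the triangle inequality with the contraction of $f(d,\cdot)$ to land below $(f(d))^{\ell-1}(\eps)$ and then invoking the dichotomy. The only difference is cosmetic: the paper writes $2(f(d))^\ell(\eps)\ll (f(d))^{\ell-1}(\eps)$ without comment, whereas you make the needed contraction $f(d,x)<x/2$ explicit and note it holds for the Moitra--Valiant $f$.
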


\begin{proof}
We can see that intervals $\left\{\left[(f(d))^\ell(\eps),(f(d))^{\ell-1}(\eps)\right)\right\}_{\ell\in[k^2]}$ are disjoint. There are at most $k^2-1$ distinct values of $d_{TV}(\calN(\mu_i, \Sigma_i) , \calN(\mu_j, \Sigma_j))$. So there exists an interval $\left[(f(d))^\ell(\eps),(f(d))^{\ell-1}(\eps)\right)$ such that for every pair of components $\calN(\mu_i, \Sigma_i) , \calN(\mu_j, \Sigma_j)$, either $d_{TV}(\calN(\mu_i, \Sigma_i) , \calN(\mu_j, \Sigma_j))<(f(d))^\ell(\eps)$ or $d_{TV}(\calN(\mu_i, \Sigma_i) , \calN(\mu_j, \Sigma_j))\ge (f(d))^{\ell-1}(\eps)$.

Next, we show for any $\ell$, Gaussians with pair wise TV distance $(f(d)^\ell)(\epsilon)$ form an equivalence class. Consider component Gaussians $G_1,G_2$ and $G_3$ such that $G_1$ and $G_2$ are at total variation distance at most $(f(d))^\ell(\eps)$ and $G_2$ and $G_3$ are also at total variation distance at most $(f(d))^\ell(\eps)$.
\begin{equation*}
\begin{split}
d_{TV}(G_1,G_3)& \le d_{TV}(G_1,G_2)+d_{TV}(G_2,G_3)\\
& \le2(f(d))^\ell(\eps)\\
& \ll (f(d))^{\ell-1}(\eps)
\end{split}
\end{equation*}
and since there is no pair of Gaussians with total variation distance inside the interval $\left[(f(d))^\ell(\eps),(f(d))^{\ell-1}(\eps)\right)$, this implies $d_{TV}(G_1,G_3)\le (f(d))^\ell(\eps)$.
\end{proof}

We can now complete the proof of Proposition~\ref{prop:mv} :

\begin{proof}[Proof of Proposition~\ref{prop:mv}]
By Lemma~\ref{lemma:tvd-gap}, there exists an interval $\left[(f(d))^\ell(\eps),(f(d))^{\ell-1}(\eps)\right)$ such that there is no pair of Gaussians with total variation distance inside the interval and $(f(d))^\ell(\eps),(f(d))^{\ell-1}(\eps)$ are polynomials in $d$ and $\eps$. Let $\eps_1=(f(d))^{\ell-1}(\eps)$ and $f(d,\eps_1)=(f(d))^\ell(\eps)$. Let $X$ be a set of $n=(d/\epsilon)^c$ samples from $\calM$, where $c$ is fixed universal constant. Let $\bar{\calM}$ be the mixture obtained by merging all components in an equivalence class with total variation distance at most $f(d,\eps_1)$ to a single Gaussian and observe $d_{TV}\Paren{\calM, \bar{\calM} }\leq k f(d,\eps_1)$. Next, observe that PARTITION PURSUIT outputs at most $k$ means and covariances, which can be represented as a list of at most $2 k d^2$ real numbers. Further, since $\Sigma \preceq 1.01 I$ and $w_i \geq \eps$, the means of each component $\|\mu_i\|_2^2 \leq 2/\eps$ and $\|\Sigma_i \|_F^2 \leq O(d^2 /\eps)$.  

Then, rescaling the instance by $O(\eps/d^2)$ and applying Lemma~\ref{lemma:merge} with $\calD_1 = \calM$, $\calD_2 =\bar{\calM}$, input samples $X$ and accuracy parameter $\tau = (\eps/d)^{c_2}$, for a large enough constant $c_2$, it follows that with probability at least
$1 - 0.99 - O( f(d,\eps_1) \cdot(\eps/d)^{c_3})$, for a fixed constant $c_3$, the resulting list of numbers is $\tau$-close to that obtained by running PARTITION PURSUIT on a set of $n$ samples from $\bar{\calM}$. Since $\bar{\calM}$ is $\eps_1$-statistically learnable, it follows from Theorem~\ref{thm:partition-pursuit} that with probability at least $9/10$, PARTITION PURSUIT will output an $O(\eps_1)$-correct sub-division $\hat\calM$.

% \Anote{the hypothesis of the lemma needs to include that the parameters are bounded, $\tau=$ }
% Then, given $(d/\epsilon)^c$ samples, if we run PARTITION PURSUIT with accuracy parameter $\eps=\eps_1$, it follows from Lemma~\ref{lemma:merge} that with probability at least $99/100$,  the algorithm will output parameters that are $\tau$-close to those output if the original sample was drawn from a mixture where all equivalence classes with total variation distance at most $f(d,\eps_1)$ are replaced by single Gaussians. Observe, after mergeing, the Gaussian mixture is $\eps_1$-statistically learnable. By Theorem~\ref{thm:partition-pursuit},  with probability at least $99/100$ the algorithm will return an $O(\eps_1)$-correct sub-division $\hat\calM$.
\end{proof}

\begin{proof}[Proof of Theorem~\ref{thm:polynomial-size-list-recovery-by-tensor-decomposition}]
% \Anote{the following does not work as well, the dimension of the subspace is $m$ (defined above), not a function of $\epsilon$. }
% \textcolor{red}{Suppose $Q(d,\eps)$ is the sample size in Proposition~\ref{prop:mv}. We set $\eps_1$ and $\eps_2$ so that $Q(1/\eps_1,\eps_2)=O(1/\eps)$ and $\eps_1,\eps_2\ge\eps$ are polynomials in $\eps$. Then with probability $(1-\eps)^{Q(1/\eps_1,\eps_2)}=(1-\eps)^{O(1/\eps)}=O(1)$, there is no noise in an $\eps$-corrupted sample of size $Q(\eps_1,\eps_2)$.
% We run Steps 1, 2, 3 in Algorithm \ref{algo:efficient-list-recovery-tensor-decomposition} with $\eps=\eps_1$ and get a list $L'$ of $\hat S$ and $V'_{\hat S}$. 
% By Proposition \ref{prop:construction-of-low-dim-subspace-for-enumeration}, 
% \begin{align*}
% \dim{V'_{\hat S}}
% &\le\frac{O\left(k(1+1/\alpha+\Delta)^{4k+5}\right)}{\eta^2}\\
% &=\frac{O\left(k(1+1/\alpha+\Delta)^{4k+5}\right)}{(2k)^{8k}O(1/\alpha+\Delta)^{8k}\eps_1}\\
% &\le 1/\eps_1
% \end{align*}
% where $1/\alpha,\Delta>1$.
% \Anote{I don't see why this is true} 
% So we can run the algorithm in Proposition~\ref{prop:mv} on the subspace $V'_{\hat S}$ with $d=1/\eps_1$ and $\eps=\eps_2$ with $Q(1/\eps_1,\eps_2)$ samples. Conditisoned on the event that there is no noise in the sample, by Proposition~\ref{prop:mv}, the algorithm outputs a Gaussian mixture $\hat\calM_{V'_{\hat S}}$ that is $\poly(\eps_2)$-close to the projection of $\calM$ on $V'_{\hat S}$. Suppose $\hat\mu_i'$ and $\hat\Sigma_i'$ are the parameters of $\hat\calM_{V'_{\hat S}}$.}

% \Anote{fixing the above:}

Recall, by part (1) of Proposition \ref{prop:construction-of-low-dim-subspace-for-enumeration}, the dimension of the subspace $V'_{\hat S}$ is $m = \dim{V'_{\hat S}}= O\Paren{ \frac{\Paren{ k (1 + \Delta + 1/\alpha) }^{4k +5 }}{\eta^2}}$ and let $\eps_1 = \sqrt{\Delta} \delta^{1/4} /\alpha$. Let $c_{mv}$ be a fixed constant such that $\Paren{ m/\eps_1}^{c_{mv}}$ samples suffice for applying Theorem \ref{thm:partition-pursuit}. Further, obseve in the fresh sample $Y$, the probability that any given sample is corrupted is $\epsilon$. Let $\zeta$ be the event that a random subset of $\Paren{ m/\eps_1}^{c_{mv}}$ samples from $Z$ does not contain any corrupted points. Then, the event $\zeta$ holds with probability at least $\Paren{ 1-\epsilon}^{\Paren{ m/\eps_1}^{c_{mv}}}$. Conditioning on $\zeta$ and running step 4 of Algorithm \ref{algo:efficient-list-recovery-tensor-decomposition}, it follows from Proposition~\ref{prop:mv} that we recover $O(\eps_1)$-accurate estimates to the parameters of $\calM$ in the subspace, i.e. $ \norm{ U^\top \mu_i - \hat \mu_i }_2 \leq O(\eps_1)$ and $\norm{ U^\top \Sigma_i U - \hat \Sigma_i  }_F \leq O(\eps_1)$. Since we repeat the above for $\ell'$ candidate subspaces in $L'$, the probability over all probability of success is $\Paren{ 1-\epsilon}^{\Paren{ m/\eps_1}^{c_{mv}  }\cdot \ell'} $. 
%\Anote{why can we afford this error probability? }

By part (2) in Proposition  \ref{prop:construction-of-low-dim-subspace-for-enumeration}, there is a vector $\mu_i'\in V'_{\hat S}$ such that $\norm{\mu_i-\mu_i'} \le \frac{20}{\alpha}\delta^{1/4}\Delta^{1/2}$ where $\delta=2\eta^{1/(C^{k+1}(k+1)!)}$ and $\eta=O(4k(1+1/\alpha+\Delta)^{4k}\sqrt{\eps_1})$. Let $\hat P = U U^\top$ be a projection matrix where the columns of $Q$ span $V'_{\hat S}$ and let $Q^\top \mu_i$ be the projection of the true means to the corresponding subspace. Then, 
\begin{equation*}
\begin{split}
\norm{ U \hat \mu_i  -\mu_i}_2 & \leq  \norm{ U \hat \mu_i  - \mu_i' }_2 +  \norm{  \mu_i' - \mu_i }_2\\
& \leq  \norm{ U \hat \mu_i  - P(\mu_i'-\mu_i + \mu_i) }_2 + O\Paren{\frac{\sqrt{\Delta}\delta^{1/4}}{\alpha} }\\
& \leq  \norm{ \hat \mu_i  -  U^\top\mu_i }_2 + O\Paren{\frac{\sqrt{\Delta}\delta^{1/4}}{\alpha} } \\
& \leq  O\Paren{\frac{\sqrt{\Delta}\delta^{1/4}}{\alpha} }.
\end{split} 
\end{equation*} 
where the third inequality follows from observing that $U^\top\mu_i$ is the true mean in the low dimensional subspace and applying Proposition \ref{prop:mv}.

By Proposition \ref{prop:tensor_decomposition_upto_lowrank}, there exists $\hat S\in L'$ such that $\hat S-(\Sigma_i -I)=P_i+Q_i$ where 
$\norm{P_i}_F=O(\sqrt{\eta/\alpha})$. 
Again by part (3) in Proposition  \ref{prop:construction-of-low-dim-subspace-for-enumeration}, there exists a symmetric matrix $Q_i'\in V_{\hat S}'\times V_{\hat S}'$ such that $\norm{Q_i-Q_i'}_F\le O(\frac{k^2}{\alpha}\delta^{1/4}\Delta^{1/2})$. We also know that in the subspace spanned by $V_{\hat S}'$, $\norm{ \hat{\Sigma}_i - U^\top \Sigma_i U  }_F^2 \leq \poly(\eps_2)$. Recall, Algorithm \ref{algo:efficient-list-recovery-tensor-decomposition} outputs the following estimate: $\hat M = U \hat \Sigma U^\top + \Paren{I +\hat S} - \hat P\Paren{I+\hat S} \hat P$.  Observe, for any matrix $M$ and projection matrix $P$, $M = PMP + (I-P)M(I-P) + PM(I-P)+ (I-P)MP$. Then,

\begin{equation}
\label{eqn:bounding_estimated_error_cov}
\begin{split}
\norm{\Sigma_i- \hat M  }_F 
& \leq   \underbrace{ \norm{ \hat P \Paren{ \Sigma_i- \hat M} \hat P }_F}_{(1)} + \underbrace{ \norm{ \Paren{I- \hat P} \Paren{ \Sigma_i-  \hat M } \Paren{I- \hat P}   }_F }_{(2)} \\
& \hspace{0.2in} + \underbrace{ \norm{\hat P\Paren{\Sigma_i- \hat M}\Paren{I- \hat P} }_F}_{(3)}  + \underbrace{ \norm{\Paren{I -\hat P} \Paren{\Sigma_i- \hat M}\hat P  }_F}_{(4)}\\
\end{split}
\end{equation}
We bound each of the terms above. Since $\hat P \Paren{ I+ S - P(I+S)P } \hat P = 0$, we can bound term (1) as follows
\begin{equation}
\label{eqn:term1_cov}
\norm{ \hat P \Paren{ \Sigma_i- \hat M} \hat P }_F =\norm{ \hat P\Sigma_i \hat P -    \hat P U \hat\Sigma_i U^\top  \hat P  }_F = \norm{ U^\top \Sigma_i U  - \hat\Sigma_i } \leq \ O\Paren{\frac{\sqrt{\Delta}\delta^{1/4}}{\alpha} }
\end{equation}
Similarly, since $\Paren{I- \hat P} \Paren{ U \hat \Sigma_i  U^\top} \Paren{I- \hat P}  = 0 $ and $\Sigma_i = I + \hat S - P_i - Q_i$, we can bound term (2) as follows:
\begin{equation}
\label{eqn:term2_cov}
\begin{split}
\norm{ \Paren{I- \hat P} \Paren{ \Sigma_i-  \hat M } \Paren{I- \hat P}   }_F & =\norm{ \Paren{I- \hat P} \Paren{ \Sigma_i-  \Paren{I + \hat S } } \Paren{I- \hat P}   }_F \\
& \leq \norm{ \Paren{I- \hat P} \Paren{ \Sigma_i-  \Paren{I + \hat S - Q_i } } \Paren{I- \hat P}   }_F + \norm{ \Paren{I- \hat P} Q_i \Paren{I- \hat P}   }_F \\
& \leq \norm{ P_i }_F^2 +  \norm{ \Paren{I- \hat P} \Paren{ Q_i - Q_i' } \Paren{I- \hat P}   }_F +  \norm{ \Paren{I- \hat P} Q_i' \Paren{I- \hat P}   }_F\\
& \leq O\Paren{ \sqrt{\frac{\eta}{\alpha} }  + \frac{k^2 \delta^{1/4} \Delta^{1/2}}{\alpha}} 
\end{split}
\end{equation}
Next, we bound term (3). Observe, $ \hat P \Paren{ U \hat \Sigma_i  U^\top} \Paren{I- \hat P}  = 0 $ and $\hat P (I  + S )\hat P (I - \hat P) =0$. Thus, 
\begin{equation}
\label{eqn:term3_cov}
\begin{split}
\norm{\hat P   \Paren{\Sigma_i- \hat M} \Paren{I -\hat P} }_F & = \norm{ \hat P\Paren{ \Sigma_i - \Paren{I + \hat S} } \Paren{I -\hat P} }_F \\
& = \norm{ \hat P\Paren{ P_i + Q_i } \Paren{I -\hat P} }_F\\
& \leq \norm{ \hat P P_i  \Paren{I -\hat P} }_F \\
& \leq O\Paren{ \sqrt{\frac{\eta}{\alpha}}}
\end{split}
\end{equation}
Obseve, term (4) follows from a similar argument. Combining equations \eqref{eqn:term1_cov}, \eqref{eqn:term2_cov},\eqref{eqn:term3_cov} and substituting back into \eqref{eqn:bounding_estimated_error_cov} we can conclude 
\begin{equation*}
\norm{\Sigma_i - \hat M }_F \leq O\Paren{ \sqrt{\frac{\eta}{\alpha} }  + \frac{k^2 \delta^{1/4} \Delta^{1/2}}{\alpha}} 
\end{equation*}

% \Anote{the argument from here on does not make sense to me. So I repeated how to bound the mean and covariance above. We should not need the following text in red, whoever makes the next pass can remove it.}

% \textcolor{red}{ So $\norm{\hat\mu_i'-\mu_i}\le\norm{\hat\mu_i'-\mu_i'}+\norm{\mu_i-\mu_i'}\le\poly(\eps_2)+\frac{5}{\alpha}\delta^{1/4}\Delta^{1/2}=\poly_k(\eps)$ if $\frac{1}{\alpha}\delta^{1/4}\Delta^{1/2}=O(\eps^c)$ for some constant $c>0$\Anote{why is this true?}. For each $\hat S$ and $\hat\Sigma_j'$, we add $\hat\Sigma=\proj_{{V'}^{\bot}\times {V'}^{\bot}}(\hat S+I)+\hat\Sigma_j'$ into $L$ \Anote{what does this operation mean?}. Suppose $\hat S_i\in L'$ is the correct guess that $\hat S_i-(\Sigma_i -I)=P_i+Q_i$ and $\hat\Sigma_i'$ is the estimate of $\Sigma_i$. Let $\hat\Sigma_i=\proj_{{V'_{\hat S}}^{\bot}\times {V'_{\hat S}}^{\bot}}(\hat S_i+I)+\hat\Sigma_i'$ be a guess in $L$. Then, we have}

% \begin{align*}
% \norm{\Sigma_i-\hat\Sigma_i}_F
% &\le\norm{\proj_{{V'}^{\bot}\times {V'_{\hat S}}^{\bot}}(\hat S+I-\Sigma_i)}_F+\norm{\proj_{{V'_{\hat S}}\times {V'_{\hat S}}}(\Sigma_i)-\hat\Sigma'_i}_F\\
% &\le O(\sqrt{\eta/\alpha})+O\left(\frac{k^2}{\alpha}\delta^{1/4}\Delta^{1/2}\right)+\poly(\eps_2)\\
% &=\poly_k(\eps)
% \end{align*}
% \textcolor{red}{with the assumption $\frac{1}{\alpha}\delta^{1/4}\Delta^{1/2}=O(\eps^c)$ for some constant $c>0$.}

The size of $L'$ is $\ell' = O\Paren{\log k\Paren{ \eta /\Paren{ k^5\Paren{\Delta^4 + 1/\alpha^4 }} }^{-4k}}$ and since we add a single tuple of k means and covariances for each subspace in $L'$, the list $L$ has the same size. The running time is $\poly\Paren{|Y|, |L|, d^k, m, 1/\eps_1}$ concluding the proof. 
\end{proof}

\subsection{Proof of Theorem~\ref{thm:robust-GMM-arbitrary-poly-eps}}
Since we have all the main ingredients: the tensor decomposition algorithm recovering a polynomial size of list (Theorem~\ref{thm:polynomial-size-list-recovery-by-tensor-decomposition}), the upgraded partial clustering algorithm with high probability of success (Theorem~\ref{thm:partial-clustering-poly-upg}) and the spectral separation algorithm of thin components (Lemma~\ref{cor:recursion}), we can now complete the proof of Theorem~\ref{thm:robust-GMM-arbitrary-poly-eps}.

The algorithm establishing Theorem~\ref{thm:robust-GMM-arbitrary-poly-eps} is almost the same as Algorithm~\ref{algo:robust-GMM-arbitrary}. The only difference is we will replace Algorithm~\ref{algo:robust-partial-clustering} by Algorithm~\ref{algo:robust-partial-clustering-upg} and replace Algorithm~\ref{algo:list-recovery-tensor-decomposition}  by Algorithm~\ref{algo:efficient-list-recovery-tensor-decomposition}. The following two lemmas show that by modifying the parameters slightly and applying the upgraded partial clustering and tensor decomposition algorithms, we can have the same conclusions as in Lemma~\ref{lem:covariance_sep} and Lemma~\ref{lem:list_decoding} with a polynomial success probability. Then the proof of Theorem~\ref{thm:robust-GMM-arbitrary-poly-eps} is exactly the same as the proof of Theorem~\ref{thm:robust-GMM-arbitrary} in Section~\ref{sec:proof-main-thm} except for the use of Lemma~\ref{lem:covariance_sep-poly-eps} and Lemma~\ref{lem:list_decoding-poly-eps} instead of Lemma~\ref{lem:covariance_sep} and Lemma~\ref{lem:list_decoding}.

\begin{lemma}[Non-negligible Weight and Covariance Separation] \label{lem:covariance_sep-poly-eps}
Given $0<\epsilon<  1/k^{k^{O(k^2)}}$ and $k \in \mathbb{N}$, 
let $\alpha=\epsilon^{1/(45C^{k+1}(k+1)!)}$. 

Let $\calM=\sum_{i=1}^k w_i G_i$ with $G_i = \cN(\mu_i,\Sigma_i)$ be a $k$-mixture of Gaussians 
with mixture covariance $\Sigma$ such that $w_i \geq \alpha$ for all $i \in [k]$
%$\Norm{\Sigma- I}_F \leq \bigO{\sqrt{\epsilon}k/\alpha}$, $w_i \geq \alpha$ for all $i \in [k]$, 
 and there exist $i, j\in[k]$ such that  
$\Norm{ \Sigma^{\dagger/2}\Paren{\Sigma_i - \Sigma_j}\Sigma^{\dagger/2} }_F^2 > 1/\alpha^5$. Further, let $X$ be a set of points satisfying Condition \ref{cond:convergence-of-moment-tensors} with respect to $\calM$ 
for some parameters $\gamma\leq \eps d^{-8k}k^{-Ck}$, for a sufficiently large constant $C$, and $t\geq 8k$. 
Let $Y$ be an $\epsilon$-corrupted version of $X$ of size $n\ge n_0= \Paren{ dk }^{\Omega(1)}/\epsilon$, 
Algorithm~\ref{algo:robust-partial-clustering-upg} partitions $Y$ into $Y_1$, $Y_2$ in time $n^{O(1)}$ such that 
with probability at least $2^{-O(k)}(1-O(\alpha))$
there is a non-trivial partition of $[k]$ into $Q_1\cup Q_2$ so that letting $\calM_j$ 
be a distribution proportional to $\sum_{i\in Q_j}w_iG_i$ and $W_j=\sum_{i\in Q_j}w_i$, then
$Y_j$ is an $\bigO{\epsilon^{1/(45C^{k+1}(k+1)!)}}$-corrupted version of $\bigcup_{i\in Q_j} X_i$ 
satisfying Condition~\ref{cond:convergence-of-moment-tensors} with respect to $\calM$ 
with parameters $(\bigO{k\gamma/W_j},t)$.

\end{lemma}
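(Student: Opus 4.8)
The plan is to reuse the proof of Lemma~\ref{lem:covariance_sep} essentially verbatim, with one structural substitution: wherever that argument calls its robust partial clustering routine, I would instead call Algorithm~\ref{algo:robust-partial-clustering-upg} and apply its guarantee, Theorem~\ref{thm:partial-clustering-poly-upg}, which returns a good bipartition of the corrupted sample $Y$ with probability at least $2^{-O(k)}(1-O(\alpha))$. The first step is to verify that the hypotheses here place us in the input regime of Algorithm~\ref{algo:robust-partial-clustering-upg}. Whitening $\calM$ by its mixture covariance $\Sigma$, the pair $i,j$ with $\Norm{\Sigma^{\dagger/2}\Paren{\Sigma_i-\Sigma_j}\Sigma^{\dagger/2}}_F^2 > 1/\alpha^5$, together with $w_i, w_j \ge \alpha$ and the standard inequality relating the whitened covariance gap to the discrepancy between the fourth moment tensor of the whitened $\calM$ and that of a single Gaussian, forces this discrepancy to be at least $\poly(1/\alpha) \gg 1$. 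In whitened coordinates every component obeys $\Norm{\Sigma^{w}_\ell}_{\mathrm{op}} \le 1/w_\ell \le 1/\alpha$ and $\Norm{\mu^{w}_\ell}_2^2 \le 1/\alpha$, so its degree-$\le 8k$ moment tensors have entries bounded by $\poly_k(1/\alpha)$; hence, by Condition~\ref{cond:convergence-of-moment-tensors} with $\gamma \le \epsilon d^{-8k}k^{-Ck}$ and $t = 8k$ and by robust mean estimation (Fact~\ref{fact:robust-mean-estimation}) applied to the $\epsilon$-corrupted $Y$, these tensors are recovered to within $\poly_k(1/\alpha)\cdot\sqrt{\epsilon}$, far below the $\poly(1/\alpha)$-sized separation signal (using $\epsilon \ll \alpha$, since $\alpha = \epsilon^{1/(45 C^{k+1}(k+1)!)}$). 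This is exactly the input Theorem~\ref{thm:partial-clustering-poly-upg} expects.

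Given this, I would read off the conclusion of Theorem~\ref{thm:partial-clustering-poly-upg}: with probability at least $2^{-O(k)}(1-O(\alpha))$ the algorithm returns a partition $Y = Y_1 \sqcup Y_2$ and a nontrivial $[k] = Q_1 \sqcup Q_2$ such that, for each $j$, all but an $O(\alpha)$-fraction of $Y_j$ consists of original uncorrupted samples from components in $Q_j$. The $2^{-O(k)}$ factor is the rounding-success probability intrinsic to the upgraded clustering; the $(1-O(\alpha))$ factor is the probability that neither the $\epsilon n$ corruptions nor the clustering's misclassifications exceed an $O(\alpha)$-fraction of the smallest genuine cluster (whose weight is at least $\alpha$), noting that $\epsilon/\alpha \ll \alpha$ so the corruptions alone contribute a negligible fraction. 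This $O(\alpha)$-fraction statement is precisely the assertion that $Y_j$ is an $\bigO{\epsilon^{1/(45 C^{k+1}(k+1)!)}}$-corruption of $\bigcup_{i\in Q_j} X_i$.

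The last post-condition — that $\bigcup_{i\in Q_j} X_i$ still satisfies Condition~\ref{cond:convergence-of-moment-tensors} with respect to $\calM$, now with parameters $(\bigO{k\gamma/W_j}, t)$ — is propagated exactly as in the proof of Lemma~\ref{lem:covariance_sep}: since $n \ge n_0 = \Paren{dk}^{\Omega(1)}/\epsilon$ and $w_i \ge \alpha \gg \epsilon$, the uncorrupted sub-sample $X_i \subseteq X$ drawn from $G_i$ has $|X_i| = (1\pm o(1)) w_i n$; restricting the condition to each $X_i$ and recombining over $i \in Q_j$ with weights $w_i/W_j$ rescales the error by the cluster weight $1/W_j$, while a further factor of $k$ absorbs the union bound over the at most $k$ components of $Q_j$. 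The running time is $n^{O(1)}$ since Algorithm~\ref{algo:robust-partial-clustering-upg} runs in time $n^{O(1)}$.

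I expect the only delicate point to be the first step: faithfully lining up the separation threshold $1/\alpha^5$ and the moment-accuracy parameters $\gamma = \epsilon d^{-8k}k^{-Ck}$, $t = 8k$ with the precise separation and accuracy thresholds that Theorem~\ref{thm:partial-clustering-poly-upg} demands, so that the $\poly(1/\alpha)$ separation signal provably survives both the finite-sample error and the $\sqrt{\epsilon}$-scale robust-estimation error, all while carrying the very small exponent $1/(45 C^{k+1}(k+1)!)$ that defines $\alpha$ through the calculation. Everything downstream is bookkeeping identical to the argument for Lemma~\ref{lem:covariance_sep}, with only the success probability changed to $2^{-O(k)}(1-O(\alpha))$.
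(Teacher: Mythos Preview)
Your proposal is correct and follows essentially the same approach as the paper: invoke Theorem~\ref{thm:partial-clustering-poly-upg} after checking its separation precondition, read off the bipartition with success probability $2^{-O(k)}(1-O(\alpha))$, and propagate Condition~\ref{cond:convergence-of-moment-tensors} to each sub-sample. The paper is simply more concise: it fixes concrete parameters $t=10$, $\beta=(k^2t^4\alpha)^{t/2}$, $\eta=\alpha^2$ and directly verifies $1/\alpha^5 = \Omega(k^2 t^4/(\beta^{2/t}\alpha^4))$ to match the theorem's threshold (the ``delicate point'' you flagged), and for the last post-condition it cites a standing submixture-condition lemma rather than re-deriving it.
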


\begin{proof}
We run Algorithm \ref{algo:robust-partial-clustering-upg} with sample set $Y$, number of components $k$, 
the fraction of outliers $\epsilon$ and the accuracy parameter $\eta$. 
Since $X$ satisfies Condition~\ref{cond:convergence-of-moment-tensors}, 
we can set $t=10$, $\beta=(k^2t^4\alpha)^{t/2}=O_k(\alpha^5)$ and $\eta=\alpha^2 \gg \sqrt{\eps/\alpha}$ in Theorem \ref{thm:partial-clustering-poly-upg}.
Then, by assumption, there exist $i,j$ such that 
$$
\Norm{\Sigma^{\dagger/2}\Paren{ \Sigma_i - \Sigma_j }\Sigma^{\dagger/2} }_F^2 > \frac{1}{\alpha^5}=\Omega\left(\frac{k^2t^4}{\beta^{2/t}\alpha^4}\right) \;.
$$
We observe that we also satisfy the other preconditions for Theorem \ref{thm:partial-clustering-poly-upg}, 
since $n \geq \Paren{ dk/ }^{\Omega(1)}/\epsilon$. 
% and 
% the covariance of the uncorrupted points satisfies  
% $\Paren{ 1 - \frac{\sqrt{\epsilon}k }{\alpha} }I \preceq \Sigma \preceq \Paren{ 1 + \frac{\sqrt{\epsilon}k }{\alpha} }I$. 

Then, Theorem \ref{thm:partial-clustering-poly-upg} implies that with probability at least $2^{-O(k)}(1-O(\eta/\alpha-\sqrt{\eta}))=2^{-O(k)}(1-O(\alpha))$, 
the set $Y$ is partitioned in two sets $Y_1$ and $Y_2$ such that 
there is a non-trivial partition of $[k]$ into $Q_1\cup Q_2$ so that letting $\calM_j$ be a distribution proportional 
to $\sum_{i\in Q_j}w_iG_i$ and $W_j =\sum_{i\in Q_j}w_i$, then
$Y_j$ is an $\bigO{\epsilon^{1/(45C^{k+1}(k+1)!)}}$-corrupted version of $\bigcup_{i\in Q_j} X_i$. 
By Lemma \ref{submixture condition lemma}, $\bigcup_{i\in Q_j} X_i$ 
satisfies Condition~\ref{cond:convergence-of-moment-tensors} with respect to $\calM$ with parameters $(\bigO{k\gamma/W_j},t)$.

\end{proof}

When the mixture is not covariance separated and nearly isotropic, we can obtain a small list of hypotheses such that one of them is 
close to the true parameters, via tensor decomposition.

\begin{lemma}[Mixture is List-decodable] \label{lem:list_decoding-poly-eps}
Given $0<\epsilon<  1/k^{k^{O(k^2)}}$ let $\alpha=\epsilon^{1/(45C^{k+1}(k+1)!)}$. 
Let $\calM=\sum_{i=1}^k w_i G_i$ with $G_i = \cN(\mu_i,\Sigma_i)$ 
be a $k$-mixture of Gaussians with mixture mean $\mu$ and mixture covariance $\Sigma$,
such that $\Norm{\mu}_2 \leq \bigO{\sqrt{\epsilon/\alpha}}$,
$\norm{\Sigma-I}_F\le\bigO{\sqrt{\eps}/\alpha}$,
$w_i \geq \alpha$ for all $i \in [k]$, 
and $ \Norm{\Sigma_i - \Sigma_j }_F^2 \leq 1/\alpha^5$ for any pair of components,
and let $X$ be a set of points satisfying Condition \ref{cond:convergence-of-moment-tensors} 
with respect to $\calM$ for some parameters $\gamma= \eps d^{-8k}k^{-Ck}$, for a sufficiently large constant $C$, 
and $t= 8k$. Let $Y$ be an $\epsilon$-corrupted version of $X$ of size $n$,
Algorithm~\ref{algo:efficient-list-recovery-tensor-decomposition} 
outputs a list $L$ of hypotheses of size $O((1/\eps)^{4k^2})$ in time $\poly(|L|,n)$ 
such that if we choose a hypothesis 
$\{\hat{\mu}_i, \hat{\Sigma}_i \}_{i \in [k]}$ uniformly at random,
$\Norm{\mu_i - \hat{\mu}_i}_2 \leq\bigO{\epsilon^{1/(20C^{k+1}(k+1)!)}}$ and 
$\Norm{\Sigma_i - \hat{\Sigma}_i }_F \leq \bigO{\epsilon^{1/(20C^{k+1}(k+1)!)}}$ for all $i$ with probability at least $O(\eps^{4k^2})$.

\end{lemma}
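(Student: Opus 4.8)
The plan is to obtain Lemma~\ref{lem:list_decoding-poly-eps} by instantiating Theorem~\ref{thm:polynomial-size-list-recovery-by-tensor-decomposition} (equivalently, running Algorithm~\ref{algo:efficient-list-recovery-tensor-decomposition}) with the given $\alpha=\epsilon^{1/(45C^{k+1}(k+1)!)}$ and with $\Delta$ set to a fixed power of $1/\alpha$, and then translating the resulting guarantee into the claimed form. First I would verify the three structural preconditions of that theorem. The weight bound $w_i\ge\alpha$ is given. For the mean bound, near-isotropy gives $\sum_i w_i(\mu_i-\mu)(\mu_i-\mu)^\top\preceq\Sigma$, and $\Norm{\Sigma-I}_F\le O(\sqrt{\epsilon}/\alpha)$ forces $\Sigma\preceq(1+o(1))I$, so testing against $(\mu_i-\mu)/\Norm{\mu_i-\mu}_2$ yields $\Norm{\mu_i-\mu}_2\le\sqrt{(1+o(1))/w_i}\le\sqrt{2/\alpha}$; combined with $\Norm{\mu}_2=O(\sqrt{\epsilon/\alpha})$ this gives $\Norm{\mu_i}_2\le 2/\sqrt{\alpha}$. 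For the covariance bound, writing $\bar\Sigma=\sum_i w_i\Sigma_i$, the pairwise hypothesis $\Norm{\Sigma_i-\Sigma_j}_F^2\le1/\alpha^5$ and the triangle inequality give $\Norm{\Sigma_i-\bar\Sigma}_F\le\alpha^{-5/2}$, while $\Norm{\bar\Sigma-I}_F\le\Norm{\Sigma-I}_F+\Norm{\sum_i w_i(\mu_i-\mu)(\mu_i-\mu)^\top}_F\le O(\sqrt{\epsilon}/\alpha)+\sum_i w_i\Norm{\mu_i-\mu}_2^2=O(k)$; hence $\Norm{\Sigma_i-I}_F=O(\alpha^{-5/2})$ and we may take $\Delta=O(\alpha^{-5/2})$ (any slightly larger fixed power of $1/\alpha$ also works). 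The moment-convergence precondition on $X$, with $\gamma=\epsilon d^{-8k}k^{-Ck}$ and $t=8k$, is exactly the hypothesis of the lemma, so Theorem~\ref{thm:polynomial-size-list-recovery-by-tensor-decomposition} applies.

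With these choices, Theorem~\ref{thm:polynomial-size-list-recovery-by-tensor-decomposition} outputs a list $L$ of size $\ell'=O\Paren{\Paren{k^5(\Delta^4+1/\alpha^4)}^{4k}/\eta^{4k}}$ in time $\poly(|L|,|Y|,d^k)\poly_k(1/\epsilon)$, and with probability at least $99/100$ there is a tuple $\{\hat\mu_i,\hat\Sigma_i\}_{i\in[k]}\subseteq L$ with $\Norm{\mu_i-\hat\mu_i}_2\le O(\Delta^{1/2}/\alpha)\,\eta^{G(k)}$ and $\Norm{\Sigma_i-\hat\Sigma_i}_F\le O(k^4)(\Delta^{1/2}/\alpha)\,\eta^{G(k)}$, where $G(k)=1/(C^{k+1}(k+1)!)$ and $\eta=(2k)^{4k}O(1/\alpha+\Delta)^{4k}\epsilon^{1/k^{O(k^2)}}$. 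The next step is a parameter computation: substituting $\alpha=\epsilon^{G(k)/45}$ and $\Delta=\Theta(\alpha^{-5/2})$ gives $\Delta^{1/2}/\alpha=\Theta(\alpha^{-9/4})=\Theta(\epsilon^{-G(k)/20})$, and one uses the doubly-exponential smallness $\epsilon<1/k^{k^{O(k^2)}}$ to absorb the $(2k)^{4k}$, $\mathrm{poly}(k)$, and $O(k^4)$ prefactors — each of which is $\epsilon^{-o(1)}$ under that constraint, with the implicit $O(k^2)$ in $\eta$ chosen accordingly — so that $\eta$ is a small enough power of $\epsilon$ for the theorem's bounds to collapse to $\Norm{\mu_i-\hat\mu_i}_2,\Norm{\Sigma_i-\hat\Sigma_i}_F\le O(\epsilon^{1/(20C^{k+1}(k+1)!)})$ and for $\ell'\le O((1/\epsilon)^{4k^2})$; the derived parameters $\delta=2\eta^{G(k)}$, $\epsilon_1=O(\sqrt{\Delta}\,\delta^{1/4}/\alpha)$ and $m=O((k(1+\Delta+1/\alpha))^{4k+5}/\eta^2)$ appearing inside the algorithm are then all polynomially bounded in $1/\epsilon$.

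Finally, to obtain the stated success probability for a uniformly random hypothesis: conditioned on the probability-$\ge 99/100$ event that $L$ contains a good tuple, picking one of the $|L|=O((1/\epsilon)^{4k^2})$ tuples uniformly at random hits a good one with probability at least $\Omega(1/|L|)=\Omega(\epsilon^{4k^2})$, which is exactly the conclusion; the running time $\poly(|L|,n)$ is inherited from the theorem.

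The step I expect to be the main obstacle is precisely this parameter bookkeeping — choosing the implicit $O(k^2)$ in $\eta$ compatibly with the constraint $\epsilon<1/k^{k^{O(k^2)}}$, tracking where the powers of $1/\alpha$ and $\Delta$ cancel across $\eta,\delta,\epsilon_1,m,\ell'$, and confirming that the theorem's error guarantee really does shrink to $O(\epsilon^{1/(20C^{k+1}(k+1)!)})$. A secondary point, already handled inside the proof of Theorem~\ref{thm:polynomial-size-list-recovery-by-tensor-decomposition} but worth isolating, is that the non-robust PARTITION PURSUIT step is only given a corruption-free subsample with constant probability, which is why one needs $\epsilon_1$ to be a fixed power of $\epsilon$ much larger than $\epsilon$, so that the subsample size $\poly(m/\epsilon_1)$ is $o(1/\epsilon)$ and $(1-\epsilon)^{\poly(m/\epsilon_1)}=\Omega(1)$, keeping the overall probability that $L$ contains a good tuple at $99/100$.
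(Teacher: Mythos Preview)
Your proposal is correct and follows essentially the same route as the paper: verify the hypotheses of Theorem~\ref{thm:polynomial-size-list-recovery-by-tensor-decomposition} (weight, mean, and covariance bounds), set $\Delta=\Theta(\alpha^{-5/2})$, plug in, and finish with the $1/|L|$ argument for uniform selection. The only differences are cosmetic: for the mean bound the paper uses $\sum_i w_i(\Sigma_i+\mu_i\mu_i^\top)=\Sigma+\mu\mu^\top\preceq (1+O(\sqrt{\eps}/\alpha))I$ directly to get $\mu_i\mu_i^\top\preceq(2/\alpha)I$, while you pass through $\mu_i-\mu$; and for the covariance bound the paper splits via $\Sigma+\mu\mu^\top$ rather than $\bar\Sigma$, but both routes give the same $\Delta=O(\alpha^{-5/2})$. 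Your identification of the parameter bookkeeping as the delicate step is accurate and matches what the paper does.
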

\begin{proof}
Recall we run Algorithm \ref{algo:efficient-list-recovery-tensor-decomposition} on the samples $Y$, the number of clusters $k$, the fraction of outliers $\eps$ and the minimum weight $\alpha=\epsilon^{1/(20C^{k+1}(k+1)!)}$. Next, we show that the preconditions of Theorem \ref{thm:polynomial-size-list-recovery-by-tensor-decomposition} are satisfied. First, the upper bounds on $\norm{\mu}_2$ and $\norm{\Sigma-I}_F$ imply $\sum_{i \in k} w_i \Paren{\Sigma_i  + \mu_i \mu_i^\top} = \Sigma+\mu\mu^\top \preceq (1+\bigO{\sqrt{\eps}/\alpha}) I$. Since the LHS is a conic combination of PSD matrices, it follows that  for all $i \in [k]$, $\mu_i \mu_i^\top \preceq \frac{1}{\alpha}\left(1+\bigO{\sqrt{\eps}/\alpha}\right) I$, 
and thus $\Norm{\mu_i \mu_i^\top}_F \leq \frac{2}{\alpha}$. Next, we can write:
\begin{align*}
\Norm{\Sigma_i  - I }_F 
&\le \norm{\Sigma_i-(\Sigma+\mu\mu^\top)}_F+\norm{\Sigma-I}_F+\norm{\mu\mu^\top}_F\\
&= \Norm{\Sigma_i  - \sum_{j \in [k]} w_j \Paren{\Sigma_j + \mu_j\mu_j^\top} }_F + \frac{\sqrt{\eps}k}{\alpha}+\frac{\eps}{\alpha}\\
&\leq \Norm{\sum_{j \in [k]} w_j \Paren{ \Sigma_i  -  \Sigma_j} }_F + \frac{2}{\alpha}+ \frac{\sqrt{\eps}k}{\alpha}+\frac{\eps}{\alpha} \\
&\leq  \frac{2}{\alpha^{5/2}} \;,
\end{align*}
where the first and the third inequalities follow from the triangle inequality and the upper bound on $\Norm{\mu_i \mu_i^\top}_F$, 
and the last inequality follows from the assumption that $\Norm{\Sigma_i - \Sigma_j }_F^2 \leq 1/\alpha^5$ 
for every pair of covariances $\Sigma_i,\Sigma_j$. So, we can set $\Delta=2\alpha^{-5/2}$ 
in Theorem~\ref{thm:polynomial-size-list-recovery-by-tensor-decomposition}. 
Then, given the definition of $\alpha$, we have that
$$
\eta=2k^{4k}\bigO{1+\Delta/\alpha}^{4k}\sqrt{\eps} =\bigO{\epsilon^{2/5}}
$$
and $1/\eps^2\ge \log(1/\eta)(k+1/\alpha+\Delta)^{4k+5}/\eta^2$.
Therefore, Algorithm \ref{algo:efficient-list-recovery-tensor-decomposition} outputs a list $L$ of hypotheses 
such that $\abs{L} = \exp\left(1/\eps^2\right)$, and with probability at least $0.99$, 
%\Anote{why is this probability $1-1/d$? if the deterministic conditions are satisfied, this should deterministically be true. } 
$L$ contains a hypothesis that satisfies the following: for all $i \in [k]$, 
\begin{equation}
\label{eqn:true_hypothesis-poly-eps}
\begin{split}
\Norm{ \hat{\mu}_i - \mu_i }_2 & 
=\bigO{\frac{\Delta^{1/2}}{\alpha}}\eta^{G(k)}
=\bigO{\epsilon^{-1/(20C^{k+1}(k+1)!)}\cdot\eps^{1/(10C^{k+1}(k+1)!)}}
=\bigO{\epsilon^{1/(20C^{k+1}(k+1)!)}}  \textrm{ and }\\
 \Norm{ \hat{\Sigma}_i - \Sigma_i }_F & =\bigO{k^4}\frac{\Delta^{1/2}}{\alpha}\eta^{G(k)}
=\bigO{\epsilon^{1/(20C^{k+1}(k+1)!)}} \;.
\end{split}
\end{equation}

Then if we choose a hypothesis in $L$ uniformly at random, the probability that we choose the hypothesis satisfying (\ref{eqn:true_hypothesis}) is at least $1/|L|=\exp\left(-1/\eps^2\right)$.
\end{proof}

% \end{document} 

%!TEX root = main.tex
% \renewcommand{\Anote}[1]{}
\newcommand{\He}[1]{}

\section{Robust Parameter Recovery: Proof of Theorem \ref{thm:param-inf}}
\label{sec:param-recovery}

In order to show that our algorithm recovers the individual components and the parameters, we will prove the following identifiability theorem. Without any assumption on the mixtures, it is impossible to distinguish components within $\eps$ total variation distance with $\eps$-fraction of noise. So given two mixtures of Gaussians with $\eps$ total variation distance, the theorem shows that there exist two partitions of components of the two mixtures respectively such that any two components in the matched pair is are $\poly(\eps)$-close in total variation distance.

\begin{theorem}[Identifiability]\label{thm:parameter-recovery}
Let $\calM=\sum_{i=1}^{k_1}w_iG_i,\calM'=\sum_{i=1}^{k_2}w_i'G_i'$ be two mixtures of Gaussians such that $\dtv(\calM,
\calM')\le\epsilon$. Then there exists a partition of $[k_1]$ into sets $R_0,R_1,\dots,R_\ell$ and a partition of $[k_2]$ into sets $S_0,S_1,\dots,S_\ell$ such that 
\begin{enumerate}
%\item For any $i,j$ in the same piece of the partition of $[k_1]$, $\dtv{G_i,G_j}\le\poly(\epsilon)$. For any $i,j$ in the same piece of the partition of $[k_2]$, $\dtv{G_i',G_j'}\le\poly(\epsilon)$
\item 
Let $W_i=\sum_{j\in R_i}w_j$ for $i=0,1,\dots,k_1$,
$W_i'=\sum_{j\in S_i}w_j'$ for $i=0,1,\dots,k_2$. 
%and $\calM_i=\frac{1}{W_i}\sum_{j\in R_i}w_jG_j,
%\calM_i'=\frac{1}{W_i'}\sum_{j\in S_i}w_j'G_j'$.
Then for all $i\in[\ell]$,
\begin{align*}
|W_i-W_{i}'|&\le\poly_k(\epsilon)\\
\dtv(G_j,G_{j'}')&\le\poly_k(\epsilon)\quad \forall j\in R_i,j'\in S_{i}
\end{align*}
\item
$W_0,W_0'\le\poly_k(\epsilon)$.
%The sum of weights of exceptional components in $R_0$ and $S_0$ are at most $\poly(\epsilon)$.
\end{enumerate}
\end{theorem}

\begin{corollary} \label{cor:parameter-estimation-main-technical}
There is an algorithm with the following behavior:
Given $\eps>0$ and a multiset of $n = d^{O(k)} \poly(\eps)$ samples from a distribution $F$ on $\R^d$ such that
$\dtv(F, \calM) \leq \eps$, for an unknown target $k$-GMM $\calM = \sum_{i=1}^k w_i \mathcal{N}(\mu_i, \Sigma_i)$, 
the algorithm runs in time $d^{O(k)} \poly_k(1/\eps)$ and outputs a $k'$-GMM hypothesis 
$\widehat{\calM}  = \sum_{i=1}^{k'} \widehat{w}_i \mathcal{N}(\widehat{\mu}_i, \widehat{\Sigma}_i)$ with $k'\le k$
such that with high probability
there exists a partition of $[k]$ into $k'+1$ sets $R_0,R_1,\dots,R_{k'}$ such that
\begin{enumerate}
\item 
Let $W_i=\sum_{j\in R_i}w_j$. 
Then for all $i\in[k']$,
\begin{align*}
|W_i-\hat w_{i}|&\le\poly_k(\epsilon)\\
\dtv(\mathcal{N}(\mu_j, \Sigma_j),\mathcal{N}(\widehat{\mu}_i, \widehat{\Sigma}_i))&\le\poly_k(\epsilon)\quad \forall j\in R_i
% \norm{\mu_j-\hat\mu_{\pi(i)}},\norm{\Sigma_j-\hat\Sigma_{\pi(i)}}_F&\le\poly(\epsilon)
\end{align*}
\item
The sum of weights of exceptional components in $R_0$ is at most $\poly_k(\epsilon)$.
\end{enumerate}
\end{corollary}

Parameter estimation is implied by TV distance for individual Gaussians (in relative Frobenius norm). 
The corollary follows immediately from the identifiability theorem.

\paragraph{Outline of Proof.} The first step is to deal with the components in $\calM$ and $\calM'$ with small weights. We will construct $\tilde\calM,\tilde\calM'$ by removing components with small weights. If we prove the statement on $\tilde\calM,\tilde\calM'$, we can then deduce the theorem in the general case with worse, but still polynomial dependencies on $\eps$. The second step is a partial clustering, after which the components within each cluster have TV distance bounded by $1-\mbox{poly}(\eps)$. We prove this lemma in a separate section.
After that we modify the parameters slightly so that the resulting parameters for different components are either identical or have a minimum separation. After this, we can use a lemma from \cite{LM20} that provides a 1-1 mapping between the components of two such mixtures with small TV distance such that the mapped pairs have small TV distance.

\paragraph{Distance between Gaussians.} We use the following facts for Gaussian distributions.

\begin{lemma}[Frobenius Distance to TV Distance]\label{lemma:parameter-dtv}
Suppose $N(\mu_1,\Sigma_1),N(\mu_2,\Sigma_2)$ are Gaussians with $\norm{\mu_1-\mu_2}_2\le\delta$ and $\norm{\Sigma_1-\Sigma_2}_F\le\delta$. If the eigenvalues of $\Sigma_1$ and $\Sigma_2$ are at least $\lambda>0$, then $\dtv(N(\mu_1,\Sigma_1),N(\mu_2,\Sigma_2))=O(\delta/\lambda)$.
\end{lemma}

\begin{lemma}[Lemma 5.4 in {\cite{LM20}}]
\label{lemma:eigenvalue-lower-bound}
Let $\calM$ be a mixture of $k$ Gaussians that is connected if we draw edges between all components $i,j$ in $\calM$ such that $\dtv(G_i,G_j)\le 1-\delta$. Let $\Sigma$ be the covariance matrix of $\calM$. Then for any components $\Sigma_i$ of the mixture
\begin{enumerate}
\item $\Sigma_i\succeq\poly_k(\delta)\Sigma$
%\item For any two components $i,j$, we have $\norm{\Sigma^{-1/2}(\Sigma_i-\Sigma_j)\Sigma^{-1/2}}_F\le\poly(\delta)^{-1}$
\item $\norm{\Sigma^{-1/2}(\Sigma-\Sigma_i)\Sigma^{-1/2}}_F\le\poly_k(\delta)^{-1}$.
\end{enumerate} 
\end{lemma}
The proof is identical to Lemma 5.4 in \cite{LM20}. The only difference is that in \cite{LM20} the authors assume that the minimal weight of $\calM$ is at least $\delta$ and TV distance between any pair of components is at least $\delta$ but here we do not need these two assumptions, which does not affect the proof.

\begin{fact}[Claim 3.9 in {\cite{LM20}}]
\label{fact:partial-same-1}
Let $\partial$ denote the differential operator with respect to $y$. If 
\[
f(y)=P(y,X)\exp\left(a(X)y+\frac{1}{2}b(X)y^2\right)
\]
where $P$ is a polynomial in $y$ of degree $k$ (whose coefficients are polynomials in $X$) and $a(X),b(X)$ are polynomials in $X$ then
\[
(\partial-(a(X)+yb(X)))f(y)=Q(y,X)\exp\left(a(X)y+\frac{1}{2}b(X)y^2\right)
\]
where $Q$ is a polynomial in $y$ with degree exactly $k-1$ whose leading coefficient is $k$ times the leading coefficient of $P$.
\end{fact}

\begin{fact}[Corollary 3.10 in {\cite{LM20}}]
\label{fact:partial-same-k+1}
Let $\partial$ denote the differential operator with respect to $y$. If 
\[
f(y)=P(y,X)\exp\left(a(X)y+\frac{1}{2}b(X)y^2\right)
\]
where $P$ is a polynomial in $y$ of degree $k$ then
\[
(\partial-(a(X)+yb(X)))^{k+1}f(y)=0.
\]
\end{fact}

\begin{fact}[Claim 3.11 in {\cite{LM20}}]
\label{fact:partial-different}
Let $\partial$ denote the differential operator with respect to $y$. If 
\[
f(y)=P(y,X)\exp\left(a(X)y+\frac{1}{2}b(X)y^2\right)
\]
where $P$ is a polynomial in $y$ of degree $k$. Let the leading coefficient of $P$ (viewed as a polynomial in $y$) be $L(X)$. Let $c(X)$ be a linear polynomial in $X$ and $d(X)$ be a quadratic polynomial in $X$ such that $\{a(X),b(X)\}\neq\{c(X),d(X)\}$. If $b(X)\neq d(X)$ then
\[
(\partial-(c(X)+yd(X)))^{k'}f(y)=Q(y,X)\exp\left(a(X)y+\frac{1}{2}b(X)y^2\right)
\]
where $Q$ is a polynomial of degree $k+k'$ in $y$ with leading coefficient 
\[
L(X)(b(X)-d(X))^{k'}
\]
and if $b(X)=d(X)$ then 
\[
(\partial-(c(X)+yd(X)))^{k'}f(y)=Q(y,X)\exp\left(a(X)y+\frac{1}{2}b(X)y^2\right)
\]
where $Q$ is a polynomial of degree $k$ in $y$ with leading coefficient 
\[
L(X)(a(X)-c(X))^{k'}.
\]
\end{fact}

\begin{lemma}\label{lemma:weight-gap}
Let $\calM=\sum_{i=1}^{k_1}w_iG_i,\calM'=\sum_{i=1}^{k_2}w_i'G_i'$ be two mixtures of Gaussians such that $\dtv(\calM,
\calM')\le\epsilon$.
For any constant $0<c_1<1$, there exists $i\in[k_1+k_2+1]$ such that $w_j,w_{j'}'\notin[\eps^{c_1^{i-1}},\eps^{c_1^{i}})$ for any $j\in[k_1],j'\in[k_2]$. 
Moreover, if 
\begin{align*}
\tilde \calM=\frac{\sum_{\{j:w_j\ge\eps^{c_1^{i}}\}}w_jG_j}{\sum_{\{j:w_j\ge\eps^{c_1^{i}}\}}w_j}\\
\tilde \calM'=\frac{\sum_{\{j:w_j'\ge\eps^{c_1^{i}}\}}w_j'G_j'}{\sum_{\{j:w_j'\ge\eps^{c_1^{i}}\}}w_j'}
\end{align*}
then $\dtv(\tilde \calM,\tilde\calM')\le O_k(\eps^{c_1^{i-1}})$.
\end{lemma}

\begin{proof}
We can see that $[\eps^{c_1^{i-1}},\eps^{c_1^{i}})$ with $i\in[k_1+k_2+1]$ are $k_1+k_2+1$ disjoint intervals and 
$w_j,w_{j'}'$ with $j\in[k_1],j'\in[k_2]$ have at most $k_1+k_2$ distinct values. So there is one interval containing no weights.

We then construct $\tilde \calM$ by removing the small components in $\calM$. The sum of weights removed is at most $k\eps^{c_1^{i-1}}$. So 
$\dtv(\calM,\tilde\calM)\le k\eps^{c_1^{i-1}}$. Similarly, we have $\dtv(\calM',\tilde\calM')\le k\eps^{c_1^{i-1}}$. By the triangle inequality,
$$
\dtv(\tilde \calM,\tilde\calM')\le \dtv(\calM,\calM')+\dtv(\calM,\tilde\calM)+\dtv(\calM',\tilde\calM')\le O_k(\eps^{c_1^{i-1}}).
$$
\end{proof}

Lemma~\ref{lemma:weight-gap} shows that we can remove components with tiny weights in the mixtures.
So in the following lemma, we will assume $M$ and $M'$ are Gaussian mixtures with minimal weights at least $\poly(\eps)$.
We will show that we can partition the union of components of two mixtures so that if we prove Theorem~\ref{thm:parameter-recovery} for each part of the partition, we can combine them to prove Theorem~\ref{thm:parameter-recovery} on the full mixtures.

\begin{lemma}\label{lemma:partition}
For any constant $0<c_3<1$, there exist $c_1,c_2>0$ that depend on $k$ and $c_3$, such that
if $\calM=\sum_{i=1}^{k_1}w_iG_i,\calM'=\sum_{i=1}^{k_2}w_i'G_i'$ with $k_1,k_2\le k$,
$\dtv(\calM,\calM')\le\eps$ and $w_i,w_i'\ge\eps^{c_1}$ for all $i$, then there exists a partition of $[k_1]$ into sets $R_1,\dots,R_\ell$ and a partition of $[k_2]$ into sets $S_1,\dots,S_\ell$ such that
\begin{enumerate}
\item 
For all $i\in[\ell]$, let $W_i=\sum_{j\in R_i}w_j,
W_i'=\sum_{j\in S_i}w_j'$ be the sum of weights in each piece.
Let $\calM_i=\frac{1}{W_i}\sum_{j\in R_i}w_jG_j,
\calM_i'=\frac{1}{W_i'}\sum_{j\in S_i}w_j'G_j'$ be the submixtures of Gaussians after partition.
Then for all $i\in[\ell]$,
\begin{align*}
|W_i-W_{i}'|&\le\poly_k(\epsilon)\\
\dtv(\calM_i,\calM_{i}')&\le O_k(\eps^{c_2})
\end{align*}
\item
Consider the graph with vertices corresponding to components in $\calM$ and $\calM'$ and two components are adjacent if the total variation distance between them is at most $1-\eps^{c_2c_3}$. Then the induced subgraph of vertices with indices $R_i\cup S_i$ is connected for all $i\in[\ell]$.
\end{enumerate}
\end{lemma}
The proof of Lemma~\ref{lemma:partition} is deferred to Section~\ref{sec:partial-clustering}.
In the following two lemmas, we then prove Theorem~\ref{thm:parameter-recovery} for each pair $\calM_i,\calM_{i}'$ defined in Lemma~\ref{lemma:partition}. In Lemma~\ref{lemma:merge-parameters}, we  construct two mixtures of which pairs of parameters are identical or separated. \Anote{cannot parse this sentence.} We also shows it suffices to work under this simplification.

\begin{lemma}\label{lemma:merge-parameters}
For any constant $0<c_4<1$, there exist $c_3,c_5$ that depend on $k$ and $c_4$, such that if $\calM=\sum_{i=1}^{k_1}w_iG_i,\calM'=\sum_{i=1}^{k_2}w_i'G_i'$ with $k_1,k_2\le k$ and
\begin{enumerate}
\item $\frac{1}{2}\calM+\frac{1}{2}\calM'$ is isotropic,
\item $\dtv(\calM,\calM')\le\eps$,
\item $w_i,w_i'\ge\eps^{c_3}$ for all $i$, 
\item Let $\mathcal{G}$ be a graph with components $G_i,G_i'$ in $\calM$ and $\calM'$ as vertex set and two components are adjacent if the total variation distance between them is at most $1-\eps^{c_3}$. Then $\mathcal{G}$ is connected
\end{enumerate}
then there exist two mixtures of Gaussians $\tilde\calM=\sum_{i=1}^{\tilde k_1}\tilde w_i\tilde G_i,\tilde\calM'=\sum_{i=1}^{\tilde k_2}\tilde w_i'\tilde G_i'$ such that 
\begin{enumerate}
\item \label{cond:identical-separated-mean}
Any pair in $\{\tilde\mu_i\}\cup\{\tilde\mu_i'\}$ is either identical or separated by at least $\eps^{c_4c_5}$ 
\item \label{cond:identical-separated-cov}
Any pair in $\{\tilde\Sigma_i\}\cup\{\tilde\Sigma_i'\}$ is either identical or separated by at least $\eps^{c_4c_5}$ in Frobenius norm. 
\He{It is not accurate to write the pair as $\tilde\Sigma_i,\tilde\Sigma_i'$ or $\tilde\Sigma_i,\tilde\Sigma_j'$.}
\item \label{cond:ht-difference}
$\norm{\E(h_m(\tilde\calM))-\E(h_m(\tilde\calM'))}_{F}\le O_k(\eps^{c_5})$ for any $m\le O(k)$
\item There exist $\pi_1:[k_1]\to[\tilde k_1]$ and $\pi_2:[k_2]\to[\tilde k_2]$ such that 
\begin{align*}
&\sum_{i:\pi_1(i)=j}w_i=\tilde w_{j}, \sum_{i:\pi_2(i)=j}w_i'=\tilde w_{j}', \\
&\dtv(G_i,\tilde G_{\pi_1(i)})\le\poly_k(\eps),\quad \text{for all }i\in[k_1]\\
&\dtv(G_i',\tilde G_{\pi_2(i)}')\le\poly_k(\eps),\quad \text{for all }i\in[k_2].
\end{align*}
\end{enumerate}
\end{lemma}

\begin{proof}
For any $0<c_4<1$, there is $\ell\in[k^2]$ such that the distance between any pair of parameters in $\{\mu_i\}\cup\{\mu_i'\}$ or the Frobenius distance between any pair in $\{\Sigma_i\}\cup\{\Sigma_i'\}$ is not in the interval $[\eps^{(c_4/2)^{\ell-1}},\eps^{(c_4/2)^\ell})$. 

Now consider a graph $\mathcal{G}$ on $k_1+k_2$ nodes where each node represents a vector in $\{\mu_i\}\cup\{\mu_i'\}$ and two vectors $a,b$ are adjacent if 
$$
\norm{a-b}\le\eps^{(c_4/2)^{\ell-1}}.
$$
We now construct new mixtures $\tilde \calM,\tilde\calM'$. For each connected component in $\mathcal{G}$ say $\{\mu_{i_1},\dots,\mu_{j_1}',\dots\}$, pick a representative say $\mu_{i_1}$ and set $\tilde\mu_{i_1}=\cdots=\tilde\mu_{j_1}'=\cdots=\mu_{i_1}$. Do this for all connected components and similar in the graph on covariance matrices with edges $(i,j)$ if 
$$\norm{\Sigma_i-\Sigma_j}_F\le\eps^{(c_4/2)^{\ell-1}}.$$
 After replacing close parameters with a representative, we may get some exactly same components in each new mixture. We then merge components with same means and covariances by adding their weights. 
Since all representatives of means and covariances are in different connected components of the graphs, they are separated by at least $\eps^{(c_4/2)^\ell}$. Setting $c_5=1/2(c_4/2)^{\ell-1}$ gives a separation of $\eps^{c_4c_5}$.

Next we prove \ref{cond:ht-difference}. There is a natural mapping $\pi_1:[k_1]\to[\tilde k_1]$ that maps any component in $\calM$ to the merged component in $\tilde\calM$ and a similar mapping $\pi_2:[k_2]\to[\tilde k_2]$ for $\calM',\tilde\calM'$. For all $i$, we have
\begin{equation}\label{eqn:parameter-distance}
\norm{\tilde\mu_{\pi_1(i)}-\mu_i},\norm{\tilde\mu_{\pi_2(i)}'-\mu_i'},
\norm{\tilde\Sigma_{\pi_1(i)}-\Sigma_i}_F,\norm{\tilde\Sigma_{\pi_2(i)}'-\Sigma_i'}_F\le O_k(1)\eps^{(c_4/2)^{\ell-1}}
\end{equation}
because for any pair of parameters above say $\tilde\mu_{\pi_1(i)}$ and $\mu_i$, there is a path of length at most $2k$ connecting $\mu_i$ to the representative of the connected component, and each edge connects a pair with TV distance at most $\eps$. Suppose $\norm{\mu_i},\norm{\Sigma_i-I}_F\le\Delta$. Then by Definition~\ref{def:degree_m_hermite_tensor}, we have for any integer $m$,
$$
\norm{\E(h_m(\calM))-\E(h_m(\tilde\calM))}_{F}\le O_k(m)\Delta^{m}\eps^{(c_4/2)^{\ell-1}}.
$$
Since $\frac{1}{2}\calM+\frac{1}{2}\calM'$ is isotropic and the minimum weight in $\frac{1}{2}\calM+\frac{1}{2}\calM'$ is at least $\frac{1}{2}\eps^{c_3}$, we have $\norm{\mu_i}\le \sqrt{2/\eps^{c_3}}$ for all $i$. Applying Lemma~\ref{lemma:eigenvalue-lower-bound} to $\frac{1}{2}\calM+\frac{1}{2}\calM'$, we have $\norm{I-\Sigma_i}_F\le\poly_k(\eps^{c_3})^{-1}$. 
So there is a constant $a$ such that $\Delta\le \eps^{-a c_3}$. 
If we take $c_3>0$ so that $a c_3O(k)\le1/2(c_4/2)^{\ell-1}$ and take $c_5=1/2(c_4/2)^{\ell-1}$,
then
$$\norm{\E(h_m(\calM))-\E(h_m(\tilde\calM))}_{F}\le O_k(m)\eps^{(c_4/2)^{\ell-1}-O(m)ac_3}=O_k(\eps^{c_5})$$
for $m\le O(k)$. By the same argument, we have the similar inequality for $\calM'$ and $\tilde\calM'$ 
$$
\norm{\E(h_m(\calM'))-\E(h_m(\tilde\calM'))}_{F}=O_k(\eps^{c_5}).
$$
Since we can use Proposition 3.3 to robustly estimate the Hermite tensors of a Gaussian mixture with $\eps$-fraction of noise and $\poly(\eps)$ error guarantee, we must have
$$
\norm{\E(h_m(\calM))-\E(h_m(\calM'))}_F\le\poly_k(\eps).
$$
Then by the triangle inequality, 
\begin{multline*}
\norm{\E(h_m(\tilde\calM))-\E(h_m(\tilde\calM'))}_F
\le\norm{\E(h_m(\calM))-\E(h_m(\tilde\calM))}_{F}+\\
\norm{\E(h_m(\calM))-\E(h_m(\calM'))}_F+\norm{\E(h_m(\calM'))-\E(h_m(\tilde\calM'))}_{F}
=O(\eps^{c_5}).
\end{multline*}
 
For the last conclusion, from the definition of $\pi_1$ and $\pi_2$, we know that 
$$
\sum_{i:\pi_1(i)=j}w_i=\tilde w_{j},\sum_{i:\pi_2(i)=j}w_i'=\tilde w_{j}'.
$$
Applying Lemma~\ref{lemma:eigenvalue-lower-bound} to $\frac{1}{2}\calM+\frac{1}{2}\calM'$, we have that eigenvalues of $\Sigma_i$ and $\Sigma_i'$ are at least $\poly(\eps^{c_3})$ for all $i$.
% \Anote{this is also not true as stated, $\Sigma_i \succeq \poly(\eps^{c_3}) \Sigma_{\calM}$, need to argue what the eigenvalues of $\Sigma_{\calM}$ are like given $\Sigma_{\calM} + \Sigma_{\calM}' =2 I$.} 
Then if $c_3$ is sufficiently small, by Lemma~\ref{lemma:parameter-dtv}, (\ref{eqn:parameter-distance}) implies $\dtv(G_i,\tilde G_{\pi(i)})\le\poly_k(\eps)$ and $\dtv(G_i',\tilde G_{\pi(i)}')\le\poly_k(\eps)$ for all $i$. \
% Anote{note here that to apply Lemma 1, the means need to be close which was proved above. }\He{I do not understand this comment.}
\end{proof}

The following lemma shows the identifiability under the simplification of Lemma~\ref{lemma:merge-parameters}. It is proved in the proof of Lemma~8.2 in \cite{LM20}.

\begin{lemma}\label{lemma:main}
Suppose $\calM=\sum_{i=1}^{k_1}w_iG_i,\calM'=\sum_{i=1}^{k_2}w_i'G_i'$ satisfies \ref{cond:identical-separated-mean},\ref{cond:identical-separated-cov},\ref{cond:ht-difference} in the conclusion of Lemma~\ref{lemma:merge-parameters} with constants $c_4,c_5$ and the minimal weights are at least $\eps^{c_3}$. There exists a sufficiently small function $f(k)>0$ depending only on $k$ such that if $c_4\le f(k)$, then $k_1=k_2$ and there exists a permutation $\pi$ such that 
$|w_i-w_{\pi(i)}'|\le\poly_k(\eps)$ and $G_i=G_{\pi(i)}'$. 
\end{lemma}

\begin{proof}
Consider the component $G_{k_2}'=N(\mu_{k_2}',\Sigma_{k_2}')$ in $\calM'$. We claim that there must be some $i\in[k_1]$ such that 
\[
(\mu_i,\Sigma_i)=(\mu_{k_2}',\Sigma_{k_2}').
\]
Assume for the sake of contradiction that this is not the case. Let $S_1=\{i\in[k_1]:\Sigma_i=\Sigma_{k_2}'\}$ and $S_2=\{i\in[k_2-1]:\Sigma_i'=\Sigma_{k_2}'\}$.
Suppose $F,F'$ are the generating functions of $\calM$ and $\calM'$
\begin{align*}
F=\sum_{i=1}^{k_1}w_i\exp\left(\mu_i^T X+\frac{1}{2}X^T\Sigma_iXy^2\right)=\sum_{m=0}^{\infty}\frac{1}{m!}h_m(\calM)y^n\\
F'=\sum_{i=1}^{k_2}w_i'\exp\left({\mu_i'}^T X+\frac{1}{2}X^T\Sigma_i'Xy^2\right)=\sum_{m=0}^{\infty}\frac{1}{m!}h_m(\calM')y^n.
\end{align*}
Then define the differential operators
\begin{align*}
\calD_i=\partial-\mu_i^T X-X^T\Sigma_iXy\\
\calD_i'=\partial-{\mu_i'}^T X-X^T\Sigma_i'Xy
\end{align*}
where partial derivatives are taken with respect to $y$. Now consider the differential operator
\[
\calD=(\calD_{k_2-1}')^{2^{k_1+k_2-2}}\cdots(\calD_1')^{2^{k_1}}\calD_{k_1}^{2^{k_1-1}}\cdots \calD_1
\]
By Fact~\ref{fact:partial-same-k+1}, $\calD(F)=0$. By Fact~\ref{fact:partial-same-k+1} and Fact~\ref{fact:partial-different}, we have
\[
\calD(F')=P(y,X)\exp\left({\mu_{k_2}'}^T X+\frac{1}{2}X^T\Sigma_{k_2}'Xy^2\right)
\]
where $P$ is a polynomial of degree 
\[
\deg(P)=2^{k_1+k_2-1}-1-\sum_{i\in S_1}2^{i-1}-\sum_{i\in S_2}2^{k_1+i-2}
\]
with leading coefficient
\begin{multline*}
C_0=w_{k_2}'\prod_{i\in[k_1]\setminus S_1}(X^T(\Sigma_{k_2}'-\Sigma_i)X)^{2^{i-1}}\prod_{i\in S_1}((\mu_{k_2}'-\mu_i)^TX)^{2^{i-1}}\\
\prod_{i\in[k_2-1]\setminus S_2}(X^T(\Sigma_{k_2}'-\Sigma_i')X)^{2^{k_1+i-2}}\prod_{i\in S_2}((\mu_{k_2}'-\mu_i')^TX)^{2^{k_1+i-2}}.
\end{multline*} 
We now compare the following differentials evaluated at $y=0$
\begin{align*}
(\calD_{k_2}')^{\deg(P)}\calD(F)\\
(\calD_{k_2}')^{\deg(P)}\calD(F')
\end{align*}
The first quantity is 0 because $\calD(F)$ is identically 0 as a formal power series. The second one is $\Omega_k(1)C_0$.
Since for any $i$ $(\mu_i,\Sigma_i)\neq(\mu_{k_2}',\Sigma_{k_2}')$, our assumptions imply that the separation between $\mu_i,\mu_{k_2}'$ or $\Sigma_i,\Sigma_{k_2}'$ is at least $\eps^{c_4c_5}$.
Then we have $C_0\ge\eps^{c_4c_5 O_k(1)}$ for some $X$. On the other hand, the coefficients of the formal power series $F,F'$ are the Hermite polynomials $h_m(\calM)$ and $h_m(\calM')$. This is a contradiction with our assumption that 
\[
\norm{\E(h_m(\calM)-\E(h_m(\calM)}_F\le O_k(\eps^{c_5})
\]
as long as $c_4$ is smaller than some sufficiently small function $f(k)$ depending only on $k$. Thus there must be some component of $\calM$ that matches $G_{k_2}'=N(\mu_{k_2}',\Sigma_{k_2}')$. We can repeat the argument for each component in $\calM'$ and in $\calM$ to conclude that $\calM$ and $\calM'$ have the same components.

Next we will show that the weights of the same components in $\calM$ and $\calM'$ are close. We can assume that $\calM=\sum_{i=1}^{k}w_iG_i,\calM'=\sum_{i=1}^{k}w_i'G_i$ are two mixtures on the same set of components. Without loss of generality, 
\[
w_1-w_1'\le\cdots w_\ell-w_\ell'\le 0 \le w_{\ell+1}-w_{\ell+1}'\le\cdots \le w_k-w_k'.
\]
Then we can consider the following two mixtures
\begin{align*}
&(w_1-w_1')G_1+\cdots+(w_\ell-w_\ell')G_\ell\\
&(w_{\ell+1}-w_{\ell+1}')G_{\ell+1}+\cdots+(w_k-w_k')G_k.
\end{align*}
If 
\[
\sum_{i=1}^k|w_i-w_i'|>\eps^\zeta
\]
for some sufficiently small $\zeta$ depending only on $k$, we can then normalize each of the above into a distribution and repeat the same argument, using the fact that pairs of components cannot be too close, to obtain a contradiction. Thus, the mixing weights of $\calM$ and $\calM'$ are $\poly_k(\eps)$-close and this completes the proof.
\end{proof}

\begin{proof}[Proof of Theorem~\ref{thm:parameter-recovery}]
We first set $c_4=f(k)$ as in Lemma~\ref{lemma:main}, and then $c_3,c_5$ according to $c_4$ as in Lemma~\ref{lemma:merge-parameters}, and $c_1',c_2$ according to $c_3$ as in Lemma~\ref{lemma:partition}. Let $c_1=\min\{c_1',c_2c_3\}$. 

By Lemma~\ref{lemma:weight-gap}, we can find $i$ such that there is no $w_j,w_j'$ in $[\eps^{c_1^{i-1}},\eps^{c_1^{i}})$. 
Let $\tilde \calM=\sum_{\{j:w_j\ge\eps^{c_1^{i}}\}}w_jG_j$ and $\tilde \calM'=\sum_{\{j:w_j'\ge\eps^{c_1^{i}}\}}w_j'G_j'$. Then $\dtv(\tilde \calM,\tilde\calM')\le O(\eps^{c_1^{i-1}})$. Let $\eps_1=\eps^{c_1^{i-1}}$. We have $\dtv(\tilde \calM,\tilde\calM')\le O(\eps_1)$ and the minimum weights of $\tilde\calM,\tilde\calM'$ are at least $\eps_1^{c_1}$.

Now we can apply Lemma~\ref{lemma:partition} on $\tilde\calM,\tilde\calM'$ and get partitions of components of $\tilde\calM,\tilde\calM'$. For $i\in[\ell]$, let $\calM_i$ and $\calM'_i$ be the mixtures defined in Lemma~\ref{lemma:partition}. We can apply a linear transformation to make $\frac{1}{2}\calM_i+\frac{1}{2}\calM'_i$ in isotropic position. Since the total variation distance is invariant under linear transformations, so we still have both conclusions in Lemma~\ref{lemma:partition}. Let $\eps_2=\eps_1^{c_2}$. Then $\dtv(\calM_i,\calM_{\pi(i)})\le O(\eps_2)$ and $\frac{1}{2}\calM+\frac{1}{2}\calM'$ satisfies Lemma~\ref{lemma:eigenvalue-lower-bound} with $\delta=\eps_2^{c_3}$. 
Weights of both mixtures increase when we do the partition. So minimum weights are at least $\eps_1^{c_1}\ge\eps_1^{c_2c_3}=\eps_2^{c_3}$.

We now prove the statement on these smaller mixtures. First we can use Lemma~\ref{lemma:merge-parameters} to merge close parameters of $\calM_i,\calM_i'$ so that all pairs of parameters are either equal or separated by $\eps_2^{c_4c_5}$. Under this simplification, Lemma~\ref{lemma:main} shows that there is a perfect matching between the same components in two mixtures and their weights are almost the same. By the last statement in Lemma~\ref{lemma:merge-parameters}, it is also a matching between components of $\calM_i$ and $\calM_i'$ by combining $\pi$ and $\pi_1,\pi_2$. Moreover, if $\tilde G_j=\tilde G_{\pi(j)}'$, then $\dtv(G_\ell,G'_\ell)\le\poly(\eps_2)$ for all $\ell,\ell'$ such that $\pi_1(\ell)=j,\pi_2(\ell')=\pi(j)$. Repeating the argument for all pieces in $\tilde\calM,\tilde\calM'$ completes the proof.
\end{proof}

\subsection{Proof of Lemma~\ref{lemma:partition} }\label{sec:partial-clustering}
In this section, we will prove Lemma~\ref{lemma:partition}. 
The following fact in [Liu-Moitra] shows that a good set of clusters of one mixture exists.
\begin{fact}[Claim 7.6 in {[LM'20]}]\label{fact:partition-single-gaussian}
Let $\calM=\sum_{i=1}^k w_iG_i$ be a mixture of Gaussians. For any constants $0<\delta<1$ and $\eps>0$, there exists $t\in[k^2]$ such that there exists a partition (possibly trivial) of $[k]$ into sets $R_1,\dots,R_\ell$ such that 
\begin{enumerate}
\item If we draw edges between all pairs $i,j$ such that $\dtv(G_i,G_j)\le 1-\eps^{\delta^t}$, then each piece of the partition is connected
\item For any $i,j$ in different pieces of the partition, $\dtv(G_i,G_j)\ge 1-\eps^{\delta^{t-1}}$.
\end{enumerate}
\end{fact}

\begin{remark*}
Fact~\ref{fact:partition-single-gaussian} can be applied to a set of Gaussians instead of a mixture of Gaussians by randomly assigning positive weights for all Gaussians.
\end{remark*}

\begin{lemma}\label{lemma:components-dtv}
For any constant $0<c<1$, suppose $\calM=\sum_{i=1}^{k_1} w_iA_i,\calM'=\sum_{i=1}^{k_2} w_i'B_i$ are two mixtures of arbitrary distributions with $\dtv(\calM,\calM')\le\eps$ and $w_i,w_i'\ge \eps^c$. If for any $i\neq j$, $\dtv(A_i,B_j)\ge 1-\eps$, then $k_1=k_2$ and $\dtv(A_i,B_i)\le \poly_{k_1}(\eps)$ for all $i\in[k_1]$.
\end{lemma}

\begin{proof}
Suppose $\pi$ is any coupling of $\calM$ and $\calM'$ and $X,Y$ are random variables with distributions $\calM$ and $\calM'$. Then $\dtv(\calM,\calM')=\min_{\pi}\{\Pr_{\pi}(X\neq Y)\}$. We define $\pi$ to be the optimal coupling such that $\dtv(\calM,\calM')=\Pr_{\pi}(X\neq Y)$. 
Then we can define $\hat\pi$ on variables $i,j,X,Y$ such that $\sum_{i\in[k_1],j\in[k_2]}\hat\pi(i,j,X,Y)=\pi(X,Y)$ and the marginal distribution $\hat\pi_X$ with fixed $i$ of X is $w_iA_i$ for all $i\in[k_1]$ and the marginal distribution $\hat\pi_Y$ with fixed $j$ is $w_j'B_j$ for all $j\in[k_2]$.  Let $P_{ij}=\int_{X,Y}\hat\pi(i,j,X,Y)dXdY$ and $A_{ij}=\frac{1}{P_{ij}}\int_{Y}\hat\pi(i,j,X,Y)dY$ be distributions on $X$, $B_{ij}=\frac{1}{P_{ij}}\int_{X}\hat\pi(i,j,X,Y)dX$ be distributions on $Y$.
Then we have
\begin{align}
\begin{split}\label{eqn:dtv(M,M')}
\dtv(\calM,\calM')&=\Pr_{\pi}(X\neq Y)=\Pr_{\hat\pi}(X\neq Y)\\
&=\sum_{i,j}P_{ij}\Pr_{\hat\pi}(X\neq Y\mid i,j)\\
&\ge\sum_{i,j}P_{ij}\cdot\dtv(A_{ij},B_{ij}).
\end{split}
\end{align}
By the definition of $P_{ij},A_{ij},B_{ij}$,
\begin{align*}
w_iA_i=P_{ij}A_{ij}+\sum_{j'\neq j}P_{ij'}A_{ij'}\\
w_i'B_i=P_{ij}B_{ij}+\sum_{i'\neq i}P_{i'j}B_{i'j}
\end{align*}
Dividing both sides by $\max\{w_i,w_j'\}$, we get
\begin{align*}
A_i=\frac{P_{ij}}{\max\{w_i,w_j'\}}A_{ij}+\left(1-\frac{w_i}{\max\{w_i,w_j'\}}\right)A_i+
\sum_{j'\neq j}\frac{P_{ij'}}{\max\{w_i,w_j'\}}A_{ij'}\\
B_i=\frac{P_{ij}}{\max\{w_i,w_j'\}}B_{ij}+\left(1-\frac{w_j'}{\max\{w_i,w_j'\}}\right)B_i+
\sum_{i'\neq i}\frac{P_{i'j}}{\max\{w_i,w_j'\}}B_{i'j}
\end{align*}
From the above two equations, we can write $A_i,B_i$ as linear combinations of two distributions.
\begin{align*}
A_i=\frac{P_{ij}}{\max\{w_i,w_j'\}}A_{ij}+\left(1-\frac{P_{ij}}{\max\{w_i,w_j'\}})\right)A_i'\\
B_i=\frac{P_{ij}}{\max\{w_i,w_j'\}}B_{ij}+\left(1-\frac{P_{ij}}{\max\{w_i,w_j'\}}\right)B_i'
\end{align*}
Then by the triangle inequality,
$$
\dtv(A_i,B_j)\le\frac{P_{ij}}{\max\{w_i,w_j'\}}\dtv(A_{ij},B_{ij})+
\left(1-\frac{P_{ij}}{\max\{w_i,w_j'\}}\right)
$$
\begin{equation}\label{eqn:dtv(Aij,Bij)}
P_{ij}\cdot\dtv(A_{ij},B_{ij})\ge P_{ij}-(1-\dtv(A_i,B_j))\max\{w_i,w_j'\}.
\end{equation}
Combining (\ref{eqn:dtv(M,M')}) and (\ref{eqn:dtv(Aij,Bij)}), we have the following inequality on the TV distance between mixtures and the TV distance between components
\begin{equation}\label{eqn:dtv(M,M')-dtv(Ai,Bj)}
\dtv(\calM,\calM')
\ge\sum_{i,j}\Paren{ P_{ij}-(1-\dtv(A_i,B_j))\max\{w_i,w_j'\} }.
\end{equation}
By the lower bounds on $\dtv(A_i,B_j)$, we have
\begin{align}
\begin{split}\label{eqn:eps-max(wi,wi')}
\epsilon\ge\dtv(\calM,\calM')
&\ge\sum_{i,j}P_{ij}-\sum_{i\neq j}(1-\dtv(A_i,B_j))\max\{w_i,w_j'\}-\sum_{i}(1-\dtv(A_i,B_i))\max\{w_i,w_i'\}\\
&\ge1-\sum_{i\neq j}\eps-\sum_i\max\{w_i,w_i'\}+\sum_i\max\{w_i,w_i'\}\dtv(A_i,B_i)\\
&\ge1-\sum_{i\neq j}\eps-\sum_i\max\{w_i,w_i'\}+w_{\min}\dtv(A_1,B_1)
\end{split}
\end{align}
where $A_1,B_1$ can be replaced by any $A_i,B_i$ pair. Let $k=\max\{k_1,k_2\}$.
When $i\neq j$ and $\dtv(A_i,B_j)\ge 1-\eps$, we plug it into Equation~(\ref{eqn:dtv(M,M')-dtv(Ai,Bj)}) and get
\begin{align*}
\epsilon\ge\dtv(\calM,\calM')\ge\sum_{i\neq j}P_{ij}-(1-\dtv(A_i,B_j))\max\{w_i,w_j'\}
\ge\sum_{i\neq j}(P_{ij}-\epsilon).
\end{align*}
This implies
\[
\sum_{i\neq j}P_{ij}\le k^2\eps.
\]
Then we can bound $\sum_i\max\{w_i,w_i'\}-1$ in Equation (\ref{eqn:eps-max(wi,wi')})
\begin{align*}
\sum_i\max\{w_i,w_i'\}-1=\sum_i\max\{w_i,w_i'\}-w_i\le\sum_{i\neq j}P_{ij}\le k^2\eps.
\end{align*}
Plugging this bound into Equation (\ref{eqn:eps-max(wi,wi')}), for any $i$, we have
\begin{align*}
\dtv(A_i,B_i)\le \frac{1}{w_{\min}}\left(k^2\eps+\sum_i\max\{w_i,w_i'\}-1\right)\le\frac{2k^2\eps}{\eps^c}.
\end{align*}
\end{proof}

\begin{proof}[Proof of Lemma~\ref{lemma:partition}]
We apply Fact~\ref{fact:partition-single-gaussian} on the union set of components of $\calM$ and $\calM'$ with parameter $\delta$ to find a partition $R_1,\dots,R_\ell$. Let 
\begin{align*}
\calM_i&=\frac{\sum_{G_j\in R_i} w_jG_j}{\sum_{G_j\in R_i} w_j}\\
\calM_i'&=\frac{\sum_{G_j'\in R_i} w_j'G_j'}{\sum_{G_j'\in R_i} w_j'}.
\end{align*}
Then for any $i\neq j$, we know $\dtv(G_a,G_b')\ge 1-\eps^{\delta^{t-1}}$ for $G_a\in R_i,G_b'\in R_j$.
By (\ref{eqn:dtv(M,M')-dtv(Ai,Bj)}) in the proof of Lemma~\ref{lemma:components-dtv}, we have
$$
\dtv(M_i,M_j')\ge 1- 2k\eps^{\delta^{t-1}}.
$$
Then by Lemma~\ref{lemma:components-dtv}, for any $i$, there exists $a$ such that $\dtv(M_i,M_i')\le \eps^{a\delta^{t-1}}$. Let $c_2=a\delta^{t-1}$. If we set $\delta=c_2c_3/\delta^{t-1}=ac_3$, the partition satisfies the second conclusion.
\end{proof}

\section*{Acknowledgments}
We thank an anonymous reviewer for pointing out an issue with a technical statement 
(Fact 2.35 in the previous arXiv version, replaced by Lemmas~\ref{lem:var-zero-mean-gaussians} 
and~\ref{fact:mean-variance-subgaussian} in the current version) 
that claimed a bound on the variance of a more general class of distributions 
than what is needed in our approach.  

A.B. was supported by the Office of Naval Research (ONR)
grant N00014-18-1-2562, and the National Science Foundation (NSF) Grant No. CCF-1815840.
I.D. was supported by NSF Award CCF-1652862 (CAREER), a Sloan Research Fellowship, and a DARPA 
Learning with Less Labels (LwLL) grant. H.J. and S.S.V. were supported in part by NSF awards
AF-1909756 and AF-2007443. D.M.K. was supported by NSF Award CCF-1553288 (CAREER) and 
a Sloan Research Fellowship. Part of this work was done
while A.B. was visiting the Simons Institute for the Theory of Computing.

\clearpage

% BIBLIOGRAPHY

  % assumes hyperref

\phantomsection
  \addcontentsline{toc}{section}{References}
  \bibliographystyle{amsalpha}
  \bibliography{bib/allrefs,bib/ref,bib/custom,bib/dblp,bib/custom2,bib/mathreview}  

%\newpage

\appendix
 
 %APPENDIX

%!TEX root = main.tex

\section{Omitted Proofs}
\label{sec:omitted_proofs}

In this subsection, we provide the proofs that were omitted from Section \ref{sec:prelims} and Section \ref{sec:full-algo-analysis}.

\subsection{Omitted Proofs from Section~\ref{ssec:prelims-gaussian}}

\begin{lemma}[Concentration of low-degree polynomials, Lemma \ref{lem:concentration_of_low_degree_gaussians} restated]
Let $T$ be a $d$-dimensional, degree-$4$ tensor such that $\Norm{T}_F \leq \Delta$
for some $\Delta>0$ and let $x, y \sim \calN(0, I)$. Then, with probability at least $1-1/\poly(d)$, the following holds:
\begin{equation*}
\Norm{ T\Paren{\cdot, \cdot, x, y}}^2_F \leq \bigO{\log(d) \Delta^2} \;.
\end{equation*}
\end{lemma}
\begin{proof}
We note that
\begin{align*}
\E\left[\norm{T(\cdot,\cdot,x,y)}_F^2\right]
&=\E\left[\sum_{i_1,i_2}\left(\sum_{i_3,i_4}T\Paren{i_1,i_2,i_3,i_4}x\Paren{i_3}y\Paren{i_4}\right)^2\right]\\
&=\E\left[\sum_{i_1,i_2}\left(\sum_{i_3,i_4}T\Paren{i_1,i_2,i_3,i_4}^2x\Paren{i_3}^2y\Paren{i_4}^2\right)\right]\\
&=\sum_{i_1,i_2,i_3,i_4}T\Paren{i_1,i_2,i_3,i_4}^2\le\Delta^2 \;.
\end{align*}
The second equality follows from the fact that $x\Paren{i_3},y\Paren{i_4}$ are independent and have zero means.
So the only non-zero terms are the squares. The third equality follows from the fact that
$x\Paren{i_3},y\Paren{i_4}$ are independent with unit variances. Observe that
$\Norm{ T\Paren{\cdot, \cdot, x, y}}^2_F$ is a degree-$2$ polynomial in Gaussian random variables.
Using standard concentration bounds for low-degree Gaussian polynomials
(see, e.g.,~Theorem 2.3 in \cite{DRST14}), we obtain
\begin{equation*}
\Pr\left[ \Norm{T(\cdot,\cdot,x,y)}_F^2 \geq t^2 \E\left[\norm{T(\cdot,\cdot,x,y)}_F^2\right] \right] \leq \exp\left( - c t \right) \;.
\end{equation*}
Setting $t = \Omega(\log(d))$ completes the proof.
%  By Markov's inequality,
% $$
% \Pr\left(\norm{T(\cdot,\cdot,x,y)}_F^2\ge c\Delta^2\right)\le \frac{\Delta^2}{c\Delta^2}=\frac{1}{c}.
% $$
\end{proof}

\subsection{Omitted Proofs from Section~\ref{sec:sos_proofs}}

\begin{lemma}[Spectral SoS Proofs, Lemma \ref{lem:spectral-sos-proofs} restated] 
Let $A$ be a $d \times d$ matrix. Then for $d$-dimensional vector-valued indeterminate $v$, we have:
\[
\sststile{2}{v} \Set{ v^{\top}Av \leq \Norm{A}_2 \Norm{v}_2^2}\mper
\]
\end{lemma}
\begin{proof}
Note that $v$ is the only variable in the proof here ($A$ is a matrix of constants). 
We note that $A \leq \Norm{A}_2 I$ or $\Norm{A}_2 I - A$ is PSD and thus $\Norm{A}_2 I - A = QQ^{\top}$ for some $d \times d$ matrix $Q$. Thus, $\Norm{Qv}_2^2 = v^{\top} (\Norm{A}_2 I - A) v = \Norm{A}_2 \Norm{v}_2^2 - v^{\top}Av$. Thus, $\Norm{A}_2 \Norm{v}_2^2 - v^{\top}Av$ is a sum of squares polynomial (namely $\Norm{Qv}_2^2$) in variable $v$. This completes the proof.

\end{proof}

\begin{lemma}[Frobenius Norms of Products of Matrices, Lemma \ref{lem:frob-of-product} restated]
Let $B$ be a $d \times d$ matrix valued indeterminate for some $d \in \N$. Then, for any $0 \preceq A \preceq I$, 
\[
\sststile{2}{B} \Set{\Norm{AB}_F^2 \leq \Norm{B}_F^2}\mcom
\]
and, 
\[
\sststile{2}{B} \Set{\Norm{BA}_F^2 \leq \Norm{B}_F^2}\mcom
\]
\end{lemma}
\begin{proof}
The proof of the second claim is similar so we prove only the first. We have:
\[
\sststile{2}{B} \Set{\Norm{B}_F^2= \Norm{(A+I-A)B}_F^2 = \Norm{AB}_F^2 + \Norm{(I-A)B}_F^2 + 2\tr((I-A)BB^{\top}A)} 
\]
Now, $A-A^2 \succeq 0$, thus, $A-A^2 = RR^{\top}$ for some $d \times d$ matrix $R$. Thus, $\tr((A-A^2)BB^{\top}) = \tr(RR^{\top}BB^{\top}) = \Norm{BR}_F^2$ - a sum of squares polynomial of degree $2$ in indeterminate $B$. Thus, $\sststile{2}{B} \Set{\tr((A-A^2)BB^{\top}) \geq 0}$. 
\end{proof}

\subsection{Omitted Proofs from Section~\ref{ssec:prelim-analytic}}

\begin{lemma}[Shifts Cannot Decrease Variance, Lemma \ref{lem:shifts-only-increase-variance} restated]
Let $\cD$ be a distribution on $\R^d$, $Q$ be a $d \times d$ matrix-valued indeterminate, and $C$ be a scalar-valued indeterminate.
Then, we have that
\[
\sststile{2}{Q,C} \Set{\E_{x\sim \cD} \left[ \Paren{Q(x)-\E_{x\sim\cD}[Q(x)] }^2 \right] \leq \E_{x\sim\cD} \left[\Paren{Q(x)-C}^2\right] }\mper
\]
\end{lemma}
\begin{proof}
\begin{align*}
\sststile{2}{Q,C}  \Biggl\{ \expecf{x\sim\cD}{\Paren{Q(x)-C}^2} &= \expecf{x\sim\cD}{\Paren{Q(x)-\expecf{x\sim\cD}{Q(x)}+ \expecf{x\sim\cD}{Q(x)} -C}^2}\\
&= \expecf{x\sim\cD}{\Paren{Q(x)-\expecf{x\sim\cD}{Q(x)}}^2} + \expecf{x\sim\cD}{(Q(x)-C)^2}\\
&\hspace{0.2in}+ 2 \expecf{x\sim\cD}{\Paren{Q(x)-\expecf{x\sim\cD}{Q(x)}} \Paren{\expecf{x\sim\cD}{Q(x)}-C}}\\
 &=  \expecf{x\sim\cD}{\Paren{Q(x)-\expecf{x\sim\cD}{Q(x)}}^2} + \expecf{x\sim\cD}{(Q(x)-C)^2} \\
 &\geq  \expecf{x\sim\cD}{\Paren{Q(x)-\expecf{x\sim\cD}{Q(x)}}^2} \hspace{0.2in}
\Biggr\}\mper
\end{align*}
\end{proof}

\begin{lemma}[Shifts of Certifiably Hypercontractive Distributions, Lemma \ref{lem:shifts-of-certifiably-hypercontractive-distributions} restated]
Let $x$ be a mean-$0$ random variable with distribution $\cD$ on $\R^d$ with $t$-certifiably $C$-hypercontractive degree-$2$ polynomials. Then, for any fixed constant vector $c \in \R^d$, the random variable $x+c$ also has $t$-certifiable $4C$-hypercontractive degree-$2$ polynomials.
\end{lemma}
\begin{proof}
Observe that using that $\expecf{x\sim\cD}{x} = 0$, we have that
\begin{equation*}
\sststile{2}{Q} \Set{ \expecf{x\sim\cD}{(x+c)^{\top} Q (x+c)} = \expecf{x\sim\cD}{x^{\top} Q x + c^{\top} Q c}} \;.
\end{equation*}

Next, by two applications of the SoS Triangle Inequality (Fact~\ref{fact:almost-triangle-sos}),
an application of Lemma~\ref{lem:shifts-only-increase-variance}  followed by certifiable hypercontractivity of $\cD$, we have:
\begin{align*}
\sststile{t'}{Q} \Biggl\{  & \expecf{x\sim\cD}{\Paren{(x+c)^{\top} Q (x+c)-\expecf{x\sim\cD}{(x+c)^{\top} Q (x+c)} }^{t'}} \\
& = \expecf{x\sim\cD}{\Paren{\Paren{x^{\top} Q x-\expecf{x\sim\cD}{x^{\top} Q x} } + x^{\top} Qc + c^{\top} Qx }^{t'}} \\
& \leq 4^{t'} \Paren{ \expecf{x \sim \cD}{\Paren{x^{\top}Qx-\E_{\cD}x^{\top} Q x}^{t'} }+ \expecf{x\sim\cD}{(x^{\top}Qc)^{t'}} + \expecf{x\sim\cD}{(c^{\top}Qx)^{t'}} }\\
&\leq 4^{t'} (Ct')^{t'} \Paren{ \expecf{x \sim \cD}{\Paren{x^{\top}Qx-\E_{\cD}x^{\top} Q x}^{2} }^{t'/2} + \expecf{x\sim\cD}{(x^{\top}Qc)^{2}}^{t'/2} + \expecf{x\sim\cD}{(c^{\top}Qx)^{2}}^{t'/2} } \Biggr\} \;.
\end{align*}
On the other hand, notice that
\begin{align*}
\sststile{2}{Q} \Biggl\{ & \expecf{x\sim\cD}{\Paren{(x+c)^{\top} Q (x+c)-\expecf{x\sim\cD}{(x+c)^{\top} Q (x+c)} }^{2}} \\
&=\Paren{ \expecf{x \sim \cD}{\Paren{x^{\top}Qx-\E_{\cD}x^{\top} Q x}^{2} }+ \expecf{x\sim\cD}{(x^{\top}Qc)^{2}} + \expecf{x\sim\cD}{(c^{\top}Qx)^{2}} } \hspace{0.2in}\Biggr\} \;.
\end{align*}
Thus,
\begin{align*}
\sststile{t'}{Q} \Biggl\{  & \expecf{x\sim\cD}{\Paren{x^{\top}Qx- \expecf{x\sim\cD}{x^{\top} Q x} }^2}^{t'/2}  + \Paren{ \expecf{x\sim\cD}{(x^{\top}Qc)^2} }^{t'/2} + \Paren{\expecf{x\sim\cD}{(c^{\top}Qx)^2} }^{t'/2} \\
&\leq 4^{t'} (Ct')^{t'} \Paren{\expecf{x\sim\cD}{\Paren{(x+c)^{\top} Q (x+c) - \expecf{x\sim\cD}{(x+c)^{\top}Q(x+c)} }^2 } }^{t'/2} \Biggr\} \;.
\end{align*}
As a result, we obtain:
\begin{align*}
\sststile{t'}{Q} \Biggl\{ & \expecf{x\sim\cD}{\Paren{(x+c)^{\top} Q (x+c)- \expecf{x\sim\cD}{(x+c)^{\top}Q(x+c)} }^{t'} } \\
 & \leq (4Ct')^{t'}\Paren{\expecf{x\sim\cD}{\Paren{(x+c)^{\top} Q (x+c) - \expecf{x\sim\cD}{(x+c)^{\top}Q(x+c)} }^2} }^{t'/2} \Biggr\} \;,
\end{align*}
which completes the proof.
\end{proof}

\begin{lemma}[Mixtures of Certifiably Hypercontractive Distributions,
Lemma \ref{lem:mixtures-of-certifiably-hypercontractive-distributions} restated]
Let $\cD_1, \cD_2, \ldots, \cD_k$ have $t$-certifiable $C$-hypercontractive degree-$2$ polynomials on $\R^d$,
for some fixed constant $C$. Then, any mixture $\cD=\sum_i w_i \cD_i$ also has $t$-certifiably $(C/\alpha)$-hypercontractive
degree-$2$ polynomials for $\alpha = \min_{i \leq k, w_i > 0} w_i$.
\end{lemma}

\begin{proof}
%We write $Q(x)$ to denote the $x^{\top} Qx$ for a $d \times d$ matrix-valued indeterminate $Q$. Then,
Applying Lemma~\ref{fact:almost-triangle-sos} followed by SoS H{\"o}lder's inequality on the second term
and followed by a final application of SoS H{\"o}lder's inequality (Fact \ref{fact:sos-holder}), we obtain:
\begin{align*}
\sststile{t'}{Q} \Biggl\{  \expecf{x\sim\cD}{ \Paren{ x^\top Qx-\expecf{x\sim\cD}{x^\top Qx } }^{t'} } & = \expecf{x\sim\cD}{\Paren{ x^\top Qx - \sum_i w_i \expecf{x\sim \cD_i}{ x^\top Qx} }^{t'} }\\
 &= \sum_i w_i \expecf{x \sim \cD_i}{ \Paren{x^\top Qx- \sum_i w_i \expecf{x\sim \cD_i}{x^\top Qx} }^{t'}}\\
 &\leq 2^{t'} \Biggl( \sum_i w_i \expecf{x\sim\cD_i}{ \Paren{x^\top Qx- \expecf{x\sim \cD_i}{x^\top Q x} }^{t'}} \\
 & \hspace{0.2in} + \Paren{\sum_i w_i \expecf{x\sim \cD_i}{x^\top Qx -\expecf{x\sim\cD_i}{x^\top Qx} }^{t'}} \Biggr) \\
&\leq 2^{t'} \Biggl( (Ct')^{t'} \Paren{\sum_i w_i \expecf{x\sim \cD_i}{ \Paren{x^\top Qx- \expecf{x\sim \cD_i}{x^\top Qx} }^{2} } }^{t'/2} \\
& \hspace{0.2in} + \sum_i w_i \Paren{  \expecf{x\sim \cD_i}{ \Paren{ x^\top Qx -\expecf{x\sim \cD_i}{x^\top Qx} }^{2}}  }^{t'} \Biggr)\\
&\leq \Paren{\frac{4 Ct'}{\alpha}}^{t'}    \Paren{\sum_i w_i \expecf{x\sim \cD_i}{ \Paren{ x^\top Qx -\expecf{x\sim \cD_i}{x^\top Qx} }^2 }  }^{t'/2}
 \Biggr\}\mper
\end{align*}
On the other hand, note that by Lemma~\ref{lem:shifts-only-increase-variance}, we know that
\begin{align*}
\sststile{2}{Q}\Biggl\{ \expecf{x\sim\cD}{ \Paren{x^\top Qx- \expecf{x\sim \cD}x^\top Qx  }^2 } &= \sum_i w_i \expecf{x\sim\cD_i}{ \Paren{x^\top Qx- \expecf{x\sim \cD}{x^\top Qx}  }^2} \\
&  \geq \sum_i w_i \expecf{x \sim \cD_i}{ \Paren{x^\top Qx -\expecf{x \sim \cD_i}{x^\top Qx} }^2 }\Biggr\}\mper
\end{align*}
Combining the two equations above completes the proof.
\end{proof}

\begin{corollary}[Certifiable Hypercontractivity of $k$-Mixtures of Gaussians, Corollary \ref{cor:hypercontractivity_of_mixtures} restated]
Let $\cD$ be a $k$-mixture of Gaussians $\sum_i w_i \cN(\mu_i,\Sigma_i)$ with weights $w_i \geq \alpha$ for every $i \in [k]$.
Then, $\cD$ has $t$-certifiably $4/\alpha$-hypercontractive degree-$2$ polynomials.
\end{corollary}
\begin{proof}
From \cite{DBLP:conf/soda/KauersOTZ14}, we know that the standard Gaussian random variable has $t$-certifiably $1$-hypercontractive degree-$2$ polynomials. From Fact~\ref{fact:affine-invariance-certifiable-hypercontractivity}, we immediately obtain that for any
PSD matrix $\Sigma$, the Gaussian $\cN(0,\Sigma)$ also has $t$-certifiable $1$-hypercontractive degree-$2$ polynomials.
From Lemma~\ref{lem:shifts-of-certifiably-hypercontractive-distributions}, we obtain that for any $\mu$, the Gaussian
$\cN(\mu,\Sigma)$ has $t$-certifiable $4$-hypercontractive degree-$2$ polynomials. Finally, applying
Lemma~\ref{lem:mixtures-of-certifiably-hypercontractive-distributions} to $\cD_i = \cN(\mu_i,\Sigma_i)$ and mixture weights
$w_1,w_2, \ldots w_k$, yields that $\cD = \sum_i w_i \cN(\mu_i,\Sigma_i)$ has $t$-certifiably $4/\alpha$-hypercontractive
degree-$2$ polynomials. This completes the proof.
\end{proof}

\begin{lemma}[Linear Transformations of Certifiably Bounded-Variance Distributions, Lemma \ref{lem:bounded-variance-linear-transform} restated]
For $d \in \N$, let $x$ be a random variable with distribution $\cD$ on $\R^d$ such that for $d \times d$ matrix-valued indeterminate $Q$, $\sststile{2}{Q} \Set{\E_{x \sim \cD}(x^{\top}Qx-\E_{\cD}x^{\top}Qx)^2 \leq \Norm{\Sigma^{1/2}Q\Sigma^{1/2}}^2_F }$. Let $A$ be an arbitrary $d \times d$ matrix and let $x' = Ax$ be the random variable with covariance $\Sigma' = AA^{\top}$. Then, we have that
\[
\sststile{2}{Q} \Set{\E_{x' \sim \cD'}({x'}^{\top}Qx'-\E_{\cD'}{x'}^{\top}Qx')^2 \leq \Norm{{\Sigma'}^{1/2}Q{\Sigma'}^{1/2}}^2_F}\mper
\]
\end{lemma}
\begin{proof}
The covariance of $x'$ is $AA^{\top} =\Sigma'$, say. Let ${\Sigma'}^{1/2}$ be the PSD square root of $\Sigma'$. The proof follows by noting that ${x'}^{\top}Qx' = (Ax)^{\top}Q(Ax) = x^{\top} (A^{\top}QA) x^{\top}$ and that $\Norm{A^{\top}QA}_F^2 = \tr(A^{\top}QAA^{\top}QA) = \tr(AA^{\top}QAA^{\top}Q) = \tr(\Sigma' Q\Sigma' Q) = \tr({\Sigma'}^{1/2}Q{\Sigma'}^{1/2} {\Sigma'}^{1/2}Q{\Sigma'}^{1/2}) = \Norm{{\Sigma'}^{1/2}Q{\Sigma'}^{1/2}}_F^2$. 
\end{proof}

\begin{lemma}[Variance of Degree-$2$ Polynomials of Standard Gaussians, Lemma \ref{lem:var-zero-mean-gaussians} restated] 
We have that
\[
\sststile{2}{Q} \Set{\E_{\cN(0,I)} \Paren{x^{\top}Qx - \E_{\cN(0,I)} x^{\top}Qx}^2 \leq 3 \Norm{Q}_F^2}\mper
\]
\end{lemma}
\begin{proof}
We will view $xx^{\top}$ and $I \in \R^{d \times d}$ as $d^2$-dimensional vectors. Consider the matrix 
$\E_{x \sim \cN(0,I)} (xx^{\top}-I)(xx^{\top}-I)^{\top}$. The diagonal of this matrix is $2I_{d^2}$. 
The off-diagonal part has exactly one non-zero entry in any row (which corresponds to entry indexed by $(i,j)$ and $(j,i)$ for $i \neq j$), 
and thus has spectral norm at most $1$ by the Gershgorin circle theorem. 
Thus, $\E_{x \sim \cN(0,I)} (xx^{\top}-I)(xx^{\top}-I)^{\top} \preceq 3I_{d^2}$. 

We thus have: 
\begin{multline}
\sststile{2}{Q} \Biggl\{ \E_{\cN(0,I)} \Paren{x^{\top}Qx - \E_{\cN(0,I)} x^{\top}Qx}^2 = \E_{\cN(0,I)} \Iprod{xx^{\top}-I,Q}^2 = \E_{\cN(0,I)} \Iprod{xx^{\top}-I,Q}^2 \\ \leq \Norm{\E_{x \sim \cN(0,I)}(xx^{\top}-I)(xx^{\top}-I')^{\top}}_2 \Norm{Q}_F^2 \leq 3\Norm{I_{d^2}}_2 \Norm{Q}_F^2 = 3 \Norm{Q}_F^2 \Biggr\}\mper
\end{multline}

\end{proof}

\begin{lemma}[Variance of Degree-$2$ Polynomials of Mixtures, Lemma \ref{fact:mean-variance-subgaussian} restated]
Let $\calM = \sum_i w_i \cD_i$ be a $k$-mixture of distributions $\cD_1, \cD_2, \ldots, \cD_k$ with means $\mu_i$ and covariances $\Sigma_i$. Let $\mu = \sum_i w_i \mu_i$ be the mean of $\calM$. Suppose that each of $\cD_1,\cD_2, \ldots, \cD_k$ have certifiably $C$-bounded-variance i.e. for $Q$: a symmetric $d \times d$ matrix-valued indeterminate. 
\[
\sststile{2}{Q} \Set{\E_{x' \sim \cD_i}({x'}^{\top}Qx'-\E_{\cD_i}{x'}^{\top}Qx')^2 \leq C\Norm{{\Sigma'}^{1/2}Q{\Sigma'}^{1/2}}^2_F}\mper
\]
Further, suppose that for some $H > 1$, $\norm{\mu_i-\mu}_2^2, \Norm{\Sigma_i-I}_F \leq H$ for every $1 \leq i \leq k$. 
Then, we have that
\[
\sststile{2}{Q} \Set{ \expecf{x\sim \calM}{ \Paren{ x^\top Qx-\expecf{x \sim \calM}{x^\top Qx } }^2 } \leq 100C H^2\Norm{Q}_F^2 }\mper
\]
\end{lemma}

\begin{proof}
We have the following sequence of (in-)equalities:
\begin{align}
\sststile{2}{Q} &\Biggl\{ \E_{x\sim \calM} \Paren{ x^\top Qx-\E_{x \sim \calM}x^\top Qx}^2 = \sum_{i \leq k} w_i \E_{x \sim \cD_i} \Paren{ x^\top Qx-\E_{x \sim \calM}x^\top Qx }^2\\
&=\sum_{i \leq k} w_i \E_{x \sim\cD_i}\Paren{ x^\top Qx-\E_{x\sim \cD_i} x^\top Qx + \E_{x \sim \cD_i}x^\top Qx -\E_{x \sim \calM}x^\top Qx }^2\\
& \leq 2 \sum_{i \leq k} w_i \E_{x \sim\cD_i} \Paren{x^\top Qx-\E_{x\sim \cD_i} x^\top Qx}^2 + 2 \sum_i w_i \Paren{\E_{x \sim \cD_i}x^\top Qx -\E_{x \sim \calM}x^\top Qx }^2
 \Biggr\} \;,\label{eq:split-componentwise}
\end{align}
where the third line follows from Fact~\ref{fact:almost-triangle-sos} (SoS Almost Triangle Inequality).

Let us first bound the 2nd term in the RHS above. 
Towards that, let $\Sigma = \sum_i w_i ((\mu_i-\mu)(\mu_i-\mu)^{\top} + \Sigma_i)$ 
be the covariance of the mixture $\calM$. Then, notice that 
$\Sigma = \sum_i w_i ((\mu_i-\mu)(\mu_i-\mu)^{\top} + \Sigma_i) = \sum_i w_i \mu_i \mu_i^{\top} + \sum_i w_i \Sigma_i -\mu\mu^{\top}$. 
Thus, we can write
\begin{align*}
\mu_i \mu_i^{\top} + \Sigma_i - \Sigma-\mu\mu^{\top} &= \sum_{j \neq i} w_j(\mu_i \mu_i^{\top} - \mu_j \mu_j^{\top}) + \sum_{j \neq i} w_j (\Sigma_i-\Sigma_j) \\
&= \sum_{j \neq i} w_j (\mu_j-\mu)(\mu_j-\mu)^{\top} -  \sum_{j \neq i}(\mu-\mu_j)(\mu-\mu_j)^{\top} + \sum_{j \neq i} w_j (\Sigma_i-\Sigma_j)\\
&=\sum_{j \neq i} w_j (\mu_j-\mu)(\mu_j-\mu)^{\top} -  \sum_{j \neq i}(\mu-\mu_j)(\mu-\mu_j)^{\top} + \sum_{j \neq i} w_j (\Sigma_i-I) - \sum_{j \neq i} w_j (\Sigma_j-I)\mper
\end{align*}
Here, in the second to last step, we added and subtracted $\sum_{j \neq i} w_j \mu \mu^{\top}$ 
and used that $\sum_i w_i \mu_i = \mu$, and in the last step we added and subtracted $\sum_{j \neq i}w_j I$.

By application of the triangle inequality for Frobenius norm to the RHS of the above, 
we have that: 
\begin{align*}
\Norm{\mu_i \mu_i^{\top} + \Sigma_i -\mu \mu^{\top} -\Sigma}_F 
&\leq \sum_{j \neq i} w_j \Norm{(\mu_i-\mu) (\mu_i-\mu)^{\top}}_F + \sum_{j \neq i} w_j \Norm{(\mu_j-\mu)(\mu_j-\mu)^{\top}}_F \\ 
&+ \sum_{j \neq i} w_j  \Norm{(\Sigma_i-I)}_F  + \sum_{j \neq i} w_j \Norm{ (I-\Sigma_j)}_F \leq H+H + H+H = 4H \;.
\end{align*}
Using the SoS version of the Cauchy-Schwarz inequality (Fact~\ref{fact:sos-holder}) 
on indeterminate $Q$ and constant $\mu \mu^{\top} - \mu_i \mu_i^{\top} + \Sigma_i - \Sigma$ and the above bound, 
we have: 
\begin{align*}
\sststile{2}{Q} 
&\Biggl\{ \sum_i w_i \Paren{\E_{x \sim \cD_i}x^\top Qx -\E_{x \sim \calM}x^\top Qx }^2 = \sum_i w_i \Paren{\Iprod{\mu_i \mu_i^{\top}+\Sigma_i,Q} -\Iprod{\mu\mu^{\top}+\Sigma,Q}}^2 \\
&\leq \sum_i w_i \Norm{\mu \mu^{\top} - \mu_i \mu_i^{\top} + \Sigma_i - \Sigma}_F^2 \Norm{Q}_F^2 \leq 16H^2 \sum_i w_i \Norm{Q}_F^2 = 16H^2  \Norm{Q}_F^2\Biggr\}\mper
\end{align*}
Let us now bound the first term in the RHS of \eqref{eq:split-componentwise} above. 
First, observe that $x^{\top}Qx - \E_{\cN(\mu_i,\Sigma_i)} x^{\top}Qx = (x-\mu_i)^{\top}Q(x-\mu_i) - \E_{\cN(\mu_i,\Sigma_i)} (x-\mu_i)^{\top}Q(x-\mu_i) + 2(x-\mu_i)^{\top}Q\mu_i$. Thus, using Fact~\ref{fact:almost-triangle-sos} and Lemma~\ref{lem:var-zero-mean-arbitrary-cov-gaussians}, we have:
\begin{align}
\sststile{2}{Q} &\Biggl\{\sum_{i \leq k} w_i \E_{\cD_i} \Paren{x^\top Qx-\E_{x\sim \cD_i} x^\top Qx}^2 \\
&\leq 2\sum_{i \leq k} w_i \E_{\cD_i} \Paren{(x-\mu_i)^\top Q(x-\mu_i)-\E_{x\sim \cD_i} (x-\mu_i)^\top Q(x-\mu_i)}^2 + 8 \sum_{i \leq k} w_i \E_{\cD_i} \Paren{(x-\mu_i)^{\top}Q\mu_i}^2\\
&\leq 6 \sum_i w_i C\Norm{\Sigma_i^{1/2}Q\Sigma_i^{1/2}}_F^2 + 8 \sum_{i \leq k} w_i \E_{\cD_i} \Paren{(x-\mu_i)^{\top}Q\mu_i}^2
 \Biggr\}\mper \label{eq:two-term-split}
\end{align}
For the first term, note that $\Norm{\Sigma_i}_2 \leq 1+\Norm{\Sigma_i-I}_F \leq 1+H$. 
Thus, $\Norm{\Sigma_i^{1/2}}_2 \leq \sqrt{1+H}$. Thus, we have that $\Sigma_i^{1/2} \preceq I + (\Sigma_i^{1/2}-I) \preceq \sqrt{1+H}I$. 
Using Lemma~\ref{lem:frob-of-product} with $A = (1+H)^{-1/2} \Sigma_i^{1/2}$ and $B = Q\Sigma_i^{1/2}$, 
we have: $\sststile{2}{Q} \Set{\Norm{\Sigma_i^{1/2} Q \Sigma_i^{1/2}}_F^2 \leq (1+H) \Norm{Q\Sigma_i^{1/2}}_F^2}$. 
By another application of Lemma~\ref{lem:frob-of-product}, we have: 
$\sststile{2}{Q} \Set{\Norm{Q\Sigma_i^{1/2}}_F^2 \leq (1+H) \Norm{Q}_F^2 }$. 
Thus, altogether, we have: $\sststile{2}{Q} \Set{\Norm{\Sigma_i^{1/2} Q \Sigma_i^{1/2}}_F^2 \leq (1+H)^2 \Norm{Q}_F^2}$. 
Using our assumption that $1 < H$, we thus have:
\begin{align*}
\sststile{2}{Q} &\Biggl\{\sum_i w_i C\Norm{\Sigma_i^{1/2}Q\Sigma_i^{1/2}}_F^2 \leq C(1+H)^2 \Norm{Q}_F^2 
\leq 4CH^2 \Norm{Q}_F^2\Biggr\}\mper
\end{align*}
For the second term, first observe that the following equality of quadratic polynomials in indeterminate $Q$: 
$\Paren{(x-\mu_i)^{\top}Q\mu_i}^2 = \Paren{(\Sigma_i^{\dagger/2}(x-\mu_i))^{\top} \Sigma_i^{1/2}Q\mu_i}^2$. 
Thus, $\E_{x\sim \cD_i} \Paren{(x-\mu_i)^{\top}Q\mu_i}^2 = \Norm{\Sigma_i^{1/2}Q\mu_i}_2^2$. 
Next, by the SoS Cauchy-Schwarz inequality (Fact~\ref{fact:sos-holder}), we have that 
$$\sststile{2}{Q} \Set{\Norm{\Sigma_i^{1/2}Q\mu_i}_2^2= \tr(\mu_i \mu_i^{\top} Q \Sigma_i Q) 
\leq H \tr(Q\Sigma_iQ)= H \Norm{\Sigma_i^{1/2}Q}_F^2} \;.$$ 
Applying Lemma~\ref{lem:frob-of-product} with the observation above that $\Sigma_i^{1/2} \leq (1+H)^{1/2} I$ yields: 
$\sststile{2}{Q}\Set{\Norm{\Sigma_i^{1/2}Q}_F^2 \leq (1+H)\Norm{Q}_F^2}$. 
Thus, altogether, we obtain: 
$\sststile{2}{Q} \Set{\E_{x\sim \cD_i} (x-\mu_i)^{\top}Q\mu_i}^2 
\leq H(1+H)^2 \Norm{Q}_F^2 \leq 4H^3 \Norm{Q}_F^2$. We thus have:
\begin{align*}
\sststile{2}{Q} &\Biggl\{\sum_{i \leq k} w_i \E_{\cD_i} \Paren{(x-\mu_i)^{\top}Q\mu_i}^2  \leq 4H^3 \Norm{Q}_F^2 \Biggr\}\mper
\end{align*}
Plugging in these bounds into \eqref{eq:two-term-split} completes the proof.
\end{proof}

% \begin{proof}
% The proof follows by noting that for any $x$ in the support of $\cD$, ${x'}^{\top}Qx' = (\Sigma'^{1/2}x)^{\top}Q(\Sigma'^{1/2}x) = x^{\top}(\Sigma'^{1/2}A\Sigma'^{1/2})x = x^{\top}Q'x$ for $Q' = (\Sigma'^{1/2}A\Sigma'^{1/2})$. 
% \end{proof}

As an immediate corollary of Lemma~\ref{lem:bounded-variance-linear-transform} and Lemma~\ref{fact:mean-variance-subgaussian}, 
we obtain:

\begin{lemma}[Variance of Degree-$2$ Polynomials of Mixtures of Gaussians, Lemma \ref{fact:mean-variance-subgaussian-arbitrary-covariance} restated]
Let $\calM = \sum_i w_i \cN(\mu_i,\Sigma_i)$ be a $k$-mixture of Gaussians with $w_i \geq \alpha$, 
mean $\mu = \sum_i w_i \mu_i$ and covariance $\Sigma = \sum_i w_i ((\mu_i-\mu)(\mu_i-\mu)^{\top} + \Sigma_i)$. 
Suppose that for some $H > 1$, $\Norm{\Sigma^{\dagger/2}(\Sigma_i-I)\Sigma^{\dagger/2}}_F \leq H$ for every $1 \leq i \leq k$. 
Let $Q$ be a symmetric $d \times d$ matrix-valued indeterminate. Then for $H' = \max \{H, 1/\alpha\}$, 
\[
\sststile{2}{Q} \Set{ \expecf{x\sim \calM}{ \Paren{ x^\top Qx-\expecf{x \sim \calM}{x^\top Qx } }^2 } \leq 100 {H'}^2\Norm{\Sigma^{1/2}Q\Sigma^{1/2}}_F^2 }\mper
\]
\end{lemma}
\begin{proof}
Let $\Sigma = U\Lambda U^{\top}$ be the covariance of the mixture $\calM$ along with its eigendecomposition. 
We want to apply Lemma~\ref{fact:mean-variance-subgaussian} and Lemma~\ref{lem:bounded-variance-linear-transform} 
with the linear transformation $x \rightarrow Ax$ for $A =\Lambda^{\dagger/2}U^{\top}$. 
For this, we need to check that the conditions of the Lemma~\ref{fact:mean-variance-subgaussian} are met after this linear transformation. 
The new component covariance is $\Sigma_i'= A\Sigma_iA^{\top}$ and the hypothesis implies that they are within $H$ in Frobenius distance of the new mixture covariance $I' = A\Sigma A^{\top}$ ($I$ in the range space of $\Sigma$). 
The new means of the components after the linear transformation are $\mu_i' = A \mu_i$ 
and the new mixture mean is $\mu' = A\mu$. Thus, noting that $I' = \sum_i w_i (\mu_i'-\mu')(\mu'-\mu')^{\top} + \sum_i w_i \Sigma_i'$, 
and since each of the terms in the RHS of the preceding equality are PSD, we must have that 
$I' \succeq  w_i (\mu_i'-\mu')(\mu_i'-\mu')^{\top}$ for every $i$. Thus, 
$1 = \Norm{I'}_2 \geq w_i \Norm{(\mu_i'-\mu')(\mu_i'-\mu')^{\top}}_2 =  \Norm{\mu_i'-\mu'}_2^2$. 
Rearranging yields that $\Norm{\mu_i'-\mu'}_2^2 \leq 1/w_i \leq 1/\alpha$. Thus, we can now apply 
Lemma~\ref{fact:mean-variance-subgaussian} to the linearly transformed mixture and the conclusion follows. 
\end{proof}

\subsection{Omitted Proofs from Section~\ref{ssec:det-conds}}

\begin{lemma}[Lemma~\ref{moment closeness lemma} restated]
If $X$ satisfies Condition \ref{cond:convergence-of-moment-tensors} with respect to $\calM=\sum_i w_i \cN(\mu_i,\Sigma_i)$
with parameters $(\gamma,t)$, then if $w_i \geq \gamma$ for all $i \in [k]$,
and if for some $B\geq 0$ we have that $\Norm{\mu_i}_2^2,\Norm{\Sigma_i}_{\mathrm{op}} \leq B$ for all $i \in [k]$,
then for all $m\leq t$, we have that:
$$
\Norm{\E_{x\in_u X}[x^{\otimes m}] - \E_{x\sim M}[x^{\otimes m}]}_F^2 \leq \gamma^2 m^{O(m)}B^m d^m \;.
$$
\end{lemma}
\begin{proof}
We begin by noting that for any symmetric $m$-tensor $A$ we have that
$\Norm{A}_F^2 \newblue{\leq} m^{O(m)}(\E_{v\sim \cN(0,I)}[\langle A,v^{\otimes m}\rangle^2]).$ 
\newblue{This is because, in the notation of~\cite{DiakonikolasKS18-mixtures}, 
the squared expectation of $\langle A,v^{\otimes m}\rangle$ is 
$\E[\mathrm{Hom}_A(v)^2]\geq m! \E[h_A(v)^2] = m! \Norm{A}_F^2$, where the first inequality holds 
because $\sqrt{m!}h_A(v)$ is the degree-$m$ harmonic part of $\mathrm{Hom}_A(v)$, 
and the equality is by Claim 3.22.} Therefore, to prove the lemma, it suffices to bound
\begin{align*}
& \E_{v\sim \cN(0,I)}\left[\left(\E_{x\in_u X}[(v\cdot x)^m]-\E_{x\sim\calM}[(v\cdot x)^m]\right)^2 \right]\\
= & \E_{v\sim \cN(0,I)}\left[\left(\sum_{i=1}^k\left(\frac{1}{n}\sum_{x\in X_i} (v\cdot x)^m - w_i \E_{x\sim \cN(\mu_i,\Sigma_i)}(v\cdot x)^m \right)  \right)^2 \right]\\
= & \E_{v\sim \cN(0,I)}\left[\left(\sum_{i=1}^k \sum_{j=0}^m \binom{m}{j} \mu_i^{m-j} \left(\frac{1}{n}\sum_{x\in X_i} (v\cdot (x-\mu_i))^j - w_i \E_{x\sim \cN(\mu_i,\Sigma_i)}(v\cdot (x-\mu_i))^j \right)  \right)^2 \right]\\
\leq & \E_{v\sim \cN(0,I)}\left[\left(\sum_{i=1}^k \sum_{j=0}^m \binom{m}{j} |v\cdot\mu_i|^{m-j} \left(w_i \gamma m! (v^T \Sigma v)^{j/2} \right) \right)^2 \right]\\
\leq & \gamma^2 m^{O(m)}\E_{v\sim \cN(0,I)}\left[\sum_{i=1}^k  \left(w_i  (|v\cdot\mu_i|+(v^T \Sigma v)^{1/2})^{2m} \right) \right]\\
\leq & \gamma^2 m^{O(m)}\E_{v\sim \cN(0,I)}\left[2B^m\Norm{v}_2^{2m} \right]\\
\leq & \gamma^2 m^{O(m)}B^m d^m.
\end{align*}
This completes the proof.
\end{proof}

\begin{lemma}[Lemma~\ref{submixture condition lemma} restated]
Let $\calM=\sum_i w_i \cN(\mu_i,\Sigma_i)$. Let $S\subset [k]$ with $\sum_{i\in S} w_i = w$, and let
$\calM'=\sum_{i\in S} (w_i/w) \cN(\mu_i,\Sigma_i)$. Then if $X$ satisfies Condition \ref{cond:convergence-of-moment-tensors} with respect to $\calM$ with parameters $(\gamma,t)$ for some $\gamma<1/(2k)$
 with the corresponding partition being $X=X_1\cup X_2 \cup\ldots \cup X_k$,
 then $X' = \bigcup_{i\in S} X_i$ satisfies Condition \ref{cond:convergence-of-moment-tensors}
 with respect to $\calM'$ with parameters $(O(k\gamma/w),t)$.
\end{lemma}
\begin{proof}
After noting that $|X'| = w|X|(1+O(k\gamma/w))$, the rest follows straightforwardly from the definitions using the partition $X'=\bigcup_{i\in S} X_i$.
\end{proof}

\begin{lemma}[Lemma~\ref{lem:det-suffices} restated]
Let $\calM=\sum_{i=1}^k w_i \cN(\mu_i,\Sigma_i)$ and let $n$ be an integer at least $k t^{Ct} d^{t}/\gamma^3$,
for a sufficiently large universal constant $C>0$, some $\gamma>0$, and some $t \in \N$.
If $X$ consists of $n$ i.i.d. samples from $\calM$, then $X$ satisfies Condition \ref{cond:convergence-of-moment-tensors}
with respect to $\calM$ with parameters $(\gamma, t)$ with high probability.
\end{lemma}
\begin{proof}
We will show that Condition \ref{cond:convergence-of-moment-tensors} holds with high probability
using that partition where $X_i$ is the set of samples drawn from the $i$-th component of $\calM$.
Note that the second part of Condition \ref{cond:convergence-of-moment-tensors} holds with high probability,
so long as $n$ is a sufficiently large multiple of $d/\gamma^2$ by the VC-Theorem~\cite{DL:01}.
In particular, if we think of samples as being drawn from $\R^d\times [k]$,
where the second coordinate denotes the component that the sample was drawn from,
the second part of Condition \ref{cond:convergence-of-moment-tensors} says that the empirical probability
of any event $H\times \{i\}$ is correct to within additive error $\gamma$. It is easy to see and well-known
that the class of such events has VC-dimension $O(d)$, from which the desired bound follows.

For the first part of Condition \ref{cond:convergence-of-moment-tensors}, we claim that it holds with high probability
so long as $n\geq k t^{Ct} d^{t}/\gamma^3$. To prove this, we show it separately for each $i$ with $w_i \geq \gamma$
(as otherwise there is nothing to prove) and take a union bound.
As Condition \ref{cond:convergence-of-moment-tensors} is invariant under affine transformations,
we may perform an invertible affine transformation so that $\mu_i=0$ and $\Sigma_i$
is the projection onto the first $d'$ coordinates, for some $d'$. It is clear that only the first $d'$ coordinates
of any element of $X_i$ will be non-zero. We claim that the first part of our condition will follow for a given $m$,
so long as $\left| |X_i|/n-w_i \right|\leq \gamma w_i$ (which holds with high probability if $n\gg \log(k)/\gamma^3$),
and
\begin{equation}\label{norm bound equation}
\Norm{\E_{x\in_u X_i}[x^{\otimes m}] - \E_{x\sim \cN(0,I_{d'})}[x^{\otimes m}]}_F^2 \leq \gamma^2 \;,
\end{equation}
as $\frac{1}{n}\sum_{x\in X_i} \langle v, x-\mu_i \rangle^m =
w_i(1\pm \gamma)\langle\E_{x\in_u X_i}[x^{\otimes m}],v^{\otimes m}\rangle.$
It is easy to see that each entry of the tensor on the left hand side of Equation \eqref{norm bound equation}
has mean $0$ and variance $m^{O(m)}/|X_i|$, and thus the expected size of the left hand side
is $m^{O(m)}d^m/|X_i|.$ Then, when $n \geq k^{Ck} d^{4k}/\gamma^3$ for a sufficiently large constant $C$,
all parts of our condition hold with high probability. This completes the proof.
\end{proof}

\subsection{Omitted Proofs from Section~\ref{sec:full-algo-analysis}}

\begin{lemma}[Frobenius Distance to TV Distance, Lemma \ref{lem:frobenius_to_tv}, restated]
Suppose $\cN(\mu_1,\Sigma_1),\cN(\mu_2,\Sigma_2)$ are Gaussians with $\norm{\mu_1-\mu_2}_2 \le\delta$ and
$\norm{\Sigma_1-\Sigma_2}_F\le\delta$. If the eigenvalues of $\Sigma_1$ and $\Sigma_2$ are at least $\lambda>0$, then
$$
\dtv(\cN(\mu_1,\Sigma_1),\cN(\mu_2,\Sigma_2))=O(\delta/\lambda) \;.
$$
\end{lemma}

\begin{proof}
By Fact~\ref{GaussianTVFact}, we have
$$
\dtv\Paren{ \calN(\mu_1,\Sigma_1), \calN(\mu_2,\Sigma_2) } = \bigO{ \Paren{ (\mu_1-\mu_2)^\top \Sigma_1^{-1} (\mu_1-\mu_2)}^{1/2}+\|\Sigma_1^{-1/2}\Sigma_2 \Sigma_1^{-1/2}-I \|_F )}.
$$
Then the first term is
$\Iprod{\mu_1-\mu_2,\Sigma_1^{-1} (\mu_1-\mu_2)}^{1/2}\le (\norm{\Sigma_1^{-1}}_{\mathrm{op}} \norm{\mu_1-\mu_2}_2^2)^{1/2}\le\delta/\sqrt{\lambda}$.
The second term is
\begin{align*}
\|\Sigma_1^{-1/2}\Sigma_2 \Sigma_1^{-1/2}-I \|^2_F
&=\norm{\Sigma_1^{-1/2}(\Sigma_1-\Sigma_2)\Sigma_1^{-1/2}}^2_F\\
&=\tr\left(\left(\Sigma_1^{-1/2}(\Sigma_1-\Sigma_2)\Sigma_1^{-1/2}\right)^2\right)\\
&\le\tr\left((\Sigma_1-\Sigma_2)^2\right)(1/\lambda)^2\\
&\le(\delta/\lambda)^2.
\end{align*}
Thus,
$$
\dtv(\cN(\mu_1,\Sigma_1),\cN(\mu_2,\Sigma_2))=O(\delta/\sqrt{\lambda}+\delta/\lambda)=O(\delta/\lambda) \;.
$$
\end{proof}

% \nnnew{
% \begin{fact}[Linear Invariance] \label{fact:linear-invariance-bounded-variance}
% Let $x$ be a random variable with an isotropic distribution $\cD$ on $\R^d$ such that 
% \[
% \sststile{2}{Q} \Set{\E_{\cD} \Paren{x^{\top}Qx - \E_{\cD} x^{\top}Qx}^2 \leq C \Norm{Q}_F^2}.
% \]
% Let $A \in \R^{d \times d}$ be an arbitrary PSD matrix. Then, the random variable $x'=Ax$ satisfies 
% \[
% \sststile{2}{Q} \Set{\E_{\cD} \Paren{(x')^{\top}Qx' - \E_{\cD} (x')^{\top}Qx'}^2 \leq C \Norm{AQA}_F^2}.
% \]
% \end{fact}
% \begin{proof}
% The proof follows by noting that ${x'}^{\top}Qx' = (Ax)^{\top}Q(Ax) = x^{\top} (AQA) x^{\top}$ and covariance of $x'$ is $A^2$.
% \end{proof}
% }

\begin{lemma}[Component Moments to Mixture Moments, Lemma \ref{lem:component_to_mixture} restated]
Let $\calM = \sum_{i \in [k]} w_i \calN(\mu_i, \Sigma_i)$ be a $k$-mixture such that $w_i\geq \alpha$, 
for some $0 < \alpha <1$, and $\calM$ has mean $\mu$ and covariance $\Sigma$ \nnnew{and for all $i \neq j \in [k]$, 
$\Norm{\Sigma^{\dagger/2}\Paren{\Sigma_i - \Sigma_j}\Sigma^{\dagger/2}}_F \leq 1/\sqrt{\alpha}$}. 
Let $X$ be a multiset of $n$ samples satisfying Condition \ref{cond:convergence-of-moment-tensors} 
with respect to $\calM$  with parameters $(\gamma, t)$, for $0<\gamma<(dk/\alpha)^{-ct}$, 
for a sufficiently large constant $c$, and $t\in\N$. Let $\calD$ be the uniform distribution over $X$. 
Then, $\calD$ is $2t$-certifiably $(c/\alpha)$-hypercontractive \nnnew{and for $d \times d$-matrix-valued indeterminate $Q$, $\sststile{2}{Q}\Set{\E_{\calM} \Paren{x^{\top}Qx-\E_{\calM}x^{\top}Qx}^2 \leq \bigO{1/\alpha} \Norm{\Sigma^{1/2} Q\Sigma^{1/2}}_F^2}$}.
\end{lemma}
\begin{proof}
First, since $\calM$ is a $k$-mixture of Gaussians with minimum mixing weight $w_{\min} \geq \alpha$, it follows from Corollary \ref{cor:hypercontractivity_of_mixtures} that $\calM$ is $t$-certifiably $(4/\alpha)$ hypercontractive. Further, since $X$ satisfies Condition \ref{cond:convergence-of-moment-tensors} with parameters $(\gamma, t)$, it follows from Lemma \ref{lem:covergence_affine_invariant} that the set $X' = \{ \Sigma^{\dagger/2}(x_i - \mu) \}_{x_i \in X}$ also satisfies  Condition \ref{cond:convergence-of-moment-tensors} with parameters $(\gamma, t)$ w.r.t. $\calM' = \sum_{i \in [k]} w_i \calN\Paren{ \Sigma^{\dagger/2}(\mu_i - \mu), \Sigma^{\dagger/2} \Sigma_i \Sigma^{\dagger/2} }$. Since $\Norm{\Sigma^{\dagger/2} \Sigma_i \Sigma^{\dagger/2}}_{\textrm{op}} \leq \bigO{1/\alpha}$, it follows from Lemma \ref{moment closeness lemma} that for all $m \leq t$, $\Norm{\E_{x\in_u X'}[x^{\otimes m}] - \E_{x\sim M'}[x^{\otimes m}]}_F^2 \leq \gamma^2 m^{O(m)} d^m (1/\alpha)^m $. Since $\gamma < (dk/\alpha)^{-O(t)}$, it follows from Fact~\ref{fact:moments-to-analytic-properties} that $X$ is $2t$-certifiably $(c/\alpha)$-hypercontractive. 

\nnnew{ 
By assumption, %for all $i \in[k]$, $\Norm{\mu}_2 \leq \bigO{1/\sqrt{\alpha}}$, and 
for all $i\neq j \in [k]$, we have that 
$\Norm{\Sigma^{\dagger/2}\Paren{\Sigma_i - \Sigma_j} \Sigma^{\dagger/2}}_F \leq 1/\sqrt{\alpha}$.
We can now apply Lemma \ref{fact:mean-variance-subgaussian-arbitrary-covariance}
to obtain 
\[
\sststile{2}{Q} \Set{ \expecf{x\sim \calM}{ \Paren{ x^\top Qx-\expecf{x \sim \calM}{x^\top Qx } }^2 } \leq \bigO{1/\alpha} \Norm{\Sigma^{1/2}Q\Sigma^{1/2}}_F^2 }\mper
\]
Therefore, it follows from Fact \ref{fact:moments-to-analytic-properties} that since $X$ satisfies Condition \ref{cond:convergence-of-moment-tensors} with parameters $(\gamma, t)$, the uniform distribution $\cD_X$ on $X$, $\sststile{2}{Q}\Set{ \E_{x \sim \cD_X}(x^{\top}Qx-\E_{x \sim \cD_X}x^{\top}Qx)^2 \leq \bigO{1/\alpha} \Norm{\Sigma^{1/2}Q\Sigma^{1/2}}_F^2}$. }
\end{proof}

%!TEX root = main.tex

\section{Bit Complexity Analysis} \label{app:bit-comp}

Here we address numerical issues related to our computation. We begin wth the assumption that the eigenvalues of our covariance matrices are bounded below.
\begin{lemma}\label{basicBitComplexityLemma}
If $\mathcal{M} = \sum_{i=1}^k w_i G_i$ is a mixture of Gaussians $G_i$ where each $G_i$ has mean and covariance of norm at most $2^b$ for some positive integer $b$ and each $G_i$ has covariance matrix whose eigenvalues are bounded below by some $\lambda>0$. Let $\mathcal{M'}$ be an $\eps$-corruption of $\mathcal{M}$ whose outputs are bounded by $2^{O(b)}$. Let $N$ be a sufficiently large polynomial in $d^k/\eps$ and let $\eta$ be $\lambda$ divided by a sufficiently large polynomial in $2^b d/\eps$ (where sufficiently large is degree $O(1)$). Then if our algorithm is given $N$ i.i.d. samples from $\mathcal{M'}$ with each of their coordinates rounded to a nearby multiple of $\eta$ (by which we mean one of the two closest), then our algorithm runs in time $\poly(N,b,\log(1/\eta))$ and with high probability returns a list of mixtures of Gaussians $X_i$ with at least one of the $X_i$ $\poly_k(\eps)$-close to $\mathcal{M}$ in parameter distance.
\end{lemma}
\begin{proof}
This follows from noting firstly that with high probability the any subset of the rounded samples will have moments $\lambda/\poly(d/\eps)$-close to their moments before rounding. This means that with high probability these rounded samples will satisfy Condition~\ref{cond:convergence-of-moment-tensors}. This means that our algorithm satisfies the necessary correctness guarantees. Furthermore, given that our samples now all have bounded bit complexity, it is easy to see that the runtime of our algorithm is polynomial in $N$ and the bit complexity.
\end{proof}

More generally, as long as the parameters of the components of our mixture can be expressed with bounded bit complexity, we can prove a similar result, without needing any lower bound on the covariances.
\begin{theorem}
Let $\mathcal{M} = \sum_{i=1}^k w_i G_i$ be a mixture of Gaussians where the $G_i$ are Gaussians whose means and covariance matrices can all be written with coefficients given by rational numbers with bit complexity at most $b$ for some integer $b$. Let $\mathcal{M'}$ be an $\eps$-corruption of $\mathcal{M}$ so that with probability $1$ the returned points have size $2^{O(b)}$. Let $N$ be a sufficiently large polynomial in $d^k/\eps$. Then there exists an algorithm that given $b$ bit-oracle access to these samples runs in time $\poly(N,b)$ and with high probability returns a mixture of Gaussians $X$ so that $\dtv(X,\mathcal{M}) < \poly_k(\eps)$.
\end{theorem}
\begin{proof}
We begin by showing that we can find a list of hypotheses at least one of which is close. It is then straightforward to show that we can run a tournament over these hypotheses to find a specific one that works. We also assume for simplicity that each $w_i$ is at least $3\eps$.

We begin by setting $\lambda$ to be $2^{-b\cdot d^{kC}}$ for a sufficiently large constant $C$. By adding each sample to a random sample from $N(0,\lambda I)$, we can produce samples from $\tilde{\mathcal{M}}'$, and $\eps$-corruption of $\tilde{\mathcal{M}} =\sum_{i=1}^k w_i \tilde{G_i}$ where $\tilde{ G_i}$ is the convolution of $G_i$ with $N(0,\lambda I)$. Note that $\tilde G_i$ is a Gaussian whose covariance has eigenvalues at least $\lambda$. Furthermore, if the covariance matrix of $G_i$ is non-singular, the smallest eigenvalue of the covariance matrix must be at least $2^{O(b\cdot d)}$, and therefore $\dtv(G_i,\tilde G_i)<\eps$.

Since the eigenvalues of the components of $\tilde{\mathcal{M}}$ are bounded below, we can apply Lemma \ref{basicBitComplexityLemma} to our samples from $\tilde{\mathcal{M}}'$ rounded to an appropriate accuracy $\eta$, and in $\poly(N,b)$-time obtain a list of hypothesis mixtures at least one of which is (with high probability) close to $\tilde{\mathcal{M}}$ in total variation distance.

If the covariances of all of the $G_i$ with weights more than some sufficiently large $\poly_k(\eps)$ are all non-singular, then one of these hypotheses will be close to $\mathcal{M}$. Otherwise, there must be some $i$ for which $w_i$ is relatively large and for which $G_i$ has singular covariance matrix. In particular, there must be an integer vector $v$ with bit complexity $O(bd)$ in the kernel of the covariance matrix of $G_i$. The hypothesis mixture $X$ that is close to $\tilde{\mathcal{M}}$ in parameter distance must contain some component close to $\tilde G_i$. Since $\tilde G_i$ has covariance matrix $\tilde \Sigma_i = \lambda I + \Sigma_i$ where $\Sigma_i$ is the covariance matrix of $G_i$. We note that $\tilde \Sigma_i$ will have an eigenvalue of $\lambda$ and that therefore, our close hypothesis will have an eigenvalue at most $2\lambda$.

If any of our returned hypotheses have any component with a covariance matrix $\Sigma$ which has any eigenvalue less than $2\lambda$, we do the following. We consider the quadratic form on integer vectors $v$ defined by
$$
Q(v) = v^T \Sigma v + \sqrt{\lambda}|v|_2^2.
$$
We note that if this $\Sigma$ is close in parameter distance to a singular $\tilde \Sigma_i$ where $\Sigma_i$ had a null-vector $v$ of norm $2^{O(bd)}$, then for that same value of $v$ we will have that $Q(v) < \lambda^{1/4}.$ Using the Lov\'asz local lemma in \cite{erdHos1973problems}, we can find a $v$ so that $Q(v)$ is within a $2^{O(d)}$-factor of the minimum possible value over all non-zero, integer vectors $v$. If for this $v$, $Q(v) > 2^{\Omega(d)}\lambda^{1/4}$, we know that the hypothesis in question is not close to $\tilde{\mathcal{M}}$ in parameter distance and can be ignored. On the other hand, any $v$ with $Q(v)$ this small must have $|v| < 2^{O(d)} \lambda^{-1/2}$ and $v^T \Sigma v < 2^{O(d)} \lambda^{1/4}$. Note that the projection of $v$ onto the $\ker(\Sigma_i)^\perp$ is either zero or has magnitude at least $2^{O(bd)}$. In the latter case, it would need to be the case that $Q(v)$ is substantially larger. Thus, if such a hypothesis is close to $\tilde{\mathcal{M}}$ in parameter distance, then $v$ is in the kernel of some $\Sigma_i$.

If our algorithm finds some $v$ for some hypothesis, we then compute $v\cdot x$ to error $\lambda$ for each of our samples $x$. If $\mathcal{M}$ really has a component with $v$ in the kernel of its covariance matrix, all of the $x$'s taken from this component will have $v\cdot x$ the same. This means that at least a $(3/2)\eps$ fraction of our samples $x$ will have $v\cdot x$ within $\lambda$ of each other. Note that if $v$ is not in the kernel of any covariance matrix of any $G_i$ than $\var{v\cdot G_i}$ will be at least $2^{O(db)}$ for each $i$, and with high probability we will not find this many close samples.

To summarize, if our algorithm applies this procedure to every component of every hypothesis and does not find such a $v$, then it cannot be the case that $\mathcal{M}$ contains any components of weight more than $\poly_k(\eps)$ that are singular, and thus one of our original hypotheses must be close in total variational distance. We can then run a tournament to find a single one that is close. Otherwise, if we find such a $v$ for which many points do have $v\cdot x$ close by, then $v$ must be a null vector of the covariance matrix of some $G_i$. Furthermore, all of the samples within $\lambda$ of this common value of $v\cdot x$, with high probability are either errors or come from components contained in some lower dimensional subspace. We can determine what this subspace is by noting that it is defined by $v\cdot x = q$ for some rational number $q$ with bit-complexity at most $O(bd)$ and using continued fractions on a good numerical approximation of $q$ in order to determine its true value. Our algorithm can then recurse on the points in this subspace (a mixture of Gaussians in a lower dimensional space) and on the remaining points (which are from a mixture of fewer Gaussians), and return an appropriate mixture of the results.
\end{proof}

\end{document}